\let\merge\undefined
\newcommand{\rref}[2][]{\prettyref{#2}}
\newcommand\footnoteref[1]{\protected@xdef\@thefnmark{\ref*{#1}}\@footnotemark}
\newcommand{\citeDLCHP}[1][]{%
	\ifthenelse{\equal{#1}{}}
	{\cite{BriegerMP2023}}{\cite[#1]{BriegerMP2023}}\xspace}
\tikzstyle{disc}=[draw, shape=circle, inner sep=.05cm]
\tikzstyle{ghost}=[shape=circle, inner sep=0.05cm]
\tikzstyle{continuous}=[decorate, decoration={coil, aspect=0, amplitude=.2mm, segment length=4pt}]
\tikzstyle{badtrace}=[dashed, BrickRed]
\tikzstyle{comarrow}=[dashed, -stealth, MidnightBlue]
\tikzstyle{noisy}=[comarrow, solid, decorate, decoration={zigzag, segment length=2mm, amplitude=0.4mm, post length= 1mm}, -stealth]
\newcommand{\notcheckmark}{\checkmark\!\!\!\raisebox{2.2pt}{\textbf{\tiny\textbackslash}}}
\newcommand{\distpicto}{\text{\scriptsize$|\kern-4pt\leftrightarrow\kern-4pt|$}}
\newcommand{\success}{\textcolor{OliveGreen}{$\checkmark$}}
\newcommand{\deny}{\textcolor{BrickRed}{$\notcheckmark$}}
\newcommand{\termination}{\textcolor{BrickRed}{$\times$}}
\newcommand{\measurment}{\textcolor{MidnightBlue}{$\distpicto$}}
\newsavebox\disc
\sbox\disc{\tikz[baseline=-0.5ex]{\node[disc] {};}}
\newsavebox\continuous
\sbox\continuous{\tikz[baseline=-0.5ex]{\draw[continuous] (0,0) -- (.5,0);}}
\newsavebox\communication
\sbox\communication{\tikz[baseline=-0.5ex]{\draw[comarrow] (0,0) -- (.5,0);}}
\newsavebox\lossy
\sbox\lossy{\tikz[baseline=-0.5ex]{\draw[comarrow] (0,0) -- (.5,0) node [right, inner sep = 0] {\termination};}}
\newsavebox\noisy
\sbox\noisy{\tikz[baseline=-0.5ex]{\draw[noisy] (0,0) -- (.5,0);}}
\newsavebox\crash
\sbox\crash{\tikz[baseline=-0.5ex]{\draw[badtrace] (0,0) -- (.5,0);}}
\theoremstyle{thmstyleone}%
\newtheorem{theorem}{Theorem}%
\newtheorem{proposition}[theorem]{Proposition}%
\newtheorem{lemma}[theorem]{Lemma}%
\newtheorem{corollary}[theorem]{Corollary}%
\theoremstyle{thmstyletwo}%
\newtheorem{example}[theorem]{Example}%
\newtheorem{remark}[theorem]{Remark}%
\theoremstyle{thmstylethree}%
\newtheorem{definition}[theorem]{Definition}%
\begin{document}

\title{Complete Dynamic Logic of Communicating Hybrid Programs}

\author*[1]{\fnm{Marvin} \sur{Brieger}}\email{marvin.brieger@sosy.ifi.lmu.de}

\author[2]{\fnm{Stefan} \sur{Mitsch}}\email{smitsch@depaul.edu}

\author[3]{\fnm{Andr\'e} \sur{Platzer}}\email{platzer@kit.edu}

\affil*[1]{\orgname{LMU Munich}, \orgaddress{\country{Germany}}}

\affil[2]{\orgname{DePaul University}, \orgaddress{\state{Illinois}, \country{USA}}}

\affil[3]{\orgname{Karlsruhe Institute of Technology}, \orgaddress{\country{Germany}}}

\newcommand{\acLiveParVars}{\tvar_0, \tvar_{\alpha\beta}, \rvarvec[alt]_\alpha, \rvarvec[alt]_\beta, \vecof{g}}

\newcommand{\parchans}{(\alpha{\parOp}\beta)}

\newcommand{\comFOD}{\ensuremath{\Chan\text{-FOD}}\xspace}

\newcommand{\dLCHPdash}{\ensuremath{\vdash^{+}\kern-2pt}\xspace}

\newcommand{\comFODderives}{\mathop{\vdash_{\kern-2pt\Chan}}}

\newcommand{\FODderives}{\mathop{\vdash_{\kern-1pt\text{\tiny FOD}}}}

\newcommand{\eqbep}{\overset{\text{\tiny BEP}}{=}}
\newcommand{\eqcoin}{\overset{\text{\tiny COI}}{=}}

\newcommand{\eqqref}[1]{\overset{\text{\tiny\ref{#1}}}{=}}

\newcommand{\ctrlNotify}{\progtt{comm}}
\newcommand{\ctrlUpdate}{\progtt{upd}}
\newcommand{\ctrlVelocity}{\progtt{velo}}
\newcommand{\ctrlDistance}{\progtt{dist}}

\newcommand{\tarvelo}{v_{tar}}
\newcommand{\maxvelo}{V}
\newcommand{\mespos}{m}
\newcommand{\waitvar}{w}

\newcommand{\periodicity}{\epsilon}

\newcommand{\safevelo}[1]{\nicefrac{#1}{\periodicity}}
\newcommand{\maxdist}{\periodicity \maxvelo}

\newcommand{\parameters}[1]{(#1)}
\newcommand{\invariant}[1]{\textnormal{\uppercase{#1}}}

\newcommand{\fAssume}{\A}
\newcommand{\lCommit}{\Commit}

\newcommand{\since}[2]{\traceOp[#1]{since}(#2)}
\newcommand{\till}[2]{\traceOp[#1]{till}(#2)}

\renewcommand{\gtvec}{\gtime}

\abstract{
    This article presents a relatively complete proof calculus for the dynamic logic of communicating hybrid programs \dLCHP.
	Beyond hybrid systems, 
	communicating hybrid programs not only feature mixed discrete and continuous dynamics 
	but also their parallel interactions in parallel hybrid systems.
	This not only combines the subtleties of hybrid and discrete parallel systems,
	but parallel hybrid dynamics necessitates that all parallel subsystems synchronize in time and evolve truly simultaneously.
	To enable compositional reasoning 
	nevertheless,
	\dLCHP combines differential dynamic logic \dL with mutual abstraction of subsystems by assumption-commitment (ac) reasoning.
	The resulting proof calculus preserves the essence of dynamic logic axiomatizations,
	while revealing---and being driven by---a new modal logic view onto ac-reasoning.
	
	The \dLCHP proof calculus is shown to be complete relative to \comFOD, 
	the first-order logic of differential equation properties FOD augmented with communication traces.
	This confirms that the calculus covers \emph{all} aspects of parallel hybrid systems,
	because it lacks no axioms 
	to reduce \emph{all} their dynamical effects to the assertion logic.
	Additional axioms for encoding communication traces enable a provably correct equitranslation between \comFOD and FOD,
	which reveals the possibility of representational succinctness in parallel hybrid systems proofs.
	Transitively, this establishes a full proof-theoretical alignment of \dLCHP and \dL,
	and shows that reasoning about parallel hybrid systems is exactly as hard as reasoning about hybrid systems, continuous systems, or discrete systems.
}

\keywords{Parallel hybrid systems, Parallel programs, Hybrid systems, Differential dynamic logic, Assumption-commitment reasoning, CSP, Completeness}

\maketitle

\section{Introduction} \label{sec:intro}

This article studies parallel interactions of hybrid systems,
which model combined discrete, continuous, \emph{and} parallel dynamics.
The study of parallel hybrid systems safety is important, because safety-critical cyber-physical systems naturally feature separate subsystems that evolve in parallel,
\eg multiple trains in train control systems~\cite{DBLP:conf/icfem/PlatzerQ09},
multiple planes in aircraft collision avoidance~\cite{DBLP:conf/fm/PlatzerC09},
and multiple cars in adaptive cruise control~\cite{DBLP:conf/fm/LoosPN11}.
Despite this prevalence in applications
and special component-based approaches to mitigate the parallel state space explosion~\cite{DBLP:journals/sttt/MullerMRSP18,Lunel2019},
the \emph{parallelism} of hybrid systems itself is only partially understood.
That is why models and safety proofs for hybrid systems are still limited to cumbersome, ad-hoc workarounds for the absence of parallelism.

This article studies the \emph{dynamic logic of communicating hybrid programs \dLCHP}~\cite{Brieger2023} for modeling and verification of parallel hybrid systems.
The logic \dLCHP studies parallel hybrid systems from the perspective of logics of multi-dynamical systems \cite{DBLP:conf/cade/Platzer16,Platzer18} in order to identify their fundamental building blocks
and reasoning principles.
Beyond previous work \cite{Brieger2023},
we show relative completeness of \dLCHP's proof calculus.
This yields a \emph{comprehensive} characterization of parallel hybrid systems,
because the \dLCHP calculus identifies \emph{all} elementary pieces behind the mixture of dynamics in parallel hybrid systems.
Since the \dLCHP proof calculus only features axioms for modular structural decomposition, this relative completeness result also justifies why compositional verification is possible for parallel hybrid systems.

Hybrid systems combine discrete and continuous dynamics following discrete jumps and differential equations.
They are fundamentally challenging, as they are not semidecidable \cite{Henzinger1995}.
Parallel systems run multiple discrete subsystems simultaneously that are tied together by communication.
They are subject to the state space explosion problem~\cite{Clarke2001},
and it took considerable effort \cite{Roever1997,deRoever2001} to turn 
the early non-compositional proof systems 
\cite{OwickiGries1976,LevinGries1981}
into compositional methods for discrete parallelism \cite{Misra1981,Xu1997},
which can actually mitigate the state space explosion.
Both system classes interlock dynamical behavior in a way that is more complex than the sum of their pieces making each---the verification of hybrid \cite{Alur2011,Alur1993,Henzinger1996,DBLP:journals/jar/Platzer08} and parallel systems \cite{AptdeBoerOlderog10,LevinGries1981,OwickiGries1976,deRoever2001}---significant challenges.
The combination of hybrid and parallel dynamics poses genuinely new challenges:
While different variants of parallelism are considered for discrete systems~\cite{Brookes1996,Brookes2002}, 
parallel hybrid systems require true simultaneous parallel composition,
where the subsystems always have to agree on the duration of their continuous dynamics.%
\footnote{
	By contrast, a semantics in which the parallel hybrid dynamics could disagree on their duration%
	---even by as little as a second---%
	would be counterfactual for accurate modeling of classical mechanics.
}
Consequently, even compositional proof techniques must maintain enough insight into the global flow of time when decomposing parallel hybrid systems.

The \emph{dynamic logic of communicating hybrid programs \dLCHP}~\cite{Brieger2023} extends differential dynamic logic \dL for hybrid systems~\cite{DBLP:journals/jar/Platzer08, DBLP:journals/jar/Platzer17, DBLP:books/sp/Platzer18} 
with support to model and verify parallelism.
\emph{Communicating hybrid programs (CHPs)} are a compositional model for parallel interactions of hybrid systems.
Given CHPs~$\alpha, \beta$,
\eg modeling cars or robots, the parallel composition $\alpha \parOp \beta$ models their simultaneous evolution during which~$\alpha$ and~$\beta$ may communicate synchronously via channels (loss and delay can be modeled).
For compositional verification,
\dLCHP blends the dynamic logic \cite{Pratt1976, Harel1979,Harel_et_al_2000} setup of \dL with assumption-commitment~(ac) reasoning~\cite{Misra1981,AcHoare_Zwiers,deRoever2001}
to enable
mutual abstraction of parallel program effects.
The ac-box $[ \alpha ] \ac \psi$ introduced for this purpose complements the safety modality $[ \alpha ] \psi$ stating that the promise~$\psi$ holds in all worlds reachable by program $\alpha$,
where the assumption $\A$ limits the possible incoming communication while the commitment $\Commit$ is a promise about all outgoing communication.

The main subject of this article is a modular, compositional, and sound Hilbert-type proof calculus for~\dLCHP.
The calculus is \emph{truly compositional} in the presence of parallelism because 
it verifies a parallel composition from local specifications of the subsystems that are only based on their observable behavior \cite{deRoever2001}.
This is to be contrasted with Hybrid Hoare-logics (HHLs)
\cite{Liu2010,Guelev2017,Wang2012},
which are non-compositional by design~\cite{Liu2010},
or because their calculi rely on the combinatorial product of all parallel interactions~\cite{Wang2012, Guelev2017}.
This exhaustive unfolding leaves no room for local abstractions based on the relevant program behavior.
In contrast, the \dLCHP calculus supports \emph{complexity-on-demand reasoning},
which admits coarse local abstractions of the parallel program effects,
yet always supports 
sufficient abstractions for completeness.

The \dLCHP calculus is \emph{modular} and develops a new axiomatic foundation for parallel systems.
Its highlight is the parallel injection axiom 
\([ \alpha ] \psi \rightarrow [ \alpha \parOp \beta ] \psi\)
as the only reasoning principle for safety of parallel hybrid systems,
which is sound if $\beta$ has no influence on the truth of $\psi$ \cite{Brieger2023}.
This article uncovers that its previous formal side condition~\cite{Brieger2023} can be soundly relaxed for completeness.
Then despite its asymmetry, 
safety reasoning for
parallel hybrid systems can completely revolve around parallel injection
once combined with elementary modal logic principles that enable the suitable combination of the insights from successive injections of parallel subsystems.
This also reveals that parallel systems do not need the classical but complex and highly composite proof rules in Hoare-style ac-reasoning~\cite{AcHoare_Zwiers,AcSemantics_Zwiers}.
The modularity of \dLCHP is grounded in a
novel modal view of ac-reasoning,
which enables its graceful blending into \dLCHP.
Graceful means that \dLCHP generalizes the Pratt-Segerberg axiom system \cite{Pratt1976,Segerberg1982, Harel1979} for dynamic logic whenever possible.
For completeness,
this article adds axioms for the previously \cite{Brieger2023} omitted ac-diamond $\langle \alpha \rangle \ac \psi$, 
the modal dual of the ac-box $[ \alpha ] \ac \psi$.

The main contribution of this article is a relative completeness proof for the \dLCHP calculus.
This leads to the fundamental insight that parallel hybrid systems in~\dLCHP and hybrid systems in \dL 
are \emph{proof-theoretically equivalent}.
Formally, \dLCHP is proven relatively complete for \dL's  oracle logic FOD,
the first-order logic of differential equation properties~\cite{DBLP:journals/jar/Platzer08}.
As central milestone,
\dLCHP is proven complete relative to \comFOD,
which extends FOD with communication traces.
This reduction
already confirms that \dLCHP's calculus identifies \emph{all} elementary dynamics that constitute parallel hybrid systems,
because all dynamical effects reduce to the assertion logic \comFOD.
The 
subsequent 
reduction
from \comFOD to FOD 
addresses the encoding of communication traces,
which was postponed for modularity.
In summary, properties of parallel hybrid systems can be proven in \dLCHP to the same extent as properties of hybrid systems 
in \dL,
as both align with the provability of properties of continuous systems in FOD:
\begin{equation*}%
	\dLCHP \overset{\text{new}}{=} \comFOD \overset{\text{new}}{=} \text{FOD} \overset{\text{\cite{DBLP:journals/jar/Platzer08}}}{=} \dL \qquad\text{(proof-theoretically)}
\end{equation*}
This does not mean
that parallel hybrid systems are best understood as continuous systems
just like discrete parallel system are not best understood by their monolithic parallel product.
Instead, \dLCHP marks and solves the specific challenges of the parallel interplay of hybrid dynamics,
and \dLCHP's axioms identify the exact rules thereof.

Completeness of \dLCHP is related to a tradition of seminal completeness results for discrete, hybrid, and parallel systems:
Cook expresses sufficient loop invariants for Hoare-logic~\cite{Cook1978},
Harel adds variants for loop termination in dynamic logic~\cite{Harel1977},
Zwiers generalizes strongest postconditions to the environment of parallel programs~\cite{AcSemantics_Zwiers},
\dL lifts invariants and variants to the real domain of hybrid systems \cite{DBLP:journals/jar/Platzer08},
and differential game logic \dGL uses a finely-branched induction order to account for 
the adversarial dynamics in games~\cite{DBLP:journals/tocl/Platzer15}.
In \dLCHP, discrete, continuous, and parallel dynamics culminate all together,
so that expressiveness results, parallel environments, and inductive reduction span the whole mixture of dynamics.
Our completeness proof succeeds because \dLCHP's modular calculus in turn enables a modular completeness argument that disentangles the mixed dynamics.
For the reduction from \comFOD to FOD, 
we identify an extension of the calculus that lifts semantic their equiexpressiveness to syntactic completeness
following the idea of provably correct equitranslations \cite{AbouElWafa2024}.

\paragraph{Summary}

The article presents \dLCHP, a \emph{dynamic logic} for reasoning about parallel interactions of communicating hybrid systems. 
The core contribution is \dLCHP's \emph{compositional, sound, and complete} proof calculus
based on a graceful embedding of ac-reasoning into 
Pratt-Segerberg's well-established proof system for 
dynamic logic.
This development enables purely specification-based compositional reasoning
but also intensifies the question whether the calculus is strong enough to prove all properties of parallel hybrid systems.
To the best of our knowledge,
\dLCHP is the first logic for parallel hybrid systems that is complete \emph{and} compositional,
and the first complete dynamic logic
for ac-reasoning.
The article makes the following individual contributions:

\begin{enumerate}[resume, label=(\roman*)]
	\item
	\label{itm:sound}
	We refine \dLCHP's 
	proof calculus \cite{Brieger2023} to achieve completeness
	and prove its soundness.
	This includes a more liberal side condition for the parallel injection axiom,
	and new complete axioms 
	for the ac-diamond.
	The resulting calculus is modular
	and truly achieves only-by-specification compositional reasoning

	\item 
	\label{itm:comfod_compl}
	\dLCHP is proven complete relative to \comFOD,  
	the first-order logic of communication traces and differential equations.
	This yields a comprehensive characterization of all dynamical effects of parallel hybrid systems.
	
	\item%
	\label{itm:fod_compl}
	By a provable equitranslation from \comFOD to FOD,
	\dLCHP becomes relatively complete for FOD.
	This proof-theoretically fully aligns \dLCHP and \dL, 
	and shows that reasoning about parallel hybrid systems is possible to the same extent as reasoning about hybrid systems or differential equation properties.
\end{enumerate}

\paragraph{Outline}

The article is structured as follows:
\rref{sec:dLCHP} recaps the dynamic logic of communicating hybrid programs \dLCHP, in particular, its syntax, semantics, static semantics, and substitution properties.
\rref{sec:calculus} presents a Hilbert-style proof calculus for \dLCHP and proves its soundness.
\rref{sec:completeness} contains the main contribution of this article and gives two complementary completeness results for the calculus in \rref{sec:calculus}.
\rref{sec:related} discusses related work,
and \rref{sec:conclusion} draws conclusions. 

\section{Dynamic Logic of Communicating Hybrid Programs} \label{sec:dLCHP}

This section presents \dLCHP, 
the dynamic logic of communicating hybrid programs (CHPs) \cite{Brieger2023}.
CHPs are a compositional model of parallel interactions of %
hybrid
systems.
They extend hybrid programs \cite{DBLP:journals/jar/Platzer08} 
with synchronous communication and parallelism in the style of communicating sequential processes (CSP) \cite{Hoare1978}.
For compositional reasoning about parallelism,
\dLCHP embeds assumption-commitment (ac) reasoning \cite{Misra1981,AcHoare_Zwiers,AcSemantics_Zwiers,deRoever2001} into \dL's dynamic logic setup.
A convoy of two cars safely adjusting their speed despite lossy communication~\citeDLCHP{} serves as running example.

\subsection{Syntax} \label{sec:sytax}

The syntax builds on sets of real variables $\RVar$,
trace variables $\TVar$,
and channel names $\Chan = \naturals$,
and $\V = \RVar \cup \TVar$ are all variables.
For $\avar \in \V_\anysort$ with $\anysort \in \{\reals, \traces\}$,
define $\type(\avar) = \anysort$ to be the \emph{type} of~$\avar$.
By convention, $x,y \in \RVar$, and $\tvar \in \TVar$, 
and $\ch{}, \ch{dh} \in \Chan$,
and $\avar \in \V$.
Channel sets $\cset, \cset[alt] \subseteq \Chan$ are always assumed to be (co)-finite.
Overlined expressions $\exprvec$ denote vectors.
Two vectors are \emph{compatible} 
if they agree on their length and types per component.
Comparisons $\exprvec_1 \sim \exprvec_2$ are always assumed to be compatible.

CHPs model distributed hybrid systems,
\iest the subprograms $\alpha, \beta$ of the parallel composition $\alpha \parOp \beta$ can communicate
but \emph{may not} share state.
As $\alpha$ and~$\beta$ model hybrid systems,
they evolve truly simultaneously in time.
For this purpose, the special \emph{global time} variable $\gtime \in \RVar$ can be shared between parallel programs.
Time synchronization is then modeled by the requirement that
the subprograms of $\alpha\parOp\beta$ agree on the time~$\gtime$ of each joint communication and in the final states.
Communication is synchronous,
\iest occurs on a channel whenever all parallel programs sharing that channel can agree on the value and time.
In particular, channels are \emph{not} limited to unidirectional communication between exactly two processes,
although this is a common use case.
Asynchronous communication including delay and loss of messages can be modeled.

Since CHPs model hybrid systems,
they only operate over the real-valued state.
Trace terms are included in \dLCHP 
for reasoning about the communication that is observable from CHPs.
Every program~$\alpha$ is assigned a unique trace variable $\getrec{\alpha}$ called the \emph{recorder} of~$\alpha$.
This variable collects the communication events of the program and provides an interface to reason about the communication.
However, the recorder is not part of the model,
and in particular, CHPs cannot read their recorded history.

\begin{definition}[Terms] \label{def:syntax_terms}
	The terms of \dLCHP are real terms $\Rtrm$ and trace terms $\Ttrm$ as defined by the following grammar, where $\ratconst \in \rationals$,
	and $\ch{} \in \Chan$, 
	and $\rp, \rp_1, \rp_2 \in \polynoms{\rationals}{\RVar} \subset \Rtrm$ are polynomials in~$\RVar$ over rational coefficients:
	\begin{align*}
		\Rtrm: &&\re_1, \re_2 & \cceq \underbrace{
			x \mid \ratconst \mid \re_1 + \re_2 \mid \re_1 \cdot \re_2 
		}_{\polynoms{\rationals}{\RVar}}
        \mid \chan{\te} \mid \val{\te} \mid \stamp{\te} \mid \len{\te} \\
		\Ttrm: &&\te_1, \te_2 & \cceq \historyVar \mid \epsilon \mid \comItem{\ch{}, \rp_1, \rp_2} \mid \te_1 \cdot \te_2 \mid \te \downarrow \cset \mid \at{\te}{\re}
	\end{align*}
\end{definition}

\rref{def:syntax_terms} combines real arithmetic as in \dL with communication traces,
which are adapted from ac-reasoning \cite{AcSemantics_Zwiers,deRoever2001} to hybrid systems.
Explicit integer terms as in previous work \cite{Brieger2023} 
are not necessary because they are definable in \dL \cite{DBLP:journals/jar/Platzer17}.
In programs, only polynomials $\polynoms{\rationals}{\RVar} \subset \Rtrm$ occur as the program state is real-valued.
Real terms $\Rtrm$ are polynomials $\polynoms{\rationals}{\RVar}$
plus the selectors $\chan{\te}$, $\val{\te}$, and $\stamp{\te}$ returning the channel name, value, and time, respectively, of the last communication in the trace $\te$,
and $\len{\te}$ denotes the length of $\te$.
If $\te$ is empty, $\val{\te}$, $\stamp{\te}$ and $\chan{\te}$ default to $0$.
Differential forms can be added to \dLCHP \cite{Brieger2023} to support \dL's axiomatic reasoning about differential equation invariants \cite{DBLP:journals/jar/Platzer17},
but are omitted here for simplicity.

Trace terms $\Ttrm$ are variables $\tvar$, the empty trace $\epsilon$, communication items $\comItem{\ch{}, \rp_1, \rp_2}$, concatenation $\te_1 \cdot \te_2$, projection $\te \downarrow \cset$ onto channel set $\cset$, and access $\at{\te}{\re}$ to the $\lfloor \re \rfloor$-th communication item in trace $\te$,
where $\lfloor \cdot \rfloor$ is rounding.
The item $\comItem{\ch{}, \rp_1, \rp_2}$ represents a communication event with value $\rp_1$ at time $\rp_2$ on channel $\ch{}$,
where $\rp_i \in \polynoms{\rationals}{\RVar}$ since value and time come from programs.
The projection $\te \downarrow \cset$ removes all items from $\te$ whose channel name is not in $\cset$.
If access $\at{\te}{\re}$ is out-of-bounds,
it yields the empty trace $\epsilon$.
For example, $\val{\tvar \downarrow \ch{}}$ asks for the value of the last communication along channel $\ch{}$ recorded by $\tvar$,
and $\stamp{\tvar \downarrow \ch{dh}} - \stamp{\at{(\tvar \downarrow \ch{dh})}{\len{\tvar \downarrow \ch{dh}} - 1}}$ is the time difference between the last two $\ch{dh}$-communications.

Programs (\rref{def:syntax_chps}) and formulas (\rref{def:syntax_formulas}) have a mutually dependent syntax
as programs occur 
in formulas via modalities 
and formulas as tests in programs.
Their context-sensitive grammars presume notions of free variables $\SFV(\cdot)$ and bound variables $\SBV(\cdot)$,
and $\SV(\cdot) = \SFV(\cdot) \cup \SBV(\cdot)$,
which are based on syntax and semantics (\rref{sec:static_semantics}).
This circularity between syntax and semantics is well-founded 
because for each operator only free and bound variables of its subexpressions are involved.

\begin{definition}[Programs] \label{def:syntax_chps}
	\emph{Communicating hybrid programs} are defined by the grammar below,
	where $\rp \in \polynoms{\rationals}{\RVar}$ 
	and 
	$\chi \in \FolRA$ is a formula of first-order real arithmetic (over $\polynoms{\rationals}{\RVar}$-terms).
	Every program has a unique recorder variable denoted $\getrec{\alpha}$,
	\iest $\SBV(\alpha) \cap \TVar \subseteq \{ \getrec{\alpha} \}$ 
	and $\getrec{\alpha}$ is arbitrary but fixed if $\SBV(\alpha) \cap \TVar = \emptyset$.
	In $\alpha \parOp \beta$, the subprograms must not share state,
	\iest 
	$\SBV(\gamma) \cap \SV(\ogamma) \subseteq \{ \gtvec, \parrec \}$ 
	and $(\gamma,\ogamma) = \{(\alpha,\beta),(\beta,\alpha)\}$.
	
	\begin{align*}
		\alpha, \beta \cceq \;
		&  
		\underbrace{x \ceq \rp \mid x \ceq * \mid \test{} \mid \evolution*{}{} \mid \alpha \seq \beta \mid \alpha \cup \beta \mid \repetition{\alpha}}
			_\text{hybrid programs from \dL}
		\mid
		\underbrace{\send{}{}{} \mid \receive{}{}{} \mid \alpha \parOp \beta}
			_\text{CSP extension}
	\end{align*}
\end{definition}

CHPs combine hybrid programs from \dL \cite{DBLP:journals/jar/Platzer08} with CSP-style \cite{Hoare1978} communication primitives and a parallel operator.
Assignment $x \ceq \rp$ updates $x$ to $\rp$, 
nondeterministic assignment $x \ceq *$ sets $x$ to any value,
and the test $\test{\chi}$ has no effect on the state if~$\chi$ is satisfied and aborts execution otherwise.
Continuous evolution $\evolution*{}{}$ follows the differential equation $\evolution*{}{non}$ for any duration but only as long as the domain constraint~$\chi$ is not violated.%
\footnote{%
	Unlike in previous work on \dLCHP \cite{Brieger2023},
	the global time $\gtime$ does not silently evolve with every continuous evolution.
	The global passage of time can still be modeled,
	but this explicit modeling simplifies concepts.
}
Terms $\rp$ and tests $\chi$ in programs 
are limited to $\polynoms{\rationals}{\RVar}$-polynomials and first-order real arithmetic $\FolRA$, respectively,
as the program state is real-valued.
Sequential composition $\alpha\seq\beta$ first executes~$\alpha$ and then $\beta$,
nondeterministic choice $\alpha\cup\beta$ either executes~$\alpha$ or~$\beta$,
and repetition~$\repetition{\alpha}$ repeats $\alpha$ for zero or more times.

Communication in \dLCHP is synchronous as in CSP \cite{Hoare1978},
\iest communication takes place on a channel if all parallel programs that share this channel can agree on the same value at the same time (\cf \rref{rem:broadcasting}).
The send statement $\send{}{}{}$ instantaneously communicates the value $\rp$ along the channel $\ch{}$ 
if the environment can accept $\rp$ on~$\ch{}$ at the current time~$\gtime$,
and does not change the local state.
The receive statement $\receive{}{}{}$ assigns any value to the variable $x$ that the environment can communicate
along $\ch{}$ at time~$\gtime$.
If no communication is possible, the execution aborts.
For $\alpha \in \{ \send{}{}{}, \receive{}{}{} \}$,
the trace variable $\tvar$ is the unique recorder $\getrec{\alpha}$
that collects the communication upon execution.
A program whose communication statements carry different trace variables,
\eg $\send{}{}{} \seq \send{}{\tvar_0}{}$, is \emph{not} well-formed.

Parallel composition $\alpha\parOp\beta$ executes $\alpha$ and $\beta$ truly simultaneously,
\iest there is a run of $\alpha\parOp\beta$ if there are runs of the subprograms,
which agree on the value and time~$\gtime$ of every communication on shared channels and on the final time $\gtime$.
The constraint $\SBV(\gamma) \cap \SV(\ogamma) \subseteq \{ \gtvec, \parrec \}$ in \rref{def:syntax_chps} represents that distributed hybrid systems do not share state.%
\footnote{The weaker constraint $\SBV(\alpha) \cap \SBV(\beta) \subseteq \{ \gtvec, \parrec \}$ in previous work \cite{Brieger2023} yields equivalent modeling capabilities,
but the separation of free variables in \rref{def:syntax_chps}
simplifies axioms and completeness proofs.}
As an exception, 
the global time 
$\gtime$
may be shared,
but the subprograms must always agree on its value.
This allows~$\gtvec$ to be used as a global clock that synchronizes the duration of parallel continuous dynamics (see \rref{rem:continuous_parallel}).
The recorder $\parrec$ is unique for $\alpha\parOp\beta$
and therefore shared. 

Analogous to the distinct history variable in Hoare-style ac-reasoning \cite{Hooman1992},
which is fixed for the whole calculus,
every program $\alpha$ in \dLCHP has a unique recorder variable $\getrec{\alpha} \in \TVar$ 
that collects the communication of that program
to provide an interface for reasoning about it.
Uniqueness per program ensures that the total order of communication is observable,
which is necessary for completeness.
A globally fixed recorder, however, 
does not admit bound variable renaming.
We recover this standard feature of logic by the explicit specification of a recorder variable per program.
This further admits explicit substitution for trace variables (see \rref{sec:substitution}).

\begin{remark}
	[Broadcasting]
	\label{rem:broadcasting}
	Channels are \emph{not} limited to unidirectional communication between exactly two processes,
	although this is a common use case.
	For example,
	$\send{}{non}{\rp} \parOp \receive{}{non}{x_1} \parOp \dots \parOp \receive{}{non}{x_l}$
	assigns~$\rp$ to every $x_i$.
	This broadcast communication has useful applications such as the simultaneous announcement of a speed limit to all vehicles in a convoy.
\end{remark}

\begin{remark}
	[Simultaneous evolution]
	\label{rem:continuous_parallel}
	Parallel CHPs have to agree on the global time $\gtvec$ in their final states.
	This enables modeling of truly simultaneous continuous dynamics in parallel programs
	by adding $\evolution*{\gtime' = 1}{non}$ as a global clock to every continuous evolution.
	For example, 
	in $\repetition{(v \ceq * \seq \evolution{\gtime' = 1, x' = v}{non})} \parOp \evolution{\gtime' = 1, y' = 2}{non}$,
	the loop can repeat and change $v$ arbitrarily often, 
	but the overall duration of continuous behavior in the parallel subprograms is the same.
\end{remark}

\begin{example}
	[Communicating cars \citeDLCHP]
	\label{ex:follower_leader}
	\rref{fig:follower_leader} models a convoy of two cars safely adjusting their speed. 
	From time to time, 
	the $\progtt{leader}$ changes its speed $v_l$ in the range $0$ to $\maxvelo$ and notifies this to the $\progtt{follower}$.
	This communication, however, is lossy ($\ch{vel}(\tvar) ! v_l \cup \skipProg$).
	As a safety mechanism, the $\progtt{follower}$ measures its distance $d$ to the $\progtt{leader}$ at least every $\periodicity$ time units.
	This is modeled by receiving the $\progtt{leader}$'s position on channel $\ch{pos}$ in $\ctrlDistance$.
	If the distance~$d$ fell below~$\maxdist$, 
	the $\progtt{follower}$ slows down in $\ctrlDistance$ to avoid collision before the next measurement.
	Regularly, the $\progtt{follower}$ adopts speed updates in $\ctrlVelocity$, but crucially refuses 
	if the last known distance $d$ is unsafe ($\neg(d {>} \maxdist)$).
	Even though the speed update is perfectly fine at the moment,
	an unsafe distance can cause a future collision
	if a future slow down of the $\progtt{leader}$ gets lost (see \rref{fig:convoy_plot}).
	Then only a 
	position measurement can reliably tell if it is safe to obey the $\progtt{leader}$'s speed again.
\end{example}

\begin{figure}[h!tb]
	\centering
	\begin{small}
		$\begin{aligned}
			&\ctrlVelocity \equiv
				\ch{vel}(\tvar) ? \tarvelo \seq 
				\ifstat{d {>} \periodicity \maxvelo}{v_f \ceq \tarvelo} \\
			&\ctrlDistance \equiv 
				\ch{pos}(\tvar) ? \mespos \seq 
				d \ceq \mespos - x_f \seq \waitvar \ceq 0 \seq \\ 
				&\qquad \ifstat{d {\le} \periodicity \maxvelo}{%
					\{v_f \ceq * \seq \test{\orange{0}{v_f}{\safevelo{d}}}\}
				} \\
			&\Plant_f \equiv
				\evolution{x_f' = v_f, \waitvar' = 1}{\waitvar \le \periodicity} \\
			&\progtt{follower} \equiv
				\big( (\ctrlVelocity \cup \ctrlDistance) \seq \Plant_f \big)^*
		\end{aligned}$\hspace{.4cm}
		$\begin{aligned}
			& \ctrlNotify \equiv v_l \ceq * \seq \test{0 {\le} v_l {\le} \maxvelo} \seq (\ch{vel}(\tvar) ! v_l \cup \skipProg) \\
			& \ctrlUpdate \equiv \ch{pos}(\tvar) ! x_l \\ 
			& \Plant_l \equiv \evolution{x_l' = v_l}{non} \\
			& \progtt{leader} \equiv \big( (\ctrlNotify \cup \ctrlUpdate) \seq \Plant_l \big)^* \\
			& \ifstat{\varphi}{\alpha} \equiv \test{\varphi} \seq \alpha \cup \test{\neg\varphi}
		\end{aligned}$
	\end{small}
	\caption{%
		Models of two moving cars ($\progtt{follower}$ and $\progtt{leader}$) 
		whose parallel composition $\progtt{follower} \parOp \progtt{leader}$ forms a convoy.
		All continuous evolutions are assumed to contain $\evolution*{\gtime' = 1}{non}$ %
		to model simultaneous continuous evolution (\cf \rref{rem:continuous_parallel}).
	}
	\label{fig:follower_leader}
	\vspace*{-1.5em}
\end{figure}

\begin{figure}
	\begin{minipage}{.55\textwidth}
      \captionof{figure}{
			\small Qualitative plot of example positions $x_f$ and $x_l$ of the cars over time (see \rref{ex:follower_leader}).
			First, the speed update is accepted (\success).
			The next update is lost (\usebox{\lossy}).
			After reliably measuring the position (\measurment),
			the $\progtt{follower}$ adjusts its speed.
			Crucially, it conservatively rejects the speed update (\deny) 
			when a crash~($\lightning$) with a slowing $\progtt{leader}$ is possible since speed communication may fail (\usebox{\lossy}) until the next reliable position measurement is expected, see dashed trajectory~(\usebox{\crash}).
		}
		\label{fig:convoy_plot}
	\end{minipage}\hspace{.4cm}
	\begin{minipage}{.35\textwidth}
		\vspace*{-.5\baselineskip}
		\centering
		\begin{small}
			\begin{tikzpicture}[node distance= 2em]
    \coordinate (origin) at (0,0);
    \coordinate (coorigin) at ([shift={(-.3, -.4)}] origin);
    \coordinate (epsorigin) at ([shift={(0, -.4)}] origin);


    \draw[black, ->] (coorigin) --++ (4, 0) node[right] {\small time};
    \draw[black, ->] (coorigin) --++ (0, 3) node[above] {\small position};


    \node[disc] [above=1.25 of origin] (l-0) {};
    \node[disc] [above right=2 and 1 of origin] (l-0-1) {};
    \node[disc] [above right=2 and 1.66 of origin] (l-1) {};
    \node[disc] [above right=2 and 2 of origin] (l-2) {};
    \node[disc] [above right=2.6 and 2.5 of origin] (l-2-1) {};
    \node[ghost] [above right=2.75 and 3.66 of origin] (l-3) {};

    \node[ghost][above right=2 and 4 of origin] (l-g) {};


    \node[inner sep=0] (terminate-1) [below=.2cm of l-0-1] {\termination};
    \draw[comarrow] (l-0-1) -- (terminate-1);

    \node[inner sep=0] (terminate-2) [below=.2cm of l-2-1] {\termination};
    \draw[comarrow] (l-2-1) -- (terminate-2);


    \node[disc] (f-0) at (origin) {};
    \node[disc] [above right=1.25 and 1.66 of origin] (f-1) {};
    \node[disc] [above right=1.35 and 2 of origin] (f-2) {};
    \node[ghost] [above right=1.75 and 3.33 of origin] (f-3) {};

    \node[ghost] [above right=3.15 and 3.5 of origin] (f-g) {};


    \draw[black] (l-0) -- (l-0-1);
    \draw[black] (l-0-1) -- (l-1);
    \draw[black] (l-1) -- (l-2);
    \draw[black] (l-2) -- (l-2-1);
    \draw[black, name path=slow] (l-2-1) -- (l-3);


    \draw[black] (f-0) -- (f-1);
    \draw[black] (f-1) -- (f-2);
    \draw[black] (f-2) -- (f-3);

    \draw[dotted, gray] (l-2) -- (l-g);
    \draw[badtrace, name path=bad] (f-2) -- (f-g);


    \draw[comarrow] (l-0) -- (f-0);
    \draw[comarrow] (l-1) -- (f-1);
    \draw[comarrow] (l-2) -- (f-2);


    \path[name intersections={of=slow and bad,by=crash}];
    \node[draw, lightgray, circle] (collision) at (crash) {};
    \node [above left=-.2 and -.1 of collision] {$\lightning$};



    \node [below right=-.15 and -.1 of f-0] {{\scriptsize$\ctrlVelocity$} \!\success};
    \node [below=.1 of f-1] {\scriptsize$\ctrlDistance$ \!\measurment};
    \node [below right=-.15 and -.1 of f-2] {{\scriptsize$\ctrlVelocity$} \!\deny};

    \node [above right=0 and .1 of l-0, rotate=90] {\scriptsize$\ctrlNotify$};
    \node [above right=0 and .1 of l-0-1, rotate=90] {\scriptsize$\ctrlNotify$};
    \node [above right=0 and .15 of l-1, rotate=90] {\scriptsize$\ctrlUpdate$};
    \node [above right=0 and .1 of l-2, rotate=90] {\scriptsize$\ctrlNotify$};
    \node [above right=0 and .1 of l-2-1, rotate=90] {\scriptsize$\ctrlNotify$};


    \node [left=.3 of f-0] {$x_f$};
    \node [left=.3 of l-0] {$x_l$};


    \draw[decoration={brace, mirror, raise=3pt}, decorate] (epsorigin) -- node[below=.5em] {$\le\periodicity$} ++(1.71,0);

    \draw[decoration={brace, mirror, raise=3pt}, decorate] ([shift={(1.71, 0)}] epsorigin) -- node[below=.5em] {$\le\periodicity$} ++(1.66, 0);

    \draw[dotted, gray] ([shift={(1.71, -.3)}] origin) -- (f-1.south);
    \draw[dotted, gray] ([shift={(3.37, -.3)}] origin) -- ([shift={(-.33, 0)}] l-3.north);


    \coordinate (cont) at ([shift={(3.3, .6)}] coorigin);
    \node [right=0 of cont] {\scriptsize controller};
    \node[disc] [left=0 of cont] {};

    \coordinate (com) at ([shift={(0, -.3)}] cont);
    \node [right=0 of com] {\scriptsize comm.};
    \draw[comarrow] ([shift={(-.5, 0)}] com) -- (com);
\end{tikzpicture}
		\end{small}
		\vspace*{-1.5em}
	\end{minipage}
\end{figure}

\begin{definition}
	[Formulas] 
	\label{def:syntax_formulas}
	The formulas of \dLCHP are defined by the grammar below 
	for relations $\sim$,
	terms $\expr_1, \expr_2 \in \Trm$ of equal type,
	and $\avar \in \V$.
	The relations $\sim$ include equality~$=$ on all types, 
	greater-equals $\ge$ on real terms, and the prefix relation $\preceq$ on traces.
	The ac-formulas are unaffected by state change in $\alpha$
	\iest $\SFV(\A, \Commit) \cap \SBV(\alpha) \subseteq \TVar$.
	\begin{align*}
		\varphi, \psi, \A, \Commit \cceq \; 
			& \expr_1 \sim \expr_2 \mid
			\neg \varphi \mid \varphi \wedge \psi 
			\mid \fa{\avar} \varphi 
			\mid [ \alpha ] \psi \mid [\alpha] \ac \psi  
			\mid \langle \alpha \rangle \psi \mid \langle \alpha \rangle \ac \psi
	\end{align*}
\end{definition}

The formulas of \dLCHP combine first-order dynamic logic \cite{Harel1979} with ac-reasoning~\mbox{\cite{Misra1981,AcHoare_Zwiers}} by adding ac-modalities.
As usual, the box $[ \alpha ] \psi$ holds if the postcondition $\psi$ holds in all states reachable by program $\alpha$,
and the diamond $\langle \alpha \rangle \psi$ holds if $\psi$ holds in some state reachable by $\alpha$.
Let an $(\A, \alpha)$-run be an $\alpha$-run whose incoming communication satisfies the assumption~$\A$.
Then the ac-box $[ \alpha ] \ac \psi$ states that for all $(\A, \alpha)$-runs,
the outgoing communication fulfills the commitment $\Commit$
\emph{and} if a final state is reached, 
the postcondition $\psi$ holds there.
Dually, the ac-diamond $\langle \alpha \rangle \ac \psi$ 
holds if there exists an $(\A, \alpha)$-run
that either satisfies $\Commit$
\emph{or}
reaches a final state where $\psi$ holds.
Although the dynamic modalities $[ \alpha ] \psi$ and $\langle \alpha \rangle \psi$ are special cases $[ \alpha ] \acpair{\true, \true} \psi$ and $\langle \alpha \rangle \acpair{\true, \false} \psi$ of the ac-modalities, respectively,
dynamic modalities are included explicitly as they specify closed systems,
where the environment has no influence,
and enable succinct and modular axioms for communication-free dynamics.

Other relations,
\eg $\neq$, $\le$, $<$, $>$, and strict prefixing $\prec$,	
first-order connectives,
\eg $\vee$ and $\ex{\avar} \varphi$,
and truth $\true$ and falsity $\false$
are definable as usual.
Where useful, we write $\fa{\avar{:}\anysort}$ with explicit type $\anysort = \type(\avar)$ instead of $\fa{\avar}$ for emphasis,
and $\fa{\avar{=}\expr} \varphi$ is short for $\fa{\avar}(\avar = \expr \rightarrow \varphi)$,
and likewise $\ex{\avar{=}\expr} \varphi \equiv \ex{\avar} (\avar=\expr \wedge \varphi)$.

The proof calculus (\rref{sec:calculus}) modularly integrates ac-reasoning,
which has only been supported for Hoare-logic previously,
and the dynamic logic \dL.
The cornerstone of this development is a modal logic interpretation of ac-reasoning, 
based on the insight
that the assumption-program pair $(\A, \alpha)$ induces the reachability relation of the ac-box $[ \alpha ] \ac \psi$ 
while the commitment-postcondition pair $(\Commit, \psi)$ is evaluated in the reachable worlds.
For an ac-modality $\dbleft \alpha \dbright \ac \psi$,
call $(\A, \alpha)$ the \emph{modal action},
$(\A, \Commit)$ the \emph{ac-contract},
and $(\Commit, \psi)$ the \emph{promise}.
This complements the classical view~\cite{Misra1981,AcHoare_Zwiers} 
that the ac-contract specifies $\alpha$'s communication interface with a clear modal perspective.

The ac-contract $(\A, \Commit)$ of $\dbleft \alpha \dbright \ac \psi$ must not depend on the state variables $\SBV(\alpha)$ of $\alpha$
because compositional reasoning needs specifications only based on the observable behavior \cite{deRoever2001}.
Since parallel programs only interact by communication,
change of state variables is not observable from the environment.
A formula-program pair $(\chi, \alpha)$ is called \emph{communicatively well-formed} if $\SFV(\chi) \cap \SBV(\alpha) \subseteq \TVar$.
In particular,
$(\A, \alpha)$ and $(\Commit, \alpha)$ are communicatively well-formed for $\dbleft \alpha \dbright \ac \psi$
by \rref{def:syntax_formulas}.

\begin{example} \label{ex:convoy_safety}
	\rref{ex:follower_leader} models a convoy consisting of a $\progtt{follower}$ and a $\progtt{leader}$ car.
	Now, the formula below specifies when to consider their parallel interaction safe.
	If the cars start with a distance ${>}d$, 
	and if the $\progtt{follower}$ has a speed $v_f{\le}\nicefrac{d}{\periodicity}$ that prevents it from colliding with the $\progtt{leader}$ within the first $\periodicity$ time units,
	and if the $\progtt{leader}$ does not drive backward initially ($v_l{\ge}0$),
	then the cars do never collide ($x_f < x_l$) when run in parallel.
	\begin{equation*}
		\periodicity \ge 0 \wedge
			\waitvar = 0 \wedge
			0 {\le} v_f {\le} \safevelo{d} \wedge
			v_f \le \maxvelo \wedge
			v_l \ge 0 \wedge
			x_f + d < x_l
		\rightarrow [ \progtt{follower} \parOp \progtt{leader} ]
		\, x_f < x_l 
	\end{equation*}
\end{example}

\subsection{Semantics} \label{sec:semantics}

\newcommand{\traceone}{\trace}
\newcommand{\tracetwo}{\rho}

\newcommand{\tprecision}{W}

\newcommand{\indexvar}{i}

The denotational semantics of \dLCHP
assigns a value to every term 
and a reachability relation to every program,
and it defines the satisfaction relation for formulas.

A \emph{(communication) event} $\comItem{\ch{}, \semConst, \duration} \in \Chan \times \reals \times \reals$ occurs on a channel $\ch{}$, and carries a value $\semConst$ and a timestamp~$\duration$.
A \emph{trace} is a finite sequence of events,
and $\epsilon$ denotes the empty trace.%
\footnote{
	The events are not necessarily chronological,
	\eg the program $\send{}{}{} \seq \gtime \ceq \gtime{-}1 \seq \send{}{}{}$ yields non-chronological events.
	However, real-world models commonly feature chronological communication.
}
The set of traces is denoted $\traces = (\Chan \times \reals \times \reals)^*$.
For traces $\traceone, \tracetwo$ and channels 
$\cset \subseteq \Chan$,
the trace $\traceone \cdot \tracetwo$ is the \emph{concatenation} of $\traceone, \tracetwo$,
and the \emph{projection} $\trace \downarrow \cset$ is obtained from $\trace$ by removing all events whose channel 
is not in $\cset$.
We write $\trace \downarrow \alpha$ for $\trace \downarrow \SCN(\alpha)$,
where $\SCN(\alpha)$ are the channels written by $\alpha$ (see \rref{def:static_semantics}).
For $\trace\in\traces$ and $\semConst \in \reals$,
access~$\at{\trace}{\semConst}$ returns the~$\lfloor\semConst\rfloor$-th item of~$\trace$
and $\epsilon$ if $\lfloor\semConst\rfloor$ is out-of-bounds ($\lfloor \cdot \rfloor$ is rounding).
The (strict) prefix relation on traces is written ($\prec$) $\preceq$.
A \emph{recorded trace} $\trace = (\tvar, \trace_0) \in \TVar \times \traces$ 
is the result of recording the communication $\trace_0$ of a program $\alpha$ by its unique recorder variable~$\tvar=\getrec{\alpha}$.%
\footnote{
	Since programs have a unique recorder variable as necessary for completeness,
	recorded traces do not carry a variable per event as in previous work \cite{Brieger2023}.
}
For $\trace = (\tvar, \trace_0)$,
define $\trace(\tvar) = \trace_0$ and $\trace(\tvar_0) = \epsilon$ if $\tvar_0 \neq \tvar$.
For $\traceone = (\tvar, \traceone_0)$ and $\tracetwo = (\tvar, \tracetwo_0)$ (the recorders match),
lift the definitions for traces:
$(\tvar, \trace) \downarrow \cset = (\tvar, \trace \downarrow \cset)$,
and $\traceone \cdot \tracetwo = (\tvar, \traceone_0 \cdot \tracetwo_0)$,
and $\at{\trace}{\semConst} = (\tvar, \at{\trace_0}{\semConst})$,
and $\traceone \sim \tracetwo$ if $\traceone_0 \sim \tracetwo_0$ for $\sim \,\in \{\prec,\preceq\}$,
and identify $\epsilon = (\tvar, \epsilon) \in \TVar \times \traces$ for any $\tvar$.

A \emph{state} is a 
mapping $\pstate{v} : \V \rightarrow \reals \cup \traces$ from variables to values such that $\pstate{v}(\avar) \in \type(\avar)$ for all $\avar\in\V$.
If $\semConst \in \type(\avar)$, 
the \emph{update} $\pstate{v} \subs{\avar}{\semConst}$ is defined by $\pstate{v} \subs{\avar}{\semConst} (\avar) = \semConst$ and $\pstate{v} \subs{\avar}{\semConst} = \pstate{v}$ on $\{ \avar \}^\complement$.
\emph{State-trace concatenation} $\pstate{v} \cdot \trace$ 
with recorded trace $\trace = (\tvar, \trace_0)$ is defined by $\pstate{v} \cdot \trace 
= \pstate{v} \subs{\tvar}{(\pstate{v} \cdot \trace)(\tvar)}$,
where $(\pstate{v} \cdot \trace)(\tvar)
= \pstate{v}(\tvar) \cdot \trace_0$.
For $\tprecision \subseteq \TVar$, 
the \emph{projection} $\pstate{v} \downarrow_{\tprecision} \cset$ applies to every variable in $\tprecision$,
\iest $(\pstate{v} \downarrow_{\tprecision} \cset)(\tvar) = \pstate{v}(\tvar) \downarrow \cset$ for all $\tvar \in \tprecision$
and $\pstate{v} \downarrow_{\tprecision} \cset = \pstate{v}$ on~$\tprecision^\complement$.
If $\tprecision = \TVar$,
write $\pstate{v} \downarrow \cset$ for $\pstate{v} \downarrow_{\tprecision} \cset$.

The semantics of of terms (\rref{def:semantics_terms}) evaluates variables by the state and every operator by its semantic counterpart,
\eg $\sem{\te \downarrow \cset}{\pstate{v}} = \sem{\te}{\pstate{v}} \downarrow \cset$.

\begin{definition}
	[Term semantics]
	\label{def:semantics_terms}
	\renewcommand{\lstate}[2][]{\pstate[#1]{#2}}
	The \emph{valuation} $\sem{\expr}{\lstate{v}} \in \reals \cup \traces$ of the term~$\expr$ in the state $\pstate{v}$ is defined as follows,
	where $\fsymb[builtin] \in \{\usarg+\usarg, \usarg\downarrow\cset,\ldots\}$ is any built-in operator including constants:
	\begin{align*}
		\sem{\avar}{\lstate{v}} 
			& = \pstate{v}(\avar) \\
		\qquad \sem{\fsymb[builtin](\expr_1, \ldots, \expr_k)}{\lstate{v}} 
			& = \fsymb[builtin](\sem{\expr_1}{\lstate{v}}, \ldots, \sem{\expr_k}{\lstate{v}})
	\end{align*}
\end{definition}

\newif\ifinterpreted\interpretedfalse

\renewcommand{\lstate}[2][]{\pstate[#1]{#2}}
\newcommand{\chpsem}[2]{\sem{#1}{}}%
\renewcommand{\lsolution}{\odeSolution}

Since the syntax of programs and formulas is mutually dependent,
\rref{def:semantics_programs} and \rref{def:semantics_formulas} define their semantics by a mutual recursion on their structure.
The denotational semantics of CHPs \cite{Brieger2023} embeds \dL's Kripke semantics \cite{DBLP:journals/jar/Platzer08} into a linear history semantics for communicating programs \cite{AcHoare_Zwiers}.
Additionally, parallel hybrid dynamics synchronize in the global time,
\iest joint communication needs to agree on the time $\gtime$
and the final states need to agree on all shared real variables $\gtvec$.

\newcommand{\inistates}{\mathcal{I}}
\newcommand{\finstates}{\mathcal{F}}

The denotation $\sem{\alpha}{} \in \pDomain_{(\getrec{\alpha})}$ of a CHP $\alpha$ with unique recorder $\getrec{\alpha}$ is drawn from a domain $\pDomain_{\tvar} \subseteq \powerset(\states {\times} (\{\tvar\} {\times} \traces) {\times} \botop{\states})$ with $\botop{\states} = \states \cup \{ \bot \}$,
where $\powerset(\cdot)$ is the powerset.
Each %
run $\run \in \sem{\alpha}{}$ starts in an initial state $\pstate{v}$, 
emits communication $\trace$,
and either 
approaches a state $\pstate{w} \neq \bot$,
or if $\pstate{w} = \bot$, the run is an \emph{unfinished computation}.
If unfinished, the run either can be continued \emph{or} failed a test or domain constraint
such that execution aborts.
The trace $\trace = (\getrec{\alpha}, \trace(\getrec{\alpha}))$ is recorded by the unique recorder~$\getrec{\alpha}$ of the program $\alpha$ 
and $\trace(\getrec{\alpha})$ is the actual communication of that program.
For each denotation $\denotation \in \pDomain_{\tvar}$,
the set $\inistates(\denotation) = \{ \pstate{v} \cdot \trace \mid \eexists{\pstate{w}} : \run \in \denotation \}$ are its \emph{intermediate states}
and $\finstates(\denotation) = \{ \pstate{w} \cdot \trace \mid \eexists{\pstate{v}} : \run \in \denotation \text{ and } \pstate{w} \neq \bot \}$ are its \emph{final states}.

The linear order of communication events imposes two natural properties on every denotation:
\emph{Prefix-closedness} requires all prefixes $\trace[pre] \preceq \trace$ to be observable before a program can communicate $\trace$.
\emph{Totality} requires that computation can start from every state even if it aborts immediately.
To lift the prefix relation~$\preceq$ on (recorded) traces to trace-state pairs,
define $(\trace[pre], \pstate[pre]{w}) \preceq (\trace, \pstate{w})$
if $(\trace[pre], \pstate[pre]{w}) = (\trace, \pstate{w})$, or $\trace[pre] \preceq \trace$ and $\pstate[pre]{w} = \bot$.
Then define a denotation $\denotation \in \pDomain_{\tvar}$ to be prefix-closed 
if $\run \in \denotation$ and $\obs[pre] \preceq \obs$,
imply $\run[pre] \in \denotation$.
Further, $\denotation$ is total if for every state $\pstate{v}$, there is some $\run \in \denotation$.
In particular, $(\pstate{v}, \epsilon, \bot) \in \denotation$ for every $\pstate{v}$.
The domain $\pDomain_{\tvar}$ of the CHPs with recorder variable $\tvar$ are the
prefix-closed and total subsets of $\powerset(\states \times (\{\tvar\} \times \traces) \times \botop{\states})$.
On $\pDomain_{\tvar}$, the subset relation $\subseteq$ is a partial order with least element $\pLeast = \states \times \{ \epsilon \} \times \{\bot\}$,
\iest~every denotation contains $\pLeast$
because computation can start in every state.

The semantics of composite programs is given by operators on denotations:
For denotations $\denotation, \denotation[alt] \in \pDomain_{\tvar}$,
define \emph{lowering} $\botop{\denotation} = \{ (\pstate{v}, \trace, \bot) \mid \run \in \denotation \}$
and \emph{continuation} $\denotation \continuation \denotation[alt]$ by $(\pstate{v}, \trace_1 \cdot \trace_2, \pstate{w}) \in \denotation \continuation \denotation[alt]$ if there are $(\pstate{v}, \trace_1, \pstate{u}) \in \denotation$ and $(\pstate{u}, \trace_2, \pstate{w}) \in \denotation[alt]$.
Further, define (prefix-closed) sequential composition as $\denotation \closedComposition \denotation[alt] = \botop{\denotation} \cup (\denotation \continuation \denotation[alt])$.
Let $\sIdentity = \states \times \{\epsilon\} \times \states$.
Then $\pIdentity = \pLeast \cup \sIdentity$ is the neutral element of 
$\closedComposition$.
Notably, $\pLeast = \botop{(\sIdentity)}$,
\iest $\sIdentity$ becomes neutral by making it prefix-closed.
Semantical iteration~$\denotation^\semvar$ is inductively defined by $\denotation^0 = \pIdentity$
and $\denotation^{\semvar+1} = \denotation \closedComposition \denotation^\semvar$.
Syntactically, define~$\alpha^\semvar$  by $\alpha^0 \equiv \test{\true}$ and $\alpha^{\semvar+1} = \alpha \seq \alpha^\semvar$,
and indeed, $\sem{\alpha}{}^\semvar = \sem{\alpha^\semvar}{}$ for each $\semvar$.
For programs $\alpha, \beta$ and states $\pstate{w}_\alpha, \pstate{w}_\beta \in \botop{\states}$,
the \emph{merged state} $\pstate{w}_\alpha \merge \pstate{w}_\beta$ is $\bot$ if at least one of $\pstate{w}_\alpha$ and $\pstate{w}_\beta$ is $\bot$.
Otherwise, define $\pstate{w}_\alpha \merge \pstate{w}_\beta = \pstate{w}_\alpha$ on $\SBV(\alpha)$,
and define $\pstate{w}_\alpha \merge \pstate{w}_\beta = \pstate{w}_\beta$ on $\SBV(\alpha)^\complement$.
For final states~$\pstate{w}_\alpha, \pstate{w}_\beta$ of a parallel composition $\alpha\parOp\beta$,
merging is symmetric, \iest $\pstate{w}_\alpha \merge \pstate{w}_\beta = \pstate{w}_\beta \merge \pstate{w}_\alpha$,
because parallel programs do not share bound variables (\rref{def:syntax_chps}).

\newcommand{\eqgtime}[2]{#1(\gtime) = #2(\gtime)}

\begingroup
\allowdisplaybreaks
\begin{definition}
	[Program semantics]
	\label{def:semantics_programs}
	The \emph{semantics} $\sem{\alpha}{} \in \pDomain_{(\getrec{\alpha})}$ of a CHP $\alpha$ with unique recorder $\getrec{\alpha}$ is defined below,
	where $\vDash$ denotes the satisfaction relation (\rref{def:semantics_formulas}).
	\begingroup
	\begin{align*}%
		&\chpsem{x \ceq \rp}{}
		= \pLeast \cup \{ (\pstate{v}, \epsilon, \pstate{w}) \mid \pstate{w} = \pstate{v} \subs{x}{\semConst} \text{ where } \semConst = \sem{\rp}{\lstate{v}} \} \\
		&\chpsem{x \ceq *}{}
		= \pLeast \cup \{ (\pstate{v}, \epsilon, \pstate{w}) \mid \pstate{w} = \pstate{v} \subs{x}{\semConst} \text{ where } \semConst \in \reals \} \\
		&\chpsem{\test{}}{}
		= \pLeast \cup \{ (\pstate{v}, \epsilon, \pstate{v}) \mid \lstate{v} \vDash \chi \} \\
		&\chpsem{\evolution*{}{}}{} 
		= \pLeast \cup \big\{ (\odeSolution(0), \epsilon, \odeSolution(\duration)) \mid
			\odeSolution(\zeta) = \odeSolution(0)
		    \text{ on } \{ x \}^\complement\text{, and } \lsolution(\zeta) \vDash \evolution*{}{non} \wedge \chi \\
			&\quad \text{ for all } \zeta \in [0, \duration] \text{ and a solution } \odeSolution : [0, \duration] \rightarrow \states \text{ with } \odeSolution(\zeta)(x') = \solutionDerivative{\odeSolution}{x}(\zeta) \big\} \\
		&\chpsem{\send{}{}{}}{}
			= \big\{ (\pstate{v}, \mkrectrace{\tvar}{\trace}, \pstate{w}) \mid \observable \preceq (\comItem{\ch{}, \semConst, \statetime{\pstate{v}}}, \pstate{v}) \text{ where } \semConst = \sem{\rp}{\pstate{v}} \big\} \\
		&\chpsem{\receive{}{}{}}{} 
			= \big\{ (\pstate{v}, \mkrectrace{\tvar}{\trace}, \pstate{w}) \mid \observable \preceq (\comItem{\ch{}, \semConst, \statetime{\pstate{v}}}, \pstate{v} \subs{x}{\semConst}) \text{ where } \semConst \in \reals \big\} \\
		&\chpsem{\alpha \cup \beta}{}
		= \chpsem{\alpha}{} \cup \chpsem{\beta}{} \\
		&\chpsem{\alpha \seq \beta}{} 
		= \chpsem{\alpha}{} \closedComposition \chpsem{\beta}{} 
		\equalsdef 
            \ifinterpreted
                \botop{(\chpsem{\alpha}{})}
            \else
                \botop{\chpsem{\alpha}{}}
            \fi
        \cup (\chpsem{\alpha}{} \continuation \chpsem{\beta}{})\\
		&\chpsem{\repetition{\alpha}}{} 
		= \bigcup_{n \in \naturals} 
            \ifinterpreted
                (\chpsem{\alpha}{})^\semvar
            \else
                \chpsem{\alpha}{}^\semvar
            \fi
		= \bigcup_{n \in \naturals} \chpsem{\alpha^\semvar}{} 
		\sidecondition[color=black]{\quad where $\alpha^0 \equiv \test{\true}$ and $\alpha^{\semvar+1} = \alpha \seq \alpha^\semvar$} \\
		&\chpsem{\alpha \parOp \beta}{}
		= \Bigg\{ (\pstate{v}, \trace, \pstate{w}_\alpha \merge \pstate{w}_\beta)
		\;\bigg\vert\; 
		\begin{aligned}
			& \computation[proj=\gamma] \in \chpsem{\gamma}{} \text{ for } \gamma \in \{\alpha,\beta\} \text{, and } \\
			& \eqgtime{\pstate{w}_\alpha}{\pstate{w}_\beta}
            \text{, and }
			\trace \downarrow \parchans = \trace
		\end{aligned} 
		\Bigg\}
	\end{align*}
	\endgroup
\end{definition}
\endgroup

The denotation $\sem{\alpha}{}$ is well-defined because~$\alpha$ has a unique recorder (\rref{def:syntax_chps}),
so that \rref{def:semantics_programs} applies~$\continuation$ only to denotations with equal recorder.
Further, $\sem{\alpha}{}$ is indeed prefix-closed and total,
which can be shown by induction on the structure of $\alpha$.%
\footnote{
	The only remarkable cases are sequential composition,
	which is prefix-closed thanks to the inclusion of $\botop{\sem{\alpha}{}}$,
	and parallel composition $\alpha\parOp\beta$,
	which is prefix-closed because projection is a congruence on the prefix relation such that $\trace[pre] \preceq \trace$ implies $\trace[pre] \downarrow \gamma \preceq \trace \downarrow \gamma$ for $\gamma \in \{ \alpha, \beta \}$.
}

The key insight for \dLCHP's compositional proof calculus is that the semantics itself is compositional, 
\iest for every statement, the semantics is a simple function of the semantics of its pieces.
The case $\sem{\evolution*{}{}}{}$ characterizes $\odeSolution$ as a solution of the differential equation $\evolution*{}{non}$
that satisfies $\chi$ at all times.
The communication~$\trace$ of a parallel composition $\alpha\parOp\beta$ is implicitly characterized by projections onto the subprograms
to avoid the exhaustive enumeration of all possible interleavings.
Since all programs sharing a channel need to agree on the communication along this channel,~$\trace$ can be observed from~$\alpha\parOp\beta$ if the subtraces $\trace\downarrow\gamma$,
\iest the events in $\trace$ along channels of $\gamma$,
can be observed from~$\gamma$.
The guard $\trace \downarrow \parchans = \trace$ excludes non-causal communication 
not belonging to either subprogram.
Implicitness of the interleavings in $\trace$ 
enables compositional reasoning 
by projection 
onto the relevant communication for a property
instead of verifying properties by enumerating all possible communication.
By $\eqgtime{\pstate{w}_\alpha}{\pstate{w}_\beta}$,
parallel computations need to agree on a common final time,
which unambiguously determines the final value of~$\gtvec$.

The formula semantics (\rref{def:semantics_formulas}) of the first-order constructs is standard.
The ac-box adapts its semantics from Hoare-style ac-reasoning \cite{AcHoare_Zwiers, AcSemantics_Zwiers},
and the ac-diamond is made the modal dual of the ac-box.
The semantics of the dynamic modalities 
equals
the semantics of their syntactical embeddings as ac-modalities,
and reflects a generalization of their semantics in dynamic logic \cite{Harel1979} to communicating programs.

\begin{definition}
	[Formula semantics]
	\label{def:semantics_formulas}
	The \emph{satisfaction} $\lstate{v} \vDash \phi$ of a \dLCHP formula~$\phi$ in
	\ifinterpreted interpretation $\inter$ and\else\fi
	state $\pstate{v}$ is inductively defined below,
	and $\sem{\phi}{} = \{ \pstate{v} \in \states \mid \pstate{v} \vDash \phi \}$ denotes all states satisfying $\phi$.
	For a set of 
	\ifinterpreted
		interpretation-state pairs $U \subseteq \inter \times \states$
	\else states $U \subseteq \states$ \fi
	and a formula $\varphi$, 
	write $U \vDash \varphi$ if $\lstate{v} \vDash \varphi$ for all $\lstate{v} \in U$. 
	Trivially, $\emptyset \vDash \varphi$.
	A formula $\phi$ is \emph{valid} (written $\vDash \phi$)
	if $\pstate{v} \vDash \phi$ for all states $\pstate{v}$.
	\begin{enumerate}
		\item $\lstate{v} \vDash \expr_1 {\sim} \expr_2$ if $\sem{\expr_1}{\lstate{v}} \sim \sem{\expr_2}{\lstate{v}}$ \sidecondition[color=black]{\quad where $\sim$ is any relation symbol}
		
        \ifinterpreted
            \item $\lstate{v} \vDash \psymb(\cset, \expr_1, \ldots, \expr_k)$ if $(\sem{\expr_1}{\lstate[alt]{v}}, \ldots, \sem{\expr_k}{\lstate[alt]{v}}) \in \interOf{\psymb}$
		    \sidecondition[color=black]{\quad where $\pstate[alt]{v} = \pstate{v} \downarrow \cset$}
        \else\fi
		
        \item $\lstate{v} \vDash \varphi \wedge \psi$ if $\lstate{v} \vDash \varphi$ and $\lstate{v} \vDash \psi$
		\item $\lstate{v} \vDash \neg \varphi$ if $\lstate{v} \nvDash \varphi$, \iest it is not the case that $\lstate{v} \vDash \varphi$
		\item $\lstate{v} \vDash \fa{\avar} \varphi$ if $\lstate{v} \subs{\avar}{d} \vDash \varphi$ for all $d \in \type(\avar)$
		\item \label{itm:dynBoxSem}
		$\lstate{v} \vDash [ \alpha ] \psi$ if $\stconcat{\lstate{w}}{\trace}\vDash \psi$ for all $\computation \in \sem{\alpha}{}$ with $\pstate{w} \neq \bot$ 
		\item \label{itm:acBoxSem}
		$\lstate{v} \vDash [ \alpha ] \ac \psi$ if for all $\computation \in \sem{\alpha}{}$ the following conditions both hold:
		\begin{align}
			\vspace{-.7em}
			& \assCommit{\lstate{v}}{\trace} \vDash \A \text{ implies } \stconcat{\lstate{v}}{\trace} \vDash \Commit \tag{commit} \label{eq:commit} \\
			&\big( \assPost{\lstate{v}}{\trace} \vDash \A \text{ and } \pstate{w} \neq \bot \big) \text{ implies } \stconcat{\lstate{w}}{\trace} \vDash \psi \tag{post} \label{eq:post}
		\end{align}%
		
		\item \label{itm:dynDiaSem}
		$\lstate{v} \vDash \langle \alpha \rangle \psi$ if $\stconcat{\lstate{w}}{\trace}\vDash \psi$ for some $\computation \in \sem{\alpha}{}$ with $\pstate{w} \neq \bot$

		\item \label{itm:acDiaSem}
		$\lstate{v} \vDash \langle \alpha \rangle \ac \psi$ 
		if for some $\run \in \sem{\alpha}{}$ at least one of the following conditions holds:
		\begin{align}
			\vspace{-.7em}
			& \assCommit{\lstate{v}}{\trace} \vDash \A \text{ and } \stconcat{\lstate{v}}{\trace} \vDash \Commit \tag{commit} \\
			&\big( \assPost{\lstate{v}}{\trace} \vDash \A \text{ and } \pstate{w} \neq \bot \big) \text{ and } \stconcat{\lstate{w}}{\trace} \vDash \psi \tag{post} 
		\end{align}%
	\end{enumerate}
\end{definition}

\newcommand{\intersem}[1]{\sem{#1}{}_\text{I}}
\newcommand{\finsem}[1]{\sem{#1}{}_\text{F}}

The ac-contract~$(\A, \Commit)$ receives its semantics by \acCommit.
Since the ac-contract is evaluated for all prefixes of $\alpha$'s communication by prefix-closedness,
it can be understood as an invariant of $\alpha$'s communication history in case $[ \alpha ] \ac \psi$.
Strict prefixing~$\prec$ in \acCommit ensures well-foundedness 
of the mutual guarantees between different programs \cite{AcHoare_Zwiers},
and upon termination as in \acPost all ($\preceq$) assumptions are observable.

From the modal logic viewpoint, 
the communicatively well-formed formula-program pairs can be 
seen
as the modal actions.
Where useful, we use the 
transition relation $\sem{\A, \alpha}{}_\sim$ 
for modal actions,
which is defined as follows,
where $\sim \,\in \{ \prec, \preceq \}$:
\begin{equation}
	\label{eq:semantics_action}
	\sem{\A, \alpha}{}_\sim = \big\{ 
		\run \mid \run \in \sem{\alpha}{} \text{ and } \assCP{\sim}{\pstate{v}}{\trace} \vDash \A 
	\big\}
\end{equation}

\subsection{Static Semantics} \label{sec:static_semantics}

The previous section gave the dynamic semantics of \dLCHP,
which precisely captures the valuation of terms, truth of formulas, and transition behavior of CHPs.
This section introduces \dLCHP's static semantics,
which determines free names,
\iest the variables and channels expressions and programs depend on,
and bound names,
\iest the variables and channels written by programs.
The coincidence properties given in this section refer to the static semantics and are an essential tool for our soundness arguments. 
\rref{def:static_semantics} defines the static semantics based on the dynamic semantics \cite{Brieger2023}.
For soundness arguments, this approach is preferred over syntactic computation because it precisely identifies the aspects of the static semantics that influence soundness.
For implementation in a theorem prover,
sound overapproximations can be computed along the syntactical structure of the expressions \cite{Brieger2023}.
The static semantics of \dLCHP refines the static semantics of \dL \cite{DBLP:journals/jar/Platzer08} by taking communication into account.
\rref{def:static_semantics} considers formulas to be truth-valued to treat terms and formulas uniformly,
\iest $\sem{\phi}{\lstate{v}} = \mathbf{tt}$ if $\lstate{v} \vDash \phi$ and $\sem{\phi}{\lstate{v}} = \mathbf{ff}$ if $\lstate{v} \nvDash \phi$.

\begin{definition}[Static semantics]
	\label{def:static_semantics}
	For term or formula $\expr$, and program $\alpha$,
	define \emph{free variables} $\SFV(\expr)$ and $\SFV(\alpha)$, \emph{bound variables} $\SBV(\alpha)$, \emph{accessed channels} $\SCNX{\tprecision}(\expr)$ via the trace variables $\tprecision \subseteq \TVar$,
	and \emph{written channels} $\SCN(\alpha)$.
	If $\tprecision = \TVar$, write $\SCN(\expr)$ for $\SCNX{\tprecision}(\expr)$.
	Further, define $\SFV(\expr_1, \ldots, \expr_\semvar) = \bigcup_{j=1}^\semvar \SFV(\expr_j)$ and similar for $\SBV(\cdot), \SCN(\cdot)$.
	\newcommand{\suchthat}{:}
	\begin{align*}
		\SFV(\expr) 
			& = \{ \avar \in \V \mid \eexists 
			\ifinterpreted\inter,\else\fi 
			\pstate{v}, \pstate[alt]{v} \suchthat \pstate{v} = \pstate[alt]{v} \text{ on } \{ \avar \}^\complement \text{ and } \sem{\expr}{\lstate{v}} \neq \sem{\expr}{\lstate[alt]{v}} \} \\
		\SCNX{\tprecision}(\expr) 
			& = \{ \ch{} \in \Chan \mid \eexists 
			\ifinterpreted\inter,\else\fi 
			\pstate{v}, \pstate[alt]{v} \suchthat \pstate{v} \downarrow_{\tprecision} \{ \ch{} \}^\complement = \pstate[alt]{v} \downarrow_{\tprecision} \{ \ch{} \}^\complement \text{ and }
			\sem{\expr}{\lstate{v}} \neq \sem{\expr}{\lstate[alt]{v}} \} \\
		\SFV(\alpha) 
			& = \{ \avar \in \V \mid \eexists 
			\ifinterpreted\inter,\else\fi 
			\pstate{v}, \pstate[alt]{v}, \trace, \pstate{w} \suchthat \pstate{v} = \pstate[alt]{v} \text{ on } \{ \avar \}^\complement \text{ and } \computation \in \chpsem{\alpha}{\inter} \text{,} \\[-.3em]
			& \qquad\qquad\qquad \text{ and not } \eexists (\pstate[alt]{v}, \trace, \pstate[alt]{w}) \in \chpsem{\alpha}{\inter} \suchthat \pstate{w} = \pstate[alt]{w} \text{ on } \{ \avar \}^\complement \} \\
		\SBV(\alpha)
			& = \{ \avar \in \V \mid \eexists  
			\ifinterpreted\inter,\else\fi
			\computation \in \chpsem{\alpha}{\inter} \suchthat \pstate{w} \neq \bot \text{ and } (\stconcat{\pstate{w}}{\trace})(\avar) \neq \pstate{v}(\avar) \} \\
		\SCN(\alpha)
			& = \{ \ch{} \in \Chan \mid \eexists 
			\ifinterpreted\inter,\else\fi  
			\computation \in \chpsem{\alpha}{\inter} \suchthat \trace \downarrow \{\ch{}\} \neq \epsilon \}
	\end{align*}
\end{definition}

Based on the static semantics, the bound effect property (\rref{lem:bound_effect}) and coincidence properties for terms and formulas (\rref{lem:expr_coincidence}), and programs (\rref{lem:program_coincidence}) are given in the following.
Proofs are in previous work \cite{Brieger2023}.

\begin{lemma}
	[Bound effect property] 
	\label{lem:bound_effect}
	The sets $\SBV(\alpha)$ and $\SCN(\alpha)$ are the smallest sets with the \emph{bound effect property for program} $\alpha$. 
	That is, $\pstate{v} = \stconcat{\pstate{w}}{\trace}$ on $\SBV(\alpha)^\complement$  
	and $\pstate{v} = \pstate{w}$ on~$\TVar$
	if $\pstate{w} \neq \bot$, and $\trace \downarrow \SCN(\alpha)^\complement = \epsilon$ for all $\computation \in \chpsem{\alpha}{\inter}$.
\end{lemma}

As usual, a variable is free in an expression if its value affects the evaluation.
The coincidence property (\rref{lem:expr_coincidence}) exploits an even more precise analysis of trace variables based on accessed channels $\SCNX{\tprecision}(\expr)$.
By projection, an expression may depend only on parts of a trace,
\eg $\tvar \downarrow \ch{} = \epsilon$ 
only depends on communication on the channels~$\{\ch{}\}$ but not on $\{\ch{}\}^\complement$.
This precision 
is crucial for the soundness argument of the parallel injection axiom,
which embeds a subprogram into a parallel composition only if the surrounding formula does not depend on the channels of that subprogram.

Refining previous work \cite{Brieger2023}, $\SCNX{\tprecision}(\expr)$ only computes the channels influencing the expression~$\expr$ via the trace variables~$\tprecision$.
That is, $\ch{} \in \SCNX{\tprecision}(\expr)$ if a change of the communication events with recorder $\ch{}$ in some variable in $\tprecision$
changes the value of~$\expr$.
For example, $\te \equiv \tvar \downarrow \ch{} = \tvar \downarrow \ch{dh}$ depends on $\ch{}$ via $\tvar$,
\iest $\SCNX{\{\tvar\}}(\te) = \{\ch{}\}$, 
but $\SCNX{\{\tvar_0\}}(\te) = \{\ch{dh}\}$.
This allows to refine the sidecondition of the parallel injection axiom $[ \alpha ] \psi \rightarrow [ \alpha\parOp\beta ] \psi$
such that the axiom embeds the program $\beta$ into the parallel composition if the surrounding formula does not depend on the channels of $\beta$ accessed \emph{via} the recorder variable of $\alpha\parOp\beta$.
This is sound because channels accessed via trace variables other than the unique recorder do not change during $\alpha\parOp\beta$.
As result, all injections required for completeness are provable in the calculus.
The soundness argument for the refined parallel injection axiom is based on a refined coincidence property (\rref{lem:expr_coincidence})
that aligns with the refinement of the static semantics.

\begin{lemma}[Coincidence for terms and formulas] 
	\label{lem:expr_coincidence}
	The sets $\SFV(\expr)$ and $\SCNX{\tprecision}(\expr)$ are the smallest sets with the 
	\emph{coincidence property for the term or formula} $\expr$.
	That is, for $\tprecision\subseteq\TVar$,
	if $\pstate{v} \downarrow_{\tprecision} \SCNX{\tprecision}(\expr) = \pstate[alt]{v} \downarrow_{\tprecision} \SCNX{\tprecision}(\expr)$ on $\SFV(\expr)$%
	\ifinterpreted
		\ and $\inter = \inter[alt]$ on~$\sigof{\expr}$%
	\else\fi,
	then $\sem{\expr}{\lstate{v}} = \sem{\expr}{\lstate[alt]{v}}$.
	In particular, for formula $\phi$,
	this implies $\lstate{v} \vDash \phi$ iff $\lstate[alt]{v} \vDash \phi$.
\end{lemma}

The projection $\downarrow_{\tprecision} \SCNX{\tprecision}(\expr)$ in \rref{lem:expr_coincidence} expresses that, 
on the trace variables $\tprecision$, 
the states $\pstate{v}, \pstate[alt]{v}$ are only required to coincide on the channels $\SCNX{\tprecision}(\expr)$ that influence the expression $\expr$ via a variable in $\tprecision$.
The set of accessed channels $\SCNX{\tprecision}(\expr)$ is monotone in~$\tprecision$
as a channel remains accessed via its original trace variable when $\tprecision$ is extended.

\begin{lemma}[Coincidence for programs] \label{lem:program_coincidence}	
	The set $\SFV(\alpha)$ is the smallest set with the 
	\emph{coincidence property for the program}~$\alpha$.
	That is, if $\pstate{v} = \pstate[alt]{v}$ on $\varset \supseteq \SFV(\alpha)$
	\ifinterpreted
		, and $\inter = \inter[alt]$ on $\sigof{\alpha}$,
	\else\fi
	and $\run \in \chpsem{\alpha}{\inter}$,
	then a state~$\pstate[alt]{w}$ exists such that $(\pstate[alt]{v}, \trace, \pstate[alt]{w}) \in \chpsem{\alpha}{\inter[alt]}$
	and $\pstate{w} = \pstate[alt]{w}$ on $\varset$, and ($\pstate{w} = \bot$ iff $\pstate[alt]{w} = \bot$).
\end{lemma}

Programs do not depend on the history,
\iest $\SFV(\alpha) \cap \TVar = \emptyset$,
as all terms $\rp$ and formulas $\chi$ in CHPs only depend on real variables.
Further, $\pstate{v} = \pstate{w}$ on $\TVar$ for all $(\pstate{v}, \trace, \pstate{w}) \in \sem{\alpha}{}$ by the bound effect property (\rref{lem:bound_effect}).
This suggests \rref{cor:history_coincidence},
which is a simple consequence of \rref{lem:bound_effect} and \rref{lem:program_coincidence}:
\begin{corollary}
	[History independence]
	\label{cor:history_coincidence}
	For every trace variable $\tvar$ and 
	every trace $\rawtrace$,
	obtain $\run \in \sem{\alpha}{}$ iff $(\pstate{v} \subs{\tvar}{\rawtrace}, \trace, \pstate{w} \subs{\tvar}{\rawtrace}) \in \sem{\alpha}{}$.
	In particular,
	$\run \in \sem{\alpha}{}$ iff $(\pstate{v} \cdot \rawtrace, \trace, \pstate{w} \cdot \rawtrace) \in \sem{\alpha}{}$,
	and if $(\pstate{v} \cdot \rawtrace, \trace, \pstate{w}) \in \sem{\alpha}{}$,
	there is a run $(\pstate{v}, \trace, \pstate[alt]{w}) \in \sem{\alpha}{}$ with $\pstate{w} = \pstate[alt]{w} \cdot \rawtrace$.
\end{corollary}

For a well-formed (\rref{def:syntax_formulas}) modality $[ \alpha ] \ac \psi$,
the pairs $(\A, \alpha)$ and $(\Commit, \alpha)$ are communicatively well-formed,
\iest the ac-contract $(\A, \Commit)$ is uninfluenced by $\alpha$ (except via the recorder variable).
This suggests the following coincidence property (\rref{cor:communicative_coincidence}),
which is a simple consequence of well-formedness, \rref{lem:bound_effect}, and \rref{lem:expr_coincidence}:
\begin{corollary}
	[Communicative coincidence]
	\label{cor:communicative_coincidence}
	Let the formula-program pair $(\chi, \alpha)$ be communicatively well-formed.
	Then for every $\run \in \sem{\alpha}{}$ with $\pstate{w} \neq \bot$,
	the states $\pstate{v}$ and $\pstate{w}$ coincide on $\chi$,
	\iest $\pstate{v} = \pstate{w}$ on $\SFV(\chi)$.
	In particular, $\pstate{v} \cdot \trace \vDash \chi$ iff $\pstate{w} \cdot \trace \vDash \chi$.
\end{corollary}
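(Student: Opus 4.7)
The strategy is to derive the pointwise equality $\pstate{v} = \pstate{w}$ on $\SFV(\chi)$ directly from the bound effect property (\rref{lem:boundEffect}) combined with the well-formedness side condition $\SFV(\chi) \cap (\SBV(\alpha) \cup \{\gtvec\}) \subseteq \TVar$, and then upgrade this equality to agreement after state-trace concatenation so that coincidence for formulas (\rref{lem:expressionCoincidence}) yields the satisfaction equivalence.

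First I would fix an arbitrary $x \in \SFV(\chi)$ and split by sort. If $x \in \TVar$, then \rref{lem:boundEffect} gives $\pstate{v}(x) = \pstate{w}(x)$ immediately, because every run $\run \in \sem{\alpha}{}$ satisfies $\pstate{v} = \pstate{w}$ on $\TVar$ (the communication is only recorded in $\trace$, not written into $\pstate{w}$). Otherwise $x \in \RVar$, and since $x \notin \TVar$, communicative well-formedness forces $x \notin \SBV(\alpha) \cup \{\gtvec\}$, so in particular $x \in \SBV(\alpha)^\complement$. The bound effect property then yields $\pstate{v}(x) = (\pstate{w} \cdot \trace)(x)$, and because $x \in \RVar$ is not the recorder trace variable, state-trace concatenation leaves $x$ untouched, giving $\pstate{v}(x) = \pstate{w}(x)$.

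For the ``in particular'' claim, I would lift the established equality to $\pstate{v} \cdot \trace = \pstate{w} \cdot \trace$ on $\SFV(\chi)$: appending the same recorded trace $\trace$ on both sides affects only the recorder of $\alpha$, which either lies outside $\SFV(\chi)$ (in which case both concatenated states agree pointwise with their underlying states on that variable) or lies inside $\SFV(\chi) \cap \TVar$ and receives the same suffix on both sides, preserving the equality. An application of \rref{lem:expressionCoincidence} (with $\varset = \TVar$, noting that pointwise equality on $\SFV(\chi)$ trivially implies agreement on $\SFV(\chi)$ after any projection $\downarrow \SCN(\chi)$) then gives $\pstate{v} \cdot \trace \vDash \chi$ iff $\pstate{w} \cdot \trace \vDash \chi$.

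The proof is essentially bookkeeping and carries no deep obstacle; the only subtlety is a careful case split between real and trace variables when combining the two clauses of the bound effect property, since the state-trace concatenation interacts nontrivially with the recorder while leaving everything else pointwise equal.
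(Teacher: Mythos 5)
Your proof is correct and uses exactly the ingredients the paper invokes (the well-formedness condition, the bound effect property of \rref{lem:boundEffect}, and coincidence via \rref{lem:expressionCoincidence}); the paper states the corollary as a direct consequence of these without spelling out the case split, so your argument is just the fully written-out version of the same route. The sort-based case distinction and the observation that state-trace concatenation only touches the recorder are exactly the right bookkeeping steps.
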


\subsection{Substitution}
\label{sec:substitution}

The calculus (\rref{sec:calculus}) uses 
substitutions of terms for variables in formulas and programs.
For $\avar$ and $\expr$ with equal type, 
the substitution $\phi \subs{\avar}{\expr}$ replaces the variable~$\avar$ by the term~$\expr$ in~$\phi$.
By bound variable renaming ($\alpha$-conversion),
we assume every substitution~$\phi \subs{\avar}{\expr}$ is admissible,
\iest neither the variable $\avar$ nor any free variable of the replacement~$\expr$ occur in the scope of a quantifier or modality that binds $\avar$.

For real variables,
$\phi \subs{x}{\re}$ is standard capture-avoid substitution \cite{Platzer10}.
Substitution $\phi \subs{\tvar}{\te}$ for trace variables is standard as well,
except when $\tvar$ occurs as a recorder variable,
because 
communication is only appended to recorders
such that recorders 
do \emph{not} shadow their free occurrences in their scope,
although they are bound.
For example, 
the postcondition $\len{\tvar}{=}2$ of $\phi \equiv [ \send{}{}{0} ] \len{\tvar} {=} 2$ 
depends on the initial 
length $\len{\tvar}$,
\iest the occurruence of $\tvar$ in $\len{\tvar}{=}2$ is free and bound in $\phi$.
Substitution $\phi \subs{\tvar}{\tvar_0} \equiv [ \send{}{\tvar_0}{0} ] \len{\tvar_0} {=} 2$ of a variable~$\tvar_0$
can be defined nevertheless by renaming the recorder
accordingly.

In general, 
substitution for variables that are free \emph{and} bound in programs
can be defined 
by separating the initial value assignment
\cite[Section 2.5.1]{Platzer10}.%
\footnote{
	For example, $\alpha \subs{x}{x + a} \equiv y := x + a \seq \repetition{(y \ceq y + d)}$,
	where $\alpha \equiv \repetition{(x \ceq x + d)}$ and the occurrence of $x$ in $x + d$ is free and bound in $\alpha$.
	Likewise, the variable $x$ is free and bound in the differential equation $\evolution*{x' = x + d}{non}$.
}
For real variables,
this is based on standard bound variable renaming \cite{DBLP:conf/cade/Platzer19}.
For recorder variables, 
\emph{recorder renaming} $\alpha \subs{\tvar}{\tvar_0}$ can rename the unique recorder $\getrec{\alpha}$ of $\alpha$ to~$\tvar_0$,
\iest $\alpha \subs{\tvar}{\tvar_0}$ replaces the recorder of every communication statement in $\alpha$ with $\tvar_0$,
if $\tvar \equiv \getrec{\alpha}$
and $\alpha \subs{\tvar}{\tvar_0} \equiv \alpha$ otherwise.
The substitution $\phi\subs{\tvar}{\te}$ is standard capture-avoid substitution for the first-order connectives
and the case $\phi\equiv\dbleft \alpha \dbright \psi$ is defined as follows:
\begin{equation}
	\label{eq:subs_recorder}
	(\dbleft \alpha \dbright \psi) \subs{\tvar}{\te} \equiv
	\begin{cases}
		\dbleft \alpha \subs{\tvar}{\tvar_0} \dbright \psi \subs{\tvar}{\tvar_0}
		& \text{if } \te \equiv \tvar_0 \text{, where } \tvar_0\in\TVar \text{ and } \tvar_0 \not\equiv \getrec{\alpha} \\
		\fa{\tvar_0{=}\te} \dbleft \alpha \subs{\tvar}{\tvar_0} \dbright \psi \subs{\tvar}{\tvar_0}
		& \text{else, where } \tvar_0 \text{ is fresh}
	\end{cases}
\end{equation}
Ac-modalities are analogous.
Separation $\fa{\tvar_0{=}\te} \dbleft \alpha \subs{\tvar}{\tvar_0} \dbright \psi \subs{\tvar}{\tvar_0}$ of the initial value assignment collapses into $\dbleft \alpha \subs{\tvar}{\tvar_0} \dbright \psi \subs{\tvar}{\tvar_0}$ 
if $\te$ is a trace variable $\tvar_0$,
\iest suitable as a recorder,
and admissible ($\tvar_0\not\equiv\getrec{\alpha}$).
For reference, comprehensive 
definitions of recorder renaming and substitution for trace variables  are in \rref{app:substitution}.

The resulting
substitution property (\rref{lem:rec_substitution}) for \dLCHP is standard.
It is based on the corresponding substitution property for recorder renaming (\rref{lem:rec_renaming}),
which simply mirrors renaming of the recorder variable in the recorded trace.

\pagebreak
\begin{lemma}
	[Recorder renaming]
	\label{lem:rec_renaming}
	Let $\tvar, \tvar_0 \in \TVar$.
	Then $\run \in \sem{\alpha}{}$ iff $(\pstate{v}, \trace \subs{\tvar}{\tvar_0}, \pstate{w}) \in \sem{\alpha \subs{\tvar}{\tvar_0}}{}$,
	where 
	$\trace \subs{\tvar}{\tvar_0} = (\tvar_0, \trace_0)$ if $\trace = (\tvar, \trace_0)$,
	and $\trace \subs{\tvar}{\tvar_0} = \trace$ if $\trace = (\tvar_1, \trace_0)$ and $\tvar_1 \not\equiv \tvar$.
\end{lemma}
\vspace{-.7cm}
\begin{proof}
	By induction on the structure of $\alpha$.
\end{proof}

\begin{lemma}
	[Substitution]
	\label{lem:rec_substitution}
	Let $\avar \in V$ be a variable of any type and $\expr$ is a term of equal type.
	Then $\pstate{v} \vDash \phi \subs{\avar}{\expr}$ iff $\pstate{v} \subs{\avar}{\sem{\expr}{\pstate{v}}} \vDash \phi$.
\end{lemma}
\vspace{-.7cm}
\begin{proof}
	By induction on the structure of $\phi$,
	where the cases are standard \cite[Lemma 2.2]{Platzer10},
	except that the (ac-)modalities use \rref{eq:subs_recorder} and \rref{lem:rec_renaming} when $\avar$ is a trace variable.
	Details are in \rref{app:substitution}.
\end{proof}

\section{Axiomatization} \label{sec:calculus}

\rref{fig:calculus} presents a Hilbert-style proof calculus for \dLCHP,
which is sound and complete.
\rref{fig:derived} presents derived axioms and rules.
In \dLCHP, hybrid systems and discrete parallelism culminate.
Therefore, the \dLCHP calculus generalizes \dL's proof calculus for hybrid systems \cite{DBLP:journals/jar/Platzer08,DBLP:conf/lics/Platzer12b} and embeds ac-reasoning \cite{Misra1981,AcHoare_Zwiers} 
to enable compositional verification of parallelism
by mutual abstraction of parallel program effects.
Since Hoare-style ac-reasoning is not based on dynamic logic like \dL,
the \dLCHP calculus puts value in reconciling these two bases in a graceful way:
This manifests itself in the clear modal logic interpretation that ac-reasoning receives through the calculus
while generalizing the Pratt-Segerberg \cite{Pratt1976, Segerberg1982} proof system for dynamic logic whenever possible.
However, the modal logic view onto ac-reasoning is not a by-product;
instead, rigorous thinking in its terms suggests the right generalizations
of the Pratt-Segerberg axioms.
In summary, \dLCHP is a genuine dynamic logic and a modal version of ac-reasoning.
 
The \dLCHP calculus is modular
and features 
compositional 
axioms, 
each targeting one specific dynamical aspect of parallel hybrid systems.
We develop a new modularization of reasoning about parallelism (\rref{fig:par-underlying-axioms} on page \pageref{fig:par-underlying-axioms}).
Its core is the parallel injection axiom $[ \alpha ] \psi \rightarrow [ \alpha \parOp \beta ] \psi$,
which suffices for complete safety reasoning once combined with elementary modal logic principles
to combine the insights from successive injections of parallel subsystems.
Parallel injection replaces the classical but complex and highly composite proof rule for discrete parallel systems in Hoare-style ac-reasoning~\cite{AcHoare_Zwiers}.
In fact, the classical rule derives in \dLCHP (\rref{ex:acParComp}).
This development enables more modular soundness arguments,
\emph{and} completeness confirms that parallel injection is the only reasoning principle required for proving \emph{all} safety properties 
even for parallel \emph{hybrid} systems.
Parallel injection is truly compositional~\cite{deRoever2001} because 
it only proves local properties $\psi$ of $\alpha$,
which are solely based on the observable behavior of $\alpha$.
Despite the possibility of coarse abstractions for $\alpha$'s dynamics to reduce the state space explosion,
the embeded property $\psi$ 
can always prove sufficient insight about the subprogram $\alpha$ for completeness.

\newcommand{\sidecond}[1]{\;\;{\color{gray}(#1)}}
\newcommand{\axkey}[1]{{\color{blue}#1}}

The \dLCHP calculus (\rref{fig:calculus}) is a first-order Hilbert-system based on the proof rules modus ponens \RuleName{MP} and \RuleName{forall}-generalization (\ForallGen).
Additionally, we consider the calculus to contain a complete axiomatization of first-order logic,
which, in particular, contains all instances of valid propositional formulas.
First-order real arithmetic is decidable~\cite{Tarski1951},
and we assume that all valid formulas of first-order real arithmetic are provable.
The calculus is an instance of system K \cite{Fitting1999},
like every classical dynamic logic~\cite{Pratt1976},
as it includes ac-versions of modal modus ponens (axiom \RuleName{acModalMP}) and Gödel's generalization rule (rule~\RuleName{acG}).
If a formula $\phi$ can be proven in the calculus,
write $\vdash \phi$.

\tikzstyle{equi}=[semithick, stealth-stealth]
\tikzstyle{label}=[fill=white, inner sep=0, outer sep =0]

\tikzstyle{related}=[
	equi, decorate, decoration={
		coil,
		segment length=5,
		amplitude=.7, 
		pre=lineto,
		pre length=4pt,
		post=lineto,
		post length=2pt
	}
]

\tikzstyle{opposite}=[
	equi, decorate, decoration={
		zigzag, 
		segment length=+3pt, 
		amplitude=+.75pt, 
		post length=+4pt,
		pre length=+4pt
	}
]

\newsavebox\axiomaticRela
\newsavebox\derivedRela
\newsavebox\oppositeRela

\sbox\axiomaticRela{\tikz[baseline=-0.5ex]{\draw[equi] (0,0) -- (.7, 0);}}
\sbox\derivedRela{\tikz[baseline=-0.5ex]{\draw[equi, dashed] (0,0) -- (.7, 0);}}
\sbox\oppositeRela{\tikz[baseline=-0.5ex]{\draw[opposite] (0,0) -- (.7, 0);}}

\begin{figure}[t]
	\begin{minipage}{.45\textwidth}
		\captionof{figure}{
			\small The four modalities are related by duality \RuleName{acdbDual, dbDual}, flattening \RuleName{acNoCom, acDiaNoCom}, and embedding \RuleName{boxesDual, diasDual}.
			The arrows are axiomatic (\usebox{\axiomaticRela})
			and derived (\usebox{\derivedRela}) equivalences,
			and logical opposite (\usebox{\oppositeRela}).
		}
		\label{fig:axioms_modal_dualities}
	\end{minipage}%
	\begin{minipage}{.55\textwidth}
		\let\tmp\RuleName
		\def\RuleName#1{\scalebox{.8}{\tmp{#1}}}

		\vspace*{-.5cm}
		\centering
			\begin{small}
				\begin{tikzpicture}[
				every node/.style={
					minimum height=.7cm,
					outer sep=0,
					inner sep=1mm	
				},
				node distance=.5cm and 1cm
			]
				\node (box) {$[ \alpha ] \psi$};

				\node (dia) [right=1.6cm of box] {$\langle \alpha \rangle \neg\psi$};

				\node (acBox) [below=of box]  {$[ \alpha ] \acpair{\A, \Commit} \psi$};

				\node (boxAc) [below=of acBox] {$\Commit {\wedge} (\A {\rightarrow} [ \alpha ] \psi )$};
			
				\node (acDia) [below=of dia] {$\langle \alpha \rangle \acpair{\A, \neg\Commit} \neg\psi$};
			
				\node (diaAc) [below=of acDia] {$\neg\Commit {\vee} \A {\wedge} \langle \alpha \rangle \neg\psi$};

				\draw[opposite] (acBox) -- node[label, fill=none, above=-1.5mm] {\RuleName{acdbDual}} (acDia);

				\draw[equi] (acBox) -- node[label, left=1mm] {\RuleName{boxesDual}} (box);

				\draw[equi, dashed] (acDia) -- node[label, right=1mm] {\RuleName{diasDual}} (dia);

				\draw[opposite] (box) -- node[label, fill=none, below=-1.5mm] {\RuleName{dbDual}} (dia);

				\draw[equi] (acBox) -- node[label, left=1mm, name=left] {\RuleName{acNoCom}} (boxAc);

				\draw[equi, dashed] (acDia) -- node[label, right=1mm, name=right] {\RuleName{acDiaNoCom}} (diaAc);

				\draw[opposite] (boxAc) -- node[label, fill=none, above=-1.5mm] {\RuleName{dbDual}} (diaAc);

				\node[] (nocomPhantom) [below=-7mm of diaAc] {};

				\node[draw=black!40, rounded corners=2mm, inner sep=0mm, fit=(boxAc) (diaAc) (nocomPhantom) (left) (right)] (nocomframe) {};

				\node[fill=white, text width=1.6cm, outer sep=0mm, inner ysep=0, inner xsep=1mm, left=1mm of nocomframe, align=right] {{\footnotesize if $\SCN(\alpha)=\emptyset$}};

				\node[] (ttPhantom) [above=-7mm of dia] {};

				\coordinate (ttleft) at (nocomframe.west |- acBox.north);
				\coordinate (ttright) at (nocomframe.east |- acBox.north);

				\node[draw=black!40, rounded corners=2mm, inner sep=0mm, fit=(box) (dia) (ttPhantom) (ttleft) (ttright)] (ttframe) {};

				\node[fill=white, text width=1.4cm, outer sep=0mm, inner ysep=0, inner xsep=1mm, left=1mm of ttframe, align=right] {{\footnotesize if $\A {\equiv} \Commit {\equiv} \true$}};
			\end{tikzpicture}
		\end{small}

		\let\RuleName\tmp
	\end{minipage}
\end{figure}

Predominantly, 
each program statement is axiomatized by only one of the four modality types.
Switching between dynamic and ac-reasoning,
and safety and liveness fills the gap to the other modalities (\rref{fig:axioms_modal_dualities}),
thus minimizes the need for axioms
and enables modular separation-of-concerns between communication and other dynamics.
Non-communicating atomic programs are sufficiently captured in boxes as axiom~\RuleName{acNoCom} can flatten an ac-box with these programs.
Conversely, axiom \RuleName{boxesDual} transfers any axiom on ac-boxes to boxes.
The axioms \RuleName{acdbDual} and \RuleName{dbDual} bridge safety and liveness modalities.
Only repetition and parallelism have separate axioms for safety and liveness.

The \dLCHP calculus (\rref{fig:calculus}) is sound (\rref{thm:soundness}).
That is, every formula proven by the \dLCHP calculus from valid premises is valid.
\rref{cor:derived} establishes soundness of additional axioms and proof rules (\rref{fig:derived}) by deriving them in the calculus.
The soundness proofs of \rref{thm:soundness} and \rref{cor:derived} are in \rref{app:soundness}.
\begin{theorem}
	[Soundness]
	\label{thm:soundness}
	The \dLCHP calculus (\rref{fig:calculus}) is sound,
	\iest every axiom is a valid formula and for every proof rule the conclusion is valid if the premises are valid.
	Consequently, every formula that derives from the axioms and rules in the \dLCHP calculus is valid.
\end{theorem}
\begin{corollary}
	[Derived axioms and rules]
	\label{cor:derived}
	The axioms and rules in \rref{fig:derived} derive in \dLCHP's proof calculus,
	thus they are sound.
\end{corollary}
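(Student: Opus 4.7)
The plan is to derive each axiom and rule of \rref{fig:derived} within the calculus of \rref{fig:calculus}, following the dependency graph sketched in \rref{fig:par-underlying-axioms}. I would organize the work into four groups: basic modal principles, diamond-side dualities, loop rules, and the parallel composition rule. Throughout, the key idea is that \RuleName{acModalMP}, \RuleName{acG}, and \RuleName{Aweak} together cover the three monotonicity directions (two promises and the assumption), while \RuleName{acdbDual}, \RuleName{dbDual}, and \RuleName{boxesDual} carry statements between the four modality types.

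I would first derive \RuleName{acMono}: given $\A_2 \to \A_1$, $\Commit_1 \to \Commit_2$, $\psi_1 \to \psi_2$, apply \RuleName{acG} to $\Commit_1 \to \Commit_2$ and $\psi_1 \to \psi_2$ to obtain $[\alpha]\acpair{\A_1, \Commit_1 \to \Commit_2}(\psi_1 \to \psi_2)$, combine with $[\alpha]\acpair{\A_1, \Commit_1}\psi_1$ via \RuleName{acModalMP} to get $[\alpha]\acpair{\A_1, \Commit_2}\psi_2$, and finally invoke \RuleName{Aweak} with the auxiliary box $[\alpha]\acpair{\true, \Commit_2 \wedge \A_2 \to \A_1}\true$ (which follows from $\A_2 \to \A_1$ by \RuleName{acG}). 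With \RuleName{acMono} in hand, \RuleName{acBoxesDist} derives in each direction by two applications of \RuleName{acModalMP} with projection-like implications. The diamond-side principles \RuleName{diasDual}, \RuleName{acDiaNoCom}, \RuleName{acSplitDia}, and \RuleName{acDiaMono} all reduce via \RuleName{acdbDual} and \RuleName{dbDual} to their box-side counterparts (\RuleName{boxesDual}, \RuleName{acNoCom}, a propositional splitting, and \RuleName{acMono}) after pushing negations through; note that \RuleName{acDiaMono} inverts the direction of the assumption implication exactly because \RuleName{acdbDual} leaves the assumption in place.

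For the loop rules, \RuleName{acInvariant} derives from \RuleName{acIteration} by feeding the premise $\psi \to [\alpha]\ac\psi$ through \RuleName{acG} to obtain $[\repetition\alpha]\acpair{\A,\true}(\psi \to [\alpha]\ac\psi)$, then combining with the ac-kernel case (equivalent to $\Commit \wedge (\A \to \psi)$ by \RuleName{acNoCom}/\RuleName{test}) through \RuleName{acInduction}; the antecedent $\Commit \wedge \psi$ suffices to discharge both conjuncts. The rule \RuleName{acLoop} is the subtle one: from $I \vdash [\alpha]\ac I$ alone, \RuleName{acInvariant} yields $[\repetition\alpha]\ac I$, but to strengthen the postcondition from $I$ to $\psi$ using $\A \wedge I \vdash \psi$ one needs $\A$ available \emph{inside} the postcondition. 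The text explains how \RuleName{hExtension} and \RuleName{Aclosure} together prove $[\repetition\alpha]\acpair{\A,\true}\A$; conjoining this with $[\repetition\alpha]\ac I$ by \RuleName{acBoxesDist} and then applying \RuleName{acMono} with $\A \wedge I \to \psi$ closes the gap. Finally, \RuleName{acArrival} is obtained by dualizing \RuleName{acInduction} through \RuleName{acdbDual} and simplifying negations propositionally.

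The main obstacle is \RuleName{acParComp}. The sub-step \RuleName{mutualAweak} derives by first splitting $[\alpha]\acpair{\A_1 \wedge \A_2, \Commit_1 \wedge \Commit_2}\psi$ into two ac-boxes with commitments $\Commit_1$ and $\Commit_2$ via \RuleName{acMono}, and then applying \RuleName{Aweak} twice, using each commitment in turn against the condition $\compCondition$, so that the assumption can be weakened from $\A_1 \wedge \A_2$ to $\A$. For \RuleName{acParComp} itself, starting from $\Gamma \vdash [\alpha_j]\acpair{\A_j,\Commit_j}\psi_j$ for $j=1,2$, inject the other subprogram via \RuleName{acDropComp} on each side (using commutativity of $\parOp$ for one direction), checking that the given noninterference side condition is exactly what \RuleName{acDropComp} requires. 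Combine the two resulting ac-boxes over $\alpha_1\parOp\alpha_2$ with \RuleName{acBoxesDist} to obtain an ac-box with conjoined promises but assumption $\A_1 \wedge \A_2$, then apply \RuleName{mutualAweak} using the first premise $\vdash \compCondition$ to relax the assumption to $\A$. The delicate bookkeeping lies in tracking well-formedness of the intermediate ac-boxes against $\alpha_1\parOp\alpha_2$ and verifying that the noninterference hypotheses propagate correctly through the injection and distribution steps.
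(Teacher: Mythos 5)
Your proposal is correct and follows essentially the same derivation strategy as the paper: the same dependency structure (\RuleName{acG}+\RuleName{acModalMP}+\RuleName{Aweak} yielding \RuleName{acMono}, duality \RuleName{acdbDual} transporting box axioms to their diamond counterparts, \RuleName{acInduction} for \RuleName{acInvariant}, the derived $[\repetition{\alpha}]\acpair{\A,\true}\A$ from \RuleName{hExtension}/\RuleName{Aclosure} to close the \RuleName{acLoop} gap, and \RuleName{acDropComp} twice plus \RuleName{acBoxesDist} plus \RuleName{mutualAweak} for \RuleName{acParComp}). The only cosmetic deviations are that the paper derives \RuleName{mutualAweak} with a single application of \RuleName{Aweak} against the conjoined commitment $\Commit_1\wedge\Commit_2$ rather than two, and that your passing reference to \RuleName{acIteration} in the \RuleName{acInvariant} derivation should read \RuleName{acInduction}, which is the axiom your described mechanism actually uses.
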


\paragraph{Noninterference and Parallel Injection}

Parallel injection $[ \alpha ] \ac \psi \rightarrow [ \alpha\parOp\beta ] \ac \psi$ by axiom \RuleName{acDropComp} \cite{Brieger2023} enables safety reasoning about parallel hybrid systems.
It is sound if the program $\beta$ that is injected into $[ \alpha \parOp \_\, ] \ac \psi$ 
has no influence on the ac-contract $(\A, \Commit)$ and the postcondition $\psi$.
On $\alpha$, the program $\beta$ has no influence due to \dLCHP's distributed systems semantics,
where programs do not share state (\rref{def:syntax_chps}).
\emph{Noninterference} (\rref{def:noninterference}) is sufficient to ensure that $\beta$ does not influence $(\A, \Commit)$ and $\psi$,
and all instances of parallel injection necessary for completeness satisfy noninterference.

\newcommand{\nointfpair}[2]{(#2, #1)}

\begin{definition}	
	[Noninterference] 
	\label{def:noninterference}
	Let $\alpha\parOp\beta$ be well-formed (\rref{def:syntax_chps})
	with recorder $\parrec$.
	Then the program $\beta$ \emph{does not interfere} with a formula-program pair $\nointfpair{\alpha}{\chi}$
	if the conditions in \rref{eq:noninterference} hold.
	For an ac-box $[ \alpha \parOp \beta ] \ac \psi$,
	the program $\beta$ \emph{does not interfere} with the surrounding contract $[ \alpha \parOp \_\, ] \ac \psi$ if $\beta$ does not interfere with $\nointfpair{\alpha}{\chi}$ for all $\chi \in \{\A, \Commit, \psi\}$.
	\begin{align}
		\SFV(\chi) \cap \SBV(\beta) \subseteq \{ \gtvec, \parrec\}
		&&\SCNX{\{\parrec\}}(\chi) \cap \SCN(\beta) \subseteq \SCN(\alpha) \label{eq:noninterference}
	\end{align}%
\end{definition}

For $[ \alpha\parOp\beta ] \ac \psi$,
\rref{def:noninterference} ensures that $\beta$ has no influence on 
$\chi \in \{\A, \Commit, \psi \}$,
because it prohibits~$\beta$ to bind any names $\chi$ depends on
except for the names $\SCN(\alpha) \cup \{\gtvec,\parrec\}$,
where the behavior of $\beta$ agrees with~$\alpha$
by synchronization of the communication on shared channels $\SCN(\alpha) \cap \SCN(\beta)$
and of the global time $\gtvec$.
Since the communication of $\beta$ is recorded by the unique recorder $\parrec$ of $\alpha\parOp\beta$,
the program $\beta$ only influences $\chi$ on
the channels $\SCNX{\{\parrec\}}(\chi)$ whose communication influences $\chi$ via the recorder $\parrec$.

\rref{def:noninterference} is more liberal than in previous work~\cite{Brieger2023},
where the condition on channels is $\SCN(\chi) \cap \SCN(\beta) \subseteq \SCN(\alpha)$,
which prohibits $\beta$ to write channels that are accessed in~$\chi$ via \emph{any} trace variable.
This refinement closes a subtle completeness gap
when trace variables other than the recorder occur in the specification.
For example, $\phi \equiv [ \test{\true} ] \tvar_0 = \epsilon \rightarrow [ \test{\true} \parOp \send{}{}{} ] \tvar_0 = \epsilon$
is valid since~$\tvar_0$ is fresh.
But $\phi$ 
does not fulfill the side condition of parallel injection in previous work
because $\SCN(\tvar_0 = \epsilon) = \Chan$, 
and $\SCN(\send{}{}{}) = \{ \ch{} \}$,
and $\SCN(\test{\true}) = \emptyset$,
but $\Chan \cap \{ \ch{} \} \not\subseteq \emptyset$.

\newcommand{\hdia}[1]{\langle\!\langle #1 \rangle\!\rangle}

\newcommand{\nojunkQ}[2]{\mathcal{Q}^{#1}{#2} \,}

\begin{figure}[t]
	\begin{subfigure}{\textwidth}
		\begin{small}
			\begin{minipage}{\textwidth}
				\begin{calculus}
					\startAxiom{assign}
						$\axkey{[ x \ceq \rp] \psi(x)} \leftrightarrow \psi(\rp)$
					\stopAxiom
					\startAxiom{nondetAssign}
						$\axkey{[ x \ceq *] \psi} \leftrightarrow \fa{x} \psi$
					\stopAxiom
					\startAxiom{test}
						$\axkey{[ \test{} ] \psi} \leftrightarrow (\chi \rightarrow \psi)$
					\stopAxiom
					\startAxiom{boxesDual}
						$\axkey{[ \alpha ] \psi} \leftrightarrow [ \alpha ] \acpair{\true, \true} \psi$
					\stopAxiom
					\startAxiom{dbDual}
						$\axkey{\langle \alpha \rangle \psi} \leftrightarrow \neg [ \alpha ] \neg\psi$
					\stopAxiom
				\end{calculus}%
				\hspace*{.2cm}%
				\begin{calculus}
					\startAxiom{acComposition}
						$\axkey{[\alpha \seq \beta] \ac \psi} \leftrightarrow [\alpha] \ac [\beta] \ac \psi$
					\stopAxiom
					\startAxiom{acChoice}
						$\axkey{[\alpha \cup \beta] \ac \psi} \leftrightarrow [\alpha] \ac \psi \wedge [\beta] \ac \psi$
					\stopAxiom
					\startAxiom{acIteration}
						$\axkey{[ \repetition{\alpha }] \ac \psi} \leftrightarrow [ \alpha^0 ] \ac \psi \wedge [\alpha] \ac [ \repetition{\alpha} ] \ac \psi$%
						\footnote{\label{ft:abbrevs}Remember that $\alpha^0 \equiv \test{\true}$,
						and that $\getrec{\alpha}$ is the unique recorder of program $\alpha$ (see \rref{def:syntax_chps})}
					\stopAxiom
					\startAxiom{acInduction}
						$\axkey{[ \repetition{\alpha} ] \ac \psi} \leftrightarrow [\alpha^0] \ac \psi \wedge [ \repetition{\alpha} ] \acpair{\A, \true} (\psi \rightarrow [\alpha] \ac \psi)$%
						\footnoteref{ft:abbrevs}%
					\stopAxiom
					\startAxiom{acdbDual}
						$\axkey{\langle \alpha \rangle \ac \psi} \leftrightarrow \neg [ \alpha ] \acpair{\A, \neg\Commit} \neg\psi$
					\stopAxiom
				\end{calculus}
				\vspace*{.2cm}

				\hbox{%
					\hbox to .7\textwidth{\vtop{%
						\begin{calculus}
							\startAxiom{acNoCom}
								$\axkey{[ \alpha ] \ac \psi} \leftrightarrow \Commit \wedge (\A \rightarrow [ \alpha ] \psi)$%
								\sidecond{$\SCN(\alpha) = \emptyset$}%
								\footnote{\label{ft:well-formed}%
									Care must be taken, \eg when \RuleName{acNoCom} is applied from right to left, that resulting ac-boxes are well-formed
								}
							\stopAxiom
							\startAxiom{send}
								$\axkey{[ \send{}{}{} ] \psi(\historyVar)} 
								\leftrightarrow \fa{\historyVar_0} 
									\big(
										\historyVar_0 = \historyVar \cdot \comItem{\ch{}, \rp, \gtime} \rightarrow \psi(\historyVar_0)
									\big)$%
								\footnoteref{ft:fresh}
							\stopAxiom
							\startAxiom{acCom}
								$\axkey{[ \send{}{}{} ] \ac \psi} \leftrightarrow [ \test{\true} ] \ac [ \send{}{}{} ] [ \test{\true} ] \ac \psi$
							\stopAxiom
							\startAxiom{comDual}
								$\axkey{[ \receive{}{}{} ] \ac \psi} \leftrightarrow [ x \ceq * ] [ \send{}{}{x} ] \ac \psi$
								\sidecond{$x \not\equiv \globalTime$}%
							\stopAxiom
						\end{calculus}
						\vspace*{.2cm}

						\begin{calculus}
							\startAxiom{Aweak}
								$[\alpha] \acpair{\true, \Commit \wedge \weakA \rightarrow \A} \true 
								\rightarrow \big(
									[\alpha] \ac \psi \rightarrow \axkey{[\alpha] \acpair{\weakA, \Commit} \psi}
								\big)$
							\stopAxiom
							\startAxiom{acModalMP}
								$[\alpha] \acpair{\A, \Commit_1 \rightarrow \Commit_2} (\psi_1 \rightarrow \psi_2)
								\rightarrow \big( 
									[\alpha] \acpair{\A, \Commit_1} \psi_1 \rightarrow \axkey{[\alpha] \acpair{\A, \Commit_2} \psi_2}	
								\big)$
							\stopAxiom
						\end{calculus}
					}}%
					\hbox{\vtop{
						\vspace*{-.5cm}				
						\begin{calculus}[r]
							\startRule{MP}
								\Axiom{$\varphi$}
								\Axiom{$\varphi\rightarrow\psi$}
								\BinaryInf{$\psi$}
							\stopRule
							\startRule{acG}
								\Axiom{$\Commit \wedge \psi$}
								\UnaryInf{$[ \alpha ] \ac \psi$}
							\stopRule
							\startRule{forall}
								\Axiom{$\psi$}
								\UnaryInf{$\fa{\avar} \psi$}
							\stopRule
						\end{calculus}
					}}
				}
				\vspace*{.2cm}

				\begin{calculus}
					\startAxiom{acConvergence}
						$\A \wedge [ \repetition{\alpha} ] \acpair{\A, \true} \fa{v{>}0} \big(
							\varphi(v) \rightarrow \langle \alpha \rangle \acpair{\A, \false} \varphi(v - 1)
						\big) 
						\rightarrow
						\fa{v} \big(
							\varphi(v) \rightarrow
							\axkey{\langle \repetition{\alpha} \rangle \acpair{\A, \false} \ex{v{\le}0} \varphi(v)}
						\big)$%
						\footnoteref{ft:fresh}
					\stopAxiom
					\startAxiom{acDropComp}
						$[ \alpha ] \ac \psi \rightarrow \axkey{[ \alpha \parOp \beta ] \ac \psi}$%
						\sidecond{$\beta$ does not interfere with $[ \alpha ] \ac \psi$ (\rref{def:noninterference})}%
						\footnoteref{ft:abbrevs}
					\stopAxiom
					\startAxiom{acLiveParCommit}
						$\nojunkQ{\alpha\parOp\beta}{\tvar,\tvar_0}
						\big(
							\hdia{\alpha}\acpair{\true}
							\wedge \hdia{\beta}\acpair{\true}
							\wedge \Commit(\tvar_0\cdot\tvar)
						\big)
						\rightarrow \axkey{\langle \alpha\parOp\beta \rangle \acpair{\true, \Commit(\parrec)} \false}$%
						\footnoteref{ft:fresh}%
					\stopAxiom
					\startAxiom{acLivePar}
						$\nojunkQ{\alpha\parOp\beta}{\tvar,\tvar_0}
						\langle \gtvec_0 \ceq \gtvec \rangle 
						\hdia{\alpha}
						\langle \gtvec_\alpha \ceq \gtvec \seq \gtvec \ceq \gtvec_0 \rangle
						\hdia{\beta}
						\langle \test{\gtvec{=}\gtvec_\alpha} \rangle
						\psi(\tvar_0 \cdot \tvar) 
						\rightarrow \axkey{\langle \alpha\parOp\beta \rangle \psi(\parrec)}$%
						\footnoteref{ft:abbrevs}%
						\footnoteref{ft:fresh}%
					\stopAxiom
					\startAxiom{hExtension}
						$\tvar_0 = \getrec{\alpha} \rightarrow [ \alpha ] \acpair{\true, \getrec{\alpha} \succeq \tvar_0} \getrec{\alpha} \succeq \tvar_0$%
						\footnoteref{ft:abbrevs}%
						\footnote{\label{ft:fresh}%
							The variables $\tvar_0$, $\gtvec_0$, $\gtvec_\alpha$, and quantified variables are assumed to be fresh
						}%
					\stopAxiom
					\startAxiom{Atransfer}
						$\tvar_0 = \getrec{\alpha} \rightarrow  \big(
							[ \alpha ] \acpair{\true, \Abutlast{\A} \rightarrow \Commit} (\Aglobally{\A} \rightarrow \psi)	
							\leftrightarrow
							[ \alpha ] \acpair{\A, \Commit} \psi
						\big)$%
						\footnoteref{ft:abbrevs}%
						\footnoteref{ft:fresh}%
					\stopAxiom
				\end{calculus}
				\vspace*{-.2cm}
				
				{\color{gray}\rule{\textwidth}{.1pt}}
				\vspace*{0cm}
				
				$\begin{aligned}
					& \nojunkQ{\gamma}{\tvar,\tvar_0} \psi \equiv \ex{\tvar{=}{\tvar{\downarrow}\gamma}} 
						\fa{\tvar_0{=}\getrec{\gamma}} \psi \\
					&\hdia{\gamma} \psi \equiv \fa{\getrec{\gamma}{=}\epsilon} \langle \gamma \rangle (\getrec{\gamma} = \tvar \downarrow \gamma \wedge \psi)
				\end{aligned}$\hspace{.3cm}
				$\begin{aligned}
					& \Atrace{\A} \equiv \AclDef{}{\getrec{\alpha}}
					\\
					& \hdia{\gamma} \acpair{\Commit} \equiv \fa{\getrec{\gamma}{=}\epsilon} \langle \gamma \rangle \acpair{\true, \getrec{\gamma} = \tvar \downarrow \gamma \wedge \Commit} \false
				\end{aligned}$
			\end{minipage}
		\end{small}
	\end{subfigure}
	\vspace{-.2em}
	\caption{\dLCHP proof calculus}
	\label{fig:calculus}
	\vspace{-.5cm}
\end{figure}

\paragraph{Hybrid Programs}

Axioms \RuleName{assign}, \RuleName{nondetAssign}, and \RuleName{test} are as in \dL.
For continuous evolution, 
\dLCHP inherits \dL's complete axiomatization of differential equation properties \cite{DBLP:journals/jacm/PlatzerT20}.
Axiom~\RuleName{acNoCom} expands the ac-contract $(\A, \Commit)$ for non-communicating programs.
Since the ac-contract $(\A, \Commit)$ is an invariant of~$\alpha$'s communication history,
the unfolding $\Commit \wedge (\A \rightarrow [ \alpha ] \psi)$ by~\RuleName{acNoCom}
corresponds to the base case when the history is empty.

Ac-composition \RuleName{acComposition}, 
ac-choice \RuleName{acChoice},
and ac-iteration~\RuleName{acIteration} 
are straight-forward generalizations from dynamic logic.%
\footnote{
	The prefix-closed program semantics enables reasoning about non-terminating reactive systems \cite{AcHoare_Zwiers},
	and further renders the axiom \RuleName{acComposition} an equivalence
	since proving the commitment of $[ \alpha ] \ac [ \beta ] \ac \psi$ from $[ \alpha \seq \beta ] \ac \psi$ needs that all $\alpha$-prefixes are in the semantics of $\alpha\seq\beta$.
}
The base case $[ \alpha^0 ] \ac \psi$ in \RuleName{acIteration},
where $\alpha^0 \equiv \test{\true}$ yields no communication,
is provably equivalent to $\Commit \wedge (\A \rightarrow \psi)$ by \RuleName{acNoCom} and~\RuleName{test}.
Ac-induction~\RuleName{acInduction} carefully generalizes the induction axiom of dynamic logic, 
considering that assumption-program pairs are the modal actions.
Consequently, the induction step $\psi \rightarrow [ \alpha ] \ac \psi$  needs a proof in all worlds reachable by $(\A, \repetition{\alpha})$-runs,
and proves the commitment~$\Commit$ inductively.
As a result, the ac-induction rule \RuleName{acInvariant} derives using Gödel generalization~\RuleName{acG}.
The required initial commitment $\Commit$ in \RuleName{acInvariant} reflects the base case when proving the ac-contract $(\A, \Commit)$ inductively.
Conversely, the environment guarantees the assumption $\A$ in the final state, 
even after zero iterations $\alpha^0$,
but the invariant~$\psi$ cannot entail~$\A$ if it does not hold in the initial state.
This assumption can be obtained 
nevertheless by a combination of the axioms \RuleName{Atransfer} and \RuleName{hExtension}.

Axiom \RuleName{acConvergence} lifts \dL's hybrid version \cite{DBLP:journals/jar/Platzer08, DBLP:conf/lics/Platzer12b} of Harel's convergence axiom \cite{Harel1979} %
to assumption-repetition pairs as modal action,
and proves existence of a run to a final state as the commitment is unsatisfiable ($\false$).
Ac-arrival \RuleName{acArrival} is the ac-version of the arrival axiom~\cite{DBLP:journals/tocl/Platzer15} and the derivable dual of ac-induction \RuleName{acInduction}.
Using~\RuleName{acArrival},
convergence also covers
$\langle \repetition{\alpha} \rangle \ac \false$
by proving either $\langle \alpha^0 \rangle \ac \false$ or $\langle \alpha \rangle \ac \psi$ after some number of iterations.
Dually to the rule~\RuleName{acInvariant},
where the commitment $\Commit$ must be proven in the base case $[ \alpha^0 ] \ac \psi$ while the assumption $\A$ is given,~$\A$ must be proven in
$\langle \alpha^0 \rangle \ac \psi$
when $\Commit$ is not satisfied,
as $\langle \alpha^0 \rangle \ac \psi \leftrightarrow \Commit \vee \A \wedge \psi$.
This explains the premise $\A$ in~\RuleName{acConvergence}.

\paragraph{Communication}

Ac-unfolding \RuleName{acCom} expands the invariant of the communication history represented by the ac-contract $(\A, \Commit)$ into 
the base case 
$[ \test{\true} ] \ac$ 
before and after the communication event emitted by $\send{}{}{}$.
The send axiom \RuleName{send} appends the communication to the recorder $\tvar$ and distinguishes the new world by the fresh recorder~$\tvar_0$.
Receiving $\receive{}{}{}$ obtains some value and binds it to the variable $x$.
The receive axiom \RuleName{comDual} equates this with testing whether the environment can agree on a non-deterministically chosen value for $x$ by sending.
Since communication synchronizes in global time, 
$\gtime$ is free in $\receive{}{}{}$ and $\send{}{}{x}$,
thus \RuleName{comDual} is only sound if $x \not\equiv \gtime$.
Otherwise,
bound variable renaming 
enables \RuleName{comDual}.

\paragraph{Parallel Composition}

The parallel injection axiom \RuleName{acDropComp} injects an additional program $\beta$ into a safety contract $[ \alpha \parOp \_\, ] \ac \psi$ 
if the program does not interfere with the contract (\rref{def:noninterference}).
We assume the axiom is read modulo commutativity of parallel composition,
\iest $\beta$ can be injected right \emph{and} left of $\alpha$.
Despite its convincing simplicity the axiom can prove all local (\cf \rref{def:noninterference}) safety properties of $\alpha$,
\iest properties which do not depend on $\beta$'s behavior.
Our completeness results then show that successive injections for all parallel subprograms suffice to prove safety of all parallel hybrid systems.
A classical symmetric parallel proof rule \RuleName{acParComp} with mutual assumption weakening as in Hoare-style ac-reasoning~\cite{AcHoare_Zwiers} derives from our minimalistic axioms (\rref{ex:acParComp}, also see \rref{fig:par-underlying-axioms}).

\begin{figure}[t]
	\begin{small}
		\begin{minipage}{\textwidth}
			\begin{calculus}
				\startAxiom{diasDual}
					$\axkey{\langle \alpha \rangle \psi} \leftrightarrow \langle \alpha \rangle\acpair{\true, \false} \psi$
				\stopAxiom
				\startAxiom{acDiaNoCom}
					$\axkey{\langle \alpha \rangle \ac \psi} \leftrightarrow \Commit \vee \A \wedge \langle \alpha \rangle \psi$
				\stopAxiom
			\end{calculus}
			\begin{calculus}
				\startAxiom{acBoxesDist}
					$\axkey{[ \alpha ] \acpair{\A, \Commit_1 \wedge \Commit_2} (\psi_1 \wedge \psi_2)}
					\leftrightarrow
					[ \alpha ] \acpair{\A, \Commit_1} \psi_1 \wedge [ \alpha ] \acpair{\A, \Commit_2} \psi_2$
				\stopAxiom
				\startAxiom{acSplitDia}
					$\axkey{\langle \alpha \rangle \ac \psi} \leftrightarrow \langle \alpha \rangle \ac \false \vee \langle \alpha \rangle \acpair{\A, \false} \psi$
				\stopAxiom
			\end{calculus}

			\begin{calculus}[r]
				\startRule{acMono}
					\Axiom{$\A_2 \rightarrow \A_1$}
					\Axiom{$\Commit_1 \rightarrow \Commit_2$}
					\Axiom{$\psi_1 \rightarrow \psi_2$}
					\TrinaryInf{$[ \alpha ] \acpair{\A_1, \Commit_1} \psi_1 \rightarrow [ \alpha ] \acpair{\A_2, \Commit_2} \psi_2$}
				\stopRule
			\end{calculus}
			\begin{calculus}[r]
				\startRule{acDiaMono}
					\Axiom{$\A_1 \rightarrow \A_2$}
					\Axiom{$\Commit_1 \rightarrow \Commit_2$}
					\Axiom{$\psi_1 \rightarrow \psi_2$}
					\TrinaryInf{$\langle \alpha \rangle \acpair{\A_1, \Commit_1} \psi_1 \rightarrow \langle \alpha \rangle \acpair{\A_2, \Commit_2} \psi_2$}
				\stopRule
			\end{calculus}

			\begin{calculus}
				\startRule{acInvariant}
					\Axiom{$\psi \rightarrow [ \alpha ] \ac \psi$}
					\UnaryInf{$\Commit \wedge \psi \rightarrow [ \repetition{\alpha} ] \ac \psi$}
				\stopRule
			\end{calculus}
			\begin{calculus}
				\startAxiom{acArrival}
					$\axkey{\langle \repetition{\alpha} \rangle \ac \psi}
					\leftrightarrow
					\langle \alpha^0 \rangle \ac \psi \vee \langle \repetition{\alpha} \rangle \acpair{\A, \false} (\neg\psi \wedge \langle \alpha \rangle \ac \psi)$
				\stopAxiom
			\end{calculus}
		\end{minipage}
	\end{small}
	\vspace{-1em}
	\caption{
		Derived axioms and proof rules
	}
	\label{fig:derived}
\end{figure}

Non-communicating programs $\alpha, \beta$,
not writing the global time $\gtvec$,
admit sequentialization,
\iest $\langle \alpha \rangle \langle \beta \rangle \psi \rightarrow \langle \alpha\parOp\beta \rangle \psi$ is sound,
because $\alpha$ and $\beta$ write disjoint variables
by well-formedness (\rref{def:syntax_chps}).
In general, by axiom~\RuleName{acLivePar},
there is a run of $\alpha\parOp\beta$ satisfying the postcondition $\psi$,
if the subprograms have runs which agree on the global time ($\test{\gtvec_\alpha{=}\gtvec}$),
and if there is a communication history~$\tvar$ for $\alpha\parOp\beta$ by $\nojunkQ{\alpha\parOp\beta}{\tvar,\tvar_0}$
without non-causal communication by $\downarrow (\alpha\parOp\beta)$
that both subprograms can agree on by $\hdia{\gamma}$.
The overall history $\tvar_0 \cdot \tvar$ prepends the previoius history $\tvar_0$.
Intuitively, proving $\langle \alpha\parOp\beta \rangle$ asks for a strategy to resolve the choices in~$\alpha$ and $\beta$ such that the subprograms synchronize,
and the history $\tvar$ is a witness for this strategy.
Axiom~\RuleName{acLiveParCommit} can be simpler than~\RuleName{acLivePar}, 
as the commitment only specifies behavior observable from the environment,
excluding state change.
By commutativity of parallelism,~$\alpha$ and~$\beta$ could be swapped in the premises of~\RuleName{acLiveParCommit} and \RuleName{acLivePar},
but this is not necessary for completeness.

History invariance \RuleName{hExtension} and assumption transfer \RuleName{Atransfer} 
internalize properties of the computational domain \cite{AcSemantics_Zwiers} and of the semantics of ac-modalities \cite{Pandya1991}, respectively,
rather than properties of the programs themselves.
For example,
the order that the assumption fixes for communication on channels that are not shared between 
parallel subprograms
is \emph{not} the sum of local (\cf \rref{def:noninterference}) properties of the subprograms
but guaranteed by the environment.
Axiom \RuleName{Atransfer} internalizes the global restriction of the reachable states by the assumption,
generalizing the assumption closure rule in Hoare-style ac-reasoning \cite{AcSemantics_Zwiers}
to support dual reasoning for the ac-diamond.
The notation~$\Abutlast{\A}$ and $\Aglobally{\A}$ borrowed from temporal logic reminds that~$\A$ holds \emph{for all} (strict) prefixes of~$\alpha$'s communication trace.
History invariance \RuleName{hExtension} 
proves that all programs strictly extend the previous history.%
\footnote{%
	\label{ft:global_his_props}%
	In fact, history invariance~\RuleName{hExtension} 
	is logically independent of the other axioms \cite{AcSemantics_Zwiers}.
	The axiom excludes unintentional computational models,
	which admit interleaving of the communication of a parallel subprogram with the previous communication of other subprograms
	on channels that are not shared.
}

\tikzstyle{axname}=[
	align=center, outer sep=.5mm, inner sep=.8mm, rounded corners=1mm, font=\small
]
\tikzstyle{oldax}=[
	draw, dotted, semithick%
]
\tikzstyle{newax}=[draw]
\tikzstyle{derived}=[fill=black!10]

\newsavebox\dLCHPaxiom
\newsavebox\derivedAxiom
\newsavebox\knownFromAcHoare

\sbox\dLCHPaxiom{\tikz[baseline=-0.5ex]{\node[axname, newax] {\RuleStyle{Ax}}}}
\sbox\derivedAxiom{\tikz[baseline=-0.5ex]{\node[axname, derived] {\RuleStyle{Ax}}}}
\sbox\derivedAxiom{\tikz[baseline=-0.5ex]{\node[axname, derived] {\RuleStyle{Ax}}}}
\sbox\knownFromAcHoare{\tikz[baseline=-0.5ex]{\node[axname, derived, oldax] {\RuleStyle{Ax}}}}

\begin{figure}
	\begin{minipage}{.6\textwidth}
		\captionof{figure}{
			Axioms \usebox{\dLCHPaxiom} are included in \dLCHP's proof calculus.
			A filled background \usebox{\derivedAxiom} denotes derived axioms.
			The dashed frame \usebox{\knownFromAcHoare} labels axioms corresponding to a rule in Hoare-style ac-reasoning \cite{AcHoare_Zwiers}.
			Arrows point from an axiom to the axioms from which the axiom derives.
		}
		\label{fig:par-underlying-axioms}
	\end{minipage}\hspace{.6cm}
    \begin{minipage}{.4\textwidth}
		\begin{tikzpicture}[
			on grid,
			node distance=.8cm and .9cm,
			axname/.style={
				align=center, outer sep=.5mm, inner sep=.8mm, rounded corners=1mm, font=\small
			},
			oldax/.style={
				draw, dotted, semithick%
			},
			newax/.style={draw},
			derived/.style={fill=black!10},
			derivesfrom/.style={stealth-}
		]
			\node[axname, oldax, derived] (acPar) {\RuleName{acParComp}};
		
			\node[axname, derived] (acBoxesDist) [below left=of acPar] {\RuleName{acBoxesDist}};

			\node[axname, oldax, derived] (acMono) [below right=of acBoxesDist] {\RuleName{acMono}};

			\node[axname, newax] (acG) [below=of acMono] {\RuleName{acG}};
		
			\node[axname, newax] (acModalMP) [left=of acG] {\RuleName{acModalMP}};
		
			\node[axname, newax] (Aweak) [right=of acG] {\RuleName{Aweak}};

			\node[axname, newax] (acDropComp) [left=11mm of acModalMP] {\RuleName{acDropComp}};

			\node[outer sep=0, inner sep=0] (phantom) [below=5mm of Aweak] {};
			\node[draw=black!40, inner sep=.7mm, fit=(acModalMP) (Aweak) (acDropComp) (phantom)] (dLCHP-frame) {};

			\node[fill=white, text width=3cm] at ([xshift=-4mm, yshift=0mm] dLCHP-frame.south east) {\small \dLCHP axioms};

			\draw[derivesfrom] (acG) -- (acMono);
			\draw[derivesfrom] (acModalMP) -- (acMono);
			\draw[derivesfrom] (Aweak) -- (acMono);
			\draw[derivesfrom] (acMono) -- (acBoxesDist);
			\draw[derivesfrom] (acModalMP) -- (acBoxesDist);
			\draw[derivesfrom] (acBoxesDist) -- (acPar);
			
			\draw[derivesfrom] (acMono) -- (acPar);

			\draw[derivesfrom, rounded corners] (Aweak) -- (Aweak.north |- acBoxesDist.east) -- (acPar);

			\draw[derivesfrom, rounded corners] (acDropComp) --
			(acDropComp.north |- acBoxesDist.north) -- (acPar);
		\end{tikzpicture}
	\end{minipage}
\end{figure}

\paragraph{Modal Logic Principles} 

Axiom \RuleName{acModalMP} is the ac-version of modal modus ponens covering monotonictiy of \emph{both} promises.
Assumptions are antitone because under a weaker assumption more worlds are reachable, 
and assumption weakening~\RuleName{Aweak} further supports weakening by the commitment, 
since it is already guaranteed.
The latter enables the mutual abstraction of parallel programs,
which is the core principle of ac-reasoning for state space reduction,
by proving the assumption of a subprogram from the commitment of the other subprograms.
By antitonicity, axiom \RuleName{Aweak} can also be understood as a refinement rule~\cite{DBLP:conf/lics/LoosP16} for the environment.
The ac-version~\RuleName{acG} of the Gödel rule proves an ac-box if \emph{both} promises hold in all states.
The ac-version \RuleName{acdbDual} of modal duality~\RuleName{dbDual} again affects \emph{both} promises.
Axiom~\RuleName{boxesDual} 
embeds the dynamic modalities into ac-reasoning.

Other principles 
of modal logic derive (see \rref{fig:derived})
by standard arguments:
Ac-monotonicity \RuleName{acMono} combines \RuleName{acModalMP} for monotonictiy of the promises and \RuleName{Aweak} for antitonicity of assumptions,
and drops the box by \RuleName{acG}.
Ac-distribution \RuleName{acBoxesDist} derives from \RuleName{acModalMP}.
The disjunctive relation of commitment and postcondition in $\langle \alpha \rangle \ac \psi$ is most apparent in the derivable axiom \RuleName{acSplitDia}.
An axiom for weakening the assumptions of parallel programs by their mutual commitments \cite{Brieger2023} derives from \RuleName{Aweak}.
Ac-monotonictiy~\RuleName{acDiaMono} for ac-diamonds derives.
Assumptions become monotone in~\RuleName{acDiaMono} 
just like refinements reverse when switching from safety to liveness~\cite{DBLP:conf/lics/LoosP16}.
\vspace{1em}

\paragraph{Examples}

To illustrate \dLCHP's proof calculus in action,
we revisit the convoy of cars example (\rref{ex:follower_leader}).
\rref{ex:derivation} derives the safety contract for the convoy (\rref{ex:convoy_safety}) in the calculus.
This proof follows an idiomatic pattern for decomposing a safety contract \mbox{$[ \alpha\parOp\beta ] \psi$} or \mbox{$[ \alpha\parOp\beta ] \ac \psi$} about a parallel CHP into safety contracts for the subprograms,
where the box specifies closed systems without further environment
and the ac-box occurs as specification when the parallel composition itself is a subsystem of another parallel composition:
\begin{enumerate}
	\item Introduce specifications $\Commit_\alpha$, $\Commit_\beta$, $\psi_\alpha$, $\psi_\beta$ for the subprograms by \RuleName{boxesDual} and monotonictiy \RuleName{acMono} 
	that are strong enough to entail $\Commit$ and $\psi$, 
	\iest $\psi_\alpha \wedge \psi_\beta \rightarrow \psi$ and $\Commit_\alpha \wedge \Commit_\beta \rightarrow \Commit$ derive,
	but independent (\cf \rref{def:noninterference}) of the other subprogram.
	\label{itm:idiomatic_conditions}

	\item Strengthen the overall assumption $\A$ from the commitments of the other subprograms by axiom~\RuleName{Aweak}
	to obtain local assumptions $\A_\alpha$ and $\A_\beta$.
	\label{itm:weaken}
	
	\item Distribute the 
	specifications
	by axiom \RuleName{acBoxesDist},
	creating a subgoal $[ \alpha\parOp\beta ] \acpair{\A_\gamma, \Commit_\gamma} \psi_\gamma$ for each subprogram $\gamma$.
	
	\item For each subgoal,
	drop the subprogram not belonging to the 
	local specification
	by the parallel injection axiom~\RuleName{acDropComp}.
	This yields subgoals $[ \alpha ] \acpair{\A_\alpha, \Commit_\alpha} \psi_\alpha$
	and $[ \beta ] \acpair{\A_\beta, \Commit_\beta} \psi_\beta$
	for the subprograms that can be verified independently.
	\label{itm:inject}
\end{enumerate}

While this is a canonical use of the interplay 
of \RuleName{acMono}, \RuleName{Aweak}, \RuleName{acBoxesDist}, and \RuleName{acDropComp} to prove parallel hybrid systems
their individual responsibilities increase modularity, simplify soundness arguments, and can be used in other combinations as well.
Unlike non-modular calculi \cite{AcHoare_Zwiers}, 
which internalize this strategy and its soundness proof monolithically,
the classical parallel composition rule \RuleName{acParComp} from Hoare-style ac-reasoning derives in \dLCHP without any further semantical soundness arguments (\rref{ex:acParComp}).
Further, the axioms \RuleName{Atransfer} and \RuleName{hExtension} can be added to~\rref{itm:idiomatic_conditions},
but this is only necessary when proving of these global properties is required.

\begin{figure}
	\begin{minipage}{.46\textwidth}
		\captionof{figure}{
			Specifications
			used in \rref{ex:derivation},
			where the selctor $\sel[\rp]{\tvar \downarrow \ch{}}$ is defined for every context formula $\phi$ and selector $\selOp \in \{ \valOp, \stampOp \}$.
			Moreover, $\varphi$ is the precondition of the $\progtt{convoy}$,
			and~$\psi_f$ and $\psi_l$ are the local postconditions of the $\progtt{follower}$ and the $\progtt{leader}$, respectively.
			The $\progtt{follower}$ assumes~$\A$
			while the $\progtt{leader}$ guarantees the commitment $\Commit$.
		}
		\label{fig:postconditions}
	\end{minipage}\hspace{.4cm}
	\begin{minipage}{.45\textwidth}
		$\begin{aligned}
			& \phi(\sel[\rp]{\tvar \downarrow \ch{}}) 
				\equiv\;
				\big( 
					\tvar\downarrow\ch{}=\tvar_0\downarrow\ch{}
					\wedge 
					\phi(\rp)
				\big) \\	 
				&\qquad \vee \big( 
					\tvar\downarrow\ch{}\neq\tvar_0\downarrow\ch{}	
					\wedge
					\phi(\selOp(\tvar\downarrow\ch{}))
				\big) \\
			& \varphi \equiv\;
				\periodicity \ge 0 \wedge
				\waitvar = 0 \wedge
				0 {\le} v_f {\le} \safevelo{d} \\
				&\qquad \wedge
				v_f \le \maxvelo \wedge
				v_l \ge 0 \wedge
				x_f + d < x_l \\
			& \psi_f \equiv\;
				x_f < \val[x_0]{\tvar \downarrow \ch{pos}}\! \\
			& \psi_l \equiv\;
				\val[x_0]{\tvar \downarrow \ch{pos}} \le x_l \\
			& \fAssume \equiv \lCommit \equiv\;
				0 \le \val[0]{\tvar \downarrow \ch{vel}} \le \maxvelo
		\end{aligned}$
	\end{minipage}
\end{figure}

\begin{example}
	\label{ex:derivation}
	The safety contract in \rref{ex:convoy_safety} about the $\progtt{convoy}$ of cars in \rref{ex:follower_leader}
	can be decomposed following the idiomatic strategy described above
	into contracts about the $\progtt{follower}$ and $\progtt{leader}$ car.
	\rref{fig:postconditions} contains 
	the specifications used.
	The selector $\sel[\rp]{\tvar \downarrow \ch{}}$ defaults to~$\rp$ 
	if the current history $\tvar \downarrow \ch{}$ equals the initial history $\tvar_0 \downarrow \ch{}$,
	\iest the $\progtt{convoy}$ did not communicate yet,
	and otherwise 
	returns the value or time of the last communication on the channel~$\ch{}$.
	The specification
	$\psi_f$ establishes that the $\progtt{follower}$ always stays behind the last known position $\val[x_0]{\tvar \downarrow \ch{pos}}$ of the $\progtt{leader}$,
	while the $\progtt{leader}$ never falls behind this %
	position by $\psi_l$,
	where $x_0$ is the initial position of the $\progtt{leader}$.

	The safety contract (\rref{ex:convoy_safety}) of the $\progtt{convoy} \equiv \progtt{follower} \parOp \progtt{leader}$ 
	derives as shown below,
	where $\triangleright_1 \equiv \Commit \rightarrow \true$ and $\triangleright_2 \equiv \psi_f \wedge \psi_l \rightarrow x_f < x_l$ derive by first-order reasoning and decidable real arithmetic.
	Further, $\Gamma \equiv \tvar_0 = \tvar \wedge x_0 = x_l \wedge \varphi$ with fresh variables $\tvar_0, x_0$,
	and the step~$\star$ introduces $\tvar_0, x_0$ 
	as ghost variables \cite{Platzer18},
	which enable the proof to reference the initial state.

	\begin{small}
		\begin{prooftree}
				\Axiom{$*$}

				\UnaryInf{$(\Commit \wedge \true \rightarrow \A) \wedge \true$}

				\RuleNameLeft{acG}{}
				\UnaryInf{$\Gamma \rightarrow [ \progtt{convoy} ] \acpair{\true, \Commit \wedge \true \rightarrow \A} \true$}

					\Axiom{\rref{fig:follower_velo}}

					\UnaryInf{$\Gamma \rightarrow [ \progtt{follower} ] \acpair{\A, \true} \psi_f$}

					\RuleNameLeft{acDropComp}{}
					\UnaryInf{$\Gamma \rightarrow [ \progtt{convoy} ] \acpair{\A, \true} \psi_f$}

					\Axiom{\rref{fig:leader}}

					\UnaryInf{$\Gamma \rightarrow [ \progtt{leader} ] \acpair{\true, \Commit} \psi_l$}

					\RuleNameRight{acDropComp}{}
					\UnaryInf{$\Gamma \rightarrow [ \progtt{convoy} ] \acpair{\true, \Commit} \psi_l$}

				\SetOption{HypSeparation}{0em}

				\RuleNameRight{acBoxesDist}{}
				\BinaryInf{$\Gamma \rightarrow [ \progtt{convoy} ] \acpair{\A, \Commit} (\psi_f \wedge \psi_l)$}

			\BinaryInf{$\Gamma \vdash [ \progtt{convoy} ] \acpair{\true, \Commit \wedge \true \rightarrow \A} \true \wedge [ \progtt{convoy} ] \acpair{\A, \Commit} (\psi_f \wedge \psi_l)$}

			\RuleNameRight{Aweak}{}
			\UnaryInf{$\Gamma \rightarrow [ \progtt{convoy} ] \acpair{\true, \Commit} (\psi_f \wedge \psi_l)$}

			\RuleNameRight{acMono}{, $\triangleright_1$, $\triangleright_2$}
			\UnaryInf{$\Gamma \rightarrow [ \progtt{convoy} ] \acpair{\true, \true} x_f < x_l$}

			\RuleNameRight{boxesDual}{, $\star$}
			\UnaryInf{$\varphi \rightarrow [ \progtt{convoy} ] x_f < x_l$}
		\end{prooftree}
	\end{small}
\end{example}

\begin{example}
	\label{ex:acParComp}
	The classcial parallel composition rule \RuleName{acParComp} for discrete parallelism in Hoare-style ac-reasoning \cite{AcHoare_Zwiers} collapses the steps \ref{itm:weaken}-\ref{itm:inject} of the strategey used in \rref{ex:derivation}.
	The compositionality condition  
	\begin{equation}
		\label{eq:compositionality}
		\mathsf{comp} \equiv (\A \wedge \Commit_1 \rightarrow \A_2) \wedge (\A \wedge \Commit_2 \rightarrow \A_2)
	\end{equation}
	requires that the subprograms mutually fulfill their local assumptions by their commitments
	except that the ovarall assumption $\A$ about the overall environment of $\alpha\parOp\beta$ also contributes to the local assumptions.
	\begin{center}
		\begin{calculus}
			\startRule{acParComp}
				\Axiom{$\mathsf{comp}\quad$}

				\Axiom{$[ \alpha_1 ] \acpair{\A_1, \Commit_1} \psi_1\quad$}

				\Axiom{$[ \alpha_2 ] \acpair{\A_2, \Commit_2} \psi_2$}

				\RightLabel{\quad\small\parbox{.38\textwidth}{($\alpha_{3-j}$ does not infere with $\nointfpair{\alpha_j}{\chi}$ for $\chi \in \{ \A_j, \Commit_j, \psi_j \}$ and $j = 1, 2$)}}

				\TrinaryInf{$[ \alpha \parOp \beta ] \acpair{\A, \Commit_1 \wedge \Commit_2} (\psi_1 \wedge \psi_2)$}
			\stopRule
		\end{calculus}
	\end{center}

	In contrast to the monolithic rule \RuleName{acDropComp}, 
	the \dLCHP calculus modularly builds complete parallel systems reasoning from minimalistic axioms (\cf \rref{fig:par-underlying-axioms}).
	Since the classcial rule \RuleName{acParComp} derives in \dLCHP,
	parallel injection simply replaces \RuleName{acParComp}.
	The derivation is as follows,
	where parallel injection \RuleName{acDropComp} is applicable by the side condition of the rule \RuleName{acParComp}:
	\begin{small}
		\begin{prooftree}
			
				\Axiom{$\mathsf{comp}$}

				\UnaryInf{$\true \wedge \mathsf{comp}_0$}

				\RuleNameLeft{acG}{}
				\UnaryInf{$[ \alpha_1 \parOp \alpha_2 ] \acpair{\true, \mathsf{comp}_0} \true$}

					\Axiom{$[ \alpha_1 ] \acpair{\A_1, \Commit_1} \psi_1$}

					\RuleNameLeft{acDropComp}{}
					\UnaryInf{$[ \alpha_1 \parOp \alpha_2 ] \acpair{\A_1, \Commit_1} \psi_1$}

					\RuleNameLeft{acMono}{}
					\UnaryInf{$[ \alpha_1 \parOp \alpha_2 ] \acpair{\A_1 \wedge \A_2, \Commit_1} \psi_1$}
					
					\Axiom{$[ \alpha_2 ] \acpair{\A_2, \Commit_2} \psi_2$}

					\RuleNameRight{acDropComp}{}
					\UnaryInf{$[ \alpha_1 \parOp \alpha_2 ] \acpair{\A_2, \Commit_2} \psi_2$}

					\RuleNameRight{acMono}{}
					\UnaryInf{$[ \alpha_1 \parOp \alpha_2 ] \acpair{\A_1 \wedge \A_2, \Commit_2} \psi_2$}

				\RuleNameRight{acBoxesDist}{}
				\BinaryInf{$[ \alpha_1 \parOp \alpha_2 ] \acpair{\A_1 \wedge \A_2, \Commit_1 \wedge \Commit_2} (\psi_1 \wedge \psi_2)$}

			\RuleNameRight{Aweak}{}
			\BinaryInf{$[ \alpha_1 \parOp \alpha_2 ] \acpair{\A, \Commit_1 \wedge \Commit_2} (\psi_1 \wedge \psi_2)$}
		\end{prooftree}
	\end{small}
\end{example}

The strategy taken for \rref{ex:convoy_safety} already hints an outline for the completeness proof,
except that completeness does not use the axiom \RuleName{Aweak},
wich supports compositional reasoning by mutual abstraction of parallel program effects.
Instead of using abstractions, 
completeness uses specifications which conservatively 
enumerate
all parallel interleavings,
because in extremal cases every single interleaving leads to a different reachable state.
Although the axiom \RuleName{Aweak} derives from the base logic by completeness,~\RuleName{Aweak} is important in practice,
because it guarantees that mutual abstractions can be used schematically.
This is similar to the compositionality condition (\rref{eq:compositionality}) in the classcial rule~\RuleName{acParComp},
which is also not necessary for completeness of discrete parallel systems \cite{deRoever2001}.

\section{Completeness}
\label{sec:completeness}

\rref{thm:soundness} shows that \dLCHP's proof calculus is sound,
\iest every provable formula is valid.
This section is concerned with the converse question
whether every valid \dLCHP formula is provable in the calculus.
Since Gödel's incompleteness theorem \cite{Goedel1931} applies to \dLCHP's subset \dL \cite[Theorem 2]{DBLP:journals/jar/Platzer08},
there cannot be a complete and effective axiomatization for \dLCHP either.
The standard way to evaluate the deductive power of a proof calculus nevertheless is to prove completeness relative to an oracle logic~\cite{Cook1978, Harel1977}.

The central contribution of this article is a positive answer to the completeness question,
consisting of two complementary results based on progressively simpler oracle logics.
The fundamental result is \rref{thm:fod-completeness} in \rref{sec:fod-completeness},
which shows that all properties of parallel hybrid systems in \dLCHP can be effectively reduced to properties of continuous systems.
This proof-theoretically fully aligns parallel hybrid systems and hybrid systems,
because hybrid systems in \dL also admit a reduction to continuous systems \cite{DBLP:journals/jar/Platzer08}.
Formally, \rref{thm:fod-completeness} proves that \dLCHP is complete relative to the first-order logic of differential equation properties FOD
just like \dL \mbox{\cite[Thoerem 3]{DBLP:journals/jar/Platzer08}}.
Completeness relative to discrete systems and relative semidecidability results \cite{DBLP:conf/lics/Platzer12b} carry over to \dLCHP.
In summary, properties of parallel hybrid systems can be proven to exactly the same extent than properties of hybrid, continuous, or discrete systems.

Completeness is already quite challenging for hybrid systems~\cite{DBLP:journals/jar/Platzer08,DBLP:conf/lics/Platzer12b}.
The major additional challenge of parallel hybrid systems is at the tension between compositionality and completeness,
caused by the state space explosion when considering all possible interleavings.
The calculus is intended to support as much compositional reduction as possible,
without compromising its ability to prove all properties of parallel hybrid systems. 
For this purpose, parallel injection \RuleName{acDropComp} exploits that interleavings often form equivalence classes,
\eg robot collision avoidance can often be reduced to collision avoidance for the worst-case trajectories,
and assembles properties of parallel hybrid systems from local abstractions contributing only the minimal necessary insight about each subsystem.
This promising development for compositionality raises the key question 
for completeness
whether parallel injection can always prove sufficient insights---in extremal cases, up to the full parallel product space.

The completeness result in \rref{thm:com-fod-completeness} in \rref{sec:com-fod-completeness}
gives a positive answer to this question,
and shows that \dLCHP's calculus can reduce all dynamical effects of parallel hybrid systems.
In particular, this shows that parallel injection~\RuleName{acDropComp} 
proves all properties required to decompose safety of parallel hybrid systems into safety of its subsystems.
Formally, \rref{thm:com-fod-completeness} proves \dLCHP complete relative to $\comFOD$,
an extension of FOD with communication traces.
This confirms that \dLCHP's calculus (\rref{fig:calculus}) captures all multi-dynamical aspects of parallel hybrid systems,
because it shows that \dLCHP includes all axioms required to disentangle the interwoven discrete, continuous, and communication dynamics of CHPs 
into
the base logic \comFOD.

The proof is modular to separate its specific challenges into manageable pieces.
\rref{thm:com-fod-completeness} inductively reduces the dynamics of every CHP to the base logic \comFOD.
The major technical challenge solved by 
\rref{thm:com-fod-completeness} is the construction of invariants, termination conditions, and verification conditions for parallel composition,
which simultaneously span discrete, continuous, \emph{and} parallel dynamics,
as opposed to hybrid systems
\cite{DBLP:journals/logcom/Platzer08}
and discrete parallelism 
\cite{AcSemantics_Zwiers}.
In particular, \rref{thm:com-fod-completeness} identifies verification conditions for parallel injection 
that characterize the full parallel product if necessary.
\rref{thm:fod-completeness} reduces the communication traces remaining in \comFOD to FOD by $\reals$-Gödel encoding~\cite{DBLP:journals/jar/Platzer08}.
For the latter, we identify an extension of \dLCHP's calculus that internalizes the relation between communication traces and $\reals$-Gödel encodings.

The base logic FOD \cite{DBLP:journals/jar/Platzer08} combines first-order real arithmetic $\FolRA$ with safety and liveness $\dbleft \evolution*{}{non} \dbright \psi$ constraints $\psi \in \FolRA$ on differential equations,
and \comFOD enriches FOD with the full first-order fragment of \dLCHP.
Both completeness results rely on the ability of FOD to define $\reals$-Gödel encodings (\rref{app:real_goedel}) \cite{DBLP:journals/jar/Platzer08}. 
\rref{thm:com-fod-completeness} encodes the transitions of repetitions to obtain sufficient loop invariants and variants,
and \rref{thm:fod-completeness} 
encodes communication traces.
\comFOD is related to an oracle for discrete parallelism \cite{AcSemantics_Zwiers},
which, however, is not expressive for continuous behavior.

\subsection{Completeness Relative to \texorpdfstring{\comFOD}{Omega-FOD}}
\label{sec:com-fod-completeness}

This section shows that \dLCHP is complete relative to \comFOD (\rref{thm:com-fod-completeness}),
which is proven by a an equivalent reduction to the base logic (\rref{sec:completeness_theorem}).
But due to the mixed dynamics and subtle dependencies within parallel hybrid systems,
the actual proof (\rref{sec:completeness_theorem}) only succeeds by a
subtle combination and generalization of strategies from \dL \cite{DBLP:journals/jar/Platzer08,DBLP:conf/lics/Platzer12b,DBLP:journals/jar/Platzer17} 
for expressiveness results (\rref{sec:expressiveness_of_comFOD}) to obtain sufficient invariants and termination conditions,
ac-reasoning \cite{AcSemantics_Zwiers,deRoever2001} to obtain verification conditions for parallel composition (\rref{sec:par-safety-lemmas}),
and \dGL \cite{DBLP:journals/tocl/Platzer15} to obtain an induction order.
Therefore, a proof outline is presented prior to the actual proof.

Analogous to \dL \cite{DBLP:journals/jar/Platzer08}, \rref{sec:expressiveness_of_comFOD} proves \comFOD expressive for the transition relation of CHPs (\rref{lem:rendition}),
and from this, proves \comFOD expressive for \dLCHP (\rref{lem:com-fod-expressiveness}).
This renders \comFOD expressive enough to state sufficient loop invariants and variants.
However, the combination of hybrid dynamics and 
communication in \dLCHP---%
including 
synchronization in global time,
multi-typed states,
and prefix-closedness---%
is significantly more subtle than~\dL's simple reachability relation.
Further, for the decomposition of parallel CHPs 
the exact transition relation of \rref{lem:rendition} is too rigid 
as it entails absence of environmental computation,
which is in conflict with parallel injection \RuleName{acDropComp} embedding properties into environments.
As solution, \rref{sec:par-safety-lemmas} generalizes a notion of strongest promises from Hoare-style ac-reasoning~\cite{AcSemantics_Zwiers, deRoever2001},
which is receptive for environmental computation,
to the hybrid setup and dynamic logic.
\rref{sec:completeness_theorem} contains the actual proof of \rref{thm:com-fod-completeness} and discusses its insights.

At the core of our proof of \rref{thm:com-fod-completeness} is an effective and fully constructive reduction of any valid \dLCHP formula in \dLCHP's calculus (\rref{fig:calculus}) to \comFOD tautologies. 
Unlike \dL's original completeness proof \cite[Theorem 3]{DBLP:journals/jar/Platzer08},
we do
not stick to the classical structure of Harel's completeness for dynamic logic \cite[Theorem 3.1]{Harel1979},
which handles safety $\varphi \rightarrow [ \alpha ] \psi$ and liveness $\varphi \rightarrow \langle \alpha \rangle \psi$ separately.
Harel's approach is not well-behaved \wrt liveness of parallel CHPs,
as their liveness 
does not follow from independent liveness of the subprograms but additionally needs matching communication and duration.
In the axioms \RuleName{acLiveParCommit} and \RuleName{acLivePar},
this is apparent in the $\exists$-quantification that does not distribute over 
the $\langle\rangle$-modalities on the subprograms.
Instead, we embark on a strategy successfully applied for \dGL \cite{DBLP:journals/tocl/Platzer15} and~\dL's uniform substitution calculus~\cite{DBLP:journals/jar/Platzer17} that uses a well-founded order on all formulas.
This order gives precedence to program decomposition---as in~\RuleName{acLiveParCommit} and \RuleName{acLivePar}---over the usual structural complexity.

\newcommand{\progrank}[1]{\text{rank}_\alpha(#1)}
\newcommand{\progorder}{\indOrder_\alpha}
\newcommand{\progeq}{=_\alpha}

\newcommand{\fmlrank}[1]{\text{rank}_\phi(#1)}
\newcommand{\fmlorder}{\indOrder_\phi}

\newcommand{\indOrder}{\sqsubset}
\newcommand{\indOrderRefl}{\sqsubseteq}

\begin{theorem}
	[\comFOD completeness]
	\label{thm:com-fod-completeness}
	The \dLCHP calculus (\rref{fig:calculus}) is complete relative to \comFOD,
	\iest every valid \dLCHP formula $\phi$ can be proven 
	in the
	calculus from \comFOD tautologies.
\end{theorem}
\vspace{-.7cm}
\begin{proof}[Proof outline]
	The proof is by induction along a well-founded partial order on \dLCHP formulas
	induced by the overall structural complexity of programs in $\phi$.
	By propositional recombination, 
	decompose $\phi$ into safety $\varphi \rightarrow [ \alpha ] \ac \psi$ and liveness $\varphi \rightarrow \langle \alpha \rangle \ac \psi$ fragments.
	Except for $\repetition{\alpha}$ and $\alpha\parOp\beta$,
	safety then directly reduces to simpler formulas 
	by the corresponding axioms.
	Liveness in these cases is analogous by duality~\RuleName{dbDual},~\RuleName{acdbDual}, 
	as all involved axioms are equivalences.
	The induction hypothesis (IH) is applicable if necessary,
	because all axioms are compositional,
	thus reduce the program complexity.
	For~$\repetition{\alpha}$,
	the proof generalizes standard arguments~\cite{Harel1979, DBLP:conf/lics/Platzer12b} to ac-modalities.
	The decisive observation is that sufficient invariants for induction~\RuleName{acInduction} 
	and termination conditions for convergence \RuleName{acConvergence}
	are always expressible in the base logic \comFOD (\rref{sec:expressiveness_of_comFOD}).

	In case $\dbleft \alpha\parOp\beta \dbright \ac \psi$,
	it suffices to prove $\dbleft \alpha\parOp\beta \dbright \acpair{\true, \Commit} \psi$
	for any $\Commit$ and $\psi$,
	because the assumption can be subsumed under the promises by \RuleName{Atransfer} and its dual by~\RuleName{acdbDual} as follows,
	where~$\traceBox$ quantifies over all (strict) prefixes of~$\alpha$'s communication history (\cf \rref{fig:calculus}):
	\begin{gather*}
		[ \alpha\parOp\beta ] \acpair{\true, \Abutlast{\A} \rightarrow \Commit} ( \Aglobally{\A} \rightarrow \psi) \rightarrow [ \alpha\parOp\beta ] \acpair{\A, \Commit} \psi \\
		\langle \alpha\parOp\beta \rangle \acpair{\true, \Abutlast{\A} \wedge \Commit} (\Aglobally{\A} \wedge \psi) \rightarrow \langle \alpha\parOp\beta \rangle \acpair{\A, \Commit} \psi
	\end{gather*}
	
	Safety $\varphi \rightarrow [ \alpha\parOp\beta ] \psi$ 
	(ac-boxes are analogous)
	generalizes an argument for Hoare-style ac-reasoning \cite{AcSemantics_Zwiers,deRoever2001} to hybrid systems and dynamic logic.
	The idea is to decompose $\psi$ into the strongest postconditions for the subprograms,
	because safety of each subprogram for its strongest postcondition derives by IH and can be embeded into $\alpha\parOp\beta$ by parallel injection \RuleName{acDropComp}.
	The strongest postcondition $\Psi_{\ogamma, \varphi, \gamma}$ of a program $\gamma$ \wrt the precondition~$\varphi$ and environment~$\ogamma$ (\rref{sec:par-safety-lemmas}) 
	holds in exactly those states reachable by~$\gamma$ from a state satisfying $\varphi$
	when arbitrary $\ogamma$-communication 
	may interleave,
	where $\otherprog{\alpha} \equiv \beta$ and $\otherprog{\beta} \equiv \alpha$.
	Since $\Psi_{\ogamma, \varphi, \gamma}$ is a strongest postcondition,
	$\varphi \rightarrow [ \gamma ] \Psi_{\ogamma, \varphi, \gamma}$ is valid,
	thus derives by IH.
	Further, since $\Psi_{\ogamma, \varphi, \gamma}$ admits $\ogamma$-interleaving,
	$\ogamma$ does not interfere with $\Psi_{\ogamma, \varphi, \gamma}$.
	Hence, parallel injection~\RuleName{acDropComp} 
	proves \mbox{$\varphi \rightarrow [ \alpha\parOp\beta ] \Psi_{\ogamma, \varphi, \gamma}$} for $\gamma \in \{\alpha,\beta\}$.
	Further, history invariance~\RuleName{hExtension} proves $\varphi \rightarrow [ \alpha\parOp\beta ] \parrec \succeq \tvar_0$,
	assuming that~$\varphi$ contains $\tvar_0 = \parrec$,
	where~$\parrec$ is the recorder of $\alpha\parOp\beta$.
	Then $\varphi \rightarrow [ \alpha \parOp \beta ] \psi$ derives by monotonicity~\RuleName{acMono},
	because
	\begin{equation}
		\label{eq:strg_composition}
		\Psi_{\beta, \varphi, \alpha} \wedge \Psi_{\alpha, \varphi, \beta} \wedge \parrec \succeq \tvar_0 \rightarrow \psi
	\end{equation}
	is valid, thus derives by IH.
	Equation (\ref{eq:strg_composition}) is valid 
	because $\Psi_{\beta, \varphi, \alpha} \wedge \Psi_{\alpha, \varphi, \beta}$ intersects the states reachable by $\alpha$ and $\beta$
	when the other may interleave,
	and~$\parrec \succeq \tvar_0$ sorts out states with a non-linear history (\cf \rref{ft:global_his_props}).
	Hence, $\Psi_{\beta, \varphi, \alpha} \wedge \Psi_{\alpha, \varphi, \beta} \wedge \parrec \succeq \tvar_0$ exactly denotes the states reachable by $\alpha\parOp\beta$ from a state satisfying $\varphi$,
	which entails $\psi$ as $\varphi \rightarrow [ \alpha\parOp\beta ] \psi$ is valid.

	By~\RuleName{acSplitDia},
	liveness $\langle \alpha\parOp\beta \rangle \acpair{\true, \Commit} \psi$ is split into 
	$\langle \alpha\parOp\beta \rangle \acpair{\true, \Commit} \false$ and $\langle \alpha\parOp\beta \rangle \acpair{\true, \false} \psi$,
	which derive by \RuleName{acLiveParCommit} and \RuleName{acLivePar}, respectively.
	The premises of \RuleName{acLiveParCommit} and \RuleName{acLivePar} derive by IH,
	because they are valid,	
	as they equivalently express 
	$\langle \alpha\parOp\beta \rangle \acpair{\true, \Commit} \psi$.
	The premises are smaller in the induction order,
	because decomposition of the parallel composition reduces the overall structural complexity of programs,
	even though the formula itself increased in complexity.
\end{proof}

\subsubsection{Expressiveness of \texorpdfstring{\comFOD}{Omega} for \dLCHP}
\label{sec:expressiveness_of_comFOD}

This section generalizes results from \dL \cite{DBLP:journals/jar/Platzer08} and proves that \comFOD is expressive for \dLCHP (\rref{lem:com-fod-expressiveness}).
This guarantees existence of sufficient invariants and termination conditions in the base logic \comFOD.
Preliminary, \rref{lem:rendition} characterizes the transition semantics of CHPs in \comFOD.
Parallel composition has a rendition close to its semantics,
because \comFOD can match subruns by projection.
As in \dL,
$\reals$-Gödel encodings capture the real part of the unboundedly many intermediate states of repetitions,
and the unbounded communication history is stored in the trace variables available in \comFOD.
Prefix-closedness of the program semantics increases the technicality of the rendition 
compared to~\dL,
because unfinished computations need reflection.

\rref{lem:rendition} effectively maps each CHP $\alpha$ to an \comFOD formula $\rendition{\alpha}{}{}$ that holds in a state if and only if there is an $\alpha$-run from an initial state caught by $\alpha$'s variables $\varvec$ to a state caught by the fresh variables $\varvec[alt]$.
The predicate symbol~$\finmarker{}$ tells whether~$\varvec[alt]$ is intermediate ($\finmarker{}=\false$) or final ($\finmarker{}=\true$).
For final states,
the precise meaning of the rendition $\rendition{}{}{\renparams{}{}{\true}}$ is $\langle \alpha \rangle \varvec[alt] = \varvec$ as in \dL.
Generally, $\rendition{\alpha}{}{}$ 
equals $\fa{\varvec[third]{=}\varvec} \langle \alpha \rangle \acpair{\true, \neg\finmarker{} \wedge \varvec[alt] = \varvec[third] \subs{\tvar_{\avar[third]}}{\getrec{\alpha}}} (\finmarker{} \wedge \varvec[alt] = \varvec)$ 
because
an ac-diamond holds if either the commitment holds in an intermediate state or the postcondition in a final state.
Since only communication is observable in intermediate states,
$\varvec[alt] = \varvec[third] \subs{\tvar_{\avar[third]}}{\getrec{\alpha}}$ reflects that all variables but the recorder remain unchanged,
where the fresh variables~$\varvec[third]$ refer to the initial state,
which also ensures well-formedness (\rref{def:syntax_formulas}) of the ac-diamond.

\begingroup

\newcommand{\finhisjoint}{\tvar_{\avar[alt]}}
\newcommand{\recordedhis}{\tvar_{\alpha\beta}}

\newcommand{\mergeren}[1]{(\varvec[alt]_\alpha {\merge} \varvec[alt]_\beta)_{#1}}

\begin{lemma}
	[Rendition of programs]
	\label{lem:rendition}
	Let $\alpha$ be a CHP with recorder $\getrec{\alpha}$ and let $\varvec \equiv (\getrec{\alpha}, \avar_1, ..., \avar_\semvar)$ be all variables of~$\alpha$.
	Moreover, let $\varvec[alt] \equiv (\tvar_{\avar[alt]}, \avar[alt]_1, ..., \avar[alt]_\semvar)$ 
	and $\varvec[third] \equiv (\tvar_{\avar[third]}, \avar[third]_1, ..., \avar[third]_\semvar)$ be fresh and compatible with $\varvec$,
	and let $\finmarker{}$ be a predicate symbol.
	Then there is an \comFOD formula $\rendition{\alpha}{}{}$ such that the following is valid:
	\begin{equation*}
		\rendition{\alpha}{}{}
		\leftrightarrow
		\fa{\varvec[third]{=}\varvec}\langle \alpha \rangle \acpair{\true, \neg\finmarker{} \wedge \varvec[alt] = \varvec[third] \subs{\tvar_u}{\tvar}}
		(\finmarker{} \wedge \varvec[alt] = \varvec) 
	\end{equation*}
\end{lemma}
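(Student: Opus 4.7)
The plan is to prove the lemma by structural induction on the CHP $\alpha$, at each step building $\rendition{\alpha}{}{}$ as a disjunction that mirrors the prefix-closed program semantics. One disjunct, gated by $\neg\finmarker{}$, describes runs that merely reach an intermediate state, in which by the bound effect property the only change permitted is in the recorder $\tvar$. A second disjunct, gated by $\finmarker{}$, describes runs that reach a final state characterized by the specific update performed by $\alpha$. The fresh target variables $\varvec[alt]$ and the ghost snapshot $\varvec[third]$ from the right-hand side are used exactly as they appear in the ac-diamond: the intermediate disjunct will assert $\varvec[alt] = \varvec[third]\subs{\tvar_{\avar[third]}}{\tvar}$ together with the extension of $\tvar$ produced so far, while the final disjunct asserts the concrete post-state in terms of the initial $\varvec$.

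The atomic cases are routine. For $x \ceq \rp$, $x \ceq *$, and $\test{\chi}$ the final disjunct updates $x$ (or checks $\chi$) and leaves the other variables alone, while the intermediate disjunct keeps everything equal to the initial state. For $\send{}{}{}$ and $\receive{}{}{}$ the final disjunct appends a single item $\comItem{\ch{}, \rp, \gtime}$ to $\tvar$ (with the transmitted value existentially quantified in the receive case) and updates the receiver variable, while the intermediate disjunct covers the empty-trace unfinished run imposed by prefix-closedness. Continuous evolution $\evolution{x' = \rp}{\chi}$ is handled by the ODE modalities that \comFOD inherits from FOD and which are first-order definable there; the trace is unchanged. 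Sequential composition $\alpha\seq\beta$ existentially quantifies a handover state that satisfies $\rendition{\alpha}{}{\renparams{}{}{\true}}$ and then uses the inductive rendition of $\beta$ from there, plus a disjunct allowing execution to stop already inside $\alpha$; nondeterministic choice is a plain disjunction with appropriate variable renaming.

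The hard case will be iteration $\repetition{\alpha}$, since a first-order rendition must encode an arbitrarily long chain of per-iteration transitions in a fixed formula. I plan to follow the \dL strategy with two ingredients. For the real-valued components of the intermediate states $\varvec^{(0)}, \ldots, \varvec^{(n)}$ I use a $\reals$-G\"odel encoding, whose definability inside FOD, and hence \comFOD, the excerpt has already flagged. For the trace dimension I introduce an auxiliary trace variable whose successive prefixes delimit the chunks produced by the individual iterations, so that the trace emitted by the $i$-th iteration is exactly the difference between its $i$-th and $(i{+}1)$-th prefix. The rendition of $\repetition{\alpha}$ then asserts the existence of such $n$, such an encoded real-valued sequence, and such a trace partition, with each consecutive pair a finished $\alpha$-transition per $\rendition{\alpha}{}{\renparams{}{}{\true}}$ and with the last pair allowed to be either finished or intermediate --- the only place where $\finmarker{}$ of the whole iteration can be $\false$.

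Parallel composition $\alpha\parOp\beta$ is a direct transcription of the parallel semantics of \rref{def:semantics_programs}: $\rendition{\alpha\parOp\beta}{}{}$ existentially quantifies local target states $\varvec[alt]_\alpha, \varvec[alt]_\beta$ of the two subprograms that separately satisfy the inductive renditions, share the finality flag, agree on $\gtvec$ in the final case, and whose local recorders combine via projection on $\parchans$ to reconstruct the joint trace. With all cases in hand, the two directions of the equivalence are verified by unfolding \rref{def:semantics_programs} and \rref{def:semantics_formulas}: the forward direction reads off a witness for $\rendition{\alpha}{}{}$ from any ac-diamond-witnessing run, and the backward direction reconstructs the $\alpha$-run from the rendition witnesses, leaning on \rref{lem:boundEffect} and the coincidence properties (\rref{lem:expressionCoincidence}, \rref{lem:programCoincidence}) to ensure that only the intended variables matter.
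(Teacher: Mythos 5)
Your proposal follows essentially the same route as the paper: a structural recursion defining $\rendition{\alpha}{}{}$ with a $\finmarker{}$-gated case split between intermediate and final states, $\reals$-G\"odel encoding for the real part of loop iterations together with an auxiliary trace variable partitioning the emitted history into per-iteration chunks, and a parallel case that existentially quantifies local reached states reconciled by projection onto $\parchans$ and agreement on $\gtvec$. One detail to watch when you write out the send/receive cases: by prefix-closedness the intermediate ($\pstate{w}=\bot$) runs of $\send{}{}{}$ include not only the empty-trace run but also the run that has already emitted $\comItem{\ch{},\rp,\gtime}$, so the $\neg\finmarker{}$ disjunct must be $\varvec[alt]=\varvec \vee \varvec[alt]=\varvec\subs{\tvar}{\tvar\cdot\comItem{\ch{},\rp,\gtime}}$ (and analogously with a fresh existential for receive), otherwise the backward direction of the claimed equivalence fails for commitments witnessed at the post-communication intermediate state.
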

\vspace{-.7cm}
\begin{proof}
	The proof generalizes the rendition for \dL \cite[Lemma 5]{DBLP:journals/jar/Platzer08}.
	\Wlossg assume that the global time $\gtvec$ is in $\varvec$ by prefixing~$\alpha$ 
	with a no-op $\gtime \ceq \gtime \seq \alpha$,
	and assume 
	$\gtvec = \avar$.
	\rref{fig:homogeneous_rendition} defines the formula $\rendition{\alpha}{}{}$ inductively along the structure of~$\alpha$.
	Notably, $\rendition{\alpha}{}{}$ is indeed an \comFOD formula.

	The formula $\rendition{\alpha}{}{}$ is supposed to be satisfied in exactly those states from which~$\alpha$ can reach a state that agrees with the current state on the variables $\varvec[alt]$.
	The predicate~$\finmarker{}$ distinguishes between runs to intermediate and final states.
	Prefix-closedness and totality of the program semantics are expressed via the disjunctions in the cases on atomic programs (if-then-else) and sequential composition.
	In particular, $\varvec = \varvec[alt]$ reflects that the initial state is an intermediate state.
	For involved cases, detailed explanations are given:
	\begin{enumerate}
		\item Communication $\send{}{}{}$ and $\receive{}{}{}$ essentially appends $\comItem{\ch{}, \rp, \gtime}$ and $\comItem{\ch{}, y, \gtime}$ for some~$y$ to the history $\tvar$.
		For $\receive{}{}{}$, 
		the $\exists$-quantification $\ex{y}\!$ expresses that the environment controls the received value.
		The change of~$x$ is only observable in final states,
		and if $x\equiv\gtime$, receiving still happens at the original time
		since $y$ is fresh.
		
		\item A run of $\alpha\seq\beta$,
		is either an unfinished run of $\alpha$,
		so $\neg\finmarker{}$ holds and $\alpha$ runs to an intermediate state by $\rendition{\alpha}{}{\renparams{}{}{\false}}$,
		or $\alpha$ reaches a final state $\varvec[opt]$ by $\rendition{\alpha}{}{\renparams{}{\varvec[opt]}{\true}}$ from which $\beta$ continues by $\rendition{\beta}{}{\renparams{\varvec[opt]}{}{}}$.

		\item In case $\evolution*{}{}$,
		the domain constraint $\chi$ is eliminated by reversing the flow and checking $\chi$ backwards along the differential equation.
		Nested modalities can be avoided with appropriate care \cite[Lemma 5]{DBLP:journals/jar/Platzer08}.

		\item 
		In case $\repetition{\alpha}$,
		a finite formula must capture unboundedly many multi-typed intermediate states.
		As in \dL \cite{DBLP:journals/jar/Platzer08}, the real part of the state sequence is compressed into a single real variable~$\repgoedel$ by $\reals$-Gödel encoding (\comFOD by \rref{lem:real_goedel}),
		where $\goedelat{(\repgoedel)}{n}{i}$ accesses the $i$-th position
		in a sequence of length $n{\ge}1$.
		The variable $\finhisjoint$ contains the overall communication history of $\repetition{\alpha}$.
		To demarcate the endpoints of the communication of the individual loop pass in $\finhisjoint$,
		the trace variable $\iarr$ serves as an index.
		That is, the slice $\at{\finhisjoint}{0, \at{\iarr}{i-1}}$ is the history after $i{-}1$ iterations.
		In particular, $\at{\finhisjoint}{0, \at{\iarr}{0}}$ is the initial history $\tvar$.
		The subtrace $\at{\te}{0, y}$ of $\te$ from the $0$-th (inclusive) up to the $\lfloor y \rfloor$-th item (exclusive) is definable in \comFOD (\rref{lem:slicing}).
		In summary, the vector $\repstate{i}$ keeps the $i$-th intermediate state
		of a repetition with $n{-}1$ loop passes.
		Existence $\ex{\goedelat{\repstatevar}{n}{}}$ of a state sequence $\goedelat{\repstatevar}{n}{}$ of length~$n$
		requires a $\reals$-Gödel encoding~$\repgoedel$ 
		and a partition $\iarr$ of $\finhisjoint$ into $n{-}1$ loop passes,
		\iest $n = \len{\iarr}$.
		By $\orange{1}{i}{n-1} \vee \finmarker{}$,
		all but the last iteration must reach a final state.
		Quantification $\fa{n{:}\naturals} \phi$ is short for $\fa{n} (\isNat{n} \rightarrow \phi)$,
		where $\isNat{\cdot}$ is definable in \comFOD by \rref{lem:isnat},
		and $\ex{n{:}\naturals} \phi \equiv \neg \fa{n{:}\naturals} \neg \phi$.
		
		\item In case $\alpha\parOp\beta$,
		let $\varvec[alt]_\gamma \equiv (\tvar_\gamma, \avar[alt]_{\gamma1}, ..., \avar[alt]_{\gamma\semvar})$ be fresh and compatible with~$\varvec$,
		let $\gtvec_\gamma \equiv \avar[alt]_{\gamma1}$.
		For $\alpha\parOp\beta$, there is a run from $\varvec$ to $\varvec[alt]$ 
		if each subprogram $\gamma$ has a run from~$\varvec$ to an inidividual state $\varvec[alt]_\gamma = (\tvar_\gamma, \avar[alt]_1, ..., \avar[alt]_\semvar)$.
		These runs cover the overall communication~$\tvar$ of $\alpha\parOp\beta$
		because  by~$\tvar_\gamma = \parrec \cdot (\tvar\downarrow\gamma)$, 
		the subtrace $\tvar\downarrow\gamma$ is observable from~$\gamma$,
		and by $\tvar=\tvar\downarrow(\alpha\parOp\beta)$, there is no non-causal communication.
		Like merging~$\merge$ on states,
		the real-valued part~$(\avar[alt]_1, \ldots, \avar[alt]_\semvar)$ of the reached state $\varvec[alt]$ results from merging~$\varvec[alt]_\alpha$ and~$\varvec[alt]_\beta$ by $\mergeren{j}$.
		By $\gtvec_\alpha = \gtvec_\beta$, the runs agree on their final values of the global time.
		\qedhere
	\end{enumerate}
\end{proof}

\begin{figure}
	\newcommand{\atomicRen}[1]{%
		\itefin{\varvec[alt] = \varvec}
			{#1}
	}%
	\newcommand{\syncom}[1]{\tvar \cdot \comItem{\ch{}, #1, \gtime}}
	\begin{minipage}{\textwidth}
		\small
		\begin{align*}
			\rendition{x \ceq \rp}{}{}
				& \equiv 
				\atomicRen{\varvec[alt] = \varvec \subs{x}{\rp}} \\
			\rendition{x \ceq *}{}{}
				& \equiv 
				\atomicRen{\ex{y} \varvec[alt] = \varvec \subs{x}{y}} \\
			\rendition{\test{\chi}}{}{}
				& \equiv 
				\atomicRen{\big( \chi \wedge \varvec[alt] = \varvec \big)} \\
			\rendition{\evolution*{}{non}}{}{}
				& \equiv 
				\atomicRen{\langle \evolution*{}{non} \rangle \varvec[alt] = \varvec} \\
			\rendition{\evolution*{}{}}{}{}
				& \equiv
				\atomicRen{ \\
					&\hspace{1cm} \ex{g{=}0} \langle \evolution*{x'=\rp,g'=1}{non} \rangle \big(
						\varvec[alt] = \varvec \wedge
						[ \evolution*{x' = -\rp, g = -1}{non} ] (g{\ge}0 \rightarrow \chi)
					\big)
				} \\
			\rendition{\send{}{}{}}{}{}
				& \equiv %
				\itefin{
					\big( \varvec[alt] = \varvec \vee \varvec[alt] = \varvec \subs{\historyVar}{\syncom{\rp}} \big)
				}{ 
					\varvec[alt] = \varvec \subs{\historyVar}{\syncom{\rp}}
				} \\
			\rendition{\receive{}{}{}}{}{}
				& \equiv 
				\itefin{
					\big( \varvec[alt] = \varvec \vee \ex{y} \varvec[alt] = \varvec \subs{\historyVar}{\syncom{y}} \big)
				}{
					\ex{y} \varvec[alt] = (\varvec \subs{x}{y}) \subs{\historyVar}{\syncom{y}}
				} \\
			\rendition{\alpha\cup\beta}{}{}
				& \equiv \rendition{\alpha}{}{} \vee \rendition{\beta}{}{} \\
			\rendition{\alpha\seq\beta}{}{}
				& \equiv 
				\neg\finmarker{} \wedge \rendition{\alpha}{}{\renparams{}{}{\false}}
				\vee
				\ex{\varvec[opt]} \big( 
					\rendition{\alpha}{}{\renparams{}{\varvec[opt]}{\true}} 
					\wedge 
					\rendition{\beta}{}{\renparams{\varvec[opt]}{}{}}
				\big) \\
			\rendition{\repetition{\alpha}}{}{}
				& \equiv \ex{n{:}\naturals}
					\ex{\goedelat{\repstatevar}{n}{}}
					\Big(
						\repstate{1} \!= \varvec
						\wedge \repstate{n} \!= \varvec[alt] \\
						&\hspace{1cm} \wedge \fa{i{:}\naturals} \big(
							\orange{1}{i}{n} 
							\rightarrow 
							\reninter{i}{i+1}{\orange{1}{i}{n-1} \vee \finmarker{}} 
						\big)
					\Big)
				\\
			\rendition{\alpha \parOp \beta}{}{}
				& \equiv \ex{\tvar{=}\tvar\downarrow\parchans}\ex{\varvec[alt]_\alpha, \varvec[alt]_\beta} \Big(
					\tvar_v = \parrec \cdot \tvar 
					\wedge \big(
						\textstyle\bigwedge_{j\in\{1,\ldots,\semvar\}} \avar[alt]_j = \mergeren{j}
					\big) \\
					&\hspace{1cm}
					\wedge \gtvec_\alpha = \gtvec_\beta
					\wedge \big( 
						\textstyle\bigwedge_{\gamma\in\{\alpha,\beta\}} \big(
							\rendition{\gamma}{}{\renparams{}{\varvec[alt]_\gamma}{\finmarker{}}}
							\wedge
							\tvar_\gamma = \parrec \cdot (\tvar \downarrow \gamma)
						\big)
					\big)
				\Big)
		\end{align*}

		{\color{gray}\rule{\textwidth}{.1pt}}
		\vspace*{-.2cm}

		\begin{small}
			\centering
			$\begin{aligned}
				&\repstate{i}
				\equiv \big(
					\at{\finhisjoint}{0, \at{\iarr}{i{-}1}},
					\goedelat{(\repgoedel)}{n}{i}
				\big) \\
				&\ex{\goedelat{\repstatevar}{n}{}} \psi 
					\equiv 
					\ex{\repgoedel{:}\reals} \ex{\iarr{:}\traces}  
					(n = \len{\iarr} \wedge \psi)
			\end{aligned}$\hspace{.4cm}
			$\begin{aligned}
				&\mergeren{j} \equiv 
				\begin{cases}
					\avar[alt]_{\alpha j} & \text{if $\avar_j\in\SBV(\alpha)$} \\
					\avar[alt]_{\beta j} & \text{else}
				\end{cases} \\
				&\itecore{\varphi}{\phi_1}{\phi_2} \equiv (\varphi \wedge \phi_1) \vee (\neg\varphi \wedge \phi_2)
			\end{aligned}$
		\end{small}
	\end{minipage}
		
	\caption{
		Encoding of the transition semantics of CHPs in \comFOD (\rref{lem:rendition})
	}
	\label{fig:homogeneous_rendition}
\end{figure}
\endgroup

\begin{lemma}
	[Expressiveness of \comFOD]
	\label{lem:com-fod-expressiveness}
	The logic \dLCHP is expressible in \comFOD.
	That is, for every \dLCHP formula $\phi$, there is an \comFOD formula $\toComFOD{\phi}$ over the same free variables such that $\vDash \phi \leftrightarrow \toComFOD{\phi}$.
\end{lemma}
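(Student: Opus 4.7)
The plan is to prove the lemma by structural induction on the \dLCHP formula $\phi$, reducing the four modal cases to \comFOD via the program rendition $\rendition{\alpha}{}{}$ from \rref{lem:rendition}. The base and propositional cases are straightforward: atomic formulas $\expr_1 \sim \expr_2$ already belong to \comFOD since they use the same term language; Boolean connectives and quantifiers just apply the induction hypothesis to their subformulas, and the FV sets match trivially.

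For the modal cases, I would fix the vector $\varvec$ of all free variables of $\alpha$ (including $\gtvec$ and its recorder $\getrec{\alpha}$) and introduce a fresh compatible copy $\varvec[alt]$ and a fresh boolean $\finmarker{}$. Using the characterization from \rref{lem:rendition}, I define
\begin{align*}
	\toFOD{[\alpha]\psi} &\equiv \fa{\varvec[alt]} \big( \rendition{\alpha}{}{\renparams{}{}{\true}} \rightarrow \toFOD{\psi}\subs{\varvec}{\varvec[alt]} \big) \\
	\toFOD{\langle\alpha\rangle\psi} &\equiv \ex{\varvec[alt]} \big( \rendition{\alpha}{}{\renparams{}{}{\true}} \wedge \toFOD{\psi}\subs{\varvec}{\varvec[alt]} \big)
\end{align*}
and for the ac-modalities I split into the two promises matching conditions \eqref{eq:commit} and \eqref{eq:post}. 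Intermediate states are picked up by runs with $\finmarker{}=\false$ in the rendition, where $\varvec[alt]$ agrees with $\varvec$ on everything except the recorder (so the real state is unchanged and only the trace is extended); final states are picked up by $\finmarker{}=\true$. Concretely,
\begin{align*}
	\toFOD{[\alpha]\ac\psi} \equiv\;& \fa{\varvec[alt],\finmarker{}} \Big( \rendition{\alpha}{}{} \rightarrow \big( \neg\finmarker{} \wedge \Aclosure{}{\prec} \rightarrow \toFOD{\Commit}\subs{\varvec}{\varvec[alt]} \big) \wedge \big( \finmarker{} \wedge \Aclosure{}{\preceq} \rightarrow \toFOD{\psi}\subs{\varvec}{\varvec[alt]} \big) \Big)
\end{align*}
with $\Aclosure{}{\sim}$ quantifying $\toFOD{\A}$ over (strict/non-strict) trace prefixes of the reached recorder value, as done in axiom \RuleName{Aclosure}. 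The ac-diamond case is dual.

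The equivalence $\phi \leftrightarrow \toFOD{\phi}$ is then verified using \rref{lem:rendition} together with the semantics (\rref{def:semantics_formulas}), and the coincidence properties (\rref{lem:expressionCoincidence}, \rref{cor:communicative_coincidence}) ensure that substituting $\varvec$ by $\varvec[alt]$ in $\toFOD{\psi}$, $\toFOD{\A}$, and $\toFOD{\Commit}$ evaluates these formulas exactly at the reached intermediate or final state. Well-formedness of $(\A, \alpha)$ and $(\Commit, \alpha)$ guarantees that these formulas see only trace variables among the bound variables of $\alpha$, so the real-state substitution is harmless and the trace substitution reduces to renaming the recorder. By bound-effect and the freshness conventions, the free variables of $\toFOD{\phi}$ coincide with those of $\phi$.

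The main obstacle is the ac-box case, specifically handling the prefix-indexed assumption on intermediate states cleanly. The rendition enumerates all intermediate states along a run via prefix-closedness, so one might fear double-counting strict prefixes. The resolution is to use the idea behind \RuleName{Aclosure}: quantifying $\toFOD{\A}$ over all strict prefixes of the reached recorder value internalizes the strict-prefix semantics uniformly at each intermediate point, which is expressible in \comFOD because the prefix relation and trace access are first-order on trace terms. A secondary care is that $\rendition{\alpha}{}{}$ quantifies over a block of ghost variables $\varvec[third]$ in its semantic characterization, but these are existentially hidden in the syntactic definition of the rendition and do not leak into $\FV(\toFOD{\phi})$, so the free-variable clause of the lemma is preserved.
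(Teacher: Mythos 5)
Your proposal is correct and follows essentially the same route as the paper: structural induction with the program rendition of \rref{lem:rendition}, prefix-quantified assumption closures $(\toFOD{\A})_\prec$ and $(\toFOD{\A})_\preceq$ in the style of \RuleName{Aclosure}, and substitution of the reached state into the translated promises, with the ac-diamond handled dually. The only cosmetic deviations — guarding the commitment conjunct with $\neg\finmarker{}$ (redundant by prefix-closedness of $\sem{\alpha}{}$) and substituting the full vector $\varvec[alt]$ rather than only the reached recorder $\tvar_v$ into $\toFOD{\Commit}$ (equivalent because the rendition forces $\varvec[alt]=\varvec$ off the recorder in intermediate states) — do not affect correctness.
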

\vspace{-.7cm}
\begin{proof}
	The proof is by induction on the structure of $\phi$
	generalizing a result for \dL \cite[Lemma 6]{DBLP:journals/jar/Platzer08} to ac-modalities.
	\Wlossg $\phi$ contains no dynamic modalities rewriting them by the equivalences $[ \alpha ] \psi \leftrightarrow [ \alpha ] \acpair{\true, \true} \psi$ and $\langle \alpha \rangle \psi \leftrightarrow \langle \alpha \rangle \acpair{\true, \false} \psi$.
	Throughout the proof, IH abbreviates usage of the induction hypothesis.
	\begin{enumerate}
		\item If $\phi$ is an \comFOD formula, 
		then define $\toComFOD{\phi} \equiv \phi$.

		\item 
		\label{itm:expressibility-and}
		If $\phi \equiv \varphi \wedge \psi$,
	 	by IH, $\toComFOD{\varphi}$, $\toComFOD{\psi}$ exist such that $\vDash \varphi \leftrightarrow \toComFOD{\varphi}$ and $\vDash \psi \leftrightarrow \toComFOD{\psi}$.
		Now, define $\toComFOD{\phi} \equiv \toComFOD{\varphi} \wedge \toComFOD{\psi}$.
		Then $\vDash \phi \leftrightarrow \toComFOD{\phi}$.

		\item Other propositional connectives and quantifiers ($\neg, \forall$) are handled analogous to \rref{itm:expressibility-and}.

		\item If $\phi \equiv [ \alpha ] \ac \psi$,
		then by IH, $\toComFOD{\A}$, $\toComFOD{\Commit}$, and $\toComFOD{\psi}$ exist
		such that $\vDash \chi \leftrightarrow \toComFOD{\chi}$ for each $\chi \in \{\A, \Commit, \psi\}$.
		To express~$\alpha$'s transition semantics in \comFOD, 
		the rendition $\rendition{\alpha}{}{}$ from \rref{lem:rendition} is used,
		where $\varvec = (\getrec{\alpha}, \avar_1, \ldots, \avar_\semvar)$ are the variables of $\alpha$ including $\alpha$'s recorder $\getrec{\alpha}$
		and $\varvec[alt] = (\tvar_v, \avar[alt]_1, \ldots, \avar[alt]_\semvar)$ is fresh and compatible with $\varvec$.
		The formula $\rendition{\alpha}{}{}$ holds if there is an $\alpha$-run from $\varvec$ to $\varvec[alt]$,
		where $\finmarker{}$ tells whether $\varvec[alt]$ is intermediate or final.
		To capture the assumption~$\A$ between $\alpha$'s initial history $\getrec{\alpha}$ and its reached history $\tvar_v$,
		let $\Atrace{\A} \equiv \fa{\getrec{\alpha}{\preceq}\tvar'{\sim}\tvar_v} \A\subs{\getrec{\alpha}}{\tvar'}$,
		where~$\tvar'$ is fresh
		and $\sim\,\in \{\prec, \preceq\}$.
		Then $\toComFOD{\phi}$ is defined as follows,
		where $\fa{\finmarker{}} \phi(\finmarker{})$ is short for $\phi(\false) \wedge \phi(\true)$:
		\begin{equation}
			\label{eq:rendition-acbox}
			\toComFOD{\phi} \equiv \fa{\varvec[alt]} \fa{\finmarker{}} \Big( 
				\rendition{\alpha}{\toComFOD{\A}}{} 
				\rightarrow 
				\big(
					\Abutlast{\toComFOD{\A}} \rightarrow
					(\toComFOD{\Commit}) \subs{\getrec{\alpha}}{\tvar_v}
				\big) 
				\wedge 
				\big( 
					\finmarker{} \wedge \Aglobally{\toComFOD{\A}} \rightarrow (\toComFOD{\psi}) \subs{\varvec}{\varvec[alt]}
				\big) 
			\Big)
		\end{equation}
		The conjuncts in \rref{eq:rendition-acbox} straightforwardly reflect \acCommit and \acPost.
		Hence, for the ac-contract $\Atrace{\toComFOD{\A}}$ and $\toComFOD{\Commit}$,
		only the recorder~$\getrec{\alpha}$ is updated
		while for the postcondition $\toComFOD{\psi}$ the overall state $\varvec$ is updated.

		\item If $\phi \equiv \langle \alpha \rangle \ac \psi$, then $\toComFOD{\A}$, $\toComFOD{\Commit}$, and $\toComFOD{\psi}$ exist by IH such that $\vDash \chi \leftrightarrow \toComFOD{\chi}$ for each $\chi \in \{\A, \Commit, \psi\}$.
		Moreover, let $\rendition{\alpha}{\toComFOD{\A}}{}$ and $\Atrace{\A}$ be as in case $[ \alpha ] \ac \psi$.
		Then $\toComFOD{\phi}$ is defined as follows,
		where $\ex{\finmarker{}} \phi(\finmarker{}) \equiv \phi(\false) \vee \phi(\true)$:
		\begin{equation*}
			\toComFOD{\phi} \equiv \ex{\varvec[alt]} \ex{\finmarker{}} \Big( 
				\rendition{\alpha}{\toComFOD{\A}}{}
				\wedge
				\Big(
					\big(
						\Abutlast{\toComFOD{\A}} \wedge (\toComFOD{\Commit}) \subs{\getrec{\alpha}}{\tvar_v} 
					\big)
					\vee 
					\big(
						\finmarker{} \wedge \Aglobally{\toComFOD{\A}} \wedge (\toComFOD{\psi}) \subs{\varvec}{\varvec[alt]}
					\big)
				\Big)
			\Big)
		\end{equation*}
		\qedhere
	\end{enumerate}
\end{proof}

\subsubsection{Verification Conditions for Parallelism}
\label{sec:par-safety-lemmas}

\newcommand{\opencom}[3]{\text{Com}(#1, #2, #3)}

This section introduces complete verification conditions for safety of parallel hybrid systems.
A compositional proof of safety $\varphi \rightarrow [\alpha\parOp\beta] \psi$ naturally asks for splitting~$\psi$ 
such that $\psi_\alpha \wedge \psi_\beta \rightarrow \psi$,
and $\varphi \rightarrow  [ \alpha ] \psi_\alpha$
and $\varphi \rightarrow  [ \beta ] \psi_\beta$ 
derive,
where~$\psi_\alpha$ and~$\psi_\beta$ specify the local behavior of the subprograms.
From this, parallel injection \RuleName{acDropComp} embeds the subprograms into 
the parallel composition,
\iest proves $\varphi \rightarrow [ \alpha\parOp\beta ] \psi_\gamma$ for each $\gamma \in \{\alpha, \beta\}$,
if the subprograms do not interfere (\rref{def:noninterference}) with each other's postcondition.
Ac-distribution~\RuleName{acBoxesDist} and monotonicity \RuleName{acMono} combine everything to $\varphi \rightarrow [ \alpha \parOp \beta ] \psi$.
The challenge for completeness is to find $\psi_\alpha$ and~$\psi_\beta$,
which capture sufficiently much of~$\alpha$'s and $\beta$'s behavior to entail~$\psi$ 
but also satisfy noninterference.
A natural choice for $\psi_\gamma$ seems to be the strongest postcondition~$\Psi_{\varphi, \gamma}$ of $\gamma$ \wrt the precondition $\varphi$,
because it exactly demarcates $\gamma$'s behavior in terms of its reachable states.
Strict reachability, however, 
requires
absence of environmental communication 
such that the programs potentially 
interfere with
each other's strongest postcondition.%
\footnote{%
	Let $\Psi_{\varphi, \alpha}$ be the strongest postcondition of $\alpha$ \wrt to the precondition $\varphi$,
	\iest $\Psi_{\varphi, \alpha}$ exactly denotes all states reachable by an $\alpha$-run from some state satisfying $\varphi$.
	Then $\varphi \rightarrow [ \send{}{}{0} ] \Psi_{\varphi, \send{}{}{}}$ is valid,
	where $\varphi \equiv \tvar \downarrow \ch{dh} = \epsilon$,
	but $\varphi \rightarrow [ \send{}{}{0} \parOp \receive{dh}{}{} ] \Psi_{\varphi, \send{}{}{0}}$ is not valid because $\Psi_{\varphi, \send{}{}{0}}$ requires that there was no communication on $\ch{dh}$ previously
	In fact, $\receive{\ch{dh}}{}{x}$ interferes (\rref{def:noninterference}) with $(\send{}{}{0}, \Psi_{\varphi, \send{}{}{0}})$. 
}

As solution,
we adapt an approach from Hoare-style ac-reasoning \cite{AcSemantics_Zwiers} to hybrid systems and dynamic logic.
The idea is to extend the strongest postcondition~$\Psi_{\varphi,\gamma}$ with all variations of the original states,
which cover some interleaving of communication potentially stemming from another program $\ogamma$.
This defines the strongest postcondition~$\strongestPost{\ogamma, \varphi}{\gamma}$ \wrt an environment~$\ogamma$.
Since $\strongestPost{\beta, \varphi}{\alpha}$ and $\strongestPost{\alpha, \varphi}{\beta}$ cover each other's final states,
their intersection covers the final states of~$\alpha\parOp\beta$,
as opposed to classical strongest postconditions.
\rref{def:state_variations} introduces environmental state variations,
and \rref{lem:state_variations_comfod} represents them syntactically
as strongest promises.
Different from Hoare-style ac-reasoning \cite{AcSemantics_Zwiers},
variation is not defined within the transition relation (\rref{lem:rendition}).
Instead, \rref{lem:state_variations_comfod} modularly characterizes variation from reachability~$\langle \alpha \rangle$.
All proofs for this section are in \rref{app:reach_states}.

\newcommand{\comvarrun}[2]{(\pstate{v}, \trace \downarrow (\comvariation{#1}{#2}), \pstate{w})}

\begin{definition}
	[Environmental state variations]
	\label{def:state_variations}
	For an action $(\A, \alpha)$,
	define
	\emph{intermediate state variations} 
	$\reachableInter{\cset, \varphi}{\A, \alpha}$ and \emph{final state variations} $\reachableFin{\cset, \varphi}{\A, \alpha}$
	\wrt the precondition $\varphi$ and channels $\cset \subseteq \Chan$,
	where $\sem{\varphi}{} \circ \denotation = \{\run \in \denotation \mid \pstate{v} \vDash \varphi \}$,
	and see \rref{eq:semantics_action} for $\sem{\A, \alpha}{}_\sim$:
	\begin{align*}
		\reachableInter{\cset, \varphi}{\A, \alpha} 
		& = \big\{ 
			\pstate{v} \cdot \trace \mid 
				\eexists \pstate{w} : (\pstate{v}, \trace \downarrow (\alpha \cup \cset^\complement), \pstate{w}) \in \sem{\varphi}{} \circ \sem{\A, \alpha}{}_\prec
		\big\} \\
		\reachableFin{\cset, \varphi}{\A, \alpha}
		& = \big\{ 
			\pstate{w} \cdot \trace \mid 
				\eexists \pstate{v} : (\pstate{v}, \trace \downarrow (\alpha \cup \cset^\complement), \pstate{w}) \in \sem{\varphi}{} \circ \sem{\A, \alpha}{}_\preceq
		\big\}
	\end{align*} 
\end{definition}

State variations (\rref{def:state_variations}) take potential environmental computation into account.
A variation results from an $\alpha$-run by interleaving some communication on the \mbox{non-$\alpha$} channels $\alpha^\complement \cap \cset$.
Final state variations implicitly cover environmental effects on the state as well,
because programs do not share state (\rref{def:syntax_chps}) and the initial state is  $\eexists$-quantified.
\rref{lem:state_variations_comfod} combines reachability~$\langle \alpha \rangle$ 
with projections in \comFOD to express variations without using an encoding of the transition relation (\rref{lem:rendition}) of $\alpha$.

\newcommand{\lchans}{\cset}

\newcommand{\allVars}{\varvec} %

\begin{lemma}
	[Strongest Promises]
	\label{lem:state_variations_comfod}
	For any (co)-finite $\cset\subseteq\Chan$,
	there are \comFOD formulas $\strgCommit{\cset}{\varphi}{}{}$ and $\strgPost{\cset}{\varphi}{}{}$
	called the \emph{strongest commitment}~and \emph{strongest postcondition}, respectively,
	of the action $(\A, \alpha)$ \wrt the precondition $\varphi$ and environmental communication on channels $\cset$,
	which characterize the
	state variations (\rref{def:state_variations}),
	where $\langle \alpha \rangle_{\A} \equiv  \langle \alpha \rangle \acpair{\A, \_} \_$:
	\begin{equation*}
		\reachableInter{\cset, \varphi}{\A, \alpha} = \sem{\strgCommit{\cset}{\varphi}{\A}{\alpha}}{}
		\qquad\qquad
		\reachableFin{\cset, \varphi}{\A, \alpha} = \sem{\strgPost{\cset}{\varphi}{\A}{\alpha}}{}
	\end{equation*}
	For $\Phi \in \{\Upsilon,\Psi\}$ and program~$\beta$,
	define $\strgPromise{\beta}{\varphi}{\A}{\alpha} \equiv \strgPromise{\cset}{\varphi}{\A}{\alpha}$ with $\cset = \SCN(\beta)$,
	and $\strgPromise{\cset}{\varphi}{}{\alpha} \equiv \strgPromise{\cset}{\varphi}{\true}{\alpha}$. 
	For every well-formed (\rref{def:syntax_chps}) $\alpha\parOp\beta$, 
	if $\beta$ does not interfere (\rref{def:noninterference}) with $\nointfpair{\alpha}{\varphi}$,
	then $\beta$ does not interfere with $\nointfpair{\alpha}{\strgPromise{\beta}{\varphi}{}{\alpha}}$.
\end{lemma}

\rref{lem:reachable_states_correct} proves that the strongest promises are indeed strong enough to entail any valid promise (\rref{itm:reachable_states_entailment})
but not so strong as to cease being valid promises themselves (\rref{itm:reachable_states_correct}).
For $\cset=\emptyset$,
\iest the environment may not interleave,
$\strgPost{\cset}{\varphi}{\A}{\alpha}$ coincides with classical strongest postconditions.
If $\cset\not\subseteq\SCN(\alpha)$,
then $\strgPost{\cset}{\varphi}{\A}{\alpha}$ 
does not entail~$\psi$ (\rref{itm:reachable_states_entailment}),
because $\strgPost{\cset}{\varphi}{\A}{\alpha}$ contains states with environmental communication along channels $\alpha^\complement \cap \cset$.
Still $\strgPost{\cset}{\varphi}{\A}{\alpha}$ is a valid promise (\rref{itm:reachable_states_correct}) because those states are just not reachable by~$\alpha$.
Since $\strgPost{\beta}{\varphi}{\A}{\alpha}$ covers all possible interleavings of~$\beta$'s communication,
it stays valid in all final states of $\alpha\parOp\beta$,
and
parallel injection \RuleName{acDropComp} is applicable on $[ \alpha\parOp\beta ] \strgPost{\beta}{\varphi}{\A}{\alpha}$
as~$\beta$ does not interfere (\rref{def:noninterference}). 

\begin{lemma}
	[Correctness of the Strongest Promises]
	\label{lem:reachable_states_correct}
	The strongest promises (\rref{lem:state_variations_comfod}) satisfy the following properties.
	Hiding $\fa{\rvarvec{=}\rvarvec[alt]}$ of the variables $\rvarvec \supseteq \SBV(\alpha)$
	in the commitment ensures well-formedness of the ac-box (\rref{def:syntax_formulas}),
	where $\rvarvec[alt]$ is fresh: 
	\begin{enumerate}
		\item 
		\label{itm:reachable_states_entailment}
		If $\vDash \varphi \rightarrow [ \alpha ] \ac \psi$, then
		\begin{enumerate*}[label=(\roman*)]
			\item $\vDash \strgCommit{\emptyset}{\varphi}{\A}{\alpha} \rightarrow \Commit$ and\label{itm:strongest_com_implies_commit}
			
			\item $\vDash \strgPost{\emptyset}{\varphi}{\A}{\alpha} \rightarrow \psi$\label{itm:strongest_reach_implies_post}
		\end{enumerate*}

		\item 
		\label{itm:reachable_states_correct}
		$\vDash \rvarvec[alt] = \rvarvec \wedge \varphi \rightarrow [ \alpha ] \acpair{\A, \Upsilon} \strgPost{\cset}{\rvarvec[alt] = \rvarvec \wedge \varphi}{\A}{\alpha}$,
		where $\Upsilon \equiv \fa{\rvarvec{=}\rvarvec[alt]} \strgCommit{\cset}{\rvarvec[alt] = \rvarvec \wedge \varphi}{\A}{\alpha}$
	\end{enumerate}
\end{lemma}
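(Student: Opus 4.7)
The plan is to leverage the characterization of strongest promises given by \rref{lem:state_variations_comfod} and unpack the semantics of the ac-modalities directly against the definitions in \rref{def:reach_worlds_obs_com}.

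For Part~\ref{itm:reachable_states_entailment}, I would pick a state $\varstate$ satisfying $\strongestCommit{\emptyset,\varphi}{\A,\alpha}$ (resp.\ $\strongestPost{\emptyset,\varphi}{\A,\alpha}$) and apply \rref{lem:state_variations_comfod} to obtain $\varstate \in \reachableInter{\emptyset,\varphi}{\A,\alpha}$ (resp.\ $\reachableFin{\emptyset,\varphi}{\A,\alpha}$). Since $\emptyset^\complement = \Chan$, the variation constraint $\vartrace \downarrow (\SCN(\alpha) \cup \Chan) = \trace$ collapses to $\vartrace = \trace$, so $\varstate$ coincides on the local variables $\varvec$ with an actual intermediate (resp.\ final) state of some $(\A,\alpha)$-run $\run \in \sem{\A,\alpha}{}_\sim$ with $\pstate{v} \vDash \varphi$. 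From the hypothesis $\vDash \varphi \rightarrow [\alpha]\ac\psi$ we get $\pstate{v} \vDash [\alpha]\ac\psi$, and the \acCommit and \acPost clauses of \rref{def:semantics_formulas}\,\ref{itm:acBoxSem} deliver $\pstate{v}\cdot\trace \vDash \Commit$ and $\pstate{w}\cdot\trace \vDash \psi$, respectively. A coincidence argument using \rref{lem:expressionCoincidence} together with $\varfin = \pstate{w}$ on $\varvec \supseteq \SFV(\A,\Commit)$ and the bound effect property (\rref{lem:boundEffect}), which pins down all variables of $\psi$ not written by $\alpha$, then transfers these to $\varstate$.

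For Part~\ref{itm:reachable_states_correct}, I would show validity of the ac-box directly by semantics. Fix a state $\pstate{v}$ satisfying $\rvarvec[alt] = \rvarvec \wedge \varphi$ and pick any $\run \in \sem{\alpha}{}$. For \acCommit, assume $\pstate{v} \cdot \trace|_\prec \vDash \A$, so $\run \in \sem{\A,\alpha}{}_\prec$ and thus the intermediate state $\pstate{v}\cdot\trace$ is, by the very definition of $\reachableInter{\cdot}{\cdot}$ with witness $\pstate{v}$, in $\reachableInter{\cset,\rvarvec[alt]=\rvarvec\wedge\varphi}{\A,\alpha}$, so it satisfies $\strongestCommit{\cset,\rvarvec[alt]=\rvarvec\wedge\varphi}{\A,\alpha}$ by \rref{lem:state_variations_comfod}. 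The outer $\fa{\rvarvec{=}\rvarvec[alt]}$ then holds vacuously-by-freshness: the quantifier picks values for $\rvarvec$ constrained to equal the frozen ghosts $\rvarvec[alt]$, which, since $\rvarvec[alt]$ is untouched by $\alpha$ and recorded the initial $\rvarvec$, is consistent with the transit into $\pstate{v}\cdot\trace$ by a coincidence argument on the free variables of $\strongestCommit{\cdot}{\cdot}$. The \acPost clause is analogous, using $\sem{\A,\alpha}{}_\preceq$, the witness $\pstate{w}$, and $\strongestPost{\cdot}{\cdot}$ in place of $\strongestCommit{\cdot}{\cdot}$.

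The main obstacle is the well-formedness bookkeeping around the bound variables of $\alpha$ inside the commitment slot. The strongest commitment naturally refers to variables in $\SV(\alpha)$, including those in $\SBV(\alpha)\cup\{\gtvec\}$, which would violate the side condition $\SFV(\A,\Commit)\cap(\SBV(\alpha)\cup\{\gtvec\})\subseteq\TVar$ of \rref{def:syntax_formulas}. The device $\fa{\rvarvec{=}\rvarvec[alt]}$ is precisely designed to launder these occurrences through fresh ghosts $\rvarvec[alt]$ that are frozen before $\alpha$ runs; making this work rigorously requires a careful coincidence argument (\rref{lem:expressionCoincidence}) to show that replacing the value of $\rvarvec$ at the intermediate state by its frozen ghost preserves the truth of $\strongestCommit{\cdot}{\cdot}$, together with invoking the free-variables bound $\SFV(\strongestPromise{\cset,\varphi}{\A,\alpha}) \subseteq \SFV(\varphi,\A)\cup\SV(\alpha)$ from \rref{lem:state_variations_comfod} to ensure no other problematic dependencies leak in.
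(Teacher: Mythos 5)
Your part 1(i) and part 2 follow essentially the paper's own route and are sound: with $\cset=\emptyset$ the variation constraint forces $\vartrace=\trace$, so for the commitment $\varstate$ literally \emph{equals} $\pstate{v}\cdot\trace$ and no coincidence step is even needed; and for part 2 the run is its own variation, while the $\fa{\rvarvec{=}\rvarvec[alt]}$ reset is the identity at the states where the commitment is evaluated, because $\pstate{v}\cdot\trace$ still carries the initial values of $\rvarvec$, which equal $\rvarvec[alt]$ by the precondition.

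The gap is in part 1(ii). A final state-variation only guarantees $\varfin=\pstate{w}$ on the local variables $\varvec=\SFV(\varphi,\A)\cup\SV(\alpha)$; outside $\varvec$ the variation $\varfin$ is completely arbitrary, and $\SFV(\psi)$ need not be contained in $\varvec$. So knowing $\pstate{w}\cdot\trace\vDash\psi$ does not let you transfer $\psi$ to $\varstate=\varfin\cdot\vartrace$ by \rref{lem:expressionCoincidence}: the two states can disagree on free variables of $\psi$. The bound effect property does not repair this — it relates $\pstate{w}$ to $\pstate{v}$ on $\SBV(\alpha)^\complement$, not to $\varfin$, about which the definition of $\reachableFin{\emptyset,\varphi}{\A,\alpha}$ says nothing off $\varvec$. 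The paper's proof instead \emph{re-seats the run}: it builds a new initial state $\pstate[alt]{v}$ with $\pstate[alt]{v}=\pstate{v}$ on $\varvec$ and $\pstate[alt]{v}=\varfin$ on $\varvec^\complement$, checks $\pstate[alt]{v}\vDash\varphi$, obtains a run $(\pstate[alt]{v},\trace,\pstate[alt]{w})$ of the action by coincidence for modal actions (\rref{cor:program_env_coincidence}), and then shows $\pstate[alt]{w}\cdot\trace=\varfin\cdot\vartrace$ \emph{everywhere} — on $\varvec$ by the coincidence guarantee, on $\varvec^\complement$ by the bound effect property since $\SBV(\alpha)\subseteq\varvec$. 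Applying the hypothesis at $\pstate[alt]{v}$ then yields $\psi$ exactly at $\varstate$. This construction is not optional: $\reachableFin{\emptyset,\varphi}{\A,\alpha}$ deliberately contains all variations on non-local variables, so the entailment to $\psi$ must be established at states your coincidence step cannot reach.
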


\rref{lem:ac_biggest_promise_split} 
splits
the strongest promises for $\alpha\parOp\beta$ into strongest promises for the subprograms
when they admit interleaving of each other's communication.
The preconditions $\varphi_\alpha, \varphi_\beta$ characterize $\alpha$'s and $\beta$'s local share of the initial state,
and their extensions $F_\alpha, F_\beta$ align the 
duration and previous history of the subprograms.
By $\gtvec_0 = \gtvec$,
the subprograms start simultaneously,
and $\tvar_0 \downarrow \lchans_\gamma = \tvar \downarrow \lchans_\gamma$ ensures 
that~$\tvar_0$ covers the previous history of each subprogram.
History invariance $\tvar \succeq \tvar_0$ 
rejects runs,
where~$\alpha$ or~$\beta$ interleave with each other's previous history (\cf \rref{ft:global_his_props}).

\begin{lemma}
	[Decomposition of strongest promises]
	\label{lem:ac_biggest_promise_split}
	Let $\alpha\parOp\beta$ be well-formed (\rref{def:syntax_chps}) with recorder $\parrec$,
	and let $\otherprog{\alpha} \equiv \beta$ and $\otherprog{\beta} \equiv \alpha$.
	For each $\gamma \in \{\alpha,\beta\}$,
	let $\varphi_\gamma$ be a formula such that~$\ogamma$ does not interfere (\rref{def:noninterference}) with $(\gamma, \varphi_\gamma)$,
	and let $\lchans_\gamma \supseteq \SCN(\gamma)$.
	Then for each strongest promise $\Phi \in \{ \Upsilon, \Psi \}$ (\rref{lem:state_variations_comfod}),
	the following formula is valid,
	where $\cPreFrame_\gamma \equiv \varphi_\gamma \wedge \gtvec_0 = \gtvec \wedge \tvar_0 \downarrow \lchans_\gamma = \parrec \downarrow \lchans_\gamma$, 
	and $\cPreFrame \equiv \varphi_\alpha\wedge\varphi_\beta \wedge \gtvec_0 = \gtvec \wedge \tvar_0 = \parrec$,
	and $\gtvec_0, \tvar_0$ are fresh:
	\begin{align*}
		\strgPromise{\beta}{\cPreFrame_\alpha}{}{\alpha} \wedge
		\strgPromise{\alpha}{\cPreFrame_\beta}{}{\beta}
		\wedge 
		\parrec \succeq \tvar_0
		\rightarrow \strgPromise{\emptyset}{\cPreFrame}{}{\alpha\parOp\beta}
	\end{align*}
\end{lemma}

The proof of \rref{thm:com-fod-completeness} subsumes the assumption under the promises using axiom~\RuleName{Atransfer}.
\rref{lem:assumption_closure} expresses the effect that the application of an assumption $\Atrace{\A}$ has on the strongest promises.

\begin{lemma}
	[Assumption subsumption]
	\label{lem:assumption_closure}
	Let $(\A, \gamma)$ 
	be a modal action.
	Then the following formula is valid for each strongest promise $(\Phi, \sim) \in \{ (\Upsilon, \prec), (\Psi, \preceq)\}$ (\rref{lem:state_variations_comfod}),
	where $\cPreFrame \equiv \tvar_0 = \getrec{\gamma} \wedge \varphi$ for some $\varphi$ and $\tvar_0$ is fresh,
	and $\Atrace{\A} \equiv \AclDef{}{\getrec{\gamma}}$,
	where $\sim\, \in \{ \prec, \preceq \}$ and $\tvar'$ is fresh:
	\begin{align*}
		\strgPromise{\emptyset}{\cPreFrame}{}{\gamma}
		\rightarrow (
			\Atrace{\A} \rightarrow \strgPromise{\emptyset}{\cPreFrame}{\A}{\gamma}
		)
		\qquad\sidecondition{(where $(\Phi, \sim) \in \{ (\Upsilon, \prec), (\Psi, \preceq)\}$)}
	\end{align*}
\end{lemma}

\subsubsection{Proof of Completeness Relative to \texorpdfstring{\comFOD}{Omega}}
\label{sec:completeness_theorem}

\newcommand{\allbound}{\rvarvec}
\newcommand{\allghost}{\rvarvec[alt]}

This section proves \dLCHP complete relative to \comFOD (\rref{thm:com-fod-completeness})
by an effective 
reduction of any valid \dLCHP formula to \comFOD tautologies in \dLCHP's proof calculus~(\rref{fig:calculus}).
A proof outline is at the beginning of \rref{sec:com-fod-completeness}.

\CreateRuleRef{prop}%
In \rref{sec:calculus}, we assumed that \dLCHP's proof calculus contains a complete axiomatization of first-order logic.
To make 
this precise,
\rref{thm:com-fod-completeness} uses the axioms~\RuleName{uniInstance} for universal instantiation, \RuleName{faDist} for distributivity,~\RuleName{vacuousQuanti} for vacuous quantification,
and \RuleName{subsR} for substitution.
Introduction of ghost variables \RuleName{iG} derives.%
\footnote{By equality in first-order logic, obtain $\vdash \expr = \expr$,
so $\vdash (\expr = \expr \rightarrow \psi) \rightarrow \psi$ propositionally.
Then $\vdash \fa{\avar} (\avar = \expr \rightarrow \psi) \rightarrow \psi$ by \RuleName{uniInstance} as $\avar$ is fresh.}
\vspace{1em}

\begin{small}
	\begin{calculus}
		\startAxiom{uniInstance}
			$\fa{\avar} \psi(\avar) \rightarrow \psi(\expr)$
		\stopAxiom
		\startAxiom{faDist}
			$\fa{\avar} (\varphi \rightarrow \psi) \rightarrow (\fa{\avar} \varphi \rightarrow \fa{\avar} \psi)$
		\stopAxiom
		\startAxiom{vacuousQuanti}
			$\psi \rightarrow \fa{\avar} \psi 
			\quad\sidecondition{($\avar\not\in\SFV(\psi)$)}$
		\stopAxiom
	\end{calculus}\hspace{.4cm}
	\begin{calculus}
		\startAxiom{subsR}
			$\avar_0 = \avar \rightarrow (\psi(\avar_0) \rightarrow \psi(\avar))$
		\stopAxiom
		\startAxiom{iG}
			$\fa{\avar} (\avar = \expr \rightarrow \psi) \rightarrow \psi
			\quad\sidecondition{($\avar$ fresh)}$
		\stopAxiom
	\end{calculus}
\end{small}

\begin{proof}
	[Proof of \rref{thm:com-fod-completeness}]
	Write $\comFODderives \phi$ when the formula $\phi$ derives in \dLCHP's calculus (\rref{fig:calculus}) from \comFOD tautologies.
	Hence, for every \dLCHP formula $\phi$, 
	it is to be proven that $\vDash \phi$ implies~$\comFODderives \phi$.
	The formula $\phi$ is assumed to contain only ac-modalities using the equivalences~\RuleName{boxesDual} and~\RuleName{diasDual}.
	Further, $\phi$ is assumed to be in conjunctive normal form
	with negations pushed inside over modalities and quantifiers using the %
	equivalences $\neg [ \alpha ] \ac \psi \leftrightarrow \langle \alpha \rangle \acpair{\A, \neg\Commit} \neg\psi$ and $\neg \langle \alpha \rangle \ac \psi \leftrightarrow [ \alpha ] \acpair{\A, \neg\Commit} \neg\psi$ (by~\RuleName{acdbDual}),
	and $\neg \fa{\avar} \psi \leftrightarrow \ex{\avar} \neg\psi$ and $\neg \ex{\avar} \psi \leftrightarrow \fa{\avar} \neg\psi$.
	Unlike \dL's completeness proof \cite{DBLP:journals/jar/Platzer08}, 
	this proof explicitly handles quantifiers,
	because $\fa{\avar}\!$ and $\ex{\avar}\!$ have no simple differential equation encoding if $\avar$ is a trace variable.

	The proof is by induction along a well-founded partial order $\indOrder$ on \dLCHP formulas similar to an order used for \dGL~\cite{DBLP:journals/tocl/Platzer15}.
	The order $\indOrder$ 
	lexicographically combines the ordering $\progorder$ of formulas by the overall structural complexity of the programs they contain
	and the ordering~$\fmlorder$ of formulas by the number of logical operators as usual,
	and both orders $\progorder$ and $\fmlorder$ put the base logic $\comFOD$ at their bottom,
	because every valid \comFOD formula derives in $\comFODderives$.
	Hence, if $\varphi \indOrder \psi$, 
	the formula~$\varphi$ might even have a more complex logical structure (\eg more quantifiers) than $\psi$ as long as some program got simpler and non got worse.
	A formula becomes smaller in $\progorder$ if some program is removed or decomposed.
	Consequently, the order $\indOrder$ is well-founded,
	because the overall structural complexity of programs can only decrease finitely often
	such that every descending chain in $\indOrder$ eventually removed all programs
	and reaches the base logic $\comFOD$.
	In fact, $\indOrder$ is well-founded as lexicographic combination of well-founded orders.
	Formally,~$\progorder$ and~$\fmlorder$ can be defined from rank functions (see \rref{app:order}).
	
	Now, let $\vDash \phi$.
	Then $\comFODderives \phi$ is proven by well-founded induction on the structure of $\phi$ along the order $\indOrder$.
	Throughout the proof IH is short for induction hypothesis.
	\begin{enumerate}[leftmargin=*]
		\item 
		\label{itm:compl-case-com-fod}
		If $\phi$ contains no program,
		then $\phi$ is an \comFOD formula,
		thus $\comFODderives \phi$.

		\item $\phi \equiv \neg \psi$, then $\phi$
		is covered by case \ref{itm:compl-case-com-fod},
		because negations are assumed to be pushed inside over modalities
		such that $\psi$ cannot contain any program.

		\item $\phi \equiv \phi_1 \wedge \phi_2$,
		then $\vDash \phi_j$ for $j \in \{1,2\}$.
		Since $\phi_j \indOrder \phi$,
		as $\phi_j$ is structurally simpler than~$\phi$,
		obtain $\comFODderives \phi_j$ by IH.
		Then $\comFODderives \phi_1$ and $\comFODderives \phi_2$ combine to $\comFODderives \phi_1 \wedge \phi_2$ by propositional reasoning.
		
		\item $\phi \equiv \fa{\avar} \psi$,
		or $\phi \equiv \ex{\avar} \psi$,
		or $\phi \equiv \dbleft \alpha \dbright \ac \psi$,
		where $\dbleft \alpha \dbright$ is a unifying notation for $[ \alpha ]$ and $\langle \alpha \rangle$,
		then obtain $\comFODderives \false \vee \phi$ via case \ref{itm:compl-case-fa},
		\ref{itm:compl-case-ex}, or \ref{itm:compl-case-box},
		respectively,
		which yields $\comFODderives \phi$ propositionally.
	\end{enumerate}

	In case $\phi \equiv \phi_1 \vee \phi_2$, 
	\wlossg assume 
	$\phi_2 \equiv \fa{\avar} G$, 
	or $\phi_2 \equiv \ex{\avar} G$,
	or $\phi_2 \equiv \dbleft \alpha \dbright \ac G$ by derivable associativity and commutativity,
	and that $\phi_2 \not\in \comFOD$ (\eg $\fa{\avar} G$ would be an \comFOD formula if~$G$ is).
	In the remainder, 
	abbreviate $\neg\phi_1$ as $F$,
	so $\vDash \phi$ implies $\vDash F \rightarrow \phi_2$,
	then show $\comFODderives F \rightarrow \phi_2$,
	which yields $\comFODderives \phi_1 \vee \phi_2$ by propositional reasoning.
	Without further notice, the proof uses that $(F \rightarrow \lambda) \indOrder (F \rightarrow \chi)$ if $\lambda \indOrder \chi$, for any formulas $\lambda, \chi$.
	\begin{enumerate}[leftmargin=*, labelwidth=!, resume]
		\item 
		\label{itm:compl-case-fa}
		$\phi \equiv F \rightarrow \fa{\avar} G$,
		then assume $\avar \not\in F$ by bound variable renaming.
		Hence, $\vDash F \rightarrow G$.
		Since $G \indOrder \fa{\avar} G$,
		because $G$ has less quantifiers than $\fa{\avar} G$,
		obtain $\comFODderives F \rightarrow G$ by IH. 
		Then $\comFODderives \fa{\avar} (F \rightarrow G)$ by \ForallGen.
		Hence, $\comFODderives \fa{\avar} F \rightarrow \fa{\avar} G$ by \RuleName{faDist},
		so $\comFODderives F \rightarrow \fa{\avar} G$ by~\RuleName{vacuousQuanti}.

		\item 
		\label{itm:compl-case-ex}
		$\phi \equiv F \rightarrow \ex{\avar} G$,
		then there is an \comFOD formula $\toComFOD{G}$ by \rref{lem:com-fod-expressiveness} 
		such that $\vDash G \leftrightarrow \toComFOD{G}$.
		Since $\ex{\avar} G \not\in \comFOD$
		but $\ex{\avar} \toComFOD{G} \in \comFOD$,
		obtain $\ex{\avar} \toComFOD{G} \indOrder \ex{\avar} G$,
		so $\comFODderives F \rightarrow \ex{\avar} \toComFOD{G}$ by IH.
		Further, $(\toComFOD{G} \rightarrow G) \indOrder \phi$,
		as~$\toComFOD{G} \in \comFOD$, and $G$ has less quantifiers than $\ex{\avar} G$.
		Hence, $\comFODderives \toComFOD{G} \rightarrow G$ by IH.
		By \ForallGen, $\comFODderives \fa{\avar} (\toComFOD{G} \rightarrow G)$.
		Then $\comFODderives \ex{\avar} \toComFOD{G} \rightarrow \ex{\avar} G$ by the derivable dual of \RuleName{faDist}.
		This combines with $\comFODderives F \rightarrow \ex{\avar} \toComFOD{G}$ to $\comFODderives F \rightarrow \ex{\avar} G$ using~\RuleName{MP}. 
		
		\item 
		\label{itm:compl-case-box}
		$\phi \equiv F \rightarrow \dbleft \alpha \dbright \ac G$,
		then the proof is by the following case analysis of the structure of $\dbleft \alpha \dbright$.
		Missing ac-diamond cases derive analogous to their ac-box counterpart since the axioms used are equivalences such that 
		dual axioms derive by
		\RuleName{dbDual} and \RuleName{acdbDual}.
		For $\phi \equiv F \rightarrow \langle \repetition{\alpha} \rangle \ac G$, the case with unsatisfiable commitment $\Commit\equiv\false$ is considered first and then used for the general case.
		If $\SCN(\alpha) = \emptyset$,
		then $\comFODderives [ \alpha ] \ac G \leftrightarrow (\Commit \wedge (\A \rightarrow [ \alpha ] G))$ by~\RuleName{acNoCom}.
		Hence, in the cases \ref{itm:compl-case-assign}--\ref{itm:compl-case-ode},
		where $\SCN(\alpha) = \emptyset$,
		it suffices to prove that $\vDash F_0 \rightarrow [ \alpha ] G$ implies $\comFODderives F_0 \rightarrow [ \alpha ] G$ for any $F_0$
		including $F_0 \equiv F \wedge \A$,
		because $\comFODderives F \rightarrow \Commit$ by IH,
		since $\vDash F \rightarrow \Commit$ and $(F \rightarrow \Commit) \indOrder \phi$,
		as $\Commit$ has less modalities than $[ \alpha ] \ac G$.
		\begin{enumerate}[leftmargin=0pt, label={\arabic{enumi}\alph*.}]
			\item 
			\label{itm:compl-case-assign}
			$\vDash F \rightarrow [ x \ceq \rp ] G$, 
			then $\vDash F \rightarrow G \subs{x}{\rp}$ by \RuleName{assign},
			where $G \subs{x}{\rp}$ is the capture-avoid substitution of~$\rp$ for $x$ in $G$,
			so that no free variable of $\rp$ gets bound in $G \subs{x}{\rp}$.
			Since the number of programs decreased,%
			\footnote{%
				Capture-avoidance can be defined such that no new program is introduced,
				\eg $G \subs{x}{\rp} \equiv \fa{y} (y = \rp \rightarrow  G \subs{x}{y})$ for a fresh variable $y$,
				where $G \subs{x}{y}$ needs no further capture-avoidance as $y$ is fresh \cite{DBLP:journals/tocl/Platzer15}.
			}
			obtain $G \subs{x}{\rp} \indOrder [ x \ceq \rp ] G$.
			Hence, $\comFODderives F \rightarrow G \subs{x}{\rp}$ by IH.
			Finally, $\comFODderives F \rightarrow [ x \ceq \rp ] G$ by~\RuleName{assign}.
			
			\item $\vDash F \rightarrow [ x \ceq * ] G$, then $\vDash F \rightarrow \fa{x} G$ by \RuleName{nondetAssign}.
			Since the number of programs decreased,
			obtain $\fa{x} G \indOrder [ x \ceq \rp ] G$.
			Hence, $\comFODderives F \rightarrow \fa{x} G$ by IH.
			Finally, $\comFODderives F \rightarrow [ x \ceq * ] G$ by \RuleName{nondetAssign}.

			\item $\vDash F \rightarrow [ \test{\chi} ] G$, then $\vDash F \rightarrow (\chi \rightarrow G)$ by \RuleName{test}.
			Since $\chi \rightarrow G$ has less programs, 
			obtain $(\chi \rightarrow G) \indOrder [ \test{\chi} ] G$,
			so $\comFODderives F \rightarrow (\chi \rightarrow G)$ by IH.
			Hence, $\comFODderives F \rightarrow [ \test{\chi} ] G$ by \RuleName{test}.

			\item 
			\label{itm:compl-case-ode}
			$\vDash F \rightarrow \dbleft \evolution*{}{} \dbright G$,
			then by \cite[Lemma 5]{DBLP:journals/jar/Platzer08},
			the evolution domain constraint $\chi$ can be eliminated,
			as it is definable in FOD.
			Hence, the remainder focuses on $\vDash F \rightarrow \dbleft \evolution*{}{non} \dbright G$.
			By \rref{lem:com-fod-expressiveness},
			there are $\toComFOD{F},\toComFOD{G} \in \comFOD$ such that $\vDash F \leftrightarrow \toComFOD{F}$ and $\vDash G \leftrightarrow \toComFOD{G}$.
			Since~$F$ and $G$ have less modalities than $\phi$,
			and $\toComFOD{F}, \toComFOD{G} \in \comFOD$,
			obtain $(F \rightarrow \toComFOD{F}) \indOrder \phi$ and $(\toComFOD{G} \rightarrow G) \indOrder \phi$.
			Hence, $\comFODderives F \rightarrow \toComFOD{F}$ and $\comFODderives \toComFOD{G} \rightarrow G$ by IH.
			Further, $\toComFOD{F} \rightarrow \dbleft \evolution*{}{non} \dbright \toComFOD{G}$ is a valid \comFOD formula,
			such that $\comFODderives \toComFOD{F} \rightarrow \dbleft \evolution*{}{non} \dbright \toComFOD{G}$.
			This combines with $\comFODderives \toComFOD{G} \rightarrow G$ to $\comFODderives \toComFOD{F} \rightarrow \dbleft \evolution*{}{non} \dbright G$ by monotonicity \RuleName{acMono} and~\RuleName{acDiaMono},
			which combines with $\comFODderives F \rightarrow \toComFOD{F}$ to $\comFODderives F \rightarrow \dbleft \evolution*{}{non} \dbright G$ using \RuleName{MP}.
		
			\item $\vDash F \rightarrow [ \alpha \seq \beta ] \ac G$, then $\vDash F \rightarrow [ \alpha ] \ac [ \beta ] \ac G$ by \RuleName{acComposition}.
			Since $\alpha$ and $\beta$ are simpler than $\alpha\seq\beta$,
			obtain $[ \alpha ] \ac [ \beta ] \ac G \indOrder [ \alpha\seq\beta ] \ac G$.
			Note
			that $[ \alpha ] \ac [ \beta ] \ac G$ is smaller in $\indOrder$, 
			even though the number of modalities increased,
			because the overall structural complexity of the programs got simpler by removing the sequential composition.
			Hence, $\comFODderives F \rightarrow [ \alpha ] \ac [ \beta ] \ac G$ by IH.
			Finally, $\comFODderives F \rightarrow [ \alpha\seq\beta ] \ac G$ by \RuleName{acComposition}.

			\item $\vDash F \rightarrow [ \alpha \cup \beta ] \ac G$,
			then $\vDash F \rightarrow [ \alpha ] \ac G \wedge [ \beta ] \ac G$
			using \RuleName{acChoice}.
			Since $\alpha$ and $\beta$ are simpler than $\alpha\cup\beta$,
			obtain $([ \alpha ] \ac G \wedge [ \beta ] \ac G) \indOrder [ \alpha\cup\beta ] \ac G$.
			Hence, $\comFODderives F \rightarrow [ \alpha ] \ac G \wedge [ \beta ] \ac G$ by IH.
			Finally, $\comFODderives F \rightarrow [ \alpha \cup \beta ] \ac G$ by \RuleName{acChoice}.

			\item $\vDash F \rightarrow [ \repetition{\alpha} ] \ac G$,
			then by \rref{lem:com-fod-expressiveness},
			there is a \comFOD formula $\inv \equiv \toComFOD{([ \repetition{\alpha} ] \ac G)}$,
			which is equivalent to $[ \repetition{\alpha} ] \ac G$ .
			The formula $\inv$ is a sufficient invariant for~$\repetition{\alpha}$,
			because the following formulas derive in $\comFODderives$:
			\begin{enumerate}
				\item By \acCommit and totality 
				of programs,
				\iest $(\pstate{v}, \epsilon, \bot) \in \sem{\repetition{\alpha}}{}$ for every state~$\pstate{v}$,
				obtain $\vDash [ \repetition{\alpha} ] \ac G \rightarrow \Commit$,
				so $F \rightarrow \Commit \wedge \inv$ is valid.
				Since $\inv\in\comFOD$ and $\Commit$ has less programs than $[ \repetition{\alpha} ] \ac G$,
				obtain $(\Commit \wedge \inv) \indOrder [ \repetition{\alpha} ] \ac G$,
				so $\comFODderives F \rightarrow \Commit \wedge \inv$ by IH.
				
				\item By~\RuleName{acIteration}, $\inv \rightarrow [ \alpha ] \ac \inv$ is valid.
				Since $(\inv \rightarrow [ \alpha ] \ac \inv) \indOrder [ \repetition{\alpha} ] \ac G$,
				because $\inv\in\comFOD$ and $\alpha$ is simpler than~$\repetition{\alpha}$,
				obtain $\comFODderives \inv \rightarrow [ \alpha ] \ac \inv$ by IH.
				
				\item By \RuleName{acIteration} again,
				$\vDash \inv \rightarrow [ \alpha^0 ] \ac G$,
				thus $\vDash \A \rightarrow (\inv \rightarrow G)$ by~\RuleName{acNoCom} and \RuleName{test} as $\alpha^0 \equiv \test{\true}$.
				Since $\inv\in\comFOD$, and $\A$ and~$G$ together have less programs than $[ \repetition{\alpha} ] \ac G$,
				obtain $(\A \rightarrow (\inv \rightarrow G)) \indOrder [ \repetition{\alpha} ] \ac G$.
				Hence, $\comFODderives \A \rightarrow (\inv \rightarrow G)$ by IH.
			\end{enumerate}
			\vspace{.5em}	

			Further, validity $[ \repetition{\alpha} ] \acpair{\A, \true} \A$ of the assumption in the final state,
			as guaranteed by the environment, 
			derives by \RuleName{Atransfer},
			using \RuleName{hExtension} to instantiate
			\mbox{$\Aglobally{\A} \equiv \AclDef[\preceq]{}{\tvar}$} in \RuleName{Atransfer},
			where $\tvar\equiv \getrec{\repetition{\alpha}}$.
			Then obtain $\tvar_0 = \tvar \rightarrow [ \repetition{\alpha} ] \acpair{\A, \true} (\tvar_0 \preceq \tvar \preceq \tvar \rightarrow \A)$
			by \RuleName{Atransfer} and \RuleName{uniInstance}.
			Reflexivity $\comFODderives \tvar \preceq \tvar$
			derives as \comFOD tautology,
			so $\comFODderives [ \repetition{\alpha} ] \acpair{\A, \true} \tvar \preceq \tvar$ by~\RuleName{acG}.
			Further, $\tvar_0 = \tvar \rightarrow [ \repetition{\alpha} ] \acpair{\A, \true} \tvar \succeq \tvar_0$ by~\RuleName{hExtension}.
			These results combine to $\tvar_0 = \tvar \rightarrow [ \repetition{\alpha} ] \acpair{\A, \true} \A$ by~\RuleName{acModalMP},
			which yields $[ \repetition{\alpha} ] \acpair{\A, \true} \A$ by~\RuleName{iG} as $\tvar_0$ is fresh.

			The following prooftree combines all observations 
			to a derivation of $F \rightarrow [ \repetition{\alpha} ] \ac G$ in $\comFODderives$,
			using the derivable induction rule \RuleName{acInvariant}:
			\begin{small}
				\begin{prooftree}[HypSeparation=-.4em]
						\Axiom{$F \rightarrow \Commit \wedge \inv$}

							\Axiom{$\inv \rightarrow [ \alpha ] \ac \inv$}

							\RuleNameLeft{acInvariant}{}
							\UnaryInf{$\Commit \wedge \inv \rightarrow [ \repetition{\alpha} ] \ac \inv$}

								\Axiom{$\A \rightarrow (\inv \rightarrow G)$}

								\UnaryInf{$(\Commit \rightarrow \Commit) \wedge (\A \rightarrow (\inv \rightarrow G))$}

								\RuleNameLeft{acG}{}
								\UnaryInf{$[ \repetition{\alpha} ] \acpair{\Commit\rightarrow\Commit} (\A \rightarrow (\inv \rightarrow G))$}

								\Axiom{$[ \repetition{\alpha} ] \acpair{A, \true} \A$}

							\SetOption{HypSeparation}{0em}

							\RuleNameRight{acModalMP}{}
							\BinaryInf{$[ \repetition{\alpha} ] \acpair{\A,\Commit\rightarrow\Commit} (\inv \rightarrow G)$} 

							\SetOption{HypSeparation}{-.4em}

							\RuleNameRight{acModalMP}{}
							\UnaryInf{$[ \repetition{\alpha} ] \ac \inv \rightarrow [ \repetition{\alpha} ] \ac G$}
						
						\RuleNameRight{MP}{}
						\BinaryInf{$\Commit \wedge \inv \rightarrow [ \repetition{\alpha} ] \ac G$}
					
					\RuleNameRight{MP}{}
					\BinaryInf{$F \rightarrow [ \repetition{\alpha} ] \ac G$}
				\end{prooftree}
			\end{small}

			\item 
			\label{itm:relativeSafety-send}
			$\vDash F \rightarrow [ \send{}{}{} ] \ac G$,
			then $\vDash F \rightarrow [ \test{\true} ] \ac [ \send{}{}{} ] [ \test{\true} ] \ac G$ by \RuleName{acCom}.
			Further, by \RuleName{send},
			the following formula is valid:
			\begin{equation*}
				\phi_0 \equiv F \rightarrow [ \test{\true} ] \ac \fa{\tvar_0} \big( 
					\tvar_0 = \tvar \cdot \comItem{\ch{}, \rp, \gtime} \rightarrow ([ \test{\true} ] \ac G) \subs{\tvar}{\tvar_0} 
				\big) 
			\end{equation*}
			By \RuleName{acNoCom} and \RuleName{test}, 
			$[ \test{\true} ] \ac \lambda$ is provably equivalent to $\Commit \wedge (\A \rightarrow \lambda)$ for any formula $\lambda$.
			Hence, \wlossg~$\phi_0$ can be considered to contain less programs than $\phi$ such that $\phi_0 \indOrder \phi$.
			Therefore, $\comFODderives \phi_0$ by IH.
			Then $\comFODderives F \rightarrow [ \test{\true} ] \ac [ \send{}{}{} ] [ \test{\true} ] \ac G$ by \RuleName{send},
			and finally, $\comFODderives F \rightarrow [ \send{}{}{} ] \ac G$ by \RuleName{acCom}.

			\item $\vDash F \rightarrow [ \receive{}{}{} ] \ac G$,
			then assume $x \not\equiv \gtime$ and $x \not\in F$
			by bound variable renaming.
			Hence, $\vDash F \rightarrow \fa{x} [ \send{}{}{x} ] \ac G$ 
			by \RuleName{acReceive} and \RuleName{nondetAssign},
			where \RuleName{acReceive} is applicable as \mbox{$x \not\equiv \gtime$}.
			Since $x \not\in F$,
			obtain $\vDash F \rightarrow [ \send{}{}{x} ] \ac G$.
			Then $\comFODderives F \rightarrow [ \send{}{}{x} ] \ac G$ 
			by \rref{itm:relativeSafety-send},
			which yields $\comFODderives \fa{x} (F \rightarrow  [ \send{}{}{x} ] \ac G)$ by \ForallGen.
			Hence, $\comFODderives \fa{x} F \rightarrow \fa{x} [ \send{}{}{x} ] \ac G$ by \RuleName{faDist},
			so $\comFODderives F \rightarrow \fa{x} [ \send{}{}{x} ] \ac G$ by \RuleName{vacuousQuanti}.
			Finally, $\comFODderives F \rightarrow [ \receive{}{}{} ] \ac G$ derives by \RuleName{nondetAssign} and \RuleName{acReceive}.

			\item
			\label{itm:relative-safety-special}
			$\vDash F \rightarrow [ \alpha \parOp \beta ] \ac G$,
			then the proof of $\comFODderives F \rightarrow [ \alpha \parOp \beta ] \ac G$
			follows the proof outline at the beginning of \rref{sec:com-fod-completeness}.
			The remainder presents three prooftrees,
			which combine to a derivation of $F \rightarrow [ \alpha \parOp \beta ] \ac G$.
			The proof uses the following abbreviations,
			where $\gtvec_0$, $\rvarvec[alt]_\gamma, \tvar_0$ are fresh,
			and throughout let $\otherprog{\alpha}\equiv\beta$ and $\otherprog{\beta}\equiv\alpha$,
			and $\varvec = (\parrec, \gtvec, \rvarvec_\alpha, \rvarvec_\beta)$ and $\exprvec = (\tvar_0, \gtvec_0, \rvarvec[alt]_\alpha, \rvarvec[alt]_\beta)$,
			where $\parrec$ is the recorder of $\alpha\parOp\beta$:\\[-.1em]

			\indent\begin{minipage}{\textwidth}
				$\begin{aligned}
					&\rvarvec_\gamma = \SV(\gamma)\cap\RVar
					&& \lchans_\gamma = \big(\!
						\SCNX{\{\parrec\}}(F) \setminus \SCN(\ogamma)
					\big) \cup \SCN(\gamma) \\
					&\rvarvec = \rvarvec_\alpha \cup \rvarvec_\beta 
					&&F_\gamma \equiv 
						F \subs{\varvec}{\exprvec} \wedge \rvarvec[alt]_\gamma = \rvarvec_\gamma \wedge \gtvec_0 = \gtvec \wedge \parrec \downarrow \lchans_\gamma = \historyVar_0 \downarrow \lchans_\gamma \\
					& \rvarvec[alt] = \rvarvec[alt]_\alpha \cup \rvarvec[alt]_\beta
					&&F_0 \equiv 
						F \subs{\varvec}{\exprvec} \wedge \rvarvec[alt]= \rvarvec
						\wedge \gtvec_0 = \gtvec \wedge \historyVar_0 = \parrec
				\end{aligned}$\\[.4em]

				$\begin{aligned}
					&\Atrace{\A} \equiv \AclDef{}{\parrec}
				\end{aligned}$
			\end{minipage}
			
			\vspace{.5em}
			The precondition~$F_0$ freezes the initial state of $F$ in fresh variables $\tvar_0, \rvarvec[alt]_\gamma, \gtvec_0$
			such that $F_0$ can be split into preconditions~$F_\alpha$ and $F_\beta$ that do not mention bound variables of the other subprogram,
			and only depend on channels of the other subprogram via the recorder $\parrec$ if the channels are shared channels.
			This ensures that $\ogamma$ does not interfere (\rref{def:noninterference}) with $\nointfpair{\gamma}{F_\gamma}$.

			\newcommand{\strgCabbrev}[1]{\Upsilon_{#1}}
			\newcommand{\strgPabbrev}[1]{\Psi_{#1}}

			First, embed safety of each subprogram for its strongest promises into the parallel composition by parallel injection \RuleName{acDropComp}.
			For each $\gamma \in \{\alpha,\beta\}$, let $\strgCabbrev{\gamma} \equiv \fa{\allghost{=}\allbound} \strgCommit{\otherprog{\gamma}}{\cPreFrame_\gamma}{}{\gamma}$ and $\strgPabbrev{\gamma} \equiv \strgPost{\otherprog{\gamma}}{\cPreFrame_\gamma}{}{\gamma}$ be the strongest promises (\rref{lem:state_variations_comfod}) of~$\gamma$ \wrt the precondition~$F_\gamma$ and the environment $\ogamma$, 
			where~$\rvarvec[alt]$ is fresh
			as $\rvarvec[alt]_\gamma$ is fresh.
			Since $\ogamma$ does not interfere (\rref{def:noninterference}) with $\nointfpair{\gamma}{\cPreFrame_\gamma}$,
			obtain~$\ogamma$ does not interfere  with $[ \gamma ] \acpair{\true, \strgCabbrev{\gamma}} \strgPabbrev{\gamma}$ by \rref{lem:state_variations_comfod}.
			Since $\strgCommit{\ogamma}{\varphi}{}{}$ and $\strgPost{\ogamma}{\varphi}{}{}$ are \comFOD formulas (\rref{lem:state_variations_comfod}),
			the premise $\phi_\gamma \equiv F_\gamma \rightarrow [ \gamma ] \acpair{\true, \strgCabbrev{\gamma}} \strgPabbrev{\gamma}$ 
			has less parallel compositions with a nesting depth equal to $\alpha\parOp\beta$ and no additional parallel composition of greater nesting depth.
			This reduces the overall structural complexity of programs,
			so $\phi_\gamma \indOrder (F \rightarrow [ \alpha \parOp \beta ] \ac G)$.
			Hence, $\comFODderives \phi_\gamma$ by IH because~$\vDash \phi_\gamma$ by \rref{lem:reachable_states_correct}.
			The \comFOD formula $\tvar_0 = \parrec \rightarrow \parrec \downarrow \cset_\gamma = \tvar_0 \downarrow \cset_\gamma$ is valid,
			so derives in $\comFODderives$.
			Hence, the premise $\triangleleft_\gamma \equiv F_0 \rightarrow F_\gamma$ derives in $\comFODderives$ essentially by~\RuleName{MP}.
			\vspace{-.2cm}
			\begin{prooftree}		
						\Axiom{$\triangleleft_\alpha$}

						\Axiom{\rref{lem:reachable_states_correct} + IH}

						\UnaryInf{$F_\alpha \rightarrow [ \alpha ] \acpair{\true, \strgCabbrev{\alpha}} \strgPabbrev{\alpha}$}

						\RuleNameRight{acDropComp}{}
						\UnaryInf{$F_\alpha \rightarrow [ \alpha \parOp \beta ] \acpair{\true, \strgCabbrev{\alpha}} \strgPabbrev{\alpha}$}

					\RuleNameRight{MP}{}
					\BinaryInf{$F_0 \rightarrow [ \alpha \parOp \beta ] \acpair{\true, \strgCabbrev{\beta}} \strgPabbrev{\beta}$}

						\Axiom{$\triangleleft_\beta$}

						\Axiom{\rref{lem:reachable_states_correct} + IH}

						\UnaryInf{$F_\beta \rightarrow [ \beta ] \acpair{\true, \strgCabbrev{\beta}} \strgPabbrev{\beta}$}

						\RuleNameRight{acDropComp}{}
						\UnaryInf{$F_\beta \rightarrow [ \alpha \parOp \beta ] \acpair{\true, \strgCabbrev{\beta}} \strgPabbrev{\beta}$}

					\RuleNameRight{MP}{}
					\BinaryInf{$F_0 \rightarrow [ \alpha \parOp \beta ] \acpair{\true, \strgCabbrev{\beta}} \strgPabbrev{\beta}$}

				\RuleNameRight{MP}{}
				\BinaryInf{$F_0 \rightarrow [ \alpha \parOp \beta ] \acpair{\true, \strgCabbrev{\alpha}} \strgPabbrev{\alpha} \wedge [ \alpha \parOp \beta ] \acpair{\true, \strgCabbrev{\beta}} \strgPabbrev{\beta}$}

				\RuleNameRight{acBoxesDist}{}
				\UnaryInf{$F_0 \rightarrow [ \alpha \parOp \beta ] \acpair{\true, \strgCabbrev{\alpha} \wedge \strgCabbrev{\beta}} (\strgPabbrev{\alpha} \wedge \strgPabbrev{\beta})$}
			\end{prooftree}
			\vspace{.2cm}

			Next, combine the strongest promises $\strgCabbrev{\alpha} \wedge \strgCabbrev{\beta}$ and $\strgPabbrev{\alpha} \wedge \strgPabbrev{\beta}$ for the subprograms to the strongest promises of $(\A, \alpha\parOp\beta)$. 
			Let $\Upsilon \equiv \fa{\allghost{=}\allbound} \strgCommit{\emptyset}{F_0}{\A}{\alpha\parOp\beta}$ and $\Psi \equiv \strgPost{\emptyset}{F_0}{\A}{\alpha\parOp\beta}$ be the strongest promises (\rref{lem:state_variations_comfod}) of the action \mbox{$(\A, \alpha\parOp\beta)$} \wrt the precondition~$F_0$.
			Since $\vDash F_0 \rightarrow F$,
			obtain $\vDash F_0 \rightarrow [ \alpha\parOp\beta ] \ac G$.
			Hence, $\Upsilon \rightarrow \Commit$ and $\Psi \rightarrow G$ are valid by \rref{lem:reachable_states_correct}.
			By \rref{lem:ac_biggest_promise_split}, 
			the strongest promises $\Upsilon\gamma$ and~$\Psi_\gamma$ for the subprograms exactly demarcate the reachable states of $(\true, \alpha\parOp\beta)$ 
			when combined with history invariance $H \equiv \parrec \succeq \tvar_0$ by~\RuleName{hExtension} to guarantee a linear history,
			and by \rref{lem:assumption_closure},
			the assumption $\Atrace{\A}$ limits the reachable states to $(\A, \alpha\parOp\beta)$.
			In summary, by \rref{lem:reachable_states_correct}, \ref{lem:ac_biggest_promise_split}, and \ref{lem:assumption_closure}, 
			the following formulas are valid: 
			\begin{align*}
				\qquad\qquad\triangleright_\Upsilon &\equiv \strgCabbrev{\alpha} \wedge \strgCabbrev{\beta} \wedge H \rightarrow (\Abutlast{\A} \rightarrow \Commit) 
				&&&
				\triangleright_\Psi &\equiv \strgPabbrev{\alpha} \wedge \strgPabbrev{\beta} \wedge H \rightarrow (\Aglobally{\A} \rightarrow G)
			\end{align*}
			Since $\strgCabbrev{\gamma}, \strgPabbrev{\gamma}$ only contain the program $\gamma$,
			the premises $\triangleright_\Upsilon$ and~$\triangleright_\Psi$
			have less parallel compositions with a nesting depth greater or equal to 
			\mbox{$\alpha\parOp\beta$ than $\phi \equiv F \rightarrow [ \alpha\parOp\beta ] \ac G$}.
			Hence, the overall structural complexity of programs 
			decreased,
			so $\triangleright_\Upsilon \indOrder \phi$ and $\triangleright_\Psi \indOrder \phi$.
			Since $\triangleright_\Upsilon$ and $\triangleright_\Psi$ are valid, $\comFODderives \triangleright_\Upsilon$ and $\comFODderives \triangleright_\Psi$ by IH.
			The premise $\triangleright_0 \equiv F_0 \rightarrow \tvar_0 = \parrec$ derives in $\comFODderives$ propositionally.
			\vspace{-.2cm}
			\begin{prooftree}
					\Axiom{see proof tree above}

					\UnaryInf{$F_0 \rightarrow [ \alpha \parOp \beta ] \acpair{\true, \strgCabbrev{\alpha} \wedge \strgCabbrev{\beta}} (\strgPabbrev{\alpha} \wedge \strgPabbrev{\beta})$}
				
					\Axiom{$*$}

					\RuleNameRight{hExtension}{}
					\UnaryInf{$\historyVar_0 = \parrec \downarrow \lchans \rightarrow [ \alpha \parOp \beta ] \acpair{\true, H} H$}

					\RuleNameRight{MP}{, $\triangleright_0$}
					\UnaryInf{$F_0 \rightarrow [ \alpha \parOp \beta ] \acpair{\true, H} H$}

				\RuleNameRight{MP}{}
				\BinaryInf{$F_0 \rightarrow [ \alpha \parOp \beta ] \acpair{\true, \strgCabbrev{\alpha} \wedge \strgCabbrev{\beta}} (\strgPabbrev{\alpha} \wedge \strgPabbrev{\beta}) \wedge [ \alpha \parOp \beta ] \acpair{\true, H} H$}	

				\RuleNameRight{acBoxesDist}{}
				\UnaryInf{$F_0 \rightarrow [ \alpha \parOp \beta ] \acpair{\true, \strgCabbrev{\alpha} \wedge \strgCabbrev{\beta} \wedge H}  (\strgPabbrev{\alpha} \wedge \strgPabbrev{\beta} \wedge H )$}

				\RuleNameRight{acMono}{, $\triangleright_\Upsilon$, $\triangleright_\Psi$}
				\UnaryInf{$F_0 \rightarrow [ \alpha \parOp \beta ] \acpair{\true,  \Abutlast{\A} \rightarrow \Commit} (\Aglobally{\A} \rightarrow G)$}
			\end{prooftree}
			\vspace{.2cm}

			Finally, subsume the assumption under the promises by \RuleName{Atransfer},
			and freeze the initial state of~$F$ in $F_0$ 
			by \RuleName{iG} and \RuleName{subsR}
			using fresh variables.
			The premise $\triangleright_0 \equiv F_0 \rightarrow \tvar_0 = \parrec$ derives propositionally again.
			Finally,
			$\comFODderives F \rightarrow [ \alpha \parOp \beta ] \ac G$
			derives as follows, 
			where $\Commit_\A \equiv \Abutlast{\A} \rightarrow \Commit$,
			and $G_\A \equiv \Aglobally	{\A} \rightarrow G$:
			\begin{small}
				\begin{prooftree}
						\Axiom{see proof tree above}

						\UnaryInf{$F_0 \rightarrow [ \alpha \parOp \beta ] \acpair{\true, \Commit_\A} G_\A$}	
					
						\Axiom{$*$}

						\RuleNameRight{Atransfer}{}
						\UnaryInf{$\tvar_0 = \parrec \rightarrow \big( 
							[ \alpha \parOp \beta ] \acpair{\true, \Commit_\A} G_\A 
							\rightarrow 
							[ \alpha \parOp \beta ] \acpair{\A, \Commit} G	
						\big)$}

						\RuleNameRight{MP}{, $\triangleright_0$}
						\UnaryInf{$F_0 \rightarrow \big( 
							[ \alpha \parOp \beta ] \acpair{\true, \Commit_\A} G_\A 
							\rightarrow 
							[ \alpha \parOp \beta ] \acpair{\A, \Commit} G	
						\big)$}

					\RuleNameRight{MP}{}
					\BinaryInf{$F_0 \rightarrow [ \alpha \parOp \beta ] \ac G$}

					\RuleNameRight{iG, subsR}{}
					\UnaryInf{$F \rightarrow [ \alpha \parOp \beta ] \ac G$}
				\end{prooftree}	
			\end{small}

			\item $\vDash F \rightarrow \langle \alpha \parOp \beta \rangle \ac G$,
			then $\comFODderives F \rightarrow \langle \alpha \parOp \beta \rangle \ac G$ derives bottom-up as follows:
			Subsume the assumption~$\A$ under the promises using the derivable dual of \RuleName{Atransfer},
			then split the ac-diamond using \RuleName{acSplitDia}.
			In the resulting separate cases for commitment and postcondition,
			decompose $\alpha\parOp\beta$ by \RuleName{acLiveParCommit} and~\RuleName{acLivePar}, respectively.
			The premises of~\RuleName{acLiveParCommit} and~\RuleName{acLivePar} then derive in $\comFODderives$ by IH
			because they are simpler in~$\indOrder$ by removal of the parallel operator and valid as they equivalently express 
			liveness of parallel composition.
			Now, a detailed proof follows,
			where the formulas $\nojunkQ{\gamma}{\tvar_0, \tvar} \psi$, and $\hdia{\gamma} \acpair{\Commit}$, and $\hdia{\gamma} \psi$ 
			are defined as in \rref{fig:calculus},
			and $\parrec$ is the recorder of $\alpha\parOp\beta$:
			
			Since $\vDash F \rightarrow \langle \alpha \parOp \beta \rangle \ac G$,
			obtain $\vDash F_0 \rightarrow \langle \alpha \parOp \beta \rangle \ac G$,
			where $F_0 \equiv \tvar_1{=}\parrec \wedge F$ for a fresh variable~$\tvar_1$.
			Then let $\Commit_\A \equiv \Abutlast{\A} \wedge \Commit$ 
			and $G_\A \equiv \Aglobally{\A} \wedge G$, 
			where $\Atrace{\A} \equiv \AclDef{\tvar_1}{\parrec}$.
			By duality~\RuleName{acdbDual},
			derive $\tvar_1 = \parrec \rightarrow (\langle \alpha \rangle \acpair{\true, \Commit_A} G_\A \leftrightarrow \langle \alpha \rangle \ac G)$
			from \RuleName{Atransfer}.
			Hence, $\vDash F_0 \rightarrow \langle \alpha\parOp\beta \rangle \acpair{\true, \Commit_\A} G_\A$ since $\vDash F_0 \rightarrow \langle \alpha\parOp\beta \rangle \ac G$.
			By~\RuleName{acSplitDia} and \RuleName{diasDual}, obtain $\vDash F_0 \rightarrow \langle \alpha\parOp\beta \rangle \acpair{\true, \Commit_\A} \false$ or $\vDash F_0 \rightarrow \langle \alpha\parOp\beta \rangle G_\A$.
			
			If $\vDash F_0 \rightarrow \langle \alpha\parOp\beta \rangle \acpair{\true, \Commit_\A} \false$,
			then $F_0 \rightarrow \phi_\Commit$ is valid,
			as the formula $\phi_\Commit$ in \rref{eq:conv_commit}
			requires that there is a communication history $\tvar$ whose projections $\tvar\downarrow\alpha$ and $\tvar\downarrow\beta$ are observable from the subprograms,
			and which contains no non-causal communication by $\tvar = \tvar \downarrow (\alpha\parOp\beta)$,
			as guaranteed by $\langle \alpha\parOp\beta \rangle \acpair{\true, \Commit_\A} \false$.
			In fact, \RuleName{acLiveParCommit} and \RuleName{acLivePar}
			can be made equivalences (see their soundness proof in \rref{app:soundness}),
			which is not necessary for the deduction but transfers validity from the conclusion to the premise of the axioms.
			\begin{equation}
				\phi_\Commit \equiv \nojunkQ{\alpha\parOp\beta}{\tvar,\tvar_0}
				\big(
					\hdia{\alpha} \acpair{\true}
					\wedge \hdia{\beta} \acpair{\true}
					\wedge (\Commit_\A) \subs{\parrec}{\tvar_0\cdot\tvar}
				\big)
				\label{eq:conv_commit}
			\end{equation}
			Since $\alpha\parOp\beta$ is decomposed into $\alpha$ and $\beta$,
			and $\Commit_\A$ contains no more than the union of programs in $\A$ and $\Commit$,
			the formula $F_0 \rightarrow \phi_\Commit$ has less parallel compositions with a nesting depth greater or equal to $\alpha\parOp\beta$.
			Hence, the overall structural complexity of the programs in $\phi_\Commit$ is less than in $\langle \alpha\parOp\beta \rangle \ac G$,
			so $(F_0 \rightarrow \phi_\Commit) \indOrder \phi$.
			Thus, $\vDash F_0 \rightarrow \phi_\Commit$ implies $\comFODderives F_0 \rightarrow \phi_\Commit$ by IH,
			which yields $\comFODderives F_0 \rightarrow \langle \alpha\parOp\beta \rangle \acpair{\true, \Commit_\A} \false$ by~\RuleName{acLiveParCommit}.

			If $\vDash F_0 \rightarrow \langle \alpha\parOp\beta \rangle G_\A$,
			then $F_0 \rightarrow \phi_G$ is valid.
			The formula $\phi_G$ in \rref{eq:conv_commit}
			requires reachability of a final state that combines the effect of individual runs of $\alpha$ and $\beta$ with equal duration ($\test{\gtvec{=}\gtvec_\alpha}$)
			and a common communication history $\tvar$ analogous to $\phi_\Commit$,
			as guaranteed by $\langle \alpha\parOp\beta \rangle G_\A$.
			In fact, alidtiy transfers from the conclusion of~\RuleName{acLivePar} to the premise,
			because \RuleName{acLivePar} can be made an equivalence (see \rref{app:soundness}).
			\begin{equation}
				\phi_G \equiv \nojunkQ{\alpha\parOp\beta}{\tvar,\tvar_0}
				\langle \gtvec_0 \ceq \gtvec \rangle 
				\hdia{\alpha}
				\langle \gtvec_\alpha \ceq \gtvec \seq \gtvec \ceq \gtvec_0 \rangle
				\hdia{\beta}
				\langle \test{\gtvec{=}\gtvec_\alpha} \rangle
				(G_\A) \subs{\parrec}{\tvar_0\cdot\tvar}
				\label{eq:conv_post}
			\end{equation} 
			The programs $\gtvec_0 \ceq \gtvec$, and $\gtvec_\alpha \ceq \gtvec \seq \gtvec \ceq \gtvec_0$, and $\test{\gtvec{=}\gtvec_\alpha}$ in $\phi_G$ 
			can be assumed not to add complexity to~$\phi_G$,
			executing them by the axioms \RuleName{assign} and \RuleName{test} by duality \RuleName{dbDual}.
			Since~$\alpha\parOp\beta$ is decomposed into $\alpha$ and $\beta$,
			and $G_\A$ contains no more than the union of programs in $\A$ and $G$,
			obtain $(F_0 \rightarrow \phi_G) \indOrder \phi$, just like $(F_0 \rightarrow \phi_\Commit) \indOrder \phi$.
			Therefore, $\vDash F_0 \rightarrow \phi_G$ implies $\comFODderives F_0 \rightarrow \phi_G$ by IH,
			which yields $\comFODderives F_0 \rightarrow \langle \alpha\parOp\beta \rangle G_\A$ by~\RuleName{acLivePar}.
			
			If $\comFODderives F_0 \rightarrow \langle \alpha\parOp\beta \rangle G_\A$, then $\comFODderives F_0 \rightarrow \langle \alpha\parOp\beta \rangle \acpair{\true, \false} G_\A$ by \RuleName{diasDual}.
			The latter combines with $\comFODderives F_0 \rightarrow \langle \alpha\parOp\beta \rangle \acpair{\true, \Commit_\A} \false$
			to $\comFODderives F_0 \rightarrow \langle \alpha\parOp\beta \rangle \acpair{\true, \Commit_\A} G_\A$
			essentially by~\RuleName{MP} and~\RuleName{acSplitDia}.
			Then $\comFODderives F_0 \rightarrow \langle \alpha\parOp\beta \rangle \acpair{\A, \Commit} G$ by the derivable dual of~\RuleName{Atransfer}.
			Hence, 
			$\comFODderives \fa{\tvar_1} \big( \tvar_1{=}\parrec \rightarrow (F \rightarrow \langle \alpha\parOp\beta \rangle \acpair{\A, \Commit} G) \big)$
			essentially by \ForallGen.
			Finally, $\comFODderives F \rightarrow \langle \alpha\parOp\beta \rangle \acpair{\A, \Commit} G$ by \RuleName{iG}.

			\item 
			\label{itm:relativeLiveness-iterationSpecial}
			In the special case $\vDash F \rightarrow \langle \repetition{\alpha} \rangle \acpair{\A, \false} G$,
			where the commitment $\false$ is unsatisfiable,  $\comFODderives F \rightarrow \langle \repetition{\alpha} \rangle \acpair{\A, \false} G$ derives by a generalization of an argument for \dL \cite{DBLP:journals/jar/Platzer08} to assumption-program pairs as modal actions.
			The variant $\varphi(v)$ for the convergence axiom \RuleName{acConvergence} is defined by combining the \comFOD representation $\toComFOD{(\langle \repetition{\alpha} \rangle \acpair{\A, \false} G)}$ (\rref{lem:com-fod-expressiveness}) and the %
			rendition (\rref{lem:rendition}) of the repetition $\repetition{\alpha}$,
			where $\Atrace{\A} \equiv \fa{\tvar'} ( \getrec{\alpha}{\preceq}\tvar'{\sim}\tvar_v \rightarrow \A\subs{\getrec{\alpha}}{\tvar'})$.
			Since only runs to final states are relevant, 
			the predicate~$\finmarker{}$ is set to~$\true$ in the rendition of $\repetition{\alpha}$ and the formula is simplified accordingly.
			\begin{align*}
				\varphi(n-1) \equiv 
				\ex{\varvec[alt]} \Big( 
					& \Aglobally{\toComFOD{\A}}
					\wedge (\toComFOD{G}) \subs{\varvec}{\varvec[alt]} 
					\wedge \isNat{n}
					\wedge
					\ex{\goedelat{\repstatevar}{n}{}}
					\big(
						\repstate{1} \!= \varvec
						\wedge \repstate{n} \!= \varvec[alt] \\
						&\qquad \wedge \fa{i{:}\naturals} (
							\orange{1}{i}{n} 
							\rightarrow 
							\reninter{i}{i+1}{\true} )
					\big)
				\Big)
			\end{align*}
			The variant $\varphi(v)$ expresses that if $\varphi(v)$ is satisfied in an initial state $\varvec$,
			where $\varvec$ are the variables of $\langle \repetition{\alpha} \rangle \acpair{\A, \false} G$,
			then a final state $\varvec[alt]$ satisfying $G$ is reachable by an $(\A, \alpha)$-run in $v$ iterations.
			Moreover, observe that $\varphi(v) \indOrder \phi$ since $\varphi(v) \in \comFOD$.
			Then the following formulas derive in $\comFODderives$:
			\begin{enumerate}
				\item $\phi_0 \equiv \ex{v}\varphi(v) \rightarrow \langle \repetition{\alpha} \rangle \acpair{\A, \false} \ex{v{\le}0} \varphi(v)$:
				If $\varphi(v)$ is satisfied for some $v$,
				by the definition of $\varphi(v)$,
				a final state $\varvec[alt]$ satisfying~$G$ is reachable by an $(\A, \repetition{\alpha})$-run in~$v$ iterations.
				Hence, if $v > 0$,
				after one $(\A, \alpha)$-run,
				this final state is already reachable in $v-1$ iterations  
				such that $\chi \equiv v>0 \wedge \varphi(v) \rightarrow \langle \alpha \rangle \acpair{\A, \false} \varphi(v-1)$ is valid.
				Since $\chi \indOrder \phi$,
				because $\varphi(v) \in \comFOD$ and $\alpha$ is simpler than its repetition $\repetition{\alpha}$,
				obtain $\comFODderives \chi$ by IH.
				Then $\comFODderives [ \repetition{\alpha} ] \acpair{\A, \true} \fa{v{>}0} (\varphi(v) \rightarrow \langle \alpha \rangle \acpair{\A, \false} \varphi(v-1))$ by \ForallGen and Gödel generalization~\RuleName{acG}.

				Further, $\comFODderives \fa{v} (\varphi(v) \rightarrow \langle \repetition{\alpha} \rangle \acpair{\A, \false} \ex{v{\le}0} \varphi(v))$ by convergence \RuleName{acConvergence}.
				Hence, $\comFODderives \fa{v} \varphi(v) \rightarrow \langle \repetition{\alpha} \rangle \acpair{\A, \false} \ex{v{\le}0} \varphi(v)$ by \RuleName{faDist} and \RuleName{uniInstance} as $v$ is fresh.
				This yields $\comFODderives \phi_0$ using \RuleName{MP} because $\comFODderives \fa{v} \varphi(v) \rightarrow \ex{v} \varphi(v)$ by \RuleName{uniInstance}.

				\item $\phi_1 \equiv F \rightarrow \ex{v} \varphi(v)$ is valid by definition of $\varphi(v)$ because $F \rightarrow \langle \repetition{\alpha} \rangle \acpair{\A, \false} G$ is valid.
				Moreover, $\ex{v} \varphi(v) \indOrder \langle \repetition{\alpha} \rangle \acpair{\A, \false} G$ since $\ex{v} \varphi(v) \in \comFOD$.
				Hence, $\comFODderives F \rightarrow \ex{v} \varphi(v)$ by IH.
				
				\item $\phi_2 \equiv \langle \repetition{\alpha} \rangle \acpair{\A, \false} \ex{v{\le}0} \varphi(v) \rightarrow \langle \repetition{\alpha} \rangle \acpair{\A, \false} G$ derives in $\comFODderives$ from $\ex{v{\le}0} \varphi(v) \rightarrow G$ by monotonicity \RuleName{acDiaMono},
				and $\ex{v{\le}0} \varphi(v) \rightarrow G$ derives as follows:
				First, $(\ex{v{\le}0} \varphi(v) \rightarrow G) \indOrder \phi$ since $\ex{v{\le}0} \varphi(v) \in \comFOD$ and~$G$ has less programs than $\phi$.
				Moreover, $\ex{v{\le}0} \varphi(v) \rightarrow G$ is valid because if $\ex{v{\le}0} \varphi(v)$ holds,
				then $\varphi(v)$ is satisfied for some $v\le0$,
				and even $v=0$ as $\varphi(v)$ only holds for natural numbers.
				Then $\varphi(0)$ implies $G$ by the definition of $\varphi(v)$.
				Hence, $\comFODderives \ex{v{\le}0} \varphi(v) \rightarrow G$ by IH.
			\end{enumerate}
			Now, combine $\comFODderives \phi_0$ and $\comFODderives \phi_1$ by \RuleName{MP} and propositional reasoning into
			$\comFODderives F \rightarrow \langle \repetition{\alpha} \rangle \acpair{\A, \false} \ex{v{\le}0} \varphi(v)$.
			The latter and $\comFODderives \phi_2$ combine 
			into $\comFODderives F \rightarrow \langle \repetition{\alpha} \rangle \acpair{\A, \false} G$ 
			by \RuleName{MP} and propositional reasoning again.

			\item In the general case $\vDash F \rightarrow \langle \repetition{\alpha} \rangle \ac G$,
			either $\vDash F \rightarrow \langle \alpha^0 \rangle \ac G$ or $\vDash F \rightarrow \langle \repetition{\alpha} \rangle \acpair{\A, \false} \phi_0$ by the derivable axiom \RuleName{acArrival},
			where $\phi_0 \equiv \neg G \vee \langle \alpha \rangle \ac G$.
			Since $\langle \alpha^0 \rangle \ac G \indOrder \langle \repetition{\alpha} \rangle \ac G$,
			because $\alpha^0 \equiv \test{\true}$ is simpler than the repetition $\repetition{\alpha}$,
			obtain $\comFODderives F \rightarrow \langle \alpha^0 \rangle \ac G$ by IH if $\vDash F \rightarrow \langle \alpha^0 \rangle \ac G$.
			Otherwise, if $\vDash F \rightarrow \langle \repetition{\alpha} \rangle \acpair{\A, \false} \phi_0$,
			then $\comFODderives F \rightarrow \langle \repetition{\alpha} \rangle \acpair{\A, \false} \phi_0$ derives by \rref{itm:relativeLiveness-iterationSpecial}.
			In summary, $\comFODderives F \rightarrow \langle \alpha^0 \rangle \ac G \vee \langle \repetition{\alpha} \rangle \acpair{\A, \false} \phi_0$ by \RuleName{MP} and propositional reasoning,
			which yields $\comFODderives F \rightarrow \langle \repetition{\alpha} \rangle \ac G$ by axiom \RuleName{acArrival}.
			\qedhere
		\end{enumerate}
	\end{enumerate}
\end{proof}

Relative completeness (\rref{thm:com-fod-completeness}) confirms that \dLCHP provides a comprehensive characterization of all multi-dynamical aspects of parallel hybrid systems.
The proof itself further substantiates the careful axiom design:
Except for $\alpha\parOp\beta$,
the proof is reminiscent of established completeness proofs for \dGL \cite{DBLP:conf/cade/Platzer15}
and, for $\langle \repetition{\alpha} \rangle$, the proof is close to~\dL~\cite{DBLP:conf/lics/Platzer12b}.
Proof structures from~\dGL and \dL generalize to \dLCHP because \dLCHP stays close to the Pratt-Segerberg axioms \cite{Pratt1976,Segerberg1982} such that ac-reasoning causes minor overhead for previous arguments for $\alpha\seq\beta$, $\alpha\cup\beta$, and $\repetition{\alpha}$.
We expect that the convergence axiom \RuleName{acConvergence} is not necessary in a uniform substitution calculus for \dLCHP, as in the case of~\dL~\cite{DBLP:journals/tocl/Platzer15}.
We base \dLCHP on convergence
because, for the modal view onto ac-reasoning,
it is reassuring that convergence has a proper ac-generalization.
\rref{thm:com-fod-completeness} proves $[ \alpha\parOp\beta ]$ based on a conservative enuermation of all reachable states in the parallel product space by the strongest promises \cite{deRoever2001,AcSemantics_Zwiers}.
Unlike in Hybrid Hoare-logics \cite{Liu2010,Wang2012,Guelev2017},
this enumeration is not an inherent feature of the proof calculus
but expressible whenever necessary for completeness.
This is why \dLCHP proofs can use coarse mutual abstractions of the parallel dynamics that mitigate the state space explosion by compositional reduction.
For $\dbleft \alpha\parOp\beta \dbright$,
the assumption is applied to the parallel product using axiom~\RuleName{Atransfer}.
This addresses global assumptions,
which do not distribute to the subprograms, 
and avoids a fixed-point computation to find mutually sufficient assumptions and commitments for $[ \alpha\parOp\beta ]$.
Consequently, completeness does not need assumption weakening \RuleName{Aweak},
much as
completeness for Hoare-style ac-reasoning \cite[Section 7.5.5]{deRoever2001} does not use the compositionality condition (see \rref{ex:acParComp}),
but \RuleName{Aweak} exactly identifies which 
underlying principle 
is unnecessary for completeness.
The \dLCHP calculus includes \RuleName{Aweak} because it guarantees schematic derivability of mutual abstractions,
which is imperative for the compositional state space reduction
by local reasoning about parallel program effects.

The proof of \rref{thm:com-fod-completeness} reduces $\dbleft \alpha\parOp\beta \dbright$ to \dLCHP formulas 
characterizing environmental interleaving locally from reachability $\langle\cdot\rangle$ for the subprograms
instead of a global encoding of their transition relation based on \rref{lem:com-fod-expressiveness}.
This novel local reduction 
is possible due to an induction order,
which gives precedence to program decomposition even when the logical complexity grows.
Globally, encoding is only required for $\dbleft \alpha\parOp\beta \dbright$ when the subprograms do.
This reflects that the state space explosion does not increase the proof-theoretical complexity of safety $[ \alpha\parOp\beta ]$ beyond the subprograms,
but liveness $\langle \alpha\parOp\beta \rangle$ follows the duality that $\exists$ is proof-theoretically harder than~$\forall$~\cite{DBLP:journals/tocl/Platzer15},
as apparent in the axioms~\RuleName{acLiveParCommit} and~\RuleName{acLivePar}.
In fact, parallel composition can increase the complexity of $\langle\cdot\rangle$ by modeling Turing-complete two-counter machines \cite{Minsky1961} from one-counter machines.
If the subprograms do not need encoding 
(no $\repetition{\alpha}$, no $\evolution*{}{non}$), 
\dLCHP reduces $\dbleft \alpha\parOp\beta \dbright$ 
to its first-order fragment,
by static evaluation of the trace terms,
even to decidable~\cite{Tarski1951} real arithmetic.
Assuming $\langle \repetition{\alpha} \rangle \ac G$ posses an encoding-free reduction using uniform substitution as in case of \dL \cite{DBLP:journals/tocl/Platzer15},
\dLCHP only needs encoding for $\evolution*{}{non}$, and $[ \repetition{\alpha} ]$, and $\exists$ just like \dL does~\cite{DBLP:journals/tocl/Platzer15}.

\subsection{Completeness Relative to FOD}
\label{sec:fod-completeness}

\newcommand{\encvar}[1]{#1^\flat}
\newcommand{\enclen}[1]{\len{\encvar{#1}}}

\newcommand{\eventenc}{\mathbb{E}} %

\newcommand{\goedelize}[1]{\mathop{\mathcal{G}}(#1)}
\newcommand{\goedelmap}[1]{\mathop{\mathcal{G}_{\eventenc^*}\kern-.7pt}(#1)}

\newcommand{\seqat}[2]{\at{#1}{#2}}
\newcommand{\seqitm}[2]{\seqat{(\encvar{#1})}{#2}}

\newcommand{\seqchan}[2]{\chan{\seqitm{#1}{#2}}}
\newcommand{\seqval}[2]{\val{\seqitm{#1}{#2}}}
\newcommand{\seqtime}[2]{\stamp{\seqitm{#1}{#2}}}

\newcommand{\eventencseq}[1]{#1:\eventenc^*}

\newcommand{\istype}[2]{#2 : #1}
\newcommand{\isseq}[2]{#2 : #1^*}

\newcommand{\faencvar}[1]{\fa{\encvar{#1}{:}\eventenc^*}}
\newcommand{\exencvar}[1]{\ex{\encvar{#1}{:}\eventenc^*}}

\newcommand{\subindex}[3]{\text{idx}(#1, #2, #3)}
\newcommand{\projhit}[4]{\text{hit}(#1, #2, #3, #4)}
\newcommand{\projmiss}[3]{\text{miss}(#1, #2, #3)}

The previous section proved that the \dLCHP calculus (\rref{fig:calculus}) is complete relative to \comFOD,
the first-order logic of differential equation properties (FOD) augmented with communication traces.
This section extends that result,
showing that the \dLCHP calculus can be extended to a complete axiomatization of parallel hybrid systems relative to FOD (\rref{thm:fod-completeness}).
This establishes the fundamental result that parallel hybrid systems in \dLCHP and hybrid systems in~\dL are proof-theoretically equivalent,
because provability for both classes reduces to properties of continuous systems in~FOD.

Since \dLCHP is relatively complete for \comFOD (\rref{thm:com-fod-completeness}),
it suffices 
to reduce \comFOD to FOD in order to prove completeness relative to FOD (\rref{thm:fod-completeness}).
This reduction follows the idea of a provably correct equitranslation~\cite{AbouElWafa2024}:
We define an effective semantic translation 
from \comFOD to FOD, 
using $\reals$-Gödel encodings~\cite{DBLP:journals/jar/Platzer08} to represent the communication traces of \comFOD within FOD, 
and prove syntactically in an extension of the \dLCHP calculus that this translation establishes an equivalence (\rref{prop:com-fod-to-fod}).
By transitivity,
completeness relative to FOD (\rref{thm:fod-completeness}) becomes a simple corollary of completeness relative to \comFOD (\rref{thm:com-fod-completeness}).

Communication traces are expressible in FOD
by compressing their finite sequence of events
into a single real number by $\reals$-Gödel encoding.
By \rref{lem:typecast},
the isomorphism $\goedelize{\cdot} : \traces \rightarrow \eventenc^*$\, translating between traces and their $\reals$-Gödel encodings is definable in \comFOD,
where the subset $\eventenc^* \subseteq \reals$ of encodings is definable in FOD.
Since~$\goedelize{\cdot}$ links traces and real-valued encodings,
bijectivity of~$\goedelize{\cdot}$ is a genuine $\comFOD$ property.
Completeness relative to FOD is thus based on an extension~(\rref{fig:plusClaculus}) of \dLCHP's proof calculus
axiomatizing bijectivity of $\goedelize{\cdot}$.
Since~$\goedelize{\cdot}$ is based on the extensional representation of traces by their length and entries,
supplementary axioms internalize extensional definitions for all operators on traces.
As a result,
the semantical relation between traces and $\reals$-Gödel encodings becomes a provable property.

\begin{lemma}
	[Trace encoding]
	\label{lem:typecast}
	There is a FOD formula $\eventencseq{x}$ characterizing a subset \mbox{$\eventenc^* \subseteq \reals$}
	that encodes communication traces,
	where $x$ is a real variable.
	That is, if $\eventencseq{x}$ holds,
	the length~$\len{x}$, access $\seqat{x}{j}$, and selectors $\sel{\seqat{x}{j}}$ for $\selOp \in \{ \chanOp, \valOp, \stampOp \}$ 
	of the trace encoded in $x$
	can be defined in FOD,
	such that the isomorphism $\goedelize{\cdot} : \traces \rightarrow \eventenc^*$
	is definable in \comFOD,
	and preserves lengths and entries.
	The proof is in \rref{app:com-fod-to-fod}.
\end{lemma}

The extension (\rref{fig:plusClaculus}) of the \dLCHP calculus (\rref{fig:calculus}) is sound by \rref{thm:com-fod-completeness}.
We denote the extended calculus by $\dLCHPdash$.
The trace-encoding axioms~\RuleName{TraceToGoedel} and~\RuleName{GoedelToTrace} prove that every trace $\tvar$ has exactly one encoding in $\eventenc^*\subseteq\reals$ and vice versa (bijection),
where $\exone{x} \psi$ is unique $\exists$-quantification.%
\footnote{Uniqueness quantification $\exone{x} \psi(x)$ is definable as usual by $\exone{x} \psi(x) \equiv \ex{x} (\psi(x) \wedge \fa{y} (\psi(y) \rightarrow y=x))$}
The axioms~\RuleName{opDefault} and~\RuleName{accessDefault} internalize out-of-bounds defaults,
and~\RuleName{prefix} reduces prefixing to equality.
The barcan axiom \RuleName{barcan} \cite{Barcan1946} and the vacuous axiom \RuleName{vacuous} enable to transfer trace terms over the continuous dynamics in \comFOD.
The remaining axioms in \rref{fig:plusClaculus} provide simple extensional definitions for all trace operators.
Axiom \RuleName{extProj} uses the trace variable~$\iarr$ 
to index the entries of~$\te_2$ whose channel is in~$\cset$,
where $\iarr$ is monotone and respects the bounds of $\te_1$ and~$\te_2$ by $\subindex{\iarr}{\len{\te_1}}{\len{\te_2}}$,
and $\projhit{\iarr}{\te_1}{\te_2}{\cset}$ and $\projmiss{\iarr}{\te_2}{\cset}$ characterize which entries the projection keeps and removes, respectively.
The $\in$-relation in \RuleName{extProj} can be finetly unfolded as~$\cset$ is (co)-finite,
\eg $\chan{\te} \in \{ \ch{}, \ch{dh} \}^\complement \equiv \neg (\chan{\te} = \ch{}) \wedge \neg (\chan{\te} = \ch{dh})$.

\renewcommand{\sidecondition}[1]{\quad\text{\color{darkgray}(#1)}}

\begin{figure}[t]
	\begin{minipage}{\textwidth}
		\begin{small}
			\begin{calculus}
				\startAxiom{TraceToGoedel}
					$\fa{\tvar}\exone{x{:}\eventenc^*} x {=} \goedelize{\tvar}$
				\stopAxiom
				\startAxiom{GoedelToTrace}
					$\fa{x{:}\eventenc^*} \exone{\tvar} x{=}\goedelize{\tvar}$
				\stopAxiom
				\startAxiom{emptyByLen}
					$\te = \epsilon \leftrightarrow \len{\te} = 0$
				\stopAxiom
			\end{calculus}\hspace{.2cm}%
			\begin{calculus}
				\startAxiom{accessLeft}
					$\te = \at{(\te_1 \cdot \te_2)}{k}\rightarrow
						\big( 0{\le}k{<}\len{\te_1} \rightarrow \te = \at{\te_1}{k} \big)$
				\stopAxiom
				\startAxiom{accessRight}
					$\te = \at{(\te_1 \cdot \te_2)}{k} \rightarrow
						\big( \len{\te_1}{\le}k{<}\len{\te_1\cdot\te_2} \rightarrow \te = \at{\te_2}{k - \len{\te_1}} \big)$
				\stopAxiom
				\startAxiom{ext}
					$\te_1 = \te_2 \leftrightarrow \len{\te_1} = \len{\te_2} \wedge \fa{k} \big( 
						0{\le}k{<}\len{\te_1} \rightarrow
						\at{\te_1}{k} = \at{\te_2}{k}	
					\big)$
				\stopAxiom
			\end{calculus}

			\begin{calculus}
				\startAxiom{accessDefault}
					$\neg(0{\le}\re{<}\len{\te}) \leftrightarrow \at{\te}{\re} = \epsilon$
				\stopAxiom
				\startAxiom{lenSumDist}
					$\len{\te_1 \cdot \te_2} = \len{\te_1} + \len{\te_2}$
				\stopAxiom
				\startAxiom{prefix}
					$\te_1 \preceq \te_2 \leftrightarrow \ex{\tvar} \te_1 \cdot \tvar = \te_2$
				\stopAxiom
			\end{calculus}\hspace{.4cm}%
			\begin{calculus}		
				\startAxiom{opDefault}
					$\len{\te} \le 0 \rightarrow \selOp(\te) = 0$
					\sidecondition{$\selOp \in \{ \chanOp, \valOp, \stampOp \}$}
				\stopAxiom
				\startAxiom{vacuous}
					$\varphi \rightarrow [ \alpha ] \varphi$
					\sidecondition{$\SFV(\varphi) \cap \SBV(\alpha) = \emptyset$}
				\stopAxiom
				\startAxiom{barcan}
					$\fa{\avar} [ \alpha ] \psi \rightarrow [ \alpha ] \fa{\avar} \psi$
					\sidecondition{$\avar \not\in \alpha$}
				\stopAxiom
			\end{calculus}

			\begin{calculus}
				\startAxiom{eqItem}
					$\te = \comItem{\ch{}, \rp_1, \rp_2} \leftrightarrow \len{\te} = 1 \wedge \chan{\te} = \ch{} \wedge \val{\te} = \rp_1 \wedge \stamp{\te} = \rp_2$
				\stopAxiom
				\startAxiom{extProj}
					$\te_1 {=} \te_2 \downarrow \cset
						\leftrightarrow \len{\te_1}{\le}\len{\te_2} \wedge 
							\ex{\iarr{:}\traces} \big( 
							 \subindex{\iarr}{\len{\te_1}}{\len{\te_2}} \wedge \projhit{\iarr}{\te_1}{\te_2}{\cset} \wedge \projmiss{\iarr}{\te_2}{\cset}
						\big)$
				\stopAxiom
			\end{calculus}

			\hrule
			\vspace{.4cm}

			$\begin{aligned}
				& \subindex{\iarr}{m}{n} \equiv \len{\iarr} = m \wedge \fa{0{\le}k{<}\len{\iarr}} \big( 
					0 \le \at{\iarr}{k} < n	\wedge \fa{j} (k{<}j{<}\len{\iarr} \rightarrow \at{\iarr}{k} < \at{\iarr}{j})
				\big) \\
				& \projhit{\iarr}{\te_1}{\te_2}{\cset} \equiv \fa{0{\le}k{<}\len{\iarr}} \big( 
					\at{\te_1}{k} = \at{\te_2}{\at{\iarr}{k}} \wedge \chan{\at{\te_2}{\at{\iarr}{k}}} \in \cset
				\big) \\
				& \projmiss{\iarr}{\te}{\cset} \equiv \fa{0{\le}k{<}\len{\te}} \big( 
					(\fa{0{\le}j{<}\len{\iarr}} \at{\iarr}{j} \neq k) \leftrightarrow
					\chan{\at{\te}{k}} \not\in \cset{}
				\big)
			\end{aligned}$
		\end{small}
	\end{minipage}
	\caption{Extension of the \dLCHP calculus}
	\label{fig:plusClaculus}
\end{figure}

\begin{theorem}
	[Soundness of $\dLCHPdash$]
	\label{thm:extended_soundness}
	The extension of the \dLCHP calculus in \rref{fig:plusClaculus} is sound,
	\iest all axioms in \rref{fig:plusClaculus} are valid formulas.
	Consequently, the extended \dLCHP calculus \dLCHPdash (\rref{fig:calculus} and \rref{fig:plusClaculus}) is sound by \rref{thm:soundness}.
\end{theorem}
\vspace{-.7cm}
\begin{proof}
	Soundness of \RuleName{TraceToGoedel} and \RuleName{GoedelToTrace} follows from the fact that $\goedelize{\cdot}$ is a bijection $\traces \rightarrow \eventenc^*$ by \rref{lem:typecast}.
	Soundness proofs for \RuleName{vacuous} and \RuleName{barcan} are in the literature \cite{DBLP:conf/lics/Platzer12b}.
	Soundness of the remaining axioms in \rref{fig:plusClaculus} easily follows from the semantics of trace terms,
	where \RuleName{extProj} formally requires an induction over the length of the trace~$\te_2$.
\end{proof}

The equitranslation by \rref{prop:com-fod-to-fod} effectively maps every \comFOD formula $\phi$ to a FOD formula $\toFOD{\phi}$
that is equivalent up to trace encoding by $\goedelize{\cdot}$.
The mapping~$\toFOD{(\cdot)}$ uniformly replaces every trace variable~$\tvar$ in $\phi$ with a fresh but fixed real variable~$\encvar{\tvar}$
and maps operators on traces to the corresponding operators on encodings (see \rref{lem:typecast}).
Then $\phi \leftrightarrow \fa{\encvar{\hvarvec}{:}\eventenc^*{=}\goedelize{\hvarvec}} \toFOD{\phi}$
is provable in the extended \dLCHP calculus \dLCHPdash,
where $\fa{\encvar{\hvarvec}{:}\eventenc^*{=}\goedelize{\hvarvec}}$%
maps $\phi$'s free trace variables~$\hvarvec$ 
to their representation~$\encvar{\hvarvec}{:}\eventenc^*$\,in~$\toFOD{\phi}$.%
\footnote{
	The notation $\fa{x{:}\eventenc^*{=}\re} \psi$ is short for $\fa{x{:}\eventenc^*} (x{=}\re \rightarrow \psi)$,
	where $\fa{x{:}\eventenc^*} \chi \equiv \fa{x} ( x{:}\eventenc^* \rightarrow \chi)$.
}

\begin{proposition}
    [Equitranslation]
	\label{prop:com-fod-to-fod}
    For each \comFOD formula $\phi$
	over free trace variables~$\hvarvec$,
	there is effectively a FOD formula $\toFOD{\phi}$ such that $\phi \leftrightarrow \fa{\encvar{\hvarvec}{:}\eventenc^*{=}\goedelize{\hvarvec}} \toFOD{\phi}$ is provable in the extended \dLCHP calculus~\dLCHPdash,
	where~$\goedelize{\cdot}$ is applied point-wise.
	The proof is in \rref{app:com-fod-to-fod}.
\end{proposition}

By \rref{thm:com-fod-completeness},
every valid \dLCHP formula $\phi$ has a proof from \comFOD tautologies in the \dLCHP calculus (\rref{fig:calculus}).
By \rref{prop:com-fod-to-fod},
this proof can be extended in the extended \dLCHP calculus \dLCHPdash (\rref{fig:calculus} and \rref{fig:plusClaculus}) to a proof of $\phi$ from FOD tautologies,
which proves \rref{thm:fod-completeness}.
A detailed proof of \rref{thm:fod-completeness} is in \rref{app:com-fod-to-fod}.
\begin{theorem}
    [Continuous completeness]
	\label{thm:fod-completeness}
    The extended \dLCHP calculus \dLCHPdash is complete relative to FOD,
    \iest each valid \dLCHP formula $\phi$,
	can be proven in \dLCHPdash from FOD tautologies.

\end{theorem}

This concludes our completeness results.
Completeness relative to \comFOD (\rref{thm:com-fod-completeness}) 
shows that the \dLCHP calculus (\rref{fig:calculus})
comprehensively covers the dynamical effects of parallel hybrid systems
because it can reduce all properties of CHPs to the assertion logic \comFOD.
Completeness relative to FOD (\rref{thm:fod-completeness}) 
proof-theoretically fully aligns parallel hybrid systems in \dLCHP
with reasoning about hybrid systems in \dL,
because provability reduceds to FOD for \dL as well \cite{DBLP:journals/jar/Platzer17}.
In summary, in the extended \dLCHP calculus~\dLCHPdash,
properties of parallel hybrid systems can be proven 
whenever properties of hybrid or continuous sytems can be proven.

The proof of \rref{thm:fod-completeness} relies on \rref{prop:com-fod-to-fod},
which provides a provably correct equitranslation \cite{AbouElWafa2024} between the base logics \comFOD and FOD.
By provability of the equivalence,  
FOD is expressive for \dLCHP up to trace encoding,
in addition to \comFOD
(\rref{lem:com-fod-expressiveness}).
This reduces the assertion logic of \dLCHP from \comFOD to FOD plus trace encoding,
and reveals that parallel hybrid systems properties can be succinctly represented in FOD.
However, in practice, specific axioms for traces \cite{Brieger2023} 
are more intuitive than reasoning about encodings of traces as properties of differential equations.

\section{Related Work} \label{sec:related}

For clarity, the discussion is structured in paragraphs:

\paragraph{Models of Parallel Hybrid Systems}

Unlike CHPs, Hybrid CSP (HCSP) \cite{Jifeng1994} extends CSP \cite{Hoare1978} with \emph{lazily} terminating continuous evolution,
which ends deterministically only at the single point in time at which the evolution constraint is violated.
That is why parallel HCSP programs only have common runs and agree on a common duration if they all leave their domain constraints simultaneously.
Otherwise, HCSP has empty behavior resulting in vacuous proofs.
Instead of exploiting their compositional models as in \dLCHP,
hybrid process algebras are verified non-compositionally by combinatorial translation to model checking \cite{Man2005, Cong2013, Song2005}.
Unlike \dLCHP, which can model a variety of communication patterns by CHPs,
\eg loss and delay of communication, 
and reason about them thanks to completeness,
meta-level components \cite{Lunel2019,DBLP:journals/sttt/MullerMRSP18,Kamburjan2020,Benvenuti2014,Frehse2004,Henzinger2001,Lynch2003} need to be designed from scratch 
for different communication models such as lossy communication.

Quantified differential dynamic logic \QdL \cite{DBLP:conf/csl/Platzer10,DBLP:journals/lmcs/Platzer12b} can express parallel dynamics of an \emph{unbounded} number of %
hybrid systems but only if they all have 
\emph{the same}
structure.
By contrast, \dLCHP can model parallel interactions of entirely different subsystems. 
Fundamentally different from \dLCHP,
parallelism $\alpha\cap\beta$ in concurrent dynamic logic (CDL) \cite{Peleg1987} continues execution in all states reachable by $\alpha$ \emph{and} $\beta$ without ever merging again,
and the parallel programs cannot interact.
CDL with communication \cite{Peleg1987a} does neither support continuous dynamics nor a proof calculus for verification,
and even axioms self-evident in dynamic logic such as $[ \alpha \seq \beta ] \psi \leftrightarrow [ \alpha ] [ \beta ] \psi$ become unsound \cite[p.37]{Peleg1987a},
underlining the fundamentally different nature of their model of parallelism.

Unlike \dLCHP, which models the global flow of time in classical mechanics,
calculi for distributed real-time computation~\cite{Hooman1991,Hooman1987,Hooman1992} 
analyze the timing of discrete computation
or do not impose time synchronization upon parallel programs~\cite{Hooman1993}.
Further, these approaches \cite{Hooman1991,Hooman1987,Hooman1992,Hooman1993} cannot model continuous change by differential equations.

\paragraph{Hoare-logics}

Hybrid Hoare-logic (HHL) for HCSP \cite{Liu2010} 
features a proof calculus for HCSP
that is non-compositional \cite{Wang2012}.
Wang\etal \cite{Wang2012} extend HHL with assume-guarantee reasoning (AGR)%
\footnote{
	Assume-guarantee reasoning as a generic concept embraces a wide variety of techniques.
	It has also been applied to Hybrid Hoare-logic \cite{Guelev2017} but must not be confused with assumption-commitment reasoning, which is the specific proof technique for message-passing concurrency that we use in \dLCHP.
}
in a way that, unlike \dLCHP, becomes non-compositional again,
because their parallel composition rule explicitly unrolls all interleavings of the communication traces.
Similarly, Guelev\etal~\cite{Guelev2017} encode the semantics of parallel composition 
by exhaustive unfolding
using the extended duration calculus~\cite{Chaochen1993} as assertion language.
Exposing 
all dynamics
of a subprogram to the other subprograms
in this way 
is said to devalue the whole point of 
compositionality \cite[Section 1.6.2]{deRoever2001} 
because it does not admit reduction of the state space by abstraction. 
Assumptions and guarantees in HHL \cite{Wang2012} cannot specify the communication history but fulfill the different purpose of reasoning about deadlock freedom.

HHL approaches lack completeness results \cite{Wang2012,Liu2010} or prove completeness \cite{Guelev2017} relative to the extended duration calculus \cite{Chaochen1993}.
It remains open whether the proof theory 
of parallelism in HHL
aligns with that of hybrid systems as it does in \dLCHP.
Moreover, completeness is not astonishing if a proof calculus exposes the whole semantics of parallelism \cite{Guelev2017}.
The actual challenge solved by \dLCHP is the development of minimal proof principles
that flexibly adapt to the simpler parallel interaction patterns in practice
but in the extreme case can capture all parallel behavior.
Further, \dLCHP's completeness covers liveness modalities,
which are out-of-scope for Hoare-logics.

Completeness of calculi for distributed real-time computation either remains open~\cite{Hooman1993} or is relative to real-time versions of temporal logic \cite{Hooman1992, Hooman1989, Zhou1996} over $\rationals$ as time domain.
Such completeness relative to the data logic is not possible for hybrid systems~\cite{DBLP:conf/lics/Platzer12b} because their data logic is first-order real arithmetic,
which is decidable~\cite{Tarski1951}.
In contrast, \dLCHP is proven to be complete relative to continuous dynamics.

The \dLCHP calculus develops a new modularization of parallel systems safety reasoning,
based on the convincingly simple parallel injection axiom.
Only standard modal logic principles
are required to combine injections.
This reveals that parallel systems do not need complex and highly composite proof rules as in Hoare-style ac-reasoning~\cite{Misra1981,AcHoare_Zwiers,AcSemantics_Zwiers}.
The development of minimalistic proof calculi
further complements our work on uniform substitution \cite{Brieger2023,DBLP:conf/cade/Platzer19,DBLP:journals/jar/Platzer17},
which constitutes prover micro-kernels of small soundness-critical size.
Since ac-reasoning \cite{Misra1981,AcHoare_Zwiers} can be unified \cite{Xu1994} with rely-guarantee reasoning~\cite{Xu1997} for shared-variable parallelism,
modal logic foundations for shared-variable parallelism become an interesting research direction.

\paragraph{Differential Dynamic Logics}

Unlike other \dL approaches with components \cite{Lunel2019,DBLP:journals/sttt/MullerMRSP18,Kamburjan2020},
\dLCHP has a parallel operator as first-class citizen that can be arbitrarily nested with other hybrid programs,
rather than parallel composition of fixed-structure meta-level components.
Time-synchronization by the parallel operator can be used to model a global time if need be, in contrast to explicit time requirements of component-based approaches \cite{Lunel2019,DBLP:journals/sttt/MullerMRSP18,Kamburjan2020}.
Modeling of parallelism by nondeterministic choice additionally requires extra care to ensure execution periodicity \cite{Lunel2019}.
In contrast to first-order constraints relating at most consecutive I/O events \cite{Lunel2019,DBLP:journals/sttt/MullerMRSP18,Kamburjan2020},
\dLCHP can reason about invariants of the whole communication history.
Orthogonally to our integrated reasoning about discrete, hybrid, and communication behavior,
Kamburjan\etal \cite{Kamburjan2020} separates reasoning about communication from hybrid systems reasoning in entirely different programming languages.
Meta-level approaches do not study completeness \cite{Lunel2019,DBLP:journals/sttt/MullerMRSP18,Kamburjan2020} but this may become possible via their encoding in \dLCHP with its completeness results.

In \QdL, structural and dimensional change of distributed networks of agents are an additional source of incompleteness \cite{DBLP:journals/lmcs/Platzer12b}
besides its discrete and continuous dynamics.
Unlike \dLCHP, which is complete relative to properties of continuous systems,
\QdL  is complete relative to properties of \emph{quantified} continuous systems \cite{DBLP:journals/lmcs/Platzer12b},
\iest systems with simultaneous change of unboundedly many continuous systems at once.
Unlike \dLCHP's uniform substitution calculus \cite{Brieger2023}, 
in this article,
\dLCHP's calculus relies on schematic axioms
to put the spotlight on the new completeness results. 
These results 
are the key for completeness
of the uniform substitution calculus but tackling both at once would make a comprehensible presentation of either result infeasible.

Temporal logic plays a central role in the verification of concurrency \cite{Manna1992}.
In \dLCHP, the ac-modalities are reminiscent of temporal logic by their quantification over communication traces.
Differential temporal logic \dTL extends \dL with temporal operators~\cite{DBLP:conf/cade/JeanninP14},
but does not support parallelism.
Unlike \dTL's temporal operators,
which quantify over continuous traces,
ac-modalities refer to discrete events.

\paragraph{Automata}

The parallel composition of hybrid automata \cite{Lynch2003, Frehse2004, Henzinger2001, Benvenuti2014}, just like 
HCSP \cite{Wang2012,Guelev2017}, always falls back to the combinatorial exponentiation of parallelism.
Consequently, even AGR approaches \cite{Lynch2003,Frehse2004,Henzinger1996,Benvenuti2014} for hybrid automata that mitigate the state space explosion for subautomata,
eventually resort to large product automata later.
In contrast, \dLCHP's parallel injection axiom exploits the built-in compositionality of the program semantics enabling verification of subprograms truly independently of their environment based on their shared communication interface.
Unlike ac-formulas in \dLCHP,
which can capture change, rate, delay, or noise for arbitrary pairings of communication channels,
overapproximation is limited to coarse abstractions by timed transition systems \cite{Frehse2004},
components completely dropping 
continuous behavior~\cite{Henzinger2001},
or static global contracts \cite{Benvenuti2014}.
Where \dLCHP inherits \dL's complete reasoning about differential equation invariants \cite{DBLP:journals/jacm/PlatzerT20},
automata approaches are often limited to linear continuous dynamics~\cite{Frehse2004,Henzinger2001}.

\section{Conclusion} \label{sec:conclusion}

This article shows completeness for the dynamic logic of communicating hybrid programs \dLCHP, which is for modeling and verification of parallel interactions of hybrid systems.
These interactions go beyond the mere sum of hybrid and parallel systems 
because only their combination faces the challenge of true parallel synchronization in real time.
Despite this complexity \dLCHP's compositional proof calculus disentangles the subtly intertwined dynamics of parallel hybrid systems into atomic pieces of discrete, continuous, and 
parallel behavior.
The calculus supports \emph{truly} compositional reasoning,
\iest the decomposition of parallel hybrid systems is along specifications of their external behavior only,
which can always express enough to be complete but which are not cluttered with exponential parallel overhead where simple properties suffice.
Therefore, \dLCHP embeds assumption-commitment (ac) reasoning into dynamic logic,
and further replaces classical monolithic Hoare-style proof rules
with a far-reaching modularization of deductions about parallel systems that is driven by a stringent modal view onto ac-reasoning.
At the core of this development is the parallel injection axiom,
which proves properties of a parallel subprogram from its projection onto the subprogram alone.
Completeness shows that 
this convincingly simple axiom 
is the \emph{only} proof principle necessray for reasoning about safety of parallel hybrid systems.
Classical proof rules for parallel systems derive, 
but their soundness simply follows from the soundness of \dLCHP's small modular reasoning principles, simplifying side conditions that are notoriously subtle for parallel system verification.

The increased compositionality and modularity would be counterproductive if they were to miss phenomena in parallel hybrid systems.
The two effective completeness results show that this is not the case and prove adequacy of the calculus:
First, completeness is proven relative to the first-order logic of communication traces and differential equations \comFOD.
This shows that \dLCHP provides all axioms and proof rules necessary to reduce valid \dLCHP formulas to the assertion logic \comFOD,
and confirms that \dLCHP's calculus is a comprehensive characterization of all dynamical effects of parallel hybrid systems.
At the core of the proof is a complete reasoning pattern for safety of parallel hybrid systems.
This pattern points out all steps that can be necessary but in stark contrast with classical monolithic reasoning can be reduced whenever shortcuts are sufficient.
Further, completeness is proven relative to the first-order logic of differential equations FOD.
This result proof-theoretically aligns \dLCHP with reasoning about hybrid systems in \dL, which is complete relative to FOD as well.
Consequently, properties of parallel hybrid systems can be verified whenever properties of hybrid systems, continuous, and discrete systems can be verified.

Interesting directions for future work include
uniform substitution \cite{Brieger2023} that gets rid of subtle soundness-critical side conditions, 
which otherwise cause overwhelming implementations of theorem provers.
Uniform substitution is a subtle challenge on its own \cite{DBLP:journals/jar/Platzer17,DBLP:conf/cade/Platzer19},
so needs its own careful presentation,
but completeness of \dLCHP's schematic calculus proven in this article is a major step toward its completeness.

\appendix
\section*{Appendices}

\section{Soundness of the Calculus}
\label{app:soundness}

This appendix proves soundness of \dLCHP's proof calculus (\rref{thm:soundness})
and of its derived axioms and rules (\rref{cor:derived}).
\rref{lem:assumption_rendition} shows that the trace modality $\Atrace{\A}$ correctly expresses assumptions,
which is used for soundness of axiom~\RuleName{Atransfer}.
\rref{cor:action_composition} is helpful when combining modal actions sequentially.
\rref{lem:noninterference} contains the central soundness argument for the parallel injection axiom~\RuleName{acDropComp}.

\begin{lemma}
	[Assumption rendition]
	\label{lem:assumption_rendition}
	Let $\Atrace{\A} \equiv \AclDef{}{\tvar}$,
	where \mbox{$\sim \,\in \{ \prec, \preceq \}$}.
	If $\pstate{v} \vDash \tvar_0 = \tvar$,
	then for every recorded trace $\trace = (\tvar, \rawtrace)$, 
	obtain $\assCP{\sim}{\pstate{v}}{\trace} \vDash \A$ iff
	$\pstate{v} \cdot \trace \vDash \Atrace{\A}$.
\end{lemma}
\vspace{-.7cm}
\begin{proof}
	The proof is by the following equivalences:
	$\assCP{\sim}{\pstate{v}}{\trace} \vDash \A$, 
	iff $\pstate{v} \subs{\tvar}{(\pstate{v}\cdot\trace[pre])(\tvar)} \vDash \A$ for all $\trace[pre] \sim \trace$,
	iff $\pstate{v} \subs{\tvar}{\trace[pre]} \vDash \A$ for all $\trace[pre]$ with $\pstate{v}(\tvar) \preceq \trace[pre] \sim (\pstate{v} \cdot \trace)(\tvar)$,
	iff, by substitution (\rref{lem:rec_substitution}), $(\pstate{v} \subs{\tvar}{\trace[pre]}) \subs{\tvar'}{\trace[pre]} \vDash \A \subs{\tvar}{\tvar'}$ for each $\trace[pre]$ with $\pstate{v}(\tvar) \preceq \trace[pre] \sim (\pstate{v} \cdot \trace)(\tvar)$,
	iff, by coincidence (\rref{lem:expr_coincidence}), $\pstate{v} \subs{\tvar'}{\trace[pre]} \vDash \A \subs{\tvar}{\tvar'}$ for each $\trace[pre]$ with $\pstate{v}(\tvar) \preceq \trace[pre] \sim (\pstate{v} \cdot \trace)(\tvar)$,
	iff, using $\pstate{v}(\tvar_0) = \pstate{v}(\tvar)$, yields $\pstate{v} \subs{\tvar'}{\trace[pre]} \vDash \A \subs{\tvar}{\tvar'}$ for each $\trace[pre]$ with $\pstate{v}(\tvar_0) \preceq \trace[pre] \sim (\pstate{v} \cdot \trace)(\tvar)$,
	iff $\pstate{v} \cdot \trace \vDash \Atrace{\A}$.
\end{proof}

\begin{corollary}
	[Action composition]
	\label{cor:action_composition}
	Let $(\A, \alpha)$ and $(\A, \beta)$ be communicatively well-formed.
	For any $\gamma$ and $\sim\, \in \{ \prec, \preceq \}$,
	\rref{eq:semantics_action} defines $\sem{\A, \gamma}{}_\sim$.
	Then if $(\pstate{v}, \trace_1, \pstate{u}) \in \sem{\A, \alpha}{}_\preceq$ with $\pstate{u} \neq \bot$, and $(\pstate{u} \cdot \trace_1, \trace_2, \pstate{w}) \in \sem{\A, \beta}{}_\sim$,
	obtain $(\pstate{v}, \trace_1 \cdot \trace_2, \pstate[alt]{w}) \in \sem{\A, \alpha \seq \beta}{}_\sim$ with $\pstate{w} = \pstate[alt]{w} \cdot \trace_1$.
\end{corollary}
\vspace{-.7cm}
\begin{proof}
	Let $(\pstate{v}, \trace_1, \pstate{u}) \in \sem{\A, \alpha}{}_\preceq$ with $\pstate{u} \neq \bot$, and $(\pstate{u} \cdot \trace_1, \trace_2, \pstate{w}) \in \sem{\A, \beta}{}_\sim$.
	Since $(\pstate{u} \cdot \trace_1, \trace_2, \pstate{w}) \in \sem{\A, \beta}{}_\sim$,
	obtain $(\pstate{u}, \trace_2, \pstate[alt]{w}) \in \sem{\beta}{}$ with $\pstate{w} = \pstate[alt]{w} \cdot \trace_1$ by coincidence (\rref{cor:history_coincidence}),
	so $(\pstate{v}, \trace_1 \cdot \trace_2, \pstate[alt]{w}) \in \sem{\alpha}{} \continuation \sem{\beta}{} \subseteq \sem{\alpha\seq\beta}{}$.
	By $(\pstate{u} \cdot \trace_1, \trace_2, \pstate{w}) \in \sem{\A, \beta}{}_\sim$ again,
	obtain $\assPost{\pstate{u} \cdot \trace_1}{\trace_2} \vDash \A$, 
	so $\assPost{\pstate{v} \cdot \trace_1}{\trace_2} \vDash \A$ by coincidence (\rref{cor:communicative_coincidence}).
	The latter and  $\assPost{\pstate{v}}{\trace_1} \vDash \A$
	imply $\assPost{\pstate{v}}{\trace_1 \cdot \trace_2} \vDash \A$.
	Finally, $(\pstate{v}, \trace_1 \cdot \trace_2, \pstate[alt]{w}) \in \sem{\A, \alpha\seq\beta}{}_\sim$ with $\pstate{w} = \pstate[alt]{w} \cdot \trace_1$.
\end{proof}

\begin{lemma}
	[Noninterference retains safety] 
	\label{lem:noninterference}
	Let the program $\beta$ not interfere with $[ \alpha ] \ac \psi$ (\rref{def:noninterference}). 
	Moreover, let $\run \in \sem{\alpha \parOp \beta}{}$,
	\iest $(\pstate{v}, \trace \downarrow \alpha, \pstate{w}_\alpha) \in \sem{\alpha}{}$ and $(\pstate{v}, \trace \downarrow \beta, \pstate{w}_\beta) \in \sem{\beta}{}$ with $\pstate{w} = \pstate{w}_\alpha \merge \pstate{w}_\beta$,
	and $\pstate{w} = \pstate{w}_\alpha = \pstate{w}_\beta$ on $\{ \gtvec \}$ if $\pstate{w} \neq \bot$,
	and $\trace \downarrow \parchans = \trace$,
	\iest $\trace$ only contains $\parchans$-communication.
	Then the following holds:
	\begin{enumerate}
		\item \label{itm:invalidation1}
		For $\lambda \in \{ \A, \Commit \}$, obtain $\big( 
			\pstate{v} \cdot (\trace \downarrow \alpha) \vDash \lambda \text{ iff } \pstate{v} \cdot \trace \vDash \lambda	
		\big)$
		
		\item \label{itm:invalidation2} 
		$\pstate{w} \neq \bot$ implies $\big( 
			\pstate{w}_\alpha \cdot (\trace \downarrow \alpha) \vDash \psi \text{ iff } \pstate{w} \cdot \trace \vDash \psi 
		\big)$
	\end{enumerate}
\end{lemma}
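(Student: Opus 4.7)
The plan is to reduce both claims to the refined coincidence property for formulas (\rref{lem:expressionCoincidence}), instantiated with the singleton trace-variable set $\varset = \{\parrec\}$, and to exploit the noninterference hypothesis (\rref{def:noninterference}) together with the bound effect property (\rref{lem:boundEffect}) to discharge the side condition of coincidence. The unifying observation is that the two states to be compared differ only in the value of the recorder $\parrec$ (plus, in part~\ref{itm:invalidation2}, in some real variables bound by~$\beta$ that are irrelevant to~$\psi$), and that the extra communication in $\trace$ compared with $\trace\downarrow\alpha$ lies entirely on channels that are invisible to $\lambda$ respectively $\psi$ through~$\parrec$.

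For part~\ref{itm:invalidation1}, I first observe that $\pstate{v}\cdot\trace$ and $\pstate{v}\cdot(\trace\downarrow\alpha)$ agree on every variable except $\parrec$, so in particular they agree on $\SFV(\lambda)\setminus\{\parrec\}$ and on every trace variable in $\TVar\setminus\{\parrec\}$. It remains to show that their $\parrec$-components coincide after projecting onto $\SCNX{\{\parrec\}}(\lambda)$. Because concatenation only appends, it suffices to show $\trace\downarrow\SCNX{\{\parrec\}}(\lambda) = (\trace\downarrow\alpha)\downarrow\SCNX{\{\parrec\}}(\lambda)$. The difference $\trace \setminus (\trace\downarrow\alpha)$ consists only of events on channels in $\SCN(\beta)\setminus\SCN(\alpha)$ since $\trace\downarrow\parchans=\trace$. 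By the noninterference hypothesis $\SCNX{\{\parrec\}}(\lambda)\cap\SCN(\beta)\subseteq\SCN(\alpha)$, these channels lie outside $\SCNX{\{\parrec\}}(\lambda)$, so they are filtered away on both sides and the projections are equal. Coincidence (\rref{lem:expressionCoincidence}) then yields the biconditional.

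For part~\ref{itm:invalidation2}, the trace-side argument is identical to part~\ref{itm:invalidation1} applied to $\psi$ (using that bound effect gives $\pstate{w}_\alpha(\parrec)=\pstate{v}(\parrec)=\pstate{w}(\parrec)$, so the recorder-prefixes before the appended traces agree). What is new is that the state components of $\pstate{w}_\alpha$ and $\pstate{w}$ may disagree on real variables. I will argue that they agree on $\SFV(\psi)$: if $x\in\SFV(\psi)\setminus\SBV(\beta)$, then $\pstate{w}_\beta(x)=\pstate{v}(x)$ by \rref{lem:boundEffect}, and the definition of $\merge$ together with bound effect on $\alpha$ forces $\pstate{w}(x)=\pstate{w}_\alpha(x)$ in both the case $x\in\SBV(\alpha)$ and the case $x\notin\SBV(\alpha)$; if $x\in\SFV(\psi)\cap\SBV(\beta)$, then noninterference restricts $x\in\{\gtvec,\parrec\}$, and the $\gtvec$-coordinate is handled by the hypothesis $\pstate{w}_\alpha=\pstate{w}_\beta=\pstate{w}$ on $\{\gtvec\}$, while $\parrec$ is a trace variable already covered by the projection argument. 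Another application of \rref{lem:expressionCoincidence} with $\varset=\{\parrec\}$ then finishes the claim.

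The main obstacle is the bookkeeping of the various coincidence conditions: distinguishing cleanly between (i) free real variables, where merging plus bound effect of both subprograms must cooperate, and (ii) the unique recorder $\parrec$, where trace projection must be combined with the refined $\SCNX{\{\parrec\}}$ side condition from noninterference. The refinement of accessed channels to a specific trace-variable set is exactly what makes part~\ref{itm:invalidation2} go through for $\psi$ whose dependency via trace variables other than $\parrec$ is untouched by the parallel interleaving.
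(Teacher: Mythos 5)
Your proposal is correct and follows essentially the same route as the paper's proof: both parts reduce to the refined coincidence property (\rref{lem:expressionCoincidence}) with $\varset=\{\parrec\}$, with the trace side handled by showing $(\trace\downarrow\alpha)\downarrow\SCNX{\{\parrec\}}(\chi)=\trace\downarrow\SCNX{\{\parrec\}}(\chi)$ via noninterference condition \rref{eq:noninterference_2} and $\trace\downarrow\parchans=\trace$, and the state side of part~\ref{itm:invalidation2} handled by combining the definition of $\merge$, the bound effect property for both subprograms, the $\gtvec$-agreement hypothesis, and noninterference condition \rref{eq:noninterference_0}. The only cosmetic difference is that you partition $\SFV(\psi)$ by membership in $\SBV(\beta)$ whereas the paper establishes $\pstate{w}_\alpha=\pstate{w}$ on the full set $\SBV(\beta)^\complement\cup\{\gtvec\}\cup\TVar$ and then intersects with $\SFV(\psi)$; the content is the same.
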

\vspace{-.7cm}
\begin{proof}
	Let $\tvar=\parrec$ be the recorder of $\alpha\parOp\beta$.
	Hence, $\trace = (\tvar, \trace_0)$ for some trace $\trace_0$.
	First, show that \wrt the recorder $\tvar$,
	the formula $\chi \in \{ \A, \Commit, \psi \}$ only depends on $\alpha$-communication $\trace \downarrow \alpha$,
	\iest $(\trace \downarrow \alpha) \downarrow \cset = \trace \downarrow \cset$,
	where $\cset = \SCNX{\{\tvar\}}(\chi)$.
	This holds if only a communication event 
	$\comevent = \comItem{\ch{}, \semConst, \duration}$ in $\trace_0$,
	which is not removed by $\downarrow \cset$, is also not removed by $\downarrow \alpha$.
	Accordingly, let $\rawtrace \downarrow \cset = \rawtrace$.
	Then $\ch{} \in \cset$.
	If $\ch{} \not \in \SCN(\beta)$,
	then $\ch{} \in \SCN(\alpha)$ because 
	$\rawtrace$ is emitted by $\alpha \parOp \beta$.
	Otherwise, if $\ch{} \in \SCN(\beta)$,
	then $\ch{} \in \SCN(\alpha)$ by noninterference (\rref{def:noninterference}).
	Hence, $\ch{} \in \SCN(\alpha)$ such that $\rawtrace$ is not removed by $\downarrow \alpha$.
	Since $(\trace \downarrow \alpha) \downarrow \cset = \trace \downarrow \cset$
	and $\tvar$ is the unique recorder of $\trace$, 
	for $\pstate{u} \in \{\pstate{v}, \pstate{w}_\alpha\}$, obtain
	\begin{equation} \label{eq:historyProjs}
		\big( \pstate{u} \cdot (\trace \downarrow \alpha) \big) \downarrow_{\{\tvar\}} \cset
		= (\pstate{u} \downarrow_{\{\tvar\}} \cset) \cdot \big( (\trace \downarrow \alpha) \downarrow \cset \big)
		= (\pstate{u} \downarrow_{\{\tvar\}} \cset) \cdot (\trace \downarrow \cset)
		= (\pstate{u} \cdot \trace) \downarrow_{\{\tvar\}} \cset
		\text{.}
	\end{equation}

	Using \rref{eq:historyProjs}, \rref{itm:invalidation1} holds by coincidence (\rref{lem:expr_coincidence}).
	For \rref{itm:invalidation2}, assume $\pstate{w} \neq \bot$. 
	Then $\pstate{w}_\alpha \neq \bot$ and $\pstate{w}_\beta \neq \bot$ by the definition of $\merge$ in \rref{sec:semantics}. 
	First, observe that $\pstate{w}_\alpha = \pstate{w}$ on $\SBV(\alpha)$ by the definition of $\merge$, so $\pstate{w}_\alpha = \pstate{w}$ on $\SBV(\alpha) \cap \SBV(\beta)^\complement$.
	Second, $\pstate{w}_\alpha = \pstate{v}$ on $\SBV(\alpha)^\complement$ and $\pstate{v} = \pstate{w}_\beta$ on $\SBV(\beta)^\complement$ by the bound effect property (\rref{lem:bound_effect}),
	and $\pstate{w}_\beta = \pstate{w}$ on $\SBV(\alpha)^\complement$ by the definition of $\merge$.
	In summary, $\pstate{w}_\alpha = \pstate{w}$ on $\SBV(\alpha)^\complement \cap \SBV(\beta)^\complement$.
	Third, $\eqgtime{\pstate{w}_\alpha}{\pstate{w}}$.
	Fourth, since $\pstate{w}_\alpha = \pstate{v} = \pstate{w}_\beta$ on $\TVar$ by \rref{lem:bound_effect},
	obtain $\pstate{w}_\alpha = \pstate{w}_\alpha \merge \pstate{w}_\beta = \pstate{w}$ on $\TVar$.
	In summary, $\pstate{w}_\alpha = \pstate{w}$ on
	\begin{equation}
		\label{eq:vars_noninterference}
		(\SBV(\alpha) \cap \SBV(\beta)^\complement) \cup (\SBV(\alpha)^\complement \cap \SBV(\beta)^\complement) \cup \{ \gtvec \} \cup \TVar 
		= \SBV(\beta)^\complement \cup \{ \gtvec \} \cup \TVar \text{.}
	\end{equation} 
	Since $\beta$ does not interfere with $[ \alpha ] \ac \psi$ (\rref{def:noninterference}), 
	obtain $\SFV(\psi) \subseteq \SBV(\beta)^\complement \cup \{ \gtvec, \tvar \}$.
	Hence, $\pstate{w}_\alpha = \pstate{w}$ on $\SFV(\psi)$ by \rref{eq:vars_noninterference}.
	Therefore, by \rref{eq:historyProjs},
	obtain $(\pstate[ind=\alpha]{w} \cdot (\trace \downarrow \alpha)) \downarrow_{\{\tvar\}} \cset
	\overset{(\ref{eq:historyProjs})}{=} 
		(\pstate[ind=\alpha]{w} \cdot \trace) \downarrow_{\{\tvar\}} \cset
	= (\pstate{w} \cdot \trace) \downarrow_{\{\tvar\}} \cset$ on $\SFV(\psi)$.
	Finally, \mbox{\rref{itm:invalidation2}} holds by \rref{lem:expr_coincidence}.
	\qedhere
\end{proof}

\begin{proof}[Proof of \rref{thm:soundness}]
	We prove soundness of the novel ac-axioms and rules. 
	Since \dLCHP is a conservative extension of \dL \citeDLCHP[Proposition 1], 
	we can soundly use the \dL proof calculus for reasoning about \dL formulas in \dLCHP.
	Hence, we point to the literature for soundness of the axioms and rules adopted from~$\dL$ \cite{DBLP:journals/jar/Platzer08, DBLP:journals/jar/Platzer17,DBLP:books/sp/Platzer18}. 

	\begin{itemize}[leftmargin=1em, itemindent=-1em, align=left]
		\itemsep.5em
		\item[] \RuleName{boxesDual}:
		The implication $\rightarrow$ uses that the commitment holds trivially
		and $\leftarrow$ uses that the assumption holds trivially.

		\item[] \RuleName{acdbDual}:
		The axiom is a simple consequence of the semantics of ac-box and ac-diamond.

		\item[] \RuleName{Atransfer}:
		\rref{lem:assumption_rendition} 
		shows that if $\pstate{v} \vDash \tvar_0 = \getrec{\alpha}$,
		then $\assCP{\sim}{\pstate{v}}{\trace} \vDash \A$ iff $\pstate{v} \cdot \trace \vDash \Atrace{\A}$ (even $\pstate{w} \cdot \trace \vDash \Atrace{\A}$ if $\pstate{w} \neq \bot$ by \rref{cor:communicative_coincidence}) for all $\run \in \sem{\alpha}{}$,
		where $\sim \,\in \{\prec,\preceq\}$.
		The axiom follows by \acCommit and \acPost.

		\item[] \RuleName{acNoCom}:
		Let $\pstate{v} \vDash [ \alpha ] \ac \psi$. 
		Then $\pstate{v} \vDash \Commit$ by \acCommit since $(v, \epsilon, \bot) \in \sem{\alpha}{}$ by totality and preifx-closedness.
		Now, assume $\pstate{v} \vDash \A$
		and let $\computation \in \sem{\alpha}{}$ with $\pstate{w} \neq \bot$.
		Then $\trace = \epsilon$ because $\SCN(\alpha) = \emptyset$. 
		Hence, $\assPost{\pstate{v}}{\trace} = \{\pstate{v}\}$ in \acPost,
		which implies $\pstate{w} \vDash \psi$. 
		Conversely, let $\pstate{v} \vDash \Commit \wedge (\A \rightarrow [ \alpha ] \psi)$ and $\computation \in \sem{\alpha}{}$.
		Then \acCommit holds since $\trace = \epsilon$ and $\pstate{v} \vDash \Commit$. 
		For \acPost, assume $\pstate{w} \neq \bot$ and $\assPost{\pstate{v}}{\trace} \vDash \A$, so $\pstate{v} \vDash \A$.
		Hence, $\pstate{v} \vDash \A \rightarrow [ \alpha ] \psi$ implies $\pstate{w} \vDash \psi$ as $\trace = \epsilon$.
		
		\item[] \RuleName{Aweak}:
		Let $\pstate{v} \vDash [ \alpha ] \acpair{\true, \Commit \wedge \weakA \rightarrow \A} \true$ and $\pstate{v} \vDash [ \alpha ] \acpair{\A, \Commit} \psi$.
		First, observe that for every $\run \in \sem{\alpha}{}$,
		the stronger assumption $\assCommit{\pstate{v}}{\trace} \vDash \weakA$ implies $\assCommit{\pstate{v}}{\trace} \vDash \A$.
		This is proven by induction on the structure of $\trace$:
		\begin{enumerate}[leftmargin=*]
			\item $\trace = \epsilon$, then  $\assCommit{\pstate{v}}{\trace} \vDash \A$ holds trivially since $\assCommit{\pstate{v}}{\trace} = \emptyset$.
			
			\item $\trace = \trace_0 \cdot \comevent$ with $\semLen{\comevent} = 1$, then assume $\assCommit{\pstate{v}}{\trace} \vDash \weakA$.
			Hence, $\pstate{v} \cdot \trace_0 \vDash \weakA$ and $\assCommit{\pstate{v}}{\trace_0} \vDash \weakA$,
			where the latter implies $\assCommit{\pstate{v}}{\trace_0} \vDash \A$ by the induction hypothesis.
			Since $(\pstate{v}, \trace_0, \bot) \in \sem{\alpha}{}$ by prefix-closedness and $\pstate{v} \vDash [ \alpha ] \acpair{\A, \Commit} \psi$, obtain $\pstate{v} \cdot \trace_0 \vDash \Commit$.
			The latter, and $\pstate{v} \cdot \trace_0 \vDash \weakA$, and $\pstate{v} \vDash [ \alpha ] \acpair{\true, \Commit \wedge \weakA \rightarrow \A} \true$ together imply $\pstate{v} \cdot \trace_0 \vDash \A$.
			Thus, $\assCommit{\pstate{v}}{\trace_0} \vDash \A$ extends  to $\assCommit{\pstate{v}}{\trace} \vDash \A$.
		\end{enumerate}
	
		To prove $\pstate{v} \vDash [ \alpha ] \acpair{\weakA, \Commit} \psi$,
		let $\run \in \sem{\alpha}{}$.
		For \acCommit, assume $\assCommit{\pstate{v}}{\trace} \vDash \weakA$,
		which implies $\assCommit{\pstate{v}}{\trace} \vDash \A$.
		Hence, $\pstate{v} \cdot \trace \vDash \Commit$ by $\pstate{v} \vDash [ \alpha ] \ac \psi$.
		For \acPost, assume $\pstate{w} \neq \bot$ and $\assPost{\pstate{v}}{\trace} \vDash \weakA$,
		which implies $\assCommit{\pstate{v}}{\trace} \vDash \A$.
		Then $\pstate{v} \cdot \trace \vDash \Commit$ by $\pstate{v} \vDash [ \alpha ] \ac \psi$ again.
		Since $\assPost{\pstate{v}}{\trace} \vDash \weakA$ contains $\pstate{v} \cdot \trace \vDash \weakA$,
		obtain $\pstate{v} \cdot \trace \vDash \A$ by $\pstate{v} \vDash [ \alpha ] \acpair{\true, \Commit \wedge \weakA \rightarrow \A} \true$.
		In summary, $\assPost{\pstate{v}}{\trace} \vDash \A$.
		Finally, $\pstate{v} \vDash [ \alpha ] \acpair{\A, \Commit} \psi$ implies $\pstate{w} \cdot \trace \vDash \psi$.
		
		\item[] \RuleName{acComposition}
		Let $\pstate{v} \vDash [ \alpha \seq \beta ] \ac \psi$. 
		To show $\pstate{v} \vDash [ \alpha ] \ac [\beta] \ac \psi$, 
		let $(v, \trace_1, \pstate{u}) \in \sem{\alpha}{}$.
		For \acCommit, assume $\assCommit{\pstate{v}}{\trace_1} \vDash \A$.
		By prefix-closedness, $(v, \trace_1, \bot) \in \botop{\sem{\alpha}{}} \subseteq \sem{\alpha \seq \beta}{}$.
		Then $\pstate{v} \cdot \trace_1 \vDash \Commit$ by $\pstate{v} \vDash [ \alpha \seq \beta ] \ac \psi$. 
		For \acPost, assume $\pstate{u} \neq \bot$ and $\assPost{\pstate{v}}{\trace_1} \vDash \A$,
		so $(\pstate{v}, \trace_1, \pstate{u}) \in \sem{\A, \alpha}{}_\preceq$
		where $\sem{\A, \gamma}{}_\sim$ is defined in \rref{eq:semantics_action} for any $\gamma$ and $\sim \,\in \{\prec,\preceq\}$.
		To prove $\pstate{u} \cdot \trace_1 \vDash [ \beta ] \ac \psi$, 
		let $(\pstate{u} \cdot \trace_1, \trace_2, \pstate{w} \cdot \trace_1) \in \sem{\beta}{}$
		(\wlossg by coincidence (\rref{cor:history_coincidence})).	
		\begin{enumerate}[leftmargin=*]
			\item For \acCommit, assume $\assCommit{\pstate{u} \cdot \trace_1}{\trace_2} \vDash \A$,
			so $(\pstate{u} \cdot \trace_1, \trace_2, \pstate{w} \cdot \trace_1) \in \sem{\A, \beta}{}_\prec$.
			By coincidence (\rref{cor:action_composition}), 
			$(\pstate{v}, \trace_1 \cdot \trace_2, \pstate{w}) \in \sem{\A, \alpha\seq\beta}{}_\prec$.
			Hence, $\pstate{v} \cdot \trace_1 \cdot \trace_2 \vDash \Commit$ by $\pstate{v} \vDash [ \alpha \seq \beta ] \ac \psi$.
			Finally, $\pstate{u} \cdot \trace_1 \cdot \trace_2 \vDash \Commit$ by coincidence (\rref{cor:communicative_coincidence}).
			
			\item For \acPost, assume $\pstate{w} \neq \bot$ and $\assPost{\pstate{u} \cdot \trace_1}{\trace_2} \vDash \A$,
			so $(\pstate{u} \cdot \trace_1, \trace_2, \pstate{w} \cdot \trace_1) \in \sem{\A, \beta}{}_\preceq$.
			By \rref{cor:action_composition}, 
			$(\pstate{v}, \trace_1 \cdot \trace_2, \pstate{w}) \in \sem{\A, \alpha\seq\beta}{}_\preceq$.
			Finally, $\pstate{w} \cdot \trace_1 \cdot \trace_2 \vDash \psi$ since $\pstate{v} \vDash [ \alpha\seq\beta ] \ac \psi$.
		\end{enumerate}
		
		Conversely, let $\pstate{v} \vDash [ \alpha ] \ac [ \beta ] \ac \psi$. 
		To prove $\pstate{v} \vDash [ \alpha\seq\beta ] \ac \psi$, let $\computation \in \sem{\alpha \seq \beta}{}$.
		If $\computation \in \botop{\sem{\alpha}{}}$, \acCommit holds by the assumption,
		and \acPost holds trivially as $\pstate{w} = \bot$.
		Otherwise, if $\computation \in \sem{\alpha}{} \continuation \sem{\beta}{}$,
		there are $(\pstate{v}, \trace_1, \pstate{u}) \in \sem{\alpha}{}$ and $(\pstate{u}, \trace_2, \pstate{w}) \in \sem{\beta}{}$ with $\trace = \trace_1 \cdot \trace_2$.
		By \rref{cor:history_coincidence},
		obtain $(\pstate{u} \cdot \trace_1, \trace_2, \pstate{w} \cdot \trace_1) \in \sem{\beta}{}$.
		\begin{enumerate}[leftmargin=*]
			\item 
			For \acCommit, assume $\assCommit{\pstate{v}}{\trace} \vDash \A$. 
			If $\trace_2 = \epsilon$, \acCommit holds because $(\pstate{v}, \trace, \bot) \in \botop{\sem{\alpha}{}}$ by prefix-closedness.
			If $\trace_2 \neq \epsilon$, then $\assPost{\pstate{v}}{\trace_1} \vDash \A$ and $\assCommit{\pstate{v} \cdot \trace_1}{\trace_2} \vDash \A$.
			Hence, $\pstate{u} \cdot \trace_1 \vDash [ \beta ] \ac \psi$ by \acPost,
			and $\assCommit{\pstate{u} \cdot \trace_1}{\trace_2} \vDash \A$ by \rref{cor:communicative_coincidence}.
			Since $\pstate{u} \cdot \trace_1 \vDash [ \beta ] \ac \psi$,
			obtain $\pstate{u} \cdot \trace_1 \cdot \trace_2 \vDash \Commit$ by \acCommit.
			By \rref{cor:communicative_coincidence} and $\trace = \trace_1 \cdot \trace_2$, 
			obtain $\pstate{v} \cdot \trace \vDash \Commit$.
			
			\item 
			For \acPost, assume $\pstate{w} \neq \bot$ and $\assPost{\pstate{v}}{\trace} \vDash \A$. 
			Then $\pstate{u} \cdot \trace_1 \vDash [ \beta ] \ac \psi$ as above.
			Since $\assPost{\pstate{v}}{\trace} \vDash \A$,
			obtain $\assPost{\pstate{v} \cdot \trace_1}{\trace_2} \vDash \A$,
			so $\assPost{\pstate{u} \cdot \trace_1}{\trace_2} \vDash \A$ by \rref{cor:communicative_coincidence}.
			Finally, $\pstate{w} \cdot \trace \vDash \psi$ as $\trace = \trace_1 \cdot \trace_2$.
		\end{enumerate}
		\item[] \RuleName{acChoice}:
		The axiom follows directly from the semantics $\sem{\alpha \cup \beta}{} = \sem{\alpha}{} \cup \sem{\beta}{}$ of choice. 
		\item[] \RuleName{acIteration}
		Since $\sem{\repetition{\alpha}}{} = \bigcup_{n \in \naturals} \sem{\alpha^\semvar}{}$, the formula $[ \repetition{\alpha} ] \ac \psi \leftrightarrow [ \alpha^0 ] \ac \psi \wedge [ \alpha \seq \alpha^* ] \ac \psi$ is valid.
		Then \RuleName{acIteration} follows by axiom \RuleName{acComposition}.    
		\item[] \RuleName{acInduction}:
		Let $\pstate{v} \vDash [ \repetition{\alpha} ] \ac \psi$. 
		Then $\pstate{v} \vDash [ \alpha^0 ] \ac \psi \wedge [ \alpha \seq \repetition{\alpha} ] \ac \psi$ by axioms \RuleName{acIteration} and \RuleName{acComposition}.
		Since $\sem{\alpha \seq \repetition{\alpha}}{} = \sem{\repetition{\alpha} \seq \alpha}{}$,%
		\footnote{This fact can be proven by an induction using the fact that sequential composition is associative.} 
		obtain $\pstate{v} \vDash [ \repetition{\alpha} \seq \alpha ] \ac \psi$.
		Hence, $\pstate{v} \vDash [ \repetition{\alpha} \seq \alpha ] \ac \psi$ by \acPost,
		which implies $\pstate{v} \vDash [ \repetition{\alpha} ] \ac [ \alpha ] \ac \psi$ by axiom \RuleName{acComposition}. 
		Finally, $\pstate{v} \vDash [ \repetition{\alpha} ] \acpair{\A, \true} (\psi \rightarrow [ \alpha ] \ac \psi)$ by rule \RuleName{acMono}.
		
		Conversely, let $\pstate{v} \vDash [ \alpha^0 ] \ac \psi \wedge [ \repetition{\alpha} ] \acpair{\A, \true} ( \psi \rightarrow [ \alpha ] \ac \psi )$.
		For proving $\pstate{v} \vDash [ \repetition{\alpha} ] \ac \psi$, let $\computation \in \sem{\repetition{\alpha}}{}$.
		Then $\computation \in \sem{\alpha^\semvar}{}$ for some $\semvar \in \naturals$.
		Now, prove \acCommit and \acPost by induction on $\semvar$:
		If $\semvar = 0$,
		then \acCommit and \acPost hold by $\pstate{v} \vDash [ \alpha^0 ] \ac \psi$.
		If $\semvar > 0$,
		then $\run \in \sem{\alpha^\semvar}{} = \sem{\alpha \seq \alpha^{\semvar-1}}{}$.
		By associativity of sequential composition,
		$\run \in \sem{\alpha^{\semvar-1} \seq \alpha}{} = \botop{\sem{\alpha^{\semvar-1}}{}} \cup \sem{\alpha^{\semvar-1}}{} \continuation \sem{\alpha}{}$.
		In case $\run \in \botop{\sem{\alpha^{\semvar-1}}{}} \subseteq \sem{\alpha^{\semvar-1}}{}$, 
		\acCommit and \acPost hold by the induction hypothesis.
		Otherwise, if $\computation \in \sem{\alpha^{\semvar-1}}{} \continuation \sem{\alpha}{}$, there are $(\pstate{v}, \trace_1, \pstate{u}) \in \sem{\alpha^{\semvar-1}}{}$ and $(\pstate{u}, \trace_2, \pstate{w}) \in \sem{\alpha}{}$ with $\trace = \trace_1 \cdot \trace_2$.
		Further, $(\pstate{u} \cdot \trace_1, \trace_2, \pstate{w} \cdot \trace_1) \in \sem{\alpha}{}$ by coincidence (\rref{cor:history_coincidence}).
		\begin{enumerate}[leftmargin=*]
			\item If $\trace_2 = \epsilon$, \acCommit holds by the induction hypothesis (IH) since $(\pstate{v}, \trace, \bot) \in \sem{\alpha^{\semvar-1}}{}$ by prefix-closedness.
			If $\trace_2 \neq \epsilon$, assume $\assCommit{\pstate{v}}{\trace} \vDash \A$, so $\assPost{\pstate{v}}{\trace_1} \vDash \A$.
			Hence, $\pstate{u} \cdot \trace_1 \vDash \psi$ by the IH,
			and $\pstate{u} \cdot \trace_1 \vDash \psi \rightarrow [ \alpha ] \ac \psi$ by $\pstate{v} \vDash [ \repetition{\alpha} ] \acpair{\A, \true} (\psi \rightarrow [ \alpha ] \ac \psi)$ since $\sem{\alpha^{\semvar-1}}{} \subseteq \sem{\repetition{\alpha}}{}$.
			Therefore, $\pstate{u} \cdot \trace_1 \vDash [ \alpha ] \ac \psi$.
			Since $\assCommit{\pstate{v}}{\trace} \vDash \A$,
			obtain $\assCommit{\pstate{v} \cdot \trace_1}{\trace_2} \vDash \A$,
			so $\assCommit{\pstate{u} \cdot \trace_1}{\trace_2} \vDash \A$ by coincidence (\rref{cor:communicative_coincidence}).
			Thus, $\pstate{u} \cdot \trace_1 \cdot \trace_2 \vDash \Commit$, so $\pstate{v} \cdot \trace \vDash \Commit$ by \rref{cor:communicative_coincidence} again and $\trace = \trace_1 \cdot \trace_2$.

			\item For \acPost, assume $\pstate{w} \neq \bot$ and $\assPost{\pstate{v}}{\trace} \vDash \A$.
			Hence, $\assPost{\pstate{v}}{\trace_1} \vDash \A$ and $\assPost{\pstate{v} \cdot \trace_1}{\trace_2} \vDash \A$. 
			By \rref{cor:communicative_coincidence},
			obtain $\assPost{\pstate{u} \cdot \trace_1}{\trace_2} \vDash \A$.
			As in case \acCommit, obtain $\pstate{u} \cdot \trace_1 \vDash [ \alpha ] \ac \psi$ (using the IH).
			Therefore, $\pstate{w} \cdot \trace_1 \cdot \trace_2 \vDash \psi$,
			so $\pstate{w} \cdot \trace \vDash \psi$ because $\trace = \trace_1 \cdot \trace_2$.
		\end{enumerate}
		
		\item[] \RuleName{acDropComp}:
		Let $\pstate{v} \vDash [ \alpha ] \ac \psi$ and $\computation \in \sem{\alpha \parOp \beta}{}$. 
		Then $(\pstate{v}, \trace \downarrow \alpha, \pstate{w}_\alpha) \in \sem{\alpha}{}$ with $\pstate{w} = \pstate{w}_\alpha \merge \pstate{w}_\beta$ for some $\pstate{w}_\beta \in \botop{\states}$.
		For \acCommit, assume $\assCommit{\pstate{v}}{\trace} \vDash \A$.
		Observe that if $\trace[pre]_\alpha \prec \trace \downarrow \alpha$, then $\trace[pre] \prec \trace$ exists such that $\trace[pre]_\alpha = \trace[pre] \downarrow \alpha$. 
		Thus, $(\pstate{v}, \trace[pre], \bot) \in \sem{\alpha \parOp \beta}{}$ and $(\pstate{v}, \trace[pre]_\alpha, \bot) \in \sem{\alpha}{}$ by prefix-closedness.
		Since $\beta$ does not interfere with $[ \alpha ] \ac \psi$ (\rref{def:noninterference}), 
		obtain $\assCommit{\pstate{v}}{\trace \downarrow \alpha} \vDash \A$ by using \rref{lem:noninterference} for each $\trace[pre] \prec \trace \downarrow \alpha$.
		Hence, $\pstate{v} \cdot (\trace \downarrow \alpha) \vDash \Commit$ by $\pstate{v} \vDash [ \alpha ] \ac \psi$,
		which implies $\pstate{v} \cdot \trace \vDash \Commit$ using \rref{lem:noninterference} again.
		For \acPost, assume $\pstate{w} \neq \bot$ and $\assPost{\pstate{v}}{\trace} \vDash \A$. 
		Then $\pstate[ind=\alpha]{w} \neq \bot$ by the definition of $\merge$ in \rref{sec:semantics}.
		Moreover, $\assPost{\pstate{v}}{\trace \downarrow \alpha} \vDash \A$ by \rref{lem:noninterference} as above. 
		Hence, $\pstate{w}_\alpha \cdot (\trace \downarrow \alpha) \vDash \psi$ by $\pstate{v} \vDash [ \alpha ] \ac \psi$.
		Finally, $\pstate{w} \cdot \trace \vDash \psi$ by \rref{lem:noninterference} again.
		
		\item[] \RuleName{acCom}:
		For all $\run \in \sem{\send{}{}{}}{}$ with $\pstate{w} = \bot$,
		\acCommit holds iff $\pstate{v} \vDash \Commit$.
		For all $\run \in \sem{\send{}{}{}}{}$ with $\pstate{w} \neq \bot$, observe that $\len{\trace} = 1$,
		Hence, for all $\run \in \sem{\send{}{}{}}{}$ with $\pstate{w} \neq \bot$,
		\acCommit holds iff $\pstate{v} \vDash \A$ implies $\pstate{v} \cdot \trace \vDash \Commit$,
		iff, by coincidence (\rref{cor:history_coincidence}),
		$\pstate{v} \vDash \A$ implies $\pstate{w} \cdot \trace \vDash \Commit$.
		Likewise, for all $\run \in \sem{\send{}{}{}}{}$ with $\pstate{w} \neq \bot$,
		\acPost holds, iff, by \rref{cor:history_coincidence}, $\pstate{v} \vDash \A$ and $\pstate{w} \cdot \trace \vDash \A$ imply $\pstate{w} \cdot \trace \vDash \psi$.
		In summary, $[ \send{}{}{} ] \ac \psi \leftrightarrow [ \test{\true} ] \ac [ \send{}{}{} ] [ \test{\true} ] \ac \psi$ is valid because $[ \test{\true} ] \ac \phi \leftrightarrow \Commit \wedge (\A \rightarrow \phi)$ is valid for every~$\phi$ by the axioms \RuleName{acNoCom} and~\RuleName{test}.

		\item[] \RuleName{send}:
		\newcommand{\qstate}{\pstate{u}}%
		Let $\pstate{v} \vDash \fa{\tvar_0} (\tvar_0 = \tvar \cdot \comItem{\ch{}, \rp, \gtime} \rightarrow \psi(\tvar_0))$ and $\run \in \sem{\send{}{}{}}{}$ with $\pstate{w} \neq \bot$.
		For $\qstate = \pstate{v} \subs{\tvar_0}{\trace_0}$ with $\trace_0 = \sem{\historyVar \cdot \comItem{\ch{}, \rp, \gtime}}{\pstate{v}}$,
		obtain $\qstate \vDash \tvar_0 = \historyVar \cdot \comItem{\ch{}, \rp, \gtime}$.
		Therefore, $\qstate \vDash \psi(\tvar_0)$.	
		By substitution (\rref{lem:rec_substitution}), $\qstate \subs{\historyVar}{\sem{\tvar_0}{\qstate}} \vDash \psi(\historyVar)$.
		Since $\tvar_0$ is fresh,
		obtain $\sem{\tvar_0}{\qstate} = \sem{\historyVar \cdot \comItem{\ch{}, \rp, \gtime}}{\qstate} = \sem{\historyVar \cdot \comItem{\ch{}, \rp, \gtime}}{\pstate{v}} = \trace_0$ by coincidence (\rref{lem:expr_coincidence}).
		Therefore, $\qstate \subs{\historyVar}{\trace_0} \vDash \psi(\historyVar)$,
		which implies $\pstate{v} \subs{\historyVar}{\trace_0} \vDash \psi(\historyVar)$ by \rref{lem:expr_coincidence} as $\tvar_0$ is fresh.
		Since $\trace = (\historyVar, \comItem{\ch{}, \sem{\rp}{\pstate{v}}, \statetime{\pstate{v}}})$, 
		obtain $\pstate{v} \cdot \trace = \pstate{v} \subs{\historyVar}{\trace_0}$.
		Finally, $\pstate{w} \cdot \trace \vDash \psi(\historyVar)$ because $\pstate{w} = \pstate{v}$.
		Conversely, let $\pstate{v} \vDash [ \send{}{}{} ] \psi(\historyVar)$.
		For proving $\fa{\tvar_0}$, assume $\pstate{v} \subs{\historyVar_0}{\trace_0} \vDash \historyVar_0 = \historyVar \cdot \comItem{\ch{}, \rp, \gtime}$
		for some trace $\trace_0$.
		Hence, $\sem{\historyVar \cdot \comItem{\ch{}, \rp, \gtime}}{\pstate{v}} = \sem{\historyVar_0}{\pstate{v}} = \trace_0$.
		Since $(\pstate{v}, \trace, \pstate{v}) \in \sem{\send{}{}{}}{}$ with $\trace = (\historyVar, \comItem{\ch{}, \sem{\rp}{\pstate{v}}, \statetime{\pstate{v}}})$,
		obtain $\pstate{v} \cdot \trace \vDash \psi(\historyVar)$ by $\pstate{v} \vDash [ \send{}{}{} ] \psi(\historyVar)$.
		Finally, $\pstate{v} \subs{\historyVar_0}{\trace_0} \vDash \psi(\historyVar_0)$ by \rref{lem:rec_substitution}
		because $\pstate{v} \cdot \trace = \pstate{v} \subs{\historyVar}{\sem{\historyVar \cdot \comItem{\ch{}, \rp, \gtime}}{\pstate{v}}} = \pstate{v} \subs{\tvar}{\trace_0}$.

		\item[] \RuleName{comDual}:
		First, observe that $\sem{\receive{}{}{}}{} = \sem{x \ceq *}{} \continuation \sem{\send{}{}{x}}{}$,
		which needs $x\not\equiv\gtime$ as $\gtime$ is free in $\receive{}{}{}$ and $\send{}{}{}$.

		Now, let $\pstate{v} \vDash [ \receive{}{}{} ] \ac \psi$ and let $(\pstate{v}, \epsilon, \pstate{u}) \in \sem{x \ceq *}{}$ %
		with $\pstate{u} \neq \bot$.
		To prove $\pstate{u} \vDash [ \send{}{}{x} ] \ac \psi$,
		let $(\pstate{u}, \trace, \pstate{w}) \in \sem{\send{}{}{x}}{}$.
		Then $\run \in \sem{x \ceq *}{} \continuation \sem{\send{}{}{x}}{} = \sem{\receive{}{}{}}{}$.
		For \acCommit, assume $\assCommit{\pstate{u}}{\trace} \vDash \A$.
		Since $[ \receive{}{}{} ] \ac \psi$ is well-formed, 
		$(\chi, x \ceq *)$ is communicatively well-formed for $\chi \in \{\A,\Commit\}$.
		Hence, $\assCommit{\pstate{v}}{\trace} \vDash \A$ by coincidence (\rref{cor:communicative_coincidence}),
		so $\pstate{v} \cdot \trace \vDash \Commit$ by $\pstate{v} \vDash [ \receive{}{}{} ] \ac \psi$.
		By \rref{cor:communicative_coincidence}, $\pstate{u} \cdot \trace \vDash \Commit$.
		For \acPost, assume $\pstate{w} \neq \bot$ and $\assPost{\pstate{u}}{\trace} \vDash \A$.
		As for \acCommit, $\assPost{\pstate{v}}{\trace} \vDash \A$.
		By $\pstate{v} \vDash [ \receive{}{}{} ] \ac \psi$ again, $\pstate{w} \cdot \trace \vDash \psi$.

		Conversely, let $\pstate{v} \vDash [ x \ceq * ] [ \send{}{}{x} ] \ac \psi$.
		To prove $\pstate{v} \vDash [ \receive{}{}{} ] \ac \psi$,
		let $\run \in \sem{\receive{}{}{}}{}$.
		Then $(\pstate{v}, \epsilon, \pstate{u}) \in \sem{x \ceq *}{}$ and $(\pstate{u}, \trace, \pstate{w}) \in \sem{\send{}{}{x}}{}$ exist using that $\sem{\receive{}{}{}}{} = \sem{x \ceq *}{} \continuation \sem{\send{}{}{x}}{}$.
		Hence, $\pstate{u} \vDash [ \send{}{}{x} ] \ac \psi$.
		For \acCommit, assume $\assCommit{\pstate{v}}{\trace} \vDash \A$.
		By \rref{cor:communicative_coincidence}, 
		$\assCommit{\pstate{u}}{\trace} \vDash \A$.
		Hence, $\pstate{u} \cdot \trace \vDash \Commit$ by $\pstate{u} \vDash [ \send{}{}{x} ] \ac \psi$, 
		so $\pstate{v} \cdot \trace \vDash \Commit$ by \rref{cor:communicative_coincidence}.
		For \acPost, assume $\pstate{w} \neq \bot$ and $\assPost{\pstate{v}}{\trace} \vDash \A$.
		By \rref{cor:communicative_coincidence}, $\assPost{\pstate{u}}{\trace} \vDash \A$.
		Finally, $\pstate{w} \cdot \trace \vDash \psi$ by $\pstate{u} \vDash [ \send{}{}{x} ] \ac \psi$.

		\item[] \RuleName{acG}:
		If $\Commit \wedge \psi$ is valid,
		\acCommit and \acPost for $\pstate{v} \vDash [ \alpha ] \ac \psi$ hold in any state $\pstate{v}$.

		\item[] \RuleName{hExtension}:
		Let $\pstate{v} \vDash \tvar_0 = \getrec{\alpha} \downarrow \cset$ and $\run \in \sem{\alpha}{}$.
		Since $\getrec{\alpha}$ is $\alpha$'s recorder,
		$\trace(\tvar_0) = \epsilon$.
		For \acCommit, $\pstate{v} \cdot \trace \vDash \getrec{\alpha} \downarrow \cset \succeq \tvar_0$
		because $(\pstate{v} \cdot \trace)(\getrec{\alpha}) \downarrow \cset
			\succeq \pstate{v}(\getrec{\alpha}) \downarrow \cset
			= \pstate{v}(\tvar_0)
			= (\pstate{v} \cdot \trace)(\tvar_0)$.
		For \acPost, assume $\pstate{w} \neq \bot$.
		By the bound effect property (\rref{lem:bound_effect}), 
		$\pstate{v} \vDash \tvar_0 = \getrec{\alpha} \downarrow \cset$ implies $\pstate{w} \vDash \tvar_0 = \getrec{\alpha} \downarrow \cset$.
		Then $(\pstate{w} \cdot \trace)(\getrec{\alpha}) \downarrow \cset
			\succeq \pstate{w}(\tvar) \downarrow \cset
			= \pstate{w}(\tvar_0)
			= (\pstate{w} \cdot \trace)(\tvar_0)$
		such that $\pstate{w} \cdot \trace \vDash \getrec{\alpha} \downarrow \cset \succeq \tvar_0$.
		
		\item[] \RuleName{acLiveParCommit}:
		At a high level, by $\nojunkQ{\alpha\parOp\beta}{\tvar,\tvar_0}$,
		there is a communication history $\tvar$
		without non-causal communication,
		which is observable from the subprograms by $\hdia{\gamma} \acpair{\true}$,
		so that there is a run for $\alpha\parOp\beta$.
		Let $\pstate[alt]{v} \vDash \nojunkQ{\alpha\parOp\beta}{\tvar,\tvar_0} \phi$,
		where $\phi \equiv
			\hdia{\alpha} \acpair{\true}
			\wedge \hdia{\beta} \acpair{\true}
			\wedge \Commit(\tvar_0\cdot\tvar)$.
		Then $\pstate{v} \vDash \phi$ for some $\pstate{v}$ with $\pstate{v} = \pstate[alt]{v}$ on $\{\tvar, \tvar_0\}^\complement$,
		and $\pstate{v}(\tvar) \downarrow (\alpha\parOp\beta) = \pstate{v}(\tvar)$, and $\pstate{v}(\tvar_0) = \pstate [alt]{v}(\parrec)$.
		Further, let $\trace = (\parrec, \pstate{v}(\tvar))$ be the recorded trace of $\alpha\parOp\beta$,
		and observe that $\trace \downarrow (\alpha\parOp\beta)= \trace$.
		Since $\hdia{\gamma} \acpair{\Commit} \equiv \fa{\getrec{\gamma}{=}\epsilon} \langle \gamma \rangle \acpair{\true, \getrec{\gamma} = \tvar \downarrow \gamma \wedge \Commit} \false$, 
		by $\pstate{v} \vDash \phi$,
		there is a run $(\pstate{v}, \trace_\gamma, \pstate{w}_\gamma) \in \sem{\gamma}{}$ for each $\gamma\in\{\alpha,\beta\}$
		such that $\pstate{v} \cdot \trace_\gamma \vDash \getrec{\gamma} = \tvar \downarrow \gamma$ by \acCommit.
		Hence, $\trace_\gamma(\getrec{\gamma}) = \pstate{v}(\tvar) \downarrow \gamma$,
		so $\trace_\gamma = \trace \downarrow \gamma$,
		because $\parrec = \getrec{\alpha} = \getrec{\beta}$ as $\parrec$ is the unique recorder of $\alpha\parOp\beta$.
		Moreover, $\pstate{v} = \pstate[alt]{v}$ on $\{\tvar, \tvar_0\}^\complement \supseteq \SFV(\gamma)$.
		Hence, $(\pstate[alt]{v}, \trace_\gamma, \pstate[alt]{w}_\gamma) \in \sem{\gamma}{}$ by coincidence (\rref{lem:program_coincidence}).
		In summary, $(\pstate[alt]{v}, \trace, \bot) \in \sem{\alpha\parOp\beta}{}$.
		Since $\pstate{v} \vDash \Commit(\tvar_0\cdot\tvar)$,
		by substitution (\rref{lem:rec_substitution}), $\pstate{v} \subs{\parrec}{\trace_0} \vDash \Commit(\parrec)$,
		where $\trace_0 = \pstate{v}(\tvar_0) \cdot \pstate{v}(\tvar)$.
		By coincidence (\rref{lem:expr_coincidence}),
		$\pstate[alt]{v} \subs{\parrec}{\trace_0} \vDash \Commit(\parrec)$,
		so $\pstate[alt]{v} \cdot \trace \vDash \Commit(\parrec)$ because $\trace_0 = \pstate[alt]{v}(\parrec) \cdot \trace(\parrec)$.
		Finally, $\pstate[alt]{v} \vDash \langle \alpha\parOp\beta \rangle \acpair{\true, \Commit(\parrec)} \false$
		by \acCommit.

		Observe that all steps can be reversed,
		so that the axiom becomes an equivalence.
		This is not necessary for deductions, 
		but used in the completeness proof to transfer validity.

		\item[] \RuleName{acLivePar}:
		As for \RuleName{acLiveParCommit},
		the premise guarantees existence of a communication trace for $\alpha\parOp\beta$.
		Further, the sequential reachability of a state that satisfies $\psi$ by the premise 
		guarantees parallel reachability of this state,
		because the test $\test{\gtvec{=}\gtvec_\alpha}$ guarantees that the subprograms agree on the final time
		and the subprograms doe not share state (free and bound variables).
		The detailed proof mostly deals with aligning the runs of the subprograms with the initial state
		using that parallel CHPs do not share state:

		Let $\pstate[alt]{v} \vDash \nojunkQ{\alpha\parOp\beta}{\tvar,\tvar_0} \phi$,
		where $\phi \equiv
		\langle \gtvec_0 \ceq \gtvec \rangle 
		\hdia{\alpha} \langle \gtvec_\alpha \ceq \gtvec \seq \gtvec \ceq \gtvec_0 \rangle
		\hdia{\beta}
		\langle \test{\gtvec{=}\gtvec_\alpha} \rangle
		\psi(\tvar_0 \cdot \tvar)$.
		Hence, $\pstate{v} \vDash \phi$ for some~$\pstate{v}$ with $\pstate{v} = \pstate[alt]{v}$ on $\{\tvar, \tvar_0\}^\complement$,
		and $\pstate{v}(\tvar) \downarrow (\alpha\parOp\beta) = \pstate{v}(\tvar)$ and $\pstate{v}(\tvar_0) = \pstate[alt]{v}(\parrec)$.
		Further, let $\trace = (\parrec, \pstate{v}(\tvar))$ be the recorded trace of $\alpha\parOp\beta$,
		and observe that $\trace \downarrow (\alpha\parOp\beta) = \trace$ as~$\pstate{v}(\tvar)$.
		By $\hdia{\gamma}$ and \acPost,
		there are runs $(\pstate{v}_\gamma, \trace_\gamma, \pstate{w}_\gamma) \in \sem{\gamma}{}$ with $\pstate{w}_\gamma \neq \bot$ for each $\gamma\in\{\alpha,\beta\}$
		such that $\pstate{w}_\beta \cdot \trace_\beta \vDash \langle \test{\gtvec{=}\gtvec_\alpha} \rangle \psi(\tvar_0 \cdot \tvar)$.
		Since $\hdia{\gamma} \psi \equiv \fa{\getrec{\gamma}{=}\epsilon} \langle \gamma \rangle (\getrec{\gamma} = \tvar \downarrow \gamma \wedge \psi)$, 
		obtain $\trace_\gamma(\getrec{\gamma}) = \pstate{v}_\gamma(\tvar) \downarrow \gamma$ as for \RuleName{acLiveParCommit},
		so $\trace_\gamma = \trace \downarrow \gamma$ since $\pstate{v}_\gamma(\tvar) = \pstate{v}(\tvar)$ as $\tvar$ is fresh.
		The test $\test{\gtvec{=}\gtvec_\alpha}$ ensures that $\eqgtime{\pstate{w}_\alpha}{\pstate{w}_\beta}$.
		Recall that $\SV(\cdot) = \SBV(\cdot) \cup \SFV(\cdot)$.
		Then observe that $\pstate{v}_\alpha = \pstate[alt]{v}$ on $\SV(\alpha) \setminus \{\parrec\} \supseteq \SFV(\alpha)$ as~$\gtvec_0$ is fresh.
		By the bound effect property (\rref{lem:bound_effect}), 
		$\pstate{v}_\beta = \pstate[alt]{v}$ on $\SV(\beta) \setminus \{\gtvec,\parrec\}$ 
		because $\gtvec_0, \gtvec_\alpha$ are fresh and
		parallel programs do not share state (\rref{def:syntax_chps}),
		\iest $\SBV(\alpha) \cap \SV(\beta) \subseteq \{\gtvec,\parrec\}$.
		Since~$\gtvec$ is restored to~$\gtvec_0$ after running $\alpha$,
		even $\pstate{v}_\beta = \pstate[alt]{v}$ on $\SV(\beta) \setminus \{\parrec\} \supseteq \SFV(\beta)$.
		By coincidence (\rref{lem:program_coincidence}),
		obtain $(\pstate[alt]{v}, \trace_\gamma, \pstate[alt]{w}_\gamma) \in \sem{\gamma}{}$ such that $\pstate[alt]{w}_\gamma = \pstate{w}_\gamma$ on $\SV(\gamma) \setminus \{\parrec\}$.
		Hence, $\eqgtime{\pstate[alt]{w}_\alpha}{\pstate[alt]{w}_\beta}$.
		In summary, $(\pstate[alt]{v}, \trace, \pstate[alt]{w}_\alpha \merge \pstate[alt]{w}_\beta) \in \sem{\alpha\parOp\beta}{}$.

		Since in the premise $\beta$ is executed in $\alpha$'s postcondition,
		the state $\pstate{w}_\beta$ also contains the computation of $\alpha$.
		That is, $\pstate{w}_\beta = \pstate[alt]{w}_\alpha \merge \pstate[alt]{w}_\beta$ on $\varset = (\SBV(\alpha) \cup \SBV(\beta)) \setminus \{\parrec\}$
		because $\pstate[alt]{w}_\gamma = \pstate{w}_\gamma$ on $\SV(\gamma) \setminus \{\parrec\}$
		and $\pstate{w}_\beta = \pstate{w}_\alpha$ on $\SBV(\alpha)$.
		Further, $\pstate{w}_\beta = \pstate[alt]{v} = \pstate[alt]{w}_\alpha \merge \pstate[alt]{w}_\beta$ on 
		$\varset^\complement \setminus \{ \gtvec_0, \gtvec_\alpha, \parrec \}$
		by \rref{lem:bound_effect}.
		Hence, $\pstate{w}_\beta = \pstate[alt]{w}_\alpha \merge \pstate[alt]{w}_\beta$ on $\SFV(\psi(\parrec))\setminus \{\parrec\}$ as $\gtvec_0, \gtvec_\alpha$ are fresh.
		Since $\pstate{w}_\beta \cdot \trace_\beta \vDash \langle \test{\gtvec{=}\gtvec_\alpha} \rangle \psi(\tvar_0 \cdot \tvar)$,
		obtain $\pstate{w}_\beta \vDash \psi(\tvar_0 \cdot \tvar)$
		by \acPost.
		By substitution (\rref{lem:rec_substitution}),
		$(\pstate{w}_\beta) \subs{\parrec}{\trace_0} \vDash \psi(\parrec)$,
		where $\trace_0 = \pstate{w}_\beta(\tvar_0) \cdot \pstate{w}_\beta(\tvar)$.
		By coincidence (\rref{lem:expr_coincidence}), $(\pstate[alt]{w}_\alpha \merge \pstate[alt]{w}_\beta) \subs{\parrec}{\trace_0} \vDash \psi(\parrec)$.
		Since
		\begin{equation*}
			((\pstate[alt]{w}_\alpha \merge \pstate[alt]{w}_\beta) \cdot \trace)(\parrec)
			= \pstate[alt]{v}(\parrec) \cdot \trace(\parrec)
			= \pstate{v}(\tvar_0) \cdot \pstate{v}(\tvar)
			= \pstate{w}_\beta(\tvar_0) \cdot \pstate{w}_\beta(\tvar)
			= \trace_0
			\text{,}
		\end{equation*}
		obtain $(\pstate[alt]{w}_\alpha \merge \pstate[alt]{w}_\beta) \cdot \trace \vDash \psi(\parrec)$.
		In summary, $\pstate[alt]{v} \vDash \langle \alpha\parOp\beta \rangle \psi(\parrec)$
		by \acPost.

		Since all steps can be reversed,
		the axiom becomes an equivalence,
		which enables to transfer validity in the completeness proof.

		\item[] \RuleName{acConvergence}:	
		Let $\pstate{v} \vDash \A \wedge [ \repetition{\alpha} ] \acpair{\A, \true} \fa{v{>}0} (
			\varphi(v) \rightarrow \langle \alpha \rangle \acpair{\A, \false} \varphi(v-1)
		)$.
		Then prove $\pstate{v} \vDash \varphi(v) \rightarrow \langle \repetition{\alpha} \rangle \acpair{\A, \false} \ex{v{\le}0} \varphi(0)$ for all $\semConst = \pstate{v}(v)$
		by a well-founded induction on $\semConst$ for all states $\pstate{v}$.
		This proves $\pstate{v} \vDash \fa{v} (\varphi(v) \rightarrow \langle \repetition{\alpha} \rangle \acpair{\A, \false} \ex{v{\le}0} \varphi(0))$
		because $v$ is neither free nor bound in $(\A, \repetition{\alpha})$.
		\begin{enumerate}[leftmargin=*]
			\item If $\semConst \le 0$,
			let $\pstate{v} \vDash  \varphi(v)$.
			Since $\assPost{\pstate{v}}{\epsilon} \vDash \A$ as $\pstate{v} \vDash \A$,
			obtain $(\pstate{v}, \epsilon, \pstate{v}) \in \sem{\A, \repetition{\alpha}}{}_\preceq$.
			Hence, $\pstate{v} \cdot \epsilon \vDash \ex{v{\le}0} \varphi(v)$ as $\pstate{v} \vDash v \le 0$,
			so $\pstate{v} \vDash \langle \repetition{\alpha} \rangle \acpair{\A, \false} \ex{v{\le}0} \varphi(v)$.
		
			\item If $\semConst > 0$,
			let $\pstate{v} \vDash \varphi(v)$.
			Since $(\pstate{v}, \epsilon, \pstate{v}) \in \sem{\A, \repetition{\alpha}}{}_\preceq$ (see \rref{eq:semantics_action} for $\sem{\A, \repetition{\alpha}}{}_\preceq)$,
			obtain $\pstate{v} \vDash v > 0 \wedge \varphi(v) \rightarrow \langle \alpha \rangle \acpair{\A, \false} \varphi(v-1)$ by the premise.
			Hence, there is a $(\pstate{v}, \trace_1, \pstate{u}) \in \sem{\A, \alpha}{}_\preceq$ such that $\pstate{u} \cdot \trace_1 \vDash \varphi(v-1)$ as \mbox{$\pstate{v} \vDash v > 0$}.
			Since $v$ is not bound by $\repetition{\alpha}$,
			obtain $\pstate{v}(v) = (\pstate{u} \cdot \trace_1)(v)$,
			so that $(\pstate{u} \cdot \trace_1) \subs{v}{d-1} \vDash \varphi(v)$.
			Hence, by the induction hypothesis,
			there is $(\pstate{u} \cdot \trace_1, \trace_2, \pstate{w}) \in \sem{\A, \repetition{\alpha}}{}_\preceq$ 
			so that $\pstate{w} \cdot \trace_2 \vDash \ex{v{\le}0} \varphi(v)$.
			Since $(\pstate{v}, \trace_1, \pstate{u}) \in \sem{\A, \alpha}{}_\preceq$ and $(\pstate{u} \cdot \trace_1, \trace_2, \pstate{w}) \in \sem{\A, \repetition{\alpha}}{}_\preceq$,
			obtain $(\pstate{v}, \trace_1 \cdot \trace_2, \pstate[alt]{w}) \in \sem{\A, \repetition{\alpha}}{}_\preceq$
			with $\pstate{w} = \pstate[alt]{w} \cdot \trace_1$ by \rref{cor:action_composition}.
			Hence, $\pstate[alt]{w} \cdot \trace_1 \cdot \trace_2 \vDash \ex{v{\le}0} \varphi(v)$,
			so $\pstate{v} \vDash \langle \repetition{\alpha} \rangle \acpair{\A, \false} \ex{v{\le}0} \varphi(v)$.
			The induction is well-founded as $v$ decreased at least by one.
			\qedhere
		\end{enumerate}
	\end{itemize}
\end{proof}

\begin{proof}
	[Proof of \rref{cor:derived}]
	The axioms and proof rules in \rref{fig:derived} are sound as they derive in \dLCHP's calculus (\rref{fig:calculus}).
	In the following prooftrees, propositional reasoning is not explicitly mentioned.
	\begin{enumerate}
		\item Rule \RuleName{acMono} obtains monotonictiy of the promises from \RuleName{acModalMP} using \RuleName{acG} following standard arguments,
		and likewise uses \RuleName{Aweak} and \RuleName{acG} for antitonicity of the assumption:%
		\footnote{%
			An alternative proof \cite{Brieger2023} additionally requires a contextual equivalence rule.
		}

		\begin{small}
			\begin{prooftree}
					\Axiom{$\A_2 \rightarrow \A_1$}

					\UnaryInf{$\Commit_2 \wedge \A_2 \rightarrow \A_1$}

					\RuleNameLeft{acG}{}
					\UnaryInf{$[ \alpha ] \acpair{\A_1, \Commit_1} \psi_1 \rightarrow [ \alpha ] \acpair{\true, \Commit_2 \wedge \A_2 \rightarrow \A_1} \true$}
		
						\Axiom{$\Commit_1 \rightarrow \Commit_2$}
		
						\Axiom{$\psi_1 \rightarrow \psi_2$}
		
					\RuleNameRight{MP}{}
					\BinaryInf{$(\Commit_1 \rightarrow \Commit_2) \wedge (\psi_1 \rightarrow \psi_2)$}
		
					\RuleNameRight{acG}{}
					\UnaryInf{$[ \alpha ] \acpair{\A_1, \Commit_1 \rightarrow \Commit_2} (\psi_1 \rightarrow \psi_2)$}
		
					\RuleNameRight{acModalMP}{}
					\UnaryInf{$[ \alpha ] \acpair{\A_1, \Commit_1} \psi_1 \rightarrow [ \alpha ] \acpair{\A_1, \Commit_2} \psi_2$}

				\RuleNameRight{Aweak}{}
				\BinaryInf{$[ \alpha ] \acpair{\A_1, \Commit_1} \psi_1 \rightarrow [ \alpha ] \acpair{\A_2, \Commit_2} \psi_2$}
			\end{prooftree}
		\end{small}
	
		\item Axiom \RuleName{acBoxesDist} derives from ac-modal modus ponens \RuleName{acModalMP} and ac-monotonictiy~\RuleName{acMono} applying a standard modal logic argument to all promises~\cite{Brieger2023}.
	
		\item For \RuleName{diasDual}, chain the axioms \RuleName{dbDual}, \RuleName{boxesDual}, and \RuleName{acdbDual} (\cf \rref{fig:axioms_modal_dualities}).
		The negation $\neg\false$ required by \RuleName{acdbDual} as commitment instead of $\true$ obtained by~\RuleName{boxesDual} can be introduced by~\RuleName{acMono}.
	
		\item The axioms \RuleName{acDiaNoCom} and \RuleName{acArrival}, and \RuleName{acSplitDia} 
		and the rule \RuleName{acDiaMono}
		derive from their ac-box counterpart \RuleName{acNoCom}, \RuleName{acInduction}, \RuleName{acBoxesDist}, and \RuleName{acMono}, respectively, by ac-duality \RuleName{acdbDual}. 

		\item The rule \RuleName{acInvariant} derives from the induction axiom \RuleName{acInduction} as follows:
		\begin{small}
			\begin{prooftree}
					\Axiom{$*$}
				
					\UnaryInf{$\Commit \wedge \psi \rightarrow (\Commit \wedge (\A \rightarrow (\true \rightarrow \psi)))$}
			
					\RuleNameLeft{acNoCom, test}{}
					\UnaryInf{$\Commit \wedge \psi \rightarrow [ \alpha^0 ] \ac \psi$}
				
					\Axiom{$\psi \rightarrow [ \alpha ] \ac \psi$}

					\UnaryInf{$\true \wedge (\psi \rightarrow [ \alpha ] \ac \psi)$}
			
					\RuleNameRight{acG}{}
					\UnaryInf{$\Commit \wedge \psi \rightarrow [ \repetition{\alpha} ] \acpair{\A, \true} (\psi \rightarrow [ \alpha ] \ac \psi)$}
			
				\RuleNameRight{MP}{}
				\BinaryInf{$\Commit \wedge \psi \rightarrow [ \alpha^0 ] \ac \psi \wedge [ \repetition{\alpha} ] \acpair{\A, \true} (\psi \rightarrow [ \alpha ] \ac \psi)$}
			
				\RuleNameRight{acInduction}{}
				\UnaryInf{$\Commit \wedge \psi \rightarrow [ \repetition{\alpha} ] \ac \psi$}
			\end{prooftree}
		\end{small}
	\end{enumerate}
\end{proof}

\section{Definability of $\reals$-Gödel Encodings}
\label{app:real_goedel}

Both completeness results rely on $\reals$-Gödel encodings \cite{DBLP:journals/jar/Platzer08} being definable in FOD (\rref{lem:real_goedel})
and use the FOD encoding of natural numbers (\rref{lem:isnat}).
Based on these results, 
rounding in $\reals$ (\rref{lem:rounding}) and slicing of traces (\rref{lem:slicing}) 
are definable.

\begin{lemma}
	[Definability of $\naturals$ {\cite[Theorem 2]{DBLP:journals/jar/Platzer08}}]
	\label{lem:isnat}
	The formula $\isNat{x}$, 
	which holds iff~the real variable $n$ is a natural number, 
	is definable in FOD.
	For a formula $\varphi$, 
	define $\fa{n{:}\naturals} \varphi \equiv \fa{n} (\isNat{n} \rightarrow \varphi)$
	and $\ex{n{:}\naturals} \varphi \equiv \ex{n} (\isNat{n} \wedge \varphi)$.
\end{lemma}

\begin{lemma}
	[$\reals$-Gödel encoding {\cite[Lemma 4]{DBLP:journals/jar/Platzer08}}]
	\label{lem:real_goedel}
	Let $Z$, $n$ $j$, and $x$ be real variables.
	Then the formula $\rawgoedelat{Z, n, j, x}$,
	which holds iff~$Z$ represents a Gödel encoding of a sequence of $n$ real numbers
	such that $x$ is at position $j$ is definable in FOD.
	For a formula $\phi(x)$,
	write $\phi(\goedelat{Z}{n}{j})$ to abbreviate $\ex{x} ( \rawgoedelat{Z, n, j, x} \wedge \phi(x))$.
\end{lemma}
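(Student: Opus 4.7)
The plan is to adapt Gödel's classical $\beta$-function construction from sequences of natural numbers to sequences of real numbers, combining first-order real arithmetic with FOD's ability to express differential equation properties. The starting point is that natural numbers are already FOD-definable via $\isNat{\cdot}$ (Lemma~\ref{lem:isnat}), so quantification over natural number parameters and standard arithmetic tricks such as the Chinese Remainder Theorem are available at the level of the first-order fragment. This already yields Gödel's $\beta$-function for sequences of naturals: $\beta(a,b,i) = a \bmod (1+(i{+}1)b)$, which is definable in first-order real arithmetic restricted to $\isNat{\cdot}$.

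For a sequence $z_1, \dots, z_n$ of \emph{real} numbers, my plan is to exploit Lagrange interpolation: there is a unique polynomial $P(t) = \sum_{k=0}^{n-1} c_k t^k$ with $P(j) = z_j$ for $j=1,\dots,n$, and its coefficients $c_0, \dots, c_{n-1}$ are uniquely determined by the sequence via a Vandermonde system with rational entries, which is expressible in first-order real arithmetic. The sequence-extraction predicate can then be recast as ``$z$ is the value at time $j$ of the polynomial trajectory encoded by $Z$''. Using FOD's diamond modality on a differential equation $y' = p$ with the appropriate polynomial right-hand side, one can directly assert the existence of an ODE run from an initial state determined by the coefficient vector to the target value $z$ at time $j$, yielding a formula of the form $\langle \evolution{g' = 1, y' = \cdots}{g \le j} \rangle (g = j \wedge y = z)$ modulo bookkeeping.

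The definition of $\rawgoedelat{Z, n, j, z}$ would then combine these pieces: $\isNat{n}$, $\isNat{j}$, $1 \le j \le n$, existence of a coefficient vector $(c_0, \dots, c_{n-1})$ whose interpolation matches a sequence encoded by $Z$, and the ODE-expressed condition relating $z$ to the coefficients at time $j$. Quantification over the $n$-tuple of coefficients $(c_0, \dots, c_{n-1})$ itself can be compressed via iterated Cantor-style pairings of reals (which are first-order real-arithmetic definable once $\isNat{\cdot}$ gives integer parts), or alternatively hidden inside the ODE itself by treating the coefficient vector as an auxiliary state vector of the differential equation.

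The main obstacle is the last step: compressing an arbitrary finite tuple of real numbers into one real in a way that FOD can recover componentwise. Natural-number Gödel encoding alone is insufficient because real coefficients carry uncountably much information, and a naive Lagrange approach still leaves $n$ real parameters to pack. I expect the cleanest resolution to use a real-to-real pairing $\langle x, y \rangle$ that interleaves integer and fractional parts (definable using $\isNat{\cdot}$ and multiplication/subtraction), iterated $n$ times to pack the coefficient vector into $Z$; the correctness of unpacking then follows by induction on $n$ in the metatheory, with $\rawgoedelat{\cdot}$ itself remaining a single, fixed FOD formula whose existential structure handles all $n$ uniformly.
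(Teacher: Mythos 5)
This paper does not prove the lemma; it imports it verbatim from the literature (Platzer, JAR 2008, Lemma 4), so the comparison is against that known construction. Your proposal assembles the right preliminaries --- $\isNat{\cdot}$ is FOD-definable, hence full integer arithmetic and the classical G\"odel $\beta$-function on naturals are available --- but it has a genuine gap exactly at the step you flag as the ``main obstacle,'' and neither of your two candidate mechanisms closes it. The difficulty is uniformity in the length $n$, which is a \emph{variable} of the predicate $\rawgoedelat{Z,n,j,z}$. The Lagrange route requires evaluating a polynomial of degree $n-1$ for unbounded $n$; that is not a first-order term, and hiding it in an ODE does not help, since a degree-$(n-1)$ polynomial is characterized by an $n$-th order differential equation (or an $n$-dimensional first-order system), which again cannot be written as one fixed FOD formula when $n$ varies. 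The iterated real pairing has the same defect: to extract the $j$-th component of an $n$-fold nested pair you must assert the existence of the chain of $j$ intermediate reals obtained by repeated projection, i.e.\ you must quantify over a finite real sequence of unbounded length --- which is precisely the encoding the lemma is supposed to provide. Both routes are therefore circular or non-uniform as stated.

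The standard resolution, and the one used in the cited proof, avoids iterating any operation on reals. One defines, purely with integer arithmetic (floor, remainder, and natural-number exponentiation, all available once $\isNat{\cdot}$ and integer G\"odel encoding are in place), the function extracting the $i$-th digit of a real number $a$ for an arbitrary integer position $i$ (positions indexing both the integer and the fractional expansion). Then $\rawgoedelat{Z,n,j,z}$ is the single fixed formula stating that for every position $i$, the $i$-th digit of $z$ equals the digit of $Z$ at position $n\cdot i + j$ (plus bookkeeping for signs and the choice of digit convention at boundaries). Uniformity in $n$ is free because $n$ enters only through integer arithmetic on digit \emph{indices}, never through an $n$-fold repetition of a real-valued operation. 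If you replace your Lagrange/iterated-pairing step with this digit-interleaving definition, the rest of your outline (reduce to $\isNat{\cdot}$, then to integer G\"odel encoding) goes through; differential equations are needed only for $\isNat{\cdot}$ itself, not for any evaluation step.
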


\begin{lemma}
	[Rounding]
	\label{lem:rounding}
	Rounding $\lfloor \re \rfloor$ of a real number $\re$ is definable in FOD.
\end{lemma}
\vspace{-.7cm}
\begin{proof}
	For a formula $\phi(x)$,
	define $\phi(\lfloor \re \rfloor) \equiv \ex{n{:}\naturals} \big(
		k{-}1 {<} n {\le} k \wedge \phi(n) 
	\big)$,
	where $\ex{n{:}\naturals}$ is definable in FOD (\rref{lem:isnat}).
\end{proof}

\begin{lemma}
	[Slicing]
	\label{lem:slicing}
	Slicing $\at{\te}{0,y}$, 
	which denotes the subtrace of $\te$ from the $0$-th (inclusive) up to the $\lfloor y \rfloor$-th item (exclusive),
	is definable in \comFOD.
\end{lemma}
\vspace{-.7cm}
\begin{proof}
	For a formula $\phi(x)$,
	define:
	\begin{equation*}
		\phi(\at{\te}{0, y}) 
		\equiv \ex{\tvar} \big(
			\len{\tvar} = \lfloor y \rfloor \wedge
			\fa{0 {\le} k {<} \lfloor y \rfloor}
				\at{\tvar}{k} = \at{\te}{k}
			\wedge \phi(\tvar)
		\big)
		\qedhere
	\end{equation*}
\end{proof}

\section{Verification Conditions for Parallelism}
\label{app:reach_states}

\newcommand{\interORfin}[1]{\pstate{u}_{#1}^{\!\varioSymb}}

\newcommand{\interstate}[1]{\pstate{u}_{#1}^{\interSymb}}
\newcommand{\finstate}[1]{\pstate{u}_{#1}^{\finSymb}}

\newcommand{\supdate}[2]{\fa{#1{=}#2}}

\newcommand{\progTraceFml}[1]{%
	\ex{\tvar_e} (
		#1 \wedge \tvar_e \downarrow (\comvariation{}{}) = \tvar \ominus \tvar_0
	)
}

\newcommand{\strongestCommitEq}{\tvar_v = \tvar_0 \cdot \tvar_e}
\newcommand{\strongestCommitBody}{
	\varphi \wedge
	\supdate{\tvar_0}{\tvar}
	\langle \alpha \rangle \acpair{\A, \progTraceFml{\strongestCommitEq}} \false
}

\newcommand{\strongestPostEq}{\varvec[alt] = \varvec \subs{\tvar}{\tvar_0 \cdot \tvar_e}}
\newcommand{\strongestPostBody}{
	\varphi \wedge
	\supdate{\tvar_0}{\tvar} 
		\langle \alpha \rangle \acpair{\A, \false}
			\progTraceFml{\strongestPostEq}
}

This appendix contains proofs for \rref{sec:par-safety-lemmas},
which introduces strongest promises as verification conditions for complete safety reasoning about parallel CHPs.
By \rref{lem:state_variations_comfod}, the strongest promises express state variations (\rref{def:state_variations}).
Hence, correctness (\rref{lem:reachable_states_correct}) and decomposability (\rref{lem:ac_biggest_promise_split}) can be proven semantically via \rref{def:state_variations}.

For ease of presentation,
in this appendix,
$\varvec = \varset$ defines the variable vector $\varvec$ from the variable set $\varset\subseteq\V$
by fixing some order for the variables.
Recall that $\trace \downarrow \gamma$ abbreviates $\trace \downarrow \SCN(\gamma)$.
In a projection $\trace \downarrow \cset(\alpha)$,
we identify $\alpha$ with $\SCN(\alpha)$,
\eg $\trace \downarrow (\alpha \cup \cset^\complement)$ abbreviates $\trace \downarrow (\SCN(\alpha) \cup \cset^\complement)$.

\begin{proof}
	[Proof of \rref{lem:state_variations_comfod}]
	Let $\varvec = \SFV(\varphi, \A) \cup \SV(\alpha)$
	and let $\tvar = \getrec{\alpha}$ be the recorder of $\alpha$,
	and~$\varvec[alt]$,~$\tvar_v$, and~$\tvar_\alpha$ are fresh.
	Further, let $\cset_0 = \SCNX{\{\tvar\}}(\varphi, \A) \cup \SCN(\alpha) \cup \cset^\complement$.
	Then the formulas $\strgCommit{\cset}{\varphi}{\A}{\alpha}$ and  $\strgPost{\cset}{\varphi}{\A}{\alpha}$ below characterize the sets of intermediate state variations $\reachableInter{\cset, \varphi}{\A, \alpha}$ and final state variations $\reachableFin{\cset, \varphi}{\A, \alpha}$, respectively,
	where prefix-removal $\te_1 \ominus \te_2$ is defined by
	$\phi(\te_1 \ominus \te_2) \equiv \fa{\tvar_0} (\te_1 = \te_2 \cdot \tvar_0 \rightarrow \phi(\tvar_0))$ for each context formula $\phi$:
	\begin{align*}
		\strgCommit{\cset}{\varphi}{\A}{\alpha} 
		& \equiv
			\supdate{\tvar_v}{(\tvar{\downarrow}\cset_0)} 
			\ex{\tvar} 
			\ex{\tvar_\alpha} \big( \varphi \wedge \tvar_\alpha = \tvar \cdot (\tvar_v \ominus \tvar) \downarrow (\alpha \cup \cset^\complement) \wedge \langle \alpha \rangle \acpair{\A, \tvar_\alpha = \tvar} \false \big) \\
		\strgPost{\cset}{\varphi}{\A}{\alpha} 
		& \equiv
			\supdate{\varvec[alt]}{(\varvec\subs{\tvar}{\tvar{\downarrow}\cset_0})}
				\ex{\varvec} \ex{\tvar_\alpha} \big( \varphi \wedge \tvar_\alpha = \tvar \cdot (\tvar_v \ominus \tvar) \downarrow (\alpha \cup \cset^\complement) \wedge \langle \alpha \rangle \acpair{\A, \false} \varvec[alt] \subs{\tvar_v}{\tvar_\alpha} = \varvec \big)
	\end{align*}
	The formula $\fa{\varvec[alt]{=}\varvec} \ex{\varvec} ( \varphi \wedge \langle \alpha \rangle \varvec[alt] = \varvec )$,
	where $\varvec$ are all variables of $\varphi$ and $\alpha$,
	is satisfied in exactly those states reachable by~$\alpha$ from some state satisfying $\varphi$.
	As $\alpha$ potentially writes~$\varvec$,
	the fresh variables $\varvec[alt]$ relate the initial and final state of $\alpha$.
	Closely mirroring \rref{def:state_variations},
	$\strgCommit{\cset}{\varphi}{\A}{\alpha}$ and $\strgPost{\cset}{\varphi}{\A}{\alpha}$ generalize this to environmental state variations.
	Prefix-removal $\tvar_v \ominus \tvar$ yields the pure communication of $\alpha$ without the previous history.
	
	Finally, 
	if $\alpha\parOp\beta$ is well-formed (\rref{def:syntax_chps}) and $\beta$ does not interfere (\rref{def:noninterference}) with $\nointfpair{\alpha}{\varphi}$,
	then $\beta$ does not interfere with 
	$\nointfpair{\alpha}{\Phi}$
	for $\Phi \in \{ 
		\strgCommit{\beta}{\varphi}{}{\alpha},
		\strgPost{\beta}{\varphi}{}{\alpha}
	\}$.
	Observe that $\SFV(\Phi) \subseteq \varvec 
	= \SFV(\varphi) \cup \SV(\alpha)$.
	Since $\SFV(\varphi) \cap \SBV(\beta) \subseteq \{\gtvec, \tvar\}$ as~$\beta$ does not interfere with $\nointfpair{\alpha}{\varphi}$,
	and $\SBV(\beta) \cap \SV(\alpha) \subseteq \{\gtvec, \tvar\}$ as $\alpha\parOp\beta$ is well-formed (\rref{def:syntax_chps}),
	obtain $\SFV(\Phi) \cap \SBV(\beta)
	\subseteq \{\gtvec, \tvar\}$.
	The projection $\downarrow\cset_0$ ensures
	$\SCNX{\{\tvar\}}(\Phi) \subseteq \SCNX{\{\tvar\}}(\varphi) \cup \SCN(\alpha) \cup \SCN(\beta)^\complement$,
	\iest via the recorder~$\tvar$,
	the strongest promise $\Phi$ only depends on the channels of $\varphi$ and $\alpha$, and on the environment $\beta$.
	Further, $\SCNX{\{\tvar\}}(\varphi) \cap \SCN(\beta) \subseteq \SCN(\alpha)$ as $\beta$ does not interfere with $\nointfpair{\alpha}{\varphi}$.
	Hence, $\SCNX{\{\tvar\}}(\Phi) \cap \SCN(\beta) \subseteq \SCN(\alpha)$.
	In summary, $\beta$ does not interfere with $\nointfpair{\alpha}{\Phi}$.
\end{proof}

\begin{proof}
	[Proof of \rref{lem:reachable_states_correct}]
	The items are proven separately:
	\begin{enumerate}
		\item
		Let $\vDash \varphi \rightarrow [ \alpha ] \ac \psi$.
		For \ref{itm:strongest_com_implies_commit}, assume $\varstate \vDash \strgCommit{\emptyset}{\varphi}{\A}{\alpha}$.
		By  \rref{lem:state_variations_comfod} and \rref{def:state_variations}, 
		a run $\comvarrun{}{\emptyset^\complement} \in \sem{\A, \alpha}{}_\prec$ exists
		such that
		$\pstate{v} \vDash \varphi$ and $\varstate = \pstate{v} \cdot \trace$.
		Since $\comvariation{}{\emptyset^\complement} = \Chan$, 
		obtain $\run \in \sem{A,\alpha}{}_\prec$.
		By $\vDash \varphi \rightarrow [ \alpha ] \ac \psi$,
		obtain $\pstate{v} \cdot \trace \vDash \Commit$,
		so $\varstate \vDash \Commit$. 
		For \ref{itm:strongest_reach_implies_post}, assume $\varstate \vDash \strgPost{\emptyset}{\varphi}{\A}{\alpha}$.
		By \rref{lem:state_variations_comfod} and \rref{def:state_variations},
		a run $\comvarrun{}{\emptyset^\complement} \in \sem{\A, \alpha}{}_\preceq$ exists
		such that $\pstate{v} \vDash \varphi$ and $\varstate = \pstate{w} \cdot \trace$,
		so $\run \in \sem{\A, \alpha}{}_\preceq$.
		Finally, $\pstate{varfin} \vDash \psi$ by $\vDash \varphi \rightarrow [ \alpha ] \ac \psi$ as $\pstate{varfin} = \pstate{w} \cdot \trace$.

		\item 
		Let $\pstate{v} \vDash \varphi_0$,
		where $\varphi_0 \equiv \rvarvec[alt] = \rvarvec \wedge \varphi$, and $\run \in \sem{\alpha}{}$.
		For \acCommit, assume $\assCommit{\pstate{v}}{\trace} \vDash \A$,
		\iest $\run \in \sem{\A, \alpha}{}_\prec$.
		Then $\pstate{v} \cdot \trace \in \reachableInter{\cset, \varphi_0}{\A, \alpha}$ by \rref{def:state_variations}
		as $\trace \downarrow (\comvariation{}{}) = \trace$ because $\trace$ is $\alpha$-communication.
		Hence, $\pstate{v} \cdot \trace \vDash \strgCommit{\cset}{\varphi_0}{\A}{\alpha}$ by \rref{lem:state_variations_comfod}.
		Since $\pstate{v} \vDash \rvarvec[alt] = \rvarvec$ and $\pstate{v} = \pstate{v} \cdot \trace$ on $\rvarvec, \rvarvec[alt]$,
		obtain $\pstate{v} \cdot \trace \vDash \rvarvec[alt] = \rvarvec$,
		thus $\pstate{v} \cdot \trace \vDash \fa{\rvarvec{=}\rvarvec[alt]} \strgCommit{\cset}{\varphi_0}{\A}{\alpha}$.
		For \acPost, assume $\pstate{w} \neq \bot$ and $\assPost{\pstate{v}}{\trace} \vDash \A$, 
		\iest $\run \in \sem{\A, \alpha}{}_\preceq$.
		Hence, $\pstate{w} \cdot \trace \in \reachableFin{\cset, \varphi_0}{\A, \alpha}$ by \rref{def:state_variations},
		which implies $\pstate{w} \cdot \trace \vDash \strgPost{\cset}{\varphi_0}{\A}{\alpha}$ by \rref{lem:state_variations_comfod}.
		\qedhere
	\end{enumerate}
\end{proof}

\newcommand{\parcom}{\trace_{\tvar}}%

\begin{proof}
	[Proof of \rref{lem:ac_biggest_promise_split}]
	By \rref{lem:state_variations_comfod},
	the proof can argue semantically about state variations (\rref{def:state_variations}).
	To handle variations of intermediate ($\interSymb$) and final ($\finSymb$) states uniformly, 
	let $\varioSymb \in \{ \interSymb, \finSymb \}$.
	Further, let $(\gamma,\ogamma) \in \{(\alpha,\beta), (\beta,\alpha)\}$,
	let $\reachableVario{\varphi, \cset}{\alpha} \equiv \reachableVario{\varphi, \cset}{\true, \alpha}$ for any~$\alpha$,
	and let $\tvar = \parrec$ be the recorder of $\alpha\parOp\beta$.
	Then by \rref{lem:state_variations_comfod},
	it suffices to prove:
	\begin{equation*}
		\reachableVario{\beta, \cPreFrame_\alpha}{\alpha}
		\cap
		\reachableVario{\alpha, \cPreFrame_\beta}{\beta}
		\cap
		\sem{\tvar \succeq \tvar_0}{} 
		\subseteq 
		\reachableVario{\emptyset,\cPreFrame}{\alpha\parOp\beta}
		\text{.}
	\end{equation*}
	
	Proof outline: If $\varstate \in \reachableVario{\otherprog{\gamma}, \cPreFrame_\gamma}{\gamma}$,
	there is a run $(\pstate{v}_\gamma, \trace_\gamma, \pstate{w}_\gamma) \in \sem{\gamma}{}$ 
	that reaches~$\varstate$
	when $\trace_\ogamma$ is interleaved into~$\trace_\gamma$.
	Since the interleaving is mutual, 
	there is a trace $\trace$ that covers both~$\trace_\gamma$.
	If $\varioSymb\equiv\finSymb$,
	the runs even cover each others effect on the state,
	\iest the final states~$\pstate{w}_\alpha$ and~$\pstate{w}_\beta$ are equal on $\RVar$.
	To show that~$\varstate$ can be reached by an $(\alpha\parOp\beta)$-run,
	merge~$\pstate{v}_\alpha$ and $\pstate{v}_\beta$ into an initial state~$\pstate{v}$
	from which,
	by coincidence (\rref{lem:program_coincidence}), 
	there are subruns $(\pstate{v}, \trace \downarrow \gamma, \pstate[alt]{w}_\gamma) \in \sem{\gamma}{}$ for $\alpha\parOp\beta$.
	Merging of the individual final states $\pstate[alt]{w}_\gamma$ yields the original final state,
	\iest $\pstate[alt]{w}_\alpha \merge \pstate[alt]{w}_\beta = \pstate{w}_\gamma$,
	thus $(\pstate{v}, \trace,  \pstate[alt]{w}_\alpha \merge \pstate[alt]{w}_\beta)$ reaches~$\varstate$.
	The premise $\varstate \vDash \tvar \succeq \tvar_0$
	ensures~$\trace$ exists
	by rejecting non-linear interleavings of $\alpha$ and $\beta$ with each other's previous communication (\cf \rref{ft:global_his_props}).

	Now, let $\varstate \in \reachableVario{\otherprog{\gamma}, \cPreFrame_\gamma}{\gamma}$
	and $\varstate \vDash \tvar \succeq \tvar_0$.
	By \rref{def:state_variations},
	there is a trace $\localvartrace$ and a run $(\pstate{v}_\gamma, \trace_\gamma, \pstate{w}_\gamma) \in \sem{\gamma}{}$ with $\trace_\gamma = \localvartrace \downarrow (\comvariation{\gamma}{(\ogamma)^\complement})$
	such that $\pstate{v}_\gamma \vDash \cPreFrame_\gamma$ and $\varstate = \interORfin{\gamma} \cdot \localvartrace$,
	where $\interstate{\gamma} = \pstate{v}_\gamma$
	and $\finstate{\gamma} = \pstate{w}_\gamma$.
	Note that $\localvartrace$ has the recorder~$\tvar$.
	Hence, $\interORfin{\gamma} = \varstate$ on $\{\tvar\}^\complement$.
	Further, $\pstate{v}_\gamma \eqbep \interORfin{\gamma} = \varstate$ on $\SBV(\gamma)^\complement \setminus \{\tvar\} \supseteq \{\gtvec_0, \tvar_0\}$ 
	using the bound effect property ($\eqbep$, \rref{lem:bound_effect}) if $\varioSymb\equiv\finSymb$.
	Hence,
	by $\pstate{v}_\gamma \vDash \cPreFrame_\gamma$,
	the programs start simultaneously at time $\gtvec_0$,
	\iest $\pstate{v}_\gamma(\gtvec) = \pstate{v}_\gamma(\gtvec_0) = \varstate(\gtvec_0)$.
	
	By $\varstate(\tvar) \succeq \varstate(\tvar_0)$,
	there is a $\parcom$ such that $\varstate(\tvar) = \varstate(\tvar_0) \cdot \parcom$,
	and $\parcom$ 
	covers the communication of both $\gamma$ exactly,
	\iest $\parcom \downarrow \gamma = \trace_\gamma(\tvar)$,
	as follows:
	Since $\interORfin{\gamma} \eqbep \pstate{v}_\gamma$ on $\TVar$ by \rref{lem:bound_effect},
	obtain
	\begin{equation*}
		\pstate{v}_\gamma(\tvar) \cdot \trace_{e\gamma}(\tvar) 
		= \interORfin{\gamma}(\tvar) \cdot \trace_{e\gamma}(\tvar)
		= \varstate(\tvar) 
		= \varstate(\tvar_0) \cdot \parcom
		= \interORfin{\gamma}(\tvar_0) \cdot \parcom 
		= \pstate{v}_\gamma(\tvar_0) \cdot \parcom
	\end{equation*}
	Hence, 
	$\pstate{v}_\gamma(\tvar) \downarrow\gamma \cdot \trace_\gamma(\tvar)
	= \pstate{v}_\gamma(\tvar_0) \downarrow\gamma \cdot \parcom \downarrow\gamma$
	because $\trace_\gamma = \trace_{e\gamma} \downarrow (\gamma\cup(\ogamma)^\complement)$
	implies $\trace_{e\gamma} \downarrow \gamma = \trace_\gamma$.
	Since $\pstate{v}_\gamma(\tvar) \downarrow \gamma = \pstate{v}_\gamma(\tvar_0) \downarrow \gamma$ by $\pstate{v}_\gamma \vDash \cPreFrame_\gamma$,
	obtain $\parcom \downarrow \gamma = \trace_\gamma(\tvar)$.
	
	Further, $\parcom = \trace_{e\alpha}(\tvar) = \trace_{e\beta}(\tvar)$ as follows:
	By $\trace_\gamma = \localvartrace \downarrow (\comvariation{\gamma}{(\ogamma)^\complement})$,
	only $\ogamma$-communication interleaves into $\trace_\gamma$,
	so $\localvartrace \downarrow (\alpha\parOp\beta) = \localvartrace$.
	Since $\parcom$ and $\localvartrace$ are suffixes of $\varstate(\tvar)$,
	which exactly cover all $\gamma$-communication,
	obtain
	$\localvartrace = \parcom$.
	In particular, $\parcom \downarrow (\alpha\parOp\beta) = \parcom$.

	Now,
	define $\pstate{v}$
	by merging $\pstate{v}_\alpha$, $\pstate{v}_\beta$, and $\varstate$,
	where $\varvec_\gamma = \SFV(\varphi_\gamma) \cup \SV(\gamma)$ 
	and for states $\pstate{u}_1, \pstate{u}_2$ and variables $\varset \subseteq \V$, 
	let $(\restrict{\pstate{u}_1}{\varset} \merge \pstate{u}_2)(\avar) = \pstate{u}_1(\avar)$ if $\avar \in \varset$ and $(\restrict{\pstate{u}_1}{\varset} \merge \pstate{u}_2)(\avar) = \pstate{u}_2(\avar)$ otherwise:
	\begin{equation*}
		\pstate{v}(\avar) = 
			\begin{cases}
				\varstate(\tvar_0) & \text{if } \avar=\tvar \\
				\big(\restrict{\pstate{v}_\alpha}{\varvec_\alpha} \merge (\restrict{\pstate{v}_\beta}{\varvec_\beta} \merge \varstate)\big)(\avar) & \text{else}
			\end{cases}
	\end{equation*}

	To enable coincidence properties,
	show $\pstate{v} = \pstate{v}_\gamma$ on $\varvec_\gamma$:
	By definition of~$\pstate{v}$,
	obtain
	$\pstate{v} = \pstate{v}_\alpha$ on $\varvec_\alpha \cap \{\tvar\}^\complement$
	and $\pstate{v} = \pstate{v}_\beta$ on $\varvec_\beta \cap \varvec_\alpha^\complement \cap \{\tvar\}^\complement$.
	Further, $\pstate{v}(\tvar) = \pstate{v}_\gamma(\tvar)$
	since $\pstate{v}(\tvar) \cdot \parcom
		= \varstate(\tvar_0) \cdot \parcom
		= \varstate(\tvar)
		= (\interORfin{\gamma} \cdot \localvartrace)(\tvar)
		\eqbep 
			(\pstate{v}_\gamma \cdot \localvartrace)(\tvar)$
	and $\localvartrace(\tvar) = \parcom$.
	Since the programs start at the same time,
	$\pstate{v}(\gtvec) = \pstate{v}_\alpha(\gtvec) = \pstate{v}_\beta(\gtvec)$.
	Finally, $\pstate{v} = \pstate{v}_\beta$ on $\varvec_\beta \cap \varvec_\alpha \cap \{\tvar,\gtvec\}^\complement$ 
	because
	$\pstate{v}
	= \pstate{v}_\alpha
	\eqbep
		\interORfin{\alpha}
	=
		\varstate
	=
		\interORfin{\beta}
	\eqbep 
	\pstate{v}_\beta$ on $\varvec_\beta \cap \varvec_\alpha \cap \{\tvar,\gtvec\}^\complement$
	using \rref{eq:gamma_unbound_ogamma} 
	for $\eqbep$.
	Equation (\ref{eq:gamma_unbound_ogamma}) holds
	since $\SFV(\varphi_\gamma) \subseteq \SBV(\otherprog{\gamma})^\complement \cup \{\gtvec, \tvar\}$ as $\ogamma$ does not interfere (\rref{def:noninterference}) with $(\gamma, \varphi_\gamma)$ by premise,
	and $\SV(\gamma) \subseteq \SBV(\ogamma)^\complement \cup \{\gtvec, \tvar\}$ by well-formedness (\rref{def:syntax_chps}),
	and $\{\tvar_0,\gtvec_0\} \subseteq \SBV(\ogamma)^\complement$.
	\begin{equation}
		\label{eq:gamma_unbound_ogamma}
		\varvec_\gamma 
		= \SFV(\cPreFrame_\gamma) \cup \SV(\gamma)
		\subseteq \SFV(\varphi_\gamma) \cup \{\tvar, \gtvec, \tvar_0,\gtvec_0\} \cup \SV(\gamma)
		\subseteq \SBV(\ogamma)^\complement \cup \{\tvar, \gtvec\} 
	\end{equation}

	Since $\pstate{v} = \pstate{v}_\gamma$ on $\varvec_\gamma \supseteq \SFV(\cPreFrame_\gamma)$,
	by $\pstate{v}_\gamma \vDash \cPreFrame_\gamma$ and coincidence (\rref{lem:expr_coincidence}), 
	obtain $\pstate{v} \vDash \cPreFrame_\gamma$,
	so $\pstate{v} \vDash \gtvec_0 = \gtvec \wedge \varphi_\alpha \wedge \varphi_\beta$.
	Further,
	$\pstate{v}(\tvar)
	= \varstate(\tvar_0)
	= \pstate{v}_\alpha(\tvar_0)
	= \pstate{v}(\tvar_0)$
	such that $\pstate{v} \vDash \tvar[alt] = \tvar$.
	In summary, $\pstate{v} \vDash \cPreFrame$.
	For an $(\alpha\parOp\beta)$-run,
	let $\trace = \mkrectrace{\tvar}{\parcom}$ be its communication.
	Indeed, $\trace \downarrow \parchans = \trace$
	and
	$\trace \downarrow \gamma
		= \mkrectrace{\tvar}{\parcom \downarrow \gamma}
		= \mkrectrace{\tvar}{\trace_\gamma(\tvar)} 
		= \trace_\gamma$.
	Since $\pstate{v} = \pstate{v}_\gamma$ on $\varvec_\gamma \supseteq \SFV(\gamma)$,
	by coincidence ($\eqcoin$, \rref{lem:program_coincidence}),
	there are $\gamma$-runs $(\pstate{v}, \trace_\gamma, \pstate[alt]{w}_\gamma) \in \sem{\gamma}{}$,
	where $\pstate[alt]{w}_\gamma = \pstate{w}_\gamma$ on $\varvec_\gamma$. 
	If $\varioSymb\equiv\finSymb$,
	then $\pstate[alt]{w}_\gamma \neq \bot$,
	so $\ptimevec{\pstate[alt]{w}_\alpha}
	\eqcoin
		\ptimevec{\pstate{w}_\alpha} 
	= 
		\ptimevec{\varstate}
	= 
		\ptimevec{\pstate{w}_\beta}
	\eqcoin
		\ptimevec{\pstate[alt]{w}_\beta}$.
	Hence, $(\pstate{v}, \trace, \pstate{w}) \in \sem{\alpha\parOp\beta}{}$,
	for $\pstate{w} = \pstate[alt]{w}_\alpha \merge \pstate[alt]{w}_\beta$.
	Let $\interstate{} = \pstate{v}$ and $\finstate{} = \pstate{w}$.
	Then $\interORfin{} \cdot \trace \in \reachableVario{\emptyset, \cPreFrame}{\alpha\parOp\beta}$
	since $\pstate{v} \vDash \cPreFrame$, and $(\pstate{v}, \trace, \pstate{w}) \in \sem{\alpha\parOp\beta}{}$, 
	and $\trace \downarrow ((\alpha\parOp\beta) \cup \emptyset^\complement)
	= \trace$.
	
	Finally, $\varstate \in \reachableVario{\emptyset, \cPreFrame}{\alpha\parOp\beta}$
	because $\interORfin{} \cdot \trace = \varstate$ as follows:
	First,
	$(\interORfin{} \cdot \trace)(\tvar)
	\eqbep
		\pstate{v}(\tvar) \cdot \trace(\tvar)
	= \varstate(\tvar_0) \cdot \parcom
	= \varstate(\tvar)$.
	On~$\{\tvar\}^\complement$, proving $\interORfin{} = \varstate$ suffices:
	If $\varioSymb\equiv\interSymb$,
	obtain $\pstate{v}_\alpha = \varstate = \pstate{v}_\beta$ on $\{\tvar\}^\complement$.
	Hence, $\interstate{} = \pstate{v} = \varstate$ on $\{\tvar\}^\complement$ by definition of $\pstate{v}$.
	If $\varioSymb\equiv\finSymb$,
	on $\gamma$'s bound variables $\{\tvar\}^\complement \cap \SBV(\gamma)$,
	obtain 
	$\finstate{}
	= \pstate{w}
	= \pstate[alt]{w}_\gamma
	\eqcoin \pstate{w}_\gamma
	= \varstate$.
	On the unbound variables $\varset = \{\tvar\}^\complement \cap (\SBV(\alpha)^\complement \cap \SBV(\beta)^\complement)$,
	obtain $\pstate{v}_\alpha = \varstate = \pstate{v}_\beta$.
	Hence, $\finstate{} = \pstate{w} \eqbep \pstate{v} = \varstate$ on $\varset$.
	\qedhere
\end{proof}

\begin{proof}
	[Proof of \rref{lem:assumption_closure}]
	By \rref{lem:state_variations_comfod},
	the proof can argue semantically about state variations (\rref{def:state_variations}).
	To handle intermediate ($\interSymb$) and final ($\finSymb$) state variations uniformly, 
	let $(\varioSymb, \sim) \in \{ (\interSymb, \prec), (\finSymb, \preceq) \}$.
	Then let $\varstate \in \reachableVario{\emptyset, \cPreFrame}{\true, \alpha}$
	and $\varstate \vDash \Atrace{\A}$.
	By \rref{def:state_variations},
	there is a trace $\vartrace$ and a run $\run \in \sem{\alpha}{}$ with
	$\vartrace \downarrow (\comvariation{}{\emptyset^\complement}) = \trace$
	such that $\pstate{v} \vDash \cPreFrame$ and $\varstate = \interORfin{\gamma} \cdot \vartrace$,
	where $\interstate{} = \pstate{v}$
	and $\finstate{} = \pstate{w}$.
	Since $\comvariation{}{\emptyset^\complement} = \Chan$,
	obtain $\vartrace = \trace$, so $\varstate = \interORfin{} \cdot \trace$.
	Since $\pstate{v} \cdot \trace \vDash \Atrace{\A}$ also if $\varioSymb\equiv\finSymb$ as proven below,
	obtain $\assCP{\sim}{\pstate{v}}{\trace} \vDash \A$ by \rref{lem:assumption_rendition}.
	Hence, $\run \in \sem{\A,\alpha}{}_\sim$.
 	Finally,
	$\varstate \in \reachableVario{\emptyset, \cPreFrame}{\A, \alpha}$ by \rref{def:state_variations}.
	
	If $\varioSymb\equiv\finSymb$, 
	by the bound effect property (\rref{lem:bound_effect}),
	$\pstate{v} = \pstate{w}$ on $\SBV(\alpha)^\complement \cup \TVar$.
	Since $\SFV(\A) \subseteq \SBV(\alpha)^\complement \cup \TVar$,
	as $(\A, \alpha)$ is communicatively well-formed,
	obtain $\pstate{v} = \pstate{w}$ on $\SFV(\A)$.
	Hence, $\pstate{w} \cdot \trace = \pstate{v} \cdot \trace$ on $\SFV(\Atrace{\A}) \subseteq \SFV(\A)$,
	so $\pstate{v} \cdot \trace \vDash \Atrace{\A}$ by coincidence (\rref{lem:expr_coincidence}).
\end{proof}

\section{Continuous Completeness}
\label{app:com-fod-to-fod}

This appendix reports details for \rref{sec:fod-completeness}.
In particular, \rref{prop:com-fod-to-fod} is shown, 
which provides a equitranslation between \comFOD and FOD based on the $\reals$-Gödel encoding~\cite{DBLP:journals/jar/Platzer08}
of traces,
which is provably correct in the extended \dLCHP calculus $\dLCHPdash$.
In preparation,
\rref{lem:com-fod-to-extensional-com-fod} simplifies \comFOD formulas to extensional form.

\begin{proof}
	[Proof of \rref{lem:typecast}]
	Communication traces
	can be represented in FOD by a nested $\reals$-Gödel encoding (\rref{lem:real_goedel}) 
	that first compresses every event (channel, value, and time)
	and then compresses the resulting finite sequence into a single real number.
	For disambiguation,
	the encoding of the trace is further paired with its length.
	For real variables $x$ and $k$, 
	define $\len{x} \equiv \goedelat{x}{2}{1}$
	and $\seqat{x}{k} \equiv \goedelat{(\goedelat{x}{2}{2})}{\len{x}}{\lfloor k \rfloor}$,
	where rounding $\lfloor \cdot \rfloor$ is definable in FOD by \rref{lem:rounding}.
	Further, define
	$\sel{\seqat{x}{k}} \equiv \goedelat{(\seqat{x}{k})}{3}{l}$
	for $(\selOp, l) \in \{ (\chanOp, 1), (\valOp, 2), (\stampOp, 3) \}$.
	Then define $\eventencseq{x}$\,as follows,
	where $\isNat{\cdot}$ is definable in FOD by \rref{lem:isnat}:
	\begin{equation*}
		\eventencseq{x} \equiv \isNat{\len{x}} 
		\wedge \big(
			\len{x}=0 \rightarrow \goedelat{x}{2}{2}=0
		\big) 
		\wedge \fa{0{\le}k{<}\len{x}} \isNat{\chan{\seqat{x}{k}}}
	\end{equation*}
	Further, define a \comFOD formula $\goedelize{x, \tvar}$,
	where $\langle \_, \_, \_ \rangle$ is a communication item:
	\begin{align}
		\label{eq:goedelize}
		\goedelize{x, \tvar} & \equiv 
			\len{x}=\len{\tvar}
			\wedge \fa{k} \Big(
				0{\le}k{<}\len{\tvar} \rightarrow \at{\tvar}{k} = \big\langle
					\chan{\seqat{x}{k}},
					\val{\seqat{x}{k}},
					\stamp{\seqat{x}{k}}
				\big\rangle
			\Big)
	\end{align}
	Observe that $\eventencseq{x}$ restricts the encoding of channel names to $\Chan = \naturals$
	and disambiguates the encoding of the empty trace.
	Hence, $\goedelize{x, \tvar}$ characterizes a bijection $\goedelize{\cdot} : \traces \rightarrow \eventenc^*$,
	\iest for every trace $\tvar : \traces$,
	there is exactly one encoding $\eventencseq{x}$\,such that $\goedelize{x, \tvar}$ holds and vice versa,
	as $\reals$-Gödel encodings are unambiguous for a specific length 
	\cite[Lemma 4]{DBLP:journals/jar/Platzer17}.
	This justifies to write $x = \goedelize{\tvar}$ instead of $\goedelize{x, \tvar}$.
	Finally, observe that $\goedelize{\cdot}$ preserves lengths and entries,
	\iest $\len{\tvar} = \len{\goedelize{\tvar}}$,
	and $\sel{\at{\tvar}{k}} = \sel{\seqat{(\goedelize{\tvar})}{k}}$ for all $0{\le}k{<}\len{\tvar}$.
\end{proof}

\rref{lem:com-fod-to-extensional-com-fod}
simplifies \comFOD formulas to a provably equivalent form.
A \comFOD formula $\phi$ is called \emph{extensional} if every $\te_1 \sim \te_2$
in~$\phi$ has the form $\at{\tvar_1}{k} = \at{\tvar_2}{j}$, 
where $\tvar_1, \tvar_2 \in \TVar$ and $k, j \in \RVar$,
and if the operators $\chan{\cdot}$, $\stamp{\cdot}$, $\val{\cdot}$, and~$\len{\cdot}$ are only applied to variables. 
In particular, extensional formulas do not contain~$\preceq$.

\begin{lemma}
	[Extensional \comFOD]
	\label{lem:com-fod-to-extensional-com-fod}
	For every \comFOD formula~$\phi$,
	there is effectively an equivalent extensional \comFOD formula $\toExtComFOD{\phi}$
	over the same free variables
	such that $\phi \leftrightarrow \toExtComFOD{\phi}$ is provable in the extended \dLCHP calculus~\dLCHPdash.
\end{lemma}
\vspace{-.7cm}
\begin{proof}
	The formula $\toExtComFOD{\phi}$ is inductively defined in \rref{fig:com-fod-to-extensional-com-fod}.
	\rref{fig:normalize-selectors-and-conc} eliminates prefixing~$\preceq$, 
	and normalizes trace equality to the form $\tvar = \te$,
	where $\tvar \in \TVar$ and $\te$ is restricted to 
	$\tvar \mid \epsilon \mid \comItem{\ch{}, \rp_1, \rp_2} \mid \tvar_1 \cdot \tvar_2 \mid \at{\tvar}{k} \mid \tvar \downarrow \cset$.
	\rref{fig:encode-conc-and-proj} expresses every $\tvar = \te$ of that form extensionally based on the length $\len{\te}$ and the positions $\at{\te}{k}$.
	Equation (\ref{eq:eq_proj}) uses the abbreviations 
	$\subindex{\iarr}{\len{\tvar}}{\len{\tvar_0}}$,
	and $\projhit{\iarr}{\tvar}{\tvar_0}{\cset}$,
	and $\projmiss{\iarr}{\tvar_0}{\cset}$
	from axiom~\RuleName{extProj} (\rref{fig:plusClaculus}).
	The $\toExtComFOD{(\cdot)}$-recursion in \rref{fig:encode-conc-and-proj} is well-founded as no new concatenations and projections are added.

	Derivability of $\phi \leftrightarrow \toExtComFOD{\phi}$ in \dLCHPdash is proven by an induction on $\phi$ along the recursion of \rref{fig:com-fod-to-extensional-com-fod}.
	For \rref{fig:normalize-selectors-and-conc},
	$\dLCHPdash \phi \leftrightarrow \toFOD{\phi}$ is by first-order reasoning using the induction hypothesis, 
	and the case $\phi \equiv \te_1 \preceq \te_2$ uses the axiom~\RuleName{prefix},
	and $\phi \equiv \langle \evolution*{}{non} \rangle$ uses monotonicity \RuleName{acMono} or \RuleName{acDiaMono}.
	For \rref{fig:encode-conc-and-proj},
	$\dLCHPdash \phi \leftrightarrow \toFOD{\phi}$ is by  \RuleName{ext} for \rref{eq:eq_tvar},
	by~\RuleName{emptyByLen} for \rref{eq:eq_empty},
	by~\RuleName{eqItem} for \rref{eq:eq_item},
	and by~\RuleName{extProj} for \rref{eq:eq_proj}.
	For \rref{eq:eq_access},
	use \RuleName{emptyByLen} and \RuleName{accessDefault},
	and additionally use \RuleName{ext} in case $\len{\tvar}=1$.
	For \rref{eq:eq_concat}, 
	the axioms \RuleName{ext}, and \RuleName{lenSumDist}, and \RuleName{accessLeft}, and~\RuleName{accessRight} are combined.
\end{proof}

\begin{figure}
	\begin{subfigure}{\textwidth}
		\begin{small}
			\centering
			$\begin{aligned}
				& \toExtComFOD{(\neg \varphi)}
					\equiv \neg \toExtComFOD{\varphi} \\
				& \toExtComFOD{(\varphi \wedge \psi)}
					\equiv \toExtComFOD{\varphi} \wedge \toExtComFOD{\psi} \\
				& \toExtComFOD{(\fa{\avar} \varphi)}
					\equiv \fa{\avar} \toExtComFOD{\varphi} \\
				& \toExtComFOD{(\te_1 \preceq \te_2)}
					\equiv \ex{\tvar} \toExtComFOD{(\te_1 \cdot \tvar = \te_2)} \\
				& \toExtComFOD{(\dbleft \evolution*{}{non} \dbright \psi)}
					\equiv \dbleft \evolution*{}{non} \dbright \toExtComFOD{\psi}
			\end{aligned}$\hspace{.4cm}%
			$\begin{aligned}
				& \toExtComFOD{(\te_0 = \te)} 
					\equiv \ex{\tvar} \toExtComFOD{\big( 
						\tvar = \te_0 \wedge \tvar = \te_2  
					\big)} \\
				& \toExtComFOD{\big( \tvar = \te_0 \cdot \te \big)}
					\equiv \ex{\tvar_0} \toExtComFOD{\big(
						\tvar_0 = \te_0 \wedge \tvar = \tvar_0 \cdot \te
					\big)} \\
				& \toExtComFOD{\big(\tvar = \te \cdot \te_0 \big)}
					\equiv \ex{\tvar_0} \toExtComFOD{\big(
						\tvar_0 = \te_0 \wedge 
						\tvar = \te \cdot \tvar_0
					\big)} \\
				& \toExtComFOD{\big( \varphi(\fsymb[builtin](\te_0)) \big)}
					\equiv \ex{\tvar} \toExtComFOD{\big(
						\tvar = \te_0 
						\wedge
						\varphi(\fsymb[builtin](\tvar)) 
					\big)}
					\label{eq:simplify-selectors} \\
				& \toExtComFOD{\big( \tvar = \at{\te}{\re_0} \big)}
					\equiv \ex{k} \toExtComFOD{\big(
						k = \re_0 
						\wedge
						\tvar = \at{\te}{k} 
					\big)}
			\end{aligned}$
		\end{small}
		\caption{
			Simplifications, where $\fsymb[builtin](\te) \in \{ \chan{\te},\val{\te},\stamp{\te},\len{\te},\te\downarrow\cset, \at{\te}{\re} \}$,
			and $\te_0 \not\in \TVar$ and $\re_0 \not\in\RVar$
		}
		\label{fig:normalize-selectors-and-conc}
	\end{subfigure}

	\begin{subfigure}{\textwidth}
		\begin{small}
			\begin{align}
				& \toExtComFOD{(\tvar = \tvar_0)}
					\equiv 
						\len{\tvar} = \len{\tvar_0} 
						\wedge \fa{0{\le}k{<}\len{\tvar}} 
						\at{\tvar}{k} = \at{\tvar_0}{k}
					\label{eq:eq_tvar} \\
				& \toExtComFOD{(\tvar = \epsilon)}
					\equiv \len{\tvar} = 0
					\label{eq:eq_empty} \\
				& \toExtComFOD{(\tvar = \comItem{\ch{}, \rp_1, \rp_2})}
					\equiv \len{\tvar} = 1 \wedge \chan{\tvar} = \ch{} \wedge \val{\tvar} = \rp_1 \wedge \stamp{\tvar} = \rp_2
					\label{eq:eq_item} \\
				& \toExtComFOD{(\tvar = \tvar_1 \cdot \tvar_2)}
					\equiv \len{\historyVar} = \len{\tvar_1} + \len{\tvar_2} 
					\wedge \fa{0{\le}j{<}\len{\tvar_1}} \at{\historyVar}{j} = \at{\tvar_1}{j} \notag\\
					&\hspace{6.5cm} 
					\wedge \fa{0{\le}j{<}\len{\tvar_2}}
					\toExtComFOD{\big( 
						\at{\historyVar}{j + \len{\tvar_1}} = \at{\tvar_2}{j}	
					\big)}
					\label{eq:eq_concat} \\
				& \toExtComFOD{(\tvar = \at{\tvar_0}{k})}
					\equiv \big( 
						\len{\tvar} = 1 \wedge 0{\le}k{<}\len{\tvar_0} \wedge \at{\tvar}{0} = \at{\tvar}{k}
					\big) \vee 
					\big( 
						\len{\tvar} = 0 \wedge \neg(0{\le}k{<}\len{\tvar_0})
					\big) 
					\label{eq:eq_access} \\
				& \toExtComFOD{(\tvar = \tvar_0 \downarrow \cset)}
					\equiv \len{\tvar}{\le}\len{\tvar_0} \wedge 
						\ex{\iarr} \Big( 
						\subindex{\iarr}{\len{\tvar}}{\len{\tvar_0}} \wedge \toExtComFOD{\big( 
							\projhit{\iarr}{\tvar}{\tvar_0}{\cset} \wedge \projmiss{\iarr}{\tvar_0}{\cset} 
						\big)} \Big)
					\label{eq:eq_proj} \\
				& \toExtComFOD{\varphi}
					\equiv \varphi
					\sidecondition{if no other rules are applicable} \notag
			\end{align}
		\end{small}
		\caption{Elimination of concatenations and projections}
		\label{fig:encode-conc-and-proj}
	\end{subfigure}
	\caption{Recursive construction of an extensional \comFOD formula $\toExtComFOD{\phi}$ that is provably equivalent to the \comFOD formula $\phi$}
	\label{fig:com-fod-to-extensional-com-fod}
\end{figure}

\begin{figure}[t]
	\begin{subfigure}{.5\textwidth}%
		\begin{align*}
			\toFOD{(\re_1 \sim \re_2)} & \equiv \toFOD{\re_1} \sim \toFOD{\re_2} \\
			\toFOD{(\at{\tvar_1}{k} = \at{\tvar_2}{j})} & \equiv \seqat{(\encvar{\tvar_1})}{k} = \seqat{(\encvar{\tvar_2})}{j} \\
			\toFOD{(\neg\varphi)} & \equiv \neg \toFOD{\varphi} \\
			\toFOD{(\varphi\wedge\psi)} & \equiv \toFOD{\varphi} \wedge \toFOD{\psi} \\
			\toFOD{(\fa{x} \varphi)} & \equiv \fa{x} \toFOD{\varphi} \\
			\toFOD{(\fa{\tvar} \varphi)} & \equiv \fa{\encvar{\tvar}{:}\eventenc^*} \toFOD{\varphi} \\
			\toFOD{(\dbleft \evolution*{}{non} \dbright \psi)} & \equiv \dbleft \evolution*{}{non} \dbright \toFOD{\psi}
		\end{align*}%
		\caption{
			Cases for formulas, where $\sim\, \in \{=,\le\}$
		}
		\label{fig:extensional-com-FOD-to-FOD-formulas}
	\end{subfigure}%
	\begin{subfigure}{.5\textwidth}
		\begin{align*}
			\toFOD{x} & \equiv x \\
			\toFOD{\ratconst} & \equiv \ratconst \\
			\toFOD{(\re_1 \bowtie \re_2)} & \equiv \toFOD{\re_1} \bowtie \toFOD{\re_2} \\
			\toFOD{(\chan{\tvar})} 
				& \equiv 
				\seqchan{\tvar}{\enclen{\tvar}-1} \\
			\toFOD{(\val{\tvar})} 
				& \equiv 
				\seqval{\tvar}{\enclen{\tvar}-1} \\
			\toFOD{(\stamp{\tvar})} 
				& \equiv 
				\seqtime{\tvar}{\enclen{\tvar}-1} \\
			\toFOD{(\len{\tvar})} & \equiv \enclen{\tvar}
		\end{align*}
		\caption{Cases for real terms, where $\bowtie\, \in \{+, \cdot\}$}
		\label{fig:extensional-com-FOD-to-FOD-terms}
	\end{subfigure}%
	\caption{Inductive definition of a FOD formula $\toFOD{\phi}$ that is equivalent to the extensional \comFOD formula $\phi$ up to type-casting}
	\label{fig:extensional-com-FOD-to-FOD}
\end{figure}

\CreateRuleRef{fol}

\begin{proof}
	[Proof of \rref{prop:com-fod-to-fod}]
	For every extensional \comFOD formula $\phi \equiv \fa{\hvarvec_0{=}\hvarvec} (\phi_0)\subs{\hvarvec}{\hvarvec_0}$, 
	where $\hvarvec$ are the free trace variables of $\phi_0$
	and $\hvarvec_0$ is fresh,
	and every selector, \eg $\val{\tvar}$, and access, \eg $\at{\tvar}{k}$, in $\phi$ is guarded by a range check,
	the formula $\toFOD{\phi}$ is inductively defined in \rref{fig:extensional-com-FOD-to-FOD}.
	This generalizes to every $\phi \in \comFOD$,
	because by \rref{lem:com-fod-to-extensional-com-fod},
	$\phi$ is provably equivalent in $\dLCHPdash$ to an extensional \comFOD formula.
	Moreover, $\dLCHPdash \phi \leftrightarrow \fa{\hvarvec_0{=}\hvarvec} (\phi_0)\subs{\hvarvec}{\hvarvec_0}$
	by first-order reasoning.
	Further, by the axiom \RuleName{accessDefault},
	replace every $\at{\tvar_1}{k} = \at{\tvar_2}{j}$ in $\phi$ once with the formula
	\begin{equation*}
		\big(
			0 {\le} k {<} \len{\tvar_1} \wedge 0 {\le} j {<} \len{\tvar_2} \rightarrow \at{\tvar_1}{k} = \at{\tvar_2}{j}	
		\big) \vee 
		\neg\big(
			0 {\le} k {<} \len{\tvar_1} \wedge 0 {\le} j {<} \len{\tvar_2}
		\big)
		\text{,}
	\end{equation*}
	and by the axiom \RuleName{opDefault},
	for every $\selOp \in \{ \chanOp, \valOp, \stampOp \}$ in $\phi$, 
	replace $\phi(\selOp(\tvar))$ once by $(\len{\tvar} {>} 0 \rightarrow \phi(\selOp(\tvar))) \vee (\len{\tvar} {\le} 0 \wedge \phi(0))$.
	In summary, \wlossg assume $\phi$ is an extensional \comFOD formula of form $\fa{\hvarvec_0{=}\hvarvec} (\phi_0)\subs{\hvarvec}{\hvarvec_0}$,
	and every selector and access in $\phi$ is guarded by a range check.

	The mapping $\toFOD{(\cdot)}$ in \rref{fig:extensional-com-FOD-to-FOD} uniformly replaces every trace variable $\tvar$ in $\phi$ with a fresh but fixed real variable $\encvar{\tvar}$
	and every operator on traces with the corresponding operator on encodings (\rref{lem:typecast}).
	Since $\phi \equiv \fa{\hvarvec_0{=}\hvarvec} (\phi_0)\subs{\hvarvec}{\hvarvec_0}$,
	every~$\encvar{\tvar}$ contains a trace encoding by $\eventencseq{\encvar{\tvar}}$
	in case $\toFOD{(\fa{\tvar} \varphi)}$ in \rref{fig:extensional-com-FOD-to-FOD},
	where $\eventenc^*$\,is definable in FOD (\rref{lem:typecast}).
	For ease of presentation,	
	the constraint $\eventencseq{\encvar{\hvarvec}}$ is thus omitted in the following,
	unless an explicit argument is required.

	\newcommand{\fagoedelize}[2]{\fa{\encvar{#1}{=}\goedelize{#2}}}
	\newcommand{\exgoedelize}[2]{\ex{\encvar{#1}{=}\goedelize{#2}}}

	Since there is a unique encoding $\isseq{\eventenc}{\encvar{\hvarvec}}$ for every $\hvarvec$ and vice versa
	by \RuleName{TraceToGoedel} and \RuleName{GoedelToTrace},
	respectively,
	existential and universal quantification collapse by first-order reasoning:
	\begin{equation}
		\label{eq:goedel_dual}
		\exgoedelize{\hvarvec}{\hvarvec} \toFOD{\varphi} \leftrightarrow \fagoedelize{\hvarvec}{\hvarvec} \toFOD{\varphi}
	\end{equation}

	At a high level, provability $\dLCHPdash \phi \leftrightarrow \fa{\encvar{\hvarvec}{=}\goedelize{\hvarvec}} \toFOD{\phi}$ 
	is a consequence of the fact that~$\goedelize{\cdot}$
	is a length and entry preserving isomorphism $\traces \rightarrow \eventenc^*$
	and that $\toFOD{\phi}$ uniformly replaces operators on traces with the corresponding operators on encodings.
	Formally, the equivalence $\dLCHPdash \phi \leftrightarrow \fa{\encvar{\hvarvec}{=}\goedelize{\hvarvec}} \toFOD{\phi}$
	is shown by induction on the structure of $\phi$,
	where IH abbreviates usage of the induction hypothesis.
	The proof makes use of propositional reasoning (including \RuleName{MP}) without further notice.
	Unless otherwise specified, $\hvarvec$ are the free trace variables of $\phi$.

	\begin{enumerate}[leftmargin=*]
		\item $\phi \equiv \re_1 \sim \re_2$ or $\phi \equiv \at{\tvar_1}{k} = \at{\tvar_2}{j}$,
		then $\dLCHPdash \encvar{\hvarvec}=\goedelize{\hvarvec} \rightarrow \len{\tvar} = \len{\encvar{\tvar}}$
		and  
		\begin{equation*}
			\dLCHPdash \encvar{\hvarvec}=\goedelize{\hvarvec} \rightarrow (0{\le}l{<}\len{\tvar} 
			\rightarrow \at{\tvar}{l} {=} 
			\big\langle
				\chan{\seqitm{\tvar}{l}},
				\val{\seqitm{\tvar}{l}},
				\stamp{\seqitm{\tvar}{l}}
			\big\rangle
		\end{equation*}
		for every $\tvar$ in $\phi$
		by definition of $\goedelize{\cdot}$ in \rref{eq:goedelize}.
		By range checks,
		$0{\le}\len{\tvar}{-}1{<}\len{\tvar}$ for every $\tvar$ in $\re_1 \sim \re_2$,
		and $0{\le}k{<}\len{\tvar_1}$ and $0{\le}j{<}\len{\tvar_2}$ for $\at{\tvar_1}{k} = \at{\tvar_2}{j}$
		are provable from~$\phi$.
		Hence, in case $\re_1 \sim \re_2$,
		obtain $\dLCHPdash \encvar{\hvarvec}=\goedelize{\hvarvec} \rightarrow \selOp(\tvar) = \sel{\seqitm{\tvar}{\len{\tvar}-1}}$
		for every $\selOp(\tvar)$ in $\re_1 \sim \re_2$ 
		by \RuleName{eqItem},
		as $\selOp(\tvar)$ can be considered a shorthand for $\selOp(\at{\tvar}{\len{\tvar}-1})$.
		Then $\dLCHPdash \phi \rightarrow (\encvar{\hvarvec}=\goedelize{\hvarvec} \rightarrow \toFOD{\phi})$
		by equality,
		so $\dLCHPdash \phi \rightarrow \fagoedelize{\hvarvec}{\hvarvec} \toFOD{\phi}$ by first-order reasoning (\RuleName{fol}).
		Further, $\dLCHPdash \encvar{\hvarvec}=\goedelize{\hvarvec} \rightarrow (\toFOD{\phi} \rightarrow \phi)$
		by equality.
		Finally, $\dLCHPdash \fagoedelize{\hvarvec}{\hvarvec} \toFOD{\phi} \rightarrow \phi$ 
		by \RuleName{fol},
		as there is an encoding $\encvar{\hvarvec}$
		such that $\encvar{\hvarvec} = \goedelize{\hvarvec}$ holds
		by \RuleName{TraceToGoedel}.

		\item $\phi \equiv \neg\varphi$,
		then $\dLCHPdash \varphi \leftrightarrow \fagoedelize{\hvarvec}{\hvarvec} \toFOD{\varphi}$ by IH.
		Hence, $\dLCHPdash \varphi \leftrightarrow \exgoedelize{\hvarvec}{\hvarvec} \toFOD{\varphi}$ by \rref{eq:goedel_dual}.
		Finally, $\dLCHPdash \neg\varphi \leftrightarrow \fagoedelize{\hvarvec}{\hvarvec} \toFOD{(\neg\varphi)}$
		as
		$\exgoedelize{\hvarvec}{\hvarvec} \toFOD{\varphi} \equiv \neg \fagoedelize{\hvarvec}{\hvarvec} \neg \toFOD{\varphi}$
		and $\toFOD{(\neg\varphi)} \equiv \neg\toFOD{\varphi}$.

		\item $\phi \equiv \fa{x} \varphi$, then $\dLCHPdash \varphi \leftrightarrow \fagoedelize{\hvarvec}{\hvarvec}\toFOD{\varphi}$ by IH.
		Hence, $\dLCHPdash \fa{x} \varphi \leftrightarrow \fa{x} \fagoedelize{\hvarvec}{\hvarvec}\toFOD{\varphi}$ by \RuleName{fol},
		so $\dLCHPdash \fa{x} \varphi \leftrightarrow \fagoedelize{\hvarvec}{\hvarvec} \toFOD{(\fa{x} \varphi)}$
		by $\forall$-reordering
		as $\toFOD{(\fa{x} \varphi)} \equiv \fa{x} \toFOD{\varphi}$.

		\item $\phi \equiv \fa{\tvar} \varphi$, 
		where $\tvar \not\in \SFV(\varphi)$,
		then let $\hvarvec$ be the free trace variables of $\varphi$ and of $\fa{\tvar}\varphi$.
		By~IH, obtain $\dLCHPdash \varphi \leftrightarrow \fagoedelize{\hvarvec}{\hvarvec} \toFOD{\varphi}$.
		Hence, $\dLCHPdash \fa{\tvar} \varphi \leftrightarrow \fa{\tvar} \fagoedelize{\hvarvec}{\hvarvec}\toFOD{\varphi}$ 
		by \RuleName{fol}.
		Further, $\dLCHPdash \fa{\tvar} \varphi \leftrightarrow \fagoedelize{\hvarvec}{\hvarvec} \fa{\tvar} \toFOD{\varphi}$ by $\forall$-reordering.
		Since $\tvar \not\in \SFV(\varphi)$,
		obtain~$\tvar,\encvar{\tvar} \not\in \SFV(\toFOD{\varphi})$.
		Hence, $\dLCHPdash \fa{\tvar} \toFOD{\varphi} \leftrightarrow \faencvar{\tvar} \toFOD{\varphi}$ by \RuleName{fol},
		where~$\leftarrow$ uses that $\eventenc^*$ is not empty by \RuleName{TraceToGoedel}.
		By congruence, obtain $\dLCHPdash \fa{\tvar} \varphi \leftrightarrow \fagoedelize{\hvarvec}{\hvarvec} \faencvar{\tvar} \toFOD{\varphi}$.
		Finally, $\dLCHPdash \fa{\tvar} \varphi \leftrightarrow \fagoedelize{\hvarvec}{\hvarvec} \toFOD{(\fa{\tvar} \varphi)}$ since $\toFOD{(\fa{\tvar} \varphi)} \equiv \faencvar{\tvar} \toFOD{\varphi}$.

		\item $\phi \equiv \fa{\tvar} \varphi$, 
		where $\tvar \in \SFV(\varphi)$,
		then let $\hvarvec$ be the free trace variables of $\varphi$ except for~$\tvar$.
		Then $\dLCHPdash \varphi \leftrightarrow \fagoedelize{\tvar}{\tvar} \toFOD{\chi}$ by IH,
		where~$\toFOD{\chi} \equiv \fagoedelize{\hvarvec}{\hvarvec} \toFOD{\varphi}$.
		By \RuleName{fol},
		obtain $\dLCHPdash \varphi \leftrightarrow (\encvar{\tvar}{=}\goedelize{\tvar} \rightarrow \toFOD{\chi})$,
		using $\encvar{\tvar} \not\in \SFV(\varphi)$ for $\leftarrow$.
		Then \mbox{$\dLCHPdash \fa{\tvar} \varphi \rightarrow \faencvar{\tvar} \toFOD{\chi}$} 
		by \RuleName{fol}
		because by~\RuleName{GoedelToTrace} every $\eventencseq{\encvar{\tvar}}$
		encodes a trace $\tvar$ such that $\encvar{\tvar}=\goedelize{\tvar}$
		and $\tvar$ is not free in $\faencvar{\tvar} \toFOD{\chi}$.
		Further, $\dLCHPdash \faencvar{\tvar} \toFOD{\chi} \rightarrow \fa{\tvar} \varphi$
		by \RuleName{fol}
		because by \RuleName{TraceToGoedel} an encoding $\isseq{\eventenc}{\tvar}$\,exists for every trace~$\tvar$.
		Finally, $\dLCHPdash \fa{\tvar} \varphi \leftrightarrow \fa{\encvar{\hvarvec}{=}\goedelize{\hvarvec}} \toFOD{(\fa{\tvar} \varphi)}$ by $\forall$-reordering
		since $\toFOD{(\fa{\tvar}\varphi)} \equiv \faencvar{\tvar} \toFOD{\varphi}$.

		\newcommand{\midcirc}{\scalebox{0.8}{$\bigcirc$}}

		\item $\phi \equiv \dbleft  \evolution*{}{non} \dbright \psi$,
		then let $\square \equiv [ \evolution*{}{non} ]$ and $\Diamond \equiv \langle \evolution*{}{non} \rangle$.
		For all $\varphi,\alpha,\psi$,
		the formula $\varphi \wedge \langle \alpha \rangle \psi \leftrightarrow \langle \alpha \rangle (\varphi \wedge \psi)$
		derives if $\SFV(\varphi) \cap \SBV(\alpha) = \emptyset$ using \RuleName{vacuous}.%
		\footnote{
			Essentially by \RuleName{acMono},
			$\dLCHPdash \langle \alpha \rangle (\varphi \wedge \psi) \rightarrow \langle \alpha \rangle \varphi \wedge \langle \alpha \rangle \psi$.
			Then $\dLCHPdash \langle \alpha \rangle (\varphi \wedge \psi) \rightarrow \varphi \wedge \langle \alpha \rangle \psi$
			by the derivable dual of \RuleName{vacuous}.
			By \RuleName{acModalMP} and duality \RuleName{dbDual},
			obtain $\dLCHPdash [ \alpha ] \varphi \rightarrow (\langle \alpha \rangle \psi \rightarrow \langle \alpha \rangle (\varphi \wedge \psi))$.
			Then $\varphi \rightarrow (\langle \alpha \rangle \psi \rightarrow \langle \alpha \rangle (\varphi \wedge \psi))$ by \RuleName{vacuous}.
			Finally, $\varphi \wedge \langle \alpha \rangle \psi \rightarrow \langle \alpha \rangle (\varphi \wedge \psi)$ propositionally.
		}
		Hence, $\dLCHPdash (\encvar{\hvarvec}{=}\goedelize{\hvarvec} \wedge \Diamond \toFOD{\psi}) \leftrightarrow \Diamond (\encvar{\hvarvec}{=}\goedelize{\hvarvec} \wedge \toFOD{\psi})$ 
		since $\encvar{\hvarvec}, \hvarvec \not\in \SBV(\evolution*{}{non})$.
		Then $\dLCHPdash \exgoedelize{\hvarvec}{\hvarvec} \Diamond \toFOD{\psi} \leftrightarrow \ex{\encvar{\hvarvec}} \Diamond (\encvar{\hvarvec}{=}\goedelize{\hvarvec} \wedge \toFOD{\psi})$ by \RuleName{fol},
		which implies
		$\dLCHPdash \exgoedelize{\hvarvec}{\hvarvec} \Diamond \toFOD{\psi} \leftrightarrow \Diamond \exgoedelize{\hvarvec}{\hvarvec} \toFOD{\psi}$
		by~\RuleName{barcan} as $\encvar{\hvarvec} \not\in \evolution*{}{non}$.
		The latter implies $\dLCHPdash \exgoedelize{\hvarvec}{\hvarvec} \Diamond \toFOD{\psi} \leftrightarrow \Diamond \exgoedelize{\hvarvec}{\hvarvec} \toFOD{\psi}$,
		which implies 
		$\dLCHPdash \fagoedelize{\hvarvec}{\hvarvec} \square \toFOD{\psi} \leftrightarrow \square \fagoedelize{\hvarvec}{\hvarvec} \toFOD{\psi}$
		by duality \RuleName{dbDual}
		and $\dLCHPdash \fagoedelize{\hvarvec}{\hvarvec} \Diamond \toFOD{\psi} \leftrightarrow \Diamond \fagoedelize{\hvarvec}{\hvarvec} \toFOD{\psi}$
		by \rref{eq:goedel_dual}.
		In summary, $\dLCHPdash \fagoedelize{\hvarvec}{\hvarvec} \midcirc \toFOD{\psi} \leftrightarrow \midcirc \fagoedelize{\hvarvec}{\hvarvec} \toFOD{\psi}$ for $\midcirc \in \{\square,\Diamond\}$.
		Since $\dLCHPdash \psi \leftrightarrow \fagoedelize{\hvarvec}{\hvarvec} \toFOD{\psi}$ by IH,
		obtain $\dLCHPdash \midcirc \psi \leftrightarrow \midcirc \fagoedelize{\hvarvec}{\hvarvec} \toFOD{\psi}$  by \RuleName{acMono} or \RuleName{acDiaMono}.
		The latter combines with $\dLCHPdash \fagoedelize{\hvarvec}{\hvarvec} \midcirc \toFOD{\psi} \leftrightarrow \midcirc \fagoedelize{\hvarvec}{\hvarvec} \toFOD{\psi}$ 
		to $\dLCHPdash \phi \leftrightarrow \fagoedelize{\hvarvec}{\hvarvec} \midcirc \toFOD{\psi}$.
		\qedhere
	\end{enumerate}
\end{proof}

\begin{proof}
	[Proof of \rref{thm:fod-completeness}]
	Let $\phi$ be a valid \dLCHP formula.
	By \rref{thm:com-fod-completeness},
	there are \comFOD tautologies $\phi_1, \ldots, \phi_\semvar$ from which $\phi$ derives in \dLCHP's calculus (\rref{fig:calculus}).
	Since $\phi_k$ is a tautology,
	\wlossg assume that $\phi_k$ contains no free trace variables.
	Otherwise, use the universal closure $\fa{\hvarvec_k} \phi_k$,
	where $\hvarvec_k$ are the free trace variables of $\phi_k$,
	from which $\phi_k$ derives by axiom \RuleName{uniInstance}.
	Then there is a FOD formula~$\toFOD{\phi_k}$ by \rref{prop:com-fod-to-fod} for each $k = 1, \ldots, \semvar$ 
	such that $\phi_k \leftrightarrow \toFOD{\phi_k}$ derives in the extended \dLCHP calculus \dLCHPdash.
	Note that there is no quantifier around $\toFOD{\phi_k}$ since~$\phi_k$ has no free trace variables.
	By soundness (\rref{thm:extended_soundness}),
	$\toFOD{\phi_k}$ is a tautology because $\phi_k$ is.
	In summary, $\phi$ derives in \dLCHPdash from the FOD tautologies $\toFOD{\phi_1}, \ldots, \toFOD{\phi_\semvar}$.
\end{proof}

\section*{Acknowledgements}
This project was funded in part by the Deutsche For\-schungs-gemeinschaft (DFG) -- \href{https://gepris.dfg.de/gepris/projekt/378803395?context=projekt&task=showDetail&id=378803395&}{378803395} (ConVeY), 
an Alexander von Humboldt Professorship,
by the AFOSR under grant no.\,FA9550-16-1-0288,
and by the NSF under grant no.\,CCF2427581.

\newcommand{\doi}[1]{doi: \href{https://doi.org/#1}{\nolinkurl{#1}}}
\bibliography{platzer, literature}

\newpage
\section*{Additional Appendices}

\CreateRuleRef{acReceive}
\CreateRuleRef{composition}
\CreateRuleRef{solution}
\CreateRuleRef{forallR}
\CreateRuleRef{acReceiveRight}
\CreateRuleRef{conditional}
\CreateRuleRef{andR}
\CreateRuleRef{andL}
\CreateRuleRef{acSendRight}
\CreateRuleRef{implR}

\section{Substitution}
\label{app:substitution}

This appendix reports details for \rref{sec:substitution}.
\rref{def:rec_renaming} and \ref{def:rec_substitution} provide recursive definitions for recorder renaming and substitution for trace variables, respectively.
Further, a detailed proof for \rref{lem:rec_substitution} is given.

\begin{definition}
	[Recorder renaming]
	\label{def:rec_renaming}
	For a program $\alpha$
	and trace variables $\tvar, \tvar_0$,
	recorder renaming~$\alpha \subs{\tvar}{\tvar_0}$ of $\tvar$ in $\alpha$ to $\tvar_0$ is inductively defined in \rref{fig:rec_rename}.
\end{definition}

\begin{definition}
	[Substitution for trace variables]
	\label{def:rec_substitution}
	For a formula $\phi$,
	a trace variable $\tvar$,
	and a trace term $\te$,
	the substitution $\phi \subs{\tvar}{\te}$ of $\te$ for $\tvar$ in $\phi$ is inductively defined in \rref{fig:rec_substitution}.
\end{definition}	

\begin{figure}
	\centering
	$\begin{aligned}
		(x \ceq \rp) \subs{\tvar}{\tvar_0} 
			&\equiv x \ceq \rp \\
		(x \ceq *) \subs{\tvar}{\tvar_0}
			&\equiv x \ceq * \\
		(\test{\chi}) \subs{\tvar}{\tvar_0}
			&\equiv \test{\chi} \\
		(\evolution*{}{}) \subs{\tvar}{\tvar_0}
			&\equiv \evolution*{}{}
	\end{aligned}$\hspace{.4cm}
	$\begin{aligned}
		(\alpha \seq \beta) \subs{\tvar}{\tvar_0}
			&\equiv (\alpha \subs{\tvar}{\tvar_0}) \seq (\beta \subs{\tvar}{\tvar_0}) \\
		(\alpha \cup \beta) \subs{\tvar}{\tvar_0}
			&\equiv (\alpha \subs{\tvar}{\tvar_0}) \cup (\beta \subs{\tvar}{\tvar_0}) \\
		(\repetition{\alpha}) \subs{\tvar}{\tvar_0} 
			&\equiv \repetition{(\alpha \subs{\tvar}{\tvar_0})} \\
		(\alpha \parOp \beta) \subs{\tvar}{\tvar_0}
			&\equiv (\alpha \subs{\tvar}{\tvar_0}) \parOp (\beta \subs{\tvar}{\tvar_0})
	\end{aligned}$\\[.2cm]

	$\begin{aligned}
		(\send{}{\tvar_1}{}) \subs{\tvar}{\tvar_0}
			&\equiv
			\begin{cases}
				\send{}{\tvar_0}{} & \text{ if } \tvar_1 = \tvar \\
				\send{}{\tvar_1}{} & \text{ else }
			\end{cases}
	\end{aligned}$
	$\begin{aligned}
		(\receive{}{\tvar_1}{}) \subs{\tvar}{\tvar_0}
			&\equiv
			\begin{cases}
				\receive{}{\tvar_0}{} & \text{ if } \tvar_1 = \tvar \\
				\receive{}{\tvar_1}{} & \text{ else }
			\end{cases}
	\end{aligned}$

	\caption{Recursive definition of recorder renaming (see \rref{def:rec_renaming})} 
	\label{fig:rec_rename}
\end{figure}

\begin{figure}
	\begin{align*}
		(\expr_1 \sim \expr_2) \subs{\tvar}{\te} &\equiv (\expr_1) \subs{\tvar}{\te} \sim (\expr_2) \subs{\tvar}{\te} \\
		(\neg \varphi) \subs{\tvar}{\te} &\equiv \neg \varphi \subs{\tvar}{\te} \\
		(\varphi \wedge \psi) \subs{\tvar}{\te} &\equiv \varphi \subs{\tvar}{\te} \wedge \psi \subs{\tvar}{\te} \\
		(\fa{\avar} \varphi) \subs{\tvar}{\te} &\equiv 
		\begin{cases}
			\fa{\avar} \varphi 
				&\text{if } \avar \equiv \tvar \\
			\fa{\avar_0} (\varphi \subs{\avar}{\avar_0}) \subs{\tvar}{\te} 
				&\text{if } \avar \not\equiv \tvar \text{ and } \avar \in \SFV(\te) \text{, and } \avar_0 \text{ is fresh} \\
			\fa{\avar} \varphi \subs{\tvar}{\te} 
				&\text{if } \avar \not\equiv \tvar \text{ and } \avar \not\in \SFV(\te)
		\end{cases} \\
		(\dbleft \alpha \dbright \psi) \subs{\tvar}{\te} &\equiv
		\begin{cases}
			\dbleft \alpha \subs{\tvar}{\tvar_0} \dbright \psi \subs{\tvar}{\tvar_0}
			& \text{if } \te \equiv \tvar_0 \in \TVar \text{ and } \tvar_0 \not\equiv \getrec{\alpha} \\
			\fa{\tvar_0{=}\te} (\dbleft \alpha \dbright \psi) \subs{\tvar}{\tvar_0}
			& \text{else, where } \tvar_0 \text{ is fresh}
		\end{cases} \\
		(\dbleft \alpha \dbright \ac \psi) \subs{\tvar}{\te} &\equiv
		\begin{cases}
			\dbleft \alpha \subs{\tvar}{\tvar_0} \dbright \acpair{\A \subs{\tvar}{\tvar_0}, \Commit \subs{\tvar}{\tvar_0}} \psi \subs{\tvar}{\tvar_0}
			& \text{if } \te \equiv \tvar_0 \in \TVar \text{ and } \tvar_0 \not\equiv \getrec{\alpha} \\
			\fa{\tvar_0{=}\te} (\dbleft \alpha \dbright \ac \psi) \subs{\tvar}{\tvar_0}
			& \text{else, where } \tvar_0 \text{ is fresh}
		\end{cases}
	\end{align*}
	\caption{Recursive definition of substitution for a trace variable (see \rref{def:rec_renaming})} 
	\label{fig:rec_substitution}
\end{figure}

\begin{proof}
	[Proof of \rref{lem:rec_substitution}]
	The proof is by induction on the structure of $\phi$ using \rref{def:rec_substitution}.
	The only non-standard cases are $\phi \equiv \dbleft \alpha \dbright \psi$ and $\phi \equiv \dbleft \alpha \dbright \ac \psi$.
	The following proves the case $\phi \equiv [ \alpha ] \psi$.
	The remaining cases $\langle \alpha \rangle \psi$ and $\dbleft \alpha \dbright \ac \psi$ are analogous.
	The proof is by case distinction.
	\begin{enumerate}
		\item If $\te \equiv \tvar_0$ for some $\tvar_0 \in \TVar$ and $\tvar_0 \not\equiv \getrec{\alpha}$,
		then $\phi \subs{\tvar}{\te} \equiv [ \alpha \subs{\tvar}{\tvar_0} ] \psi \subs{\tvar}{\tvar_0}$.
		Then let $\pstate{v} \vDash \phi \subs{\tvar}{\tvar_0}$
		and let $\pstate[alt]{v} = \pstate{v} \subs{\tvar}{\sem{\tvar_0}{\pstate{v}}}$.
		To prove $\pstate[alt]{v} \vDash [ \alpha ] \psi$,
		let $(\pstate[alt]{v}, \trace, \pstate[alt]{w}) \in \sem{\alpha}{}$ with $\pstate[alt]{w} \neq \bot$,
		where $\trace = (\getrec{\alpha}, \trace_0)$ for some $\trace_0$.
		By coincidence (\rref{cor:history_coincidence}),
		there is a run $\run \in \sem{\alpha}{}$ with $\pstate{w} = \pstate[alt]{w}$ on $\{\tvar\}^\complement$.
		Hence, $(\pstate{v}, \trace \subs{\tvar}{\tvar_0}, \pstate{w}) \in \sem{\alpha \subs{\tvar}{\tvar_0}}{}$ by \rref{lem:rec_renaming}.
		Therefore, $\pstate{w} \cdot \trace \subs{\tvar}{\tvar_0} \vDash \psi \subs{\tvar}{\tvar_0}$ by $\pstate{v} \vDash \phi \subs{\tvar}{\tvar_0}$
		because $\phi \subs{\tvar}{\tvar_0} \equiv [ \alpha \subs{\tvar}{\tvar_0} ] \psi \subs{\tvar}{\tvar_0}$.
		By IH, $(\pstate{w} \cdot \trace \subs{\tvar}{\tvar_0}) \subs{\tvar}{\sem{\tvar_0}{(\pstate{w} \cdot \trace \subs{\tvar}{\tvar_0})}} \vDash \psi$.
		\begin{enumerate}
			\item 
			\label{itm:subs_case_1}
			If $\tvar\equiv \getrec{\alpha}$,
			then $\sem{\tvar_0}{(\pstate{w} \cdot \trace \subs{\tvar}{\tvar_0})} = \pstate{v}(\tvar_0) \cdot \trace_0$ since $\pstate{v}(\tvar_0) = \pstate{w}(\tvar_0)$ by the bound effect property (\rref{lem:bound_effect}).
			Hence, $(\pstate{w} \cdot \trace \subs{\tvar}{\tvar_0}) \subs{\tvar}{\sem{\tvar_0}{(\pstate{w} \cdot \trace \subs{\tvar}{\tvar_0})}}
			= \pstate{w} \subs{\tvar}{\pstate{v}(\tvar_0) \cdot \trace_0}$,
			so $\pstate{w} \subs{\tvar}{\pstate{v}(\tvar_0) \cdot \trace_0} \vDash \psi$.
			Further, observe $\pstate[alt]{w} = \pstate{w} \subs{\tvar}{\pstate{v}(\tvar_0)}$ by \rref{lem:bound_effect}.
			Hence, $\pstate[alt]{w} \cdot \trace \vDash \psi$.
			Finally, $\pstate[alt]{v} \vDash [ \alpha ] \psi$.

			\item If $\tvar \not\equiv \getrec{\alpha}$,
			then $\trace \subs{\tvar}{\tvar_0} = \trace$,
			so $\sem{\tvar_0}{(\pstate{w} \cdot \trace\subs{\tvar}{\tvar_0})} = \sem{\tvar_0}{(\pstate{w} \cdot \trace)}$.
			Since $\tvar_0 \not\equiv \getrec{\alpha}$,
			obtain $\sem{\tvar_0}{(\pstate{w} \cdot \trace)} = \pstate{w}(\tvar_0)$.
			By \rref{lem:bound_effect}, $\pstate{w}(\tvar_0) = \pstate{v}(\tvar_0)$.
			Overall, $\sem{\tvar_0}{(\pstate{w} \cdot \trace\subs{\tvar}{\tvar_0})} = \pstate{v}(\tvar_0)$,
			so $(\pstate{w} \cdot \trace) \subs{\tvar}{\pstate{v}(\tvar_0)} \vDash \psi$.
			Finally, $\pstate[alt]{w} \cdot \trace \vDash \psi$ as $\pstate[alt]{w} = \pstate{w} \subs{\tvar}{\pstate{v}(\tvar_0)}$.
		\end{enumerate}
		The converse implication,
		\iest that $\pstate[alt]{v} \vDash [ \alpha ] \psi$ implies $\pstate{v} \vDash ([ \alpha ] \psi) \subs{\tvar}{\tvar_0}$,
		is analogous.

		\item If $\te \not\in \TVar$ or $\te \equiv \getrec{\alpha}$,
		then $\phi \subs{\tvar}{\te} \equiv \fa{\tvar_0{=}\te} \phi \subs{\tvar}{\tvar_0}$,
		where $\tvar_0$ is fresh.
		Hence, $\pstate{v} \vDash \phi \subs{\tvar}{\tvar_0}$,
		iff $\pstate{v} \subs{\tvar_0}{\sem{\te}{\pstate{v}}} \vDash \phi \subs{\tvar}{\tvar_0}$,
		iff, by \rref{itm:subs_case_1}, 
		$( \pstate{v} \subs{\tvar_0}{\sem{\te}{\pstate{v}}} ) \subs{\tvar}{\sem{\te}{\pstate{v}}} \vDash \phi$,
		iff $\pstate{v} \subs{\tvar}{\sem{\te}{\pstate{v}}} \vDash \phi$ by coincidence (\rref{lem:expr_coincidence}) as~$\tvar_0$ is fresh.
	\end{enumerate}

\end{proof}

\section{Induction Order}
\label{app:order}

\rref{thm:com-fod-completeness}
is proven by a well-founded induction along the order $\indOrder$ on \dLCHP formulas
defined in \rref{def:ind_order},
which lexicographically combines orders measuring different aspects of structural complexity.
\rref{def:rank} formally defines these orders from rank functions,
which justifies their well-foundedness and makes the 
complexity measures
explicit.

\begin{definition}
	[Rank of a formula]
	\label{def:rank}
	For a formula $\phi \not\in\comFOD$, \rref{fig:rank} defines the rank functions $\progrank{\phi}$ and $\fmlrank{\phi}$ by recursion on the structure of $\phi$.
	For $\phi \in \comFOD$, define $\progrank{\phi} = \fmlrank{\phi} = 0$.
	\begin{enumerate}
		
		\item The \emph{rank by program complexity} $\progrank{\phi}$ measures the overall structural complexity of programs in $\phi$.
		The rank induces a well-founded order~$\progorder$ on formulas by $\varphi \progorder \psi$ if $\progrank{\varphi} < \progrank{\psi}$.
		
		\item The \emph{rank by logical complexity} $\fmlrank{\phi}$ measures the structural complexity of the formula $\phi$ itself.
		The rank induces a well-founded order~$\fmlorder$ on formulas by $\varphi \fmlorder \psi$ if $\fmlrank{\varphi} < \fmlrank{\psi}$.
	\end{enumerate}
\end{definition}

The decisive characteristic of the rank by program complexity $\progrank{\phi}$ (\rref{def:rank}) is that compound programs have a higher rank than the sum of their pieces.
Hence, a formula becomes smaller in the order if a program gets removed, \eg $\fa{x} (x^2 \ge 0) \progorder [ x \ceq \rp ] x = y$,
or if a program gets decomposed, \eg $[ \alpha ] [ \beta ] \psi \progorder [ \alpha \seq \beta ] \psi$.

Parallel composition even receives a rank that is more than twice the rank of its subprograms by \rref{def:rank}.
A formula that contains two copies of both subprograms of a parallel composition is thus still simpler than a formula that contains the parallel composition itself,
if no other program got worse,
\eg $[ \alpha ] \langle \alpha \rangle \psi \wedge [ \beta ] \langle \beta \rangle \psi \progorder [ \alpha\parOp\beta ] \psi$.

The rank of a formula by logical complexity (\rref{def:rank}) ranks 
every formula
higher than the sum of its subformulas.
This induces the standard structural complexity order on formulas,
\iest subformulas are smaller in the order, \eg $\varphi \fmlorder \varphi \wedge \psi$,
and a formula becomes smaller if some subformula does,
\eg $[ \alpha ] \lambda \fmlorder [ \alpha ] \psi$ if $\lambda \fmlorder \psi$.

\begin{definition}
	[Induction order]
	\label{def:ind_order}
	The partial order $\indOrder$ on \dLCHP formulas is the lexicographic combination of the orders $\progorder$ and $\fmlorder$ (see \rref{def:rank}),
	\iest for \dLCHP formualas $\varphi,\psi$
	define $\varphi \indOrder \psi$
	if $\varphi \progorder \psi$ \emph{or} $\varphi =_\alpha \psi$ and $\varphi \fmlorder \psi$,
	where $\varphi =_\alpha \psi$ if neither $\varphi \progorder \psi$ nor $\psi \progorder \varphi$.
	The order $\indOrder$ is well-founded as lexicographic combination of well-founded orders. 
\end{definition}

\begin{figure}
	\begin{small}
		$\begin{aligned}
			& \progrank{\alpha} = 1 \qquad\sidecondition{$\alpha$ atomic} \\
			& \progrank{\alpha \seq \beta} = 1 + \progrank{\alpha} + \progrank{\beta} \\
			& \progrank{\alpha \cup \beta} = 1 + \progrank{\alpha} + \progrank{\beta} \\
			& \progrank{\alpha \parOp \beta} = 1 + 2 \cdot (\progrank{\alpha} + \progrank{\beta}) \\
			& \progrank{\repetition{\alpha}} = 1 + \progrank{\alpha}
		\end{aligned}$\hspace{.4cm}%
		$\begin{aligned}
			& \progrank{\expr_1 \sim \expr_2} = 0 \\
			& \progrank{\neg \varphi} = \progrank{\varphi} \\
			& \progrank{\varphi \wedge \psi} = \progrank{\varphi} + \progrank{\psi} \\
			& \progrank{\fa{\avar} \varphi} = \progrank{\varphi} \\
			& \progrank{\dbleft \alpha \dbright \psi} = \progrank{\alpha} + \progrank{\psi}
		\end{aligned}$
		\vspace{.25cm}

		$\progrank{\dbleft \alpha \dbright \ac \psi} = \progrank{\dbleft \alpha \dbright \psi} + \progrank{\A} + \progrank{\Commit}$
		\vspace{.5cm}

		$\begin{aligned}
			& \fmlrank{\expr_1 \sim \expr_2} = 1 \\
			& \fmlrank{\neg \varphi} = 1 + \fmlrank{\varphi} \\
			& \fmlrank{\dbleft \alpha \dbright \psi} = 1 + \fmlrank{\psi} \\
		\end{aligned}$\hspace{.4cm}%
		$\begin{aligned}
			& \fmlrank{\fa{\avar} \varphi} = 1 + \fmlrank{\varphi} \\
			& \fmlrank{\varphi \wedge \psi} = 1 + \fmlrank{\varphi} + \fmlrank{\psi} \\
			& \fmlrank{\dbleft \alpha \dbright \ac \psi} = \fmlrank{\dbleft \alpha \dbright \psi} + \fmlrank{\A} + \fmlrank{\Commit}
		\end{aligned}$
		\vspace{.25cm}
	\end{small}

	\caption{$\progrank{\phi}$ and $\fmlrank{\phi}$ for a formula $\phi \not\in \comFOD$ (\rref{def:rank})}
	\label{fig:rank}
\end{figure}

\section{Details of the Example}

This appendix proves the open premises of \rref{ex:derivation}.
The proofs are presented in sequent-style,
where a sequent $\Gamma \vdash \Delta$ abbreviates the formula $\bigwedge_{\varphi\in\Gamma} \varphi \rightarrow \bigvee_{\psi\in\Delta} \psi$,
and use derivable proof rules for sequents \cite{DBLP:books/sp/Platzer18}.
The proof rule \RuleName{fol} denotes first-order reasoning.
The rule \RuleName{acLoop} is a standard loop rule \cite{Platzer18},
which derives from \RuleName{acInvariant}:

\begin{center}
	\begin{calculus}
		\startRule{acLoop}
			\Axiom{$\Gamma \vdash \Commit \wedge I, \Delta$}
			\Axiom{$I \vdash [ \alpha ] \ac I$}
			\Axiom{$\A \wedge I \vdash \psi$}
			\TrinaryInf{$\Gamma \vdash [ \repetition{\alpha} ] \ac \psi, \Delta$}
		\stopRule
	\end{calculus}
\end{center}

For $\ch{} \in \{ \ch{vel}, \ch{pos} \}$,
define $\since{\gtime}{\tvar \downarrow \ch{}} \equiv \gtime - \stamp[\gtime_0]{\tvar \downarrow \ch{}}$, 
\iest the time elapsed since last communication along $\ch{}$ recorded by $\tvar$,
and let $\till{\gtime}{\tvar \downarrow \ch{}} \equiv \periodicity - \since{\gtime}{\tvar \downarrow \ch{pos}}$,
\iest the time till the next communication along $\ch{}$.

\newcommand{\solutionDomain}[1]{
	\fa{0{\le}s{\le}t} #1
}
\newcommand{\solutionRHS}[2]{
	\fa{t{\ge}0} \big(
		(\solutionDomain{#1})
		\rightarrow #2
	\big)
}

\newcommand{\fSolutionAssign}[1]{
	x_f \ceq x_f + t \cdot #1 \seq \waitvar \ceq \waitvar + t
}

\newcommand{\solutionDyn}[1]{\mathtt{solution}_f(#1)}

\begin{figure}
	\begin{subfigure}{\textwidth}
		\begin{prooftree}[shape=justified]
			\Axiom{$*$}
			
			\UnaryInf{$F, H, d > \periodicity \maxvelo, t \ge 0, \solutionDomain{\waitvar + s \le \periodicity} \vdash 
			F(\tvar_0, \tarvelo, x_f + t \cdot \tarvelo, \waitvar + t)$}
	
			\RuleNameRight{acComposition, assign}{}
			\UnaryInf{$F, H, d > \periodicity \maxvelo, t \ge 0, \solutionDomain{\waitvar + s \le \periodicity} \vdash 
				[ \solutionDyn{\tarvelo} ] F(\tvar_0, \tarvelo, x_f, \waitvar)
			$}
	
			\RuleNameRight{forallR, implR}{}
			\UnaryInf{$F, H, d > \periodicity \maxvelo \vdash \solutionRHS{
				\waitvar + s \le \periodicity
			}{
				[ \solutionDyn{\tarvelo} ] F(\tvar_0, \tarvelo)
			}$}
	
			\RuleNameRight{solution}{}
			\UnaryInf{$F, H, d > \periodicity \maxvelo \vdash [ \Plant_f(\tarvelo) ] F(\tvar_0, \tarvelo)$}
	
			\SideAx{$\triangleright$ \rref{fig:follower_velo_d_unsafe}}
	
			\RuleNameRight{assign}{}
			\UnaryInf{$F, H, d > \periodicity \maxvelo \vdash [ v_f \ceq \tarvelo ] [ \Plant_f(v_f) ] F(\tvar_0, v_f)$}
	
			\RuleNameRight{conditional}{}
			\UnaryInf{$F, H \vdash [ \ifstat{d > \periodicity \maxvelo}{v_f \ceq \tarvelo} ] [ \Plant_f ] F(\tvar_0)$}
	
			\RuleNameRight{acReceiveRight}{}
			\UnaryInf{$F \vdash [ \receive{\ch{vel}}{}{\tarvelo} ] \acpair{\A, \true} [
			\ifstat{d > \periodicity \maxvelo}{v_f \ceq \tarvelo} ] [ \Plant_f ] F(\tvar)$}
	
			\RuleNameRight{acMono, boxesDual}{}
			\UnaryInf{$F \vdash [ \receive{\ch{vel}}{}{\tarvelo} ] \acpair{\A, \true} [
			\ifstat{d > \periodicity \maxvelo}{v_f \ceq \tarvelo} ] \acpair{\A, \true} [ \Plant_f ] F$}
	
			\SideAx{$\triangleright$ \rref{fig:follower_dist_d_unsafe}}
	
			\RuleNameRight{acComposition}{}
			\UnaryInf{$F \vdash [ \ctrlVelocity ] \acpair{\A, \true} [ \Plant_f ] F$}
	
			\RuleNameRight{andR}{}
			\UnaryInf{$F \vdash [ \ctrlVelocity ] \acpair{\A, \true} [ \Plant_f] F \wedge [ \ctrlDistance ] \acpair{\A, \true} [ \Plant_f ] F$}
	
			\UnaryInf{\qquad\qquad by \RuleName{acComposition}, \RuleName{acMono}, \RuleName{boxesDual}, \RuleName{acChoice}}
	
			\UnaryInf{$F \vdash [ (\ctrlVelocity \cup \ctrlDistance) \seq \Plant_f ] \acpair{\A, \true} F$}
	
			\RuleNameRight{acLoop}{}
			\UnaryInf{$\gtime_0 = \gtime, \Gamma \vdash [ \progtt{follower}^* ] \acpair{\A, \true} x_f < \val[x_0]{\tvar\downarrow\ch{pos}}$}
	
			\RuleNameRight{fol}{}
			\UnaryInf{$\Gamma \vdash [ \progtt{follower}^* ] \acpair{\A, \true} x_f < \val[x_0]{\tvar\downarrow\ch{pos}}$}
		\end{prooftree}
		\caption{Note thate \RuleName{acMono} and \RuleName{boxesDual} drop the assumption}
		\label{fig:follower_velo_d_safe}
	\end{subfigure}
	
	\begin{subfigure}{\textwidth}
		\begin{prooftree}[shape=justified]
			\Axiom{$*$}
			
			\UnaryInf{$F, H, d \le \periodicity \maxvelo, t \ge 0, \solutionDomain{\waitvar + s \le \periodicity} \vdash 
				F(\tvar_0, x_f + t \cdot v_f, \waitvar + t)
			$}
	
			\RuleNameRight{acComposition, assign}{}
			\UnaryInf{$F, H, d \le \periodicity \maxvelo, t \ge 0, \solutionDomain{\waitvar + s \le \periodicity} \vdash 
				[ \solutionDyn{v_f} ] F(\tvar_0, x_f, \waitvar)
			$}
	
			\RuleNameRight{forallR, implR}{}
			\UnaryInf{$F, H, d \le \periodicity \maxvelo \vdash \solutionRHS{
				\waitvar + s \le \periodicity
			}{
				[ \solutionDyn{v_f} ] F(\tvar_0)
			}$}
	
			\RuleNameRight{solution}{}
			\UnaryInf{$F, H, d \le \periodicity \maxvelo \vdash [ \Plant_f ] F(\tvar_0)$}
		\end{prooftree}
		\caption{}
		\label{fig:follower_velo_d_unsafe}
	\end{subfigure}

	\begin{align*}
		&\solutionDyn{v_f} \equiv \fSolutionAssign{v_f} \\
		&\Gamma \equiv \tvar_0 = \tvar, x_0 = x_l, \varphi \\
		&F \equiv 
			0 \le v_f \le \safevelo{d} \wedge v_f \le \maxvelo \wedge x_f + \till{\gtime}{\tvar\downarrow\ch{pos}} \safevelo{d} < \val[x_0]{\tvar\downarrow\ch{pos}} \\
			&\qquad\qquad \wedge \waitvar = \since{\gtime}{\tvar\downarrow\ch{pos}} \le \periodicity \\
		&H \equiv \tvar_0 = \tvar \cdot \comItem{\ch{vel}, \tarvelo, \gtime}, \A(\tvar_0)
	\end{align*}
	\caption{Derivation of the subproof about the $\progtt{follower}$ in \rref{ex:derivation}}
	\label{fig:follower_velo}
\end{figure}

\newcommand{\ctrlDistCalc}{\mathtt{distcalc}}

\begin{figure}
	\begin{subfigure}{\textwidth}
		\begin{prooftree}[shape=justified]
			\Axiom{$*$}
			
			\UnaryInf{$F, H, D, \orange{0}{v_0}{\safevelo{d}}, \waitvar = 0, t \ge 0 
			\vdash F(\tvar_0, v_0, x_f + t \cdot v_0, \waitvar + t)$}
	
			\RuleNameRight{composition, assign}{}
			\UnaryInf{$\begin{aligned}
				&F, H, D, \orange{0}{v_0}{\safevelo{d}}, \waitvar = 0, t \ge 0  \\
				&\qquad\qquad \vdash [ \fSolutionAssign{v_0}] F(\tvar_0, v_0, x_f, \waitvar)
			\end{aligned}$}
	
			\RuleNameRight{forallR, implR}{}
			\UnaryInf{$\ldots \vdash \solutionRHS{
				\waitvar + s \le \periodicity
			}{
				[ \fSolutionAssign{v_0} ] F(\tvar_0, v_0)
			}$}
	
			\RuleNameRight{solution}{}
			\UnaryInf{$F, H, D, \orange{0}{v_0}{\safevelo{d}}, \waitvar = 0 \vdash [ \Plant_f(v_0) ] F(\tvar_0, v_0)$}
	
			\RuleNameRight{test, implR, assign}{}
			\UnaryInf{$F, H, D \vdash [ 
				\test{\orange{0}{v_0}{\safevelo{d}}}
			] [ \waitvar \ceq 0 ] [ \Plant_f(v_0) ] F(\tvar_0, v_0)$}
	
			\RuleNameRight{forallR}{}
			\UnaryInf{$F, H, D \vdash \fa{v_f} [ 
				\test{\orange{0}{v_f}{\safevelo{d}}}
			] [ \waitvar \ceq 0 ] [ \Plant_f(v_f) ] F(\tvar_0, v_f)$}
	
			\SideAx{$\triangleright$ \rref{fig:follower_dist_d_safe}}

			\RuleNameRight{composition, nondetAssign}{}
			\UnaryInf{$F, H, D \vdash [ 
				v_f \ceq * \seq \test{\orange{0}{v_f}{\safevelo{d}}}
			] [ \waitvar \ceq 0 ] [ \Plant_f ] F(\tvar_0)$}
			
			\RuleNameRight{composition, conditional}{}
			\UnaryInf{$F, H, d = \mespos - x_f \vdash [ 
			\ifstat{d \le \periodicity \maxvelo}{%
				\{v_f \ceq * \seq \test{\orange{0}{v_f}{\safevelo{d}}}\}
			} \seq \waitvar \ceq 0 ] [ \Plant_f ] F(\tvar_0)$}
	
			\RuleNameRight{composition, assign}{}
			\UnaryInf{$F, H \vdash [ d \ceq \mespos - x_f \seq 
			\ifstat{d \le \periodicity \maxvelo}{%
				\{v_f \ceq * \seq \test{\orange{0}{v_f}{\safevelo{d}}}\}
			} \seq \waitvar \ceq 0 ] [ \Plant_f ] F(\tvar_0)$}
	
			\RuleNameRight{acReceiveRight}{}
			\UnaryInf{$F \vdash [ \receive{\ch{pos}}{}{\mespos} ] \acpair{\A(\tvar), \true} [
			\ctrlDistCalc ] [ \Plant_f ] F(\tvar)$}
	
			\RuleNameRight{acMono, boxesDual}{}
			\UnaryInf{$F \vdash [ \receive{\ch{pos}}{}{\mespos} ] \acpair{\A, \true} [
			\ctrlDistCalc ] \acpair{\A, \true} [ \Plant_f ] F$}
	
			\RuleNameRight{acComposition}{}
			\UnaryInf{$F \vdash [ \ctrlDistance ] \acpair{\A, \true} [ \Plant_f ] F$}
		\end{prooftree}
		\caption{}
		\label{fig:follower_dist_d_unsafe}
	\end{subfigure}

	\begin{subfigure}{\textwidth}
		\begin{prooftree}[shape=justified]
			\Axiom{$*$}
			
			\UnaryInf{$F, H, D, \orange{0}{v_0}{\safevelo{d}}, \waitvar = 0, t \ge 0, \solutionDomain{\waitvar + s \le \periodicity} \vdash F(\tvar_0, x_f + t \cdot v_0, \waitvar + t)$}
	
			\RuleNameRight{composition, assign}{}
			\UnaryInf{$\begin{aligned}
				&F, H, D, \orange{0}{v_0}{\safevelo{d}}, \waitvar = 0, t \ge 0, \solutionDomain{\waitvar + s \le \periodicity} 
				\\ 
				&\qquad\qquad \vdash [ \fSolutionAssign{v_f}] F(\tvar_0, x_f, \waitvar)
			\end{aligned}$}
	
			\RuleNameRight{forallR, implR}{}
			\UnaryInf{$\begin{aligned}
				&F, H, D, \orange{0}{v_0}{\safevelo{d}}, \waitvar = 0 \\
				&\qquad\qquad \vdash \solutionRHS{
				\waitvar + s \le \periodicity
				}{
					[ \fSolutionAssign{v_f} ] F(\tvar_0)
				}
			\end{aligned}$}
	
			\RuleNameRight{solution}{}
			\UnaryInf{$F, H, d = \mespos - x_f, d > \periodicity \maxvelo, \waitvar = 0 \vdash [ \Plant_f ] F(\tvar_0)$}

			\RuleNameRight{assign}{}
			\UnaryInf{$F, H, d = \mespos - x_f, d > \periodicity \maxvelo \vdash [ 
				\waitvar \ceq 0 ] [ \Plant_f ] F(\tvar_0)$}
		\end{prooftree}
		\caption{}
		\label{fig:follower_dist_d_safe}
	\end{subfigure}
	
	\begin{align*}
		&\ctrlDistCalc \equiv d \ceq \mespos - x_f \seq 
			\ifstat{d \le \periodicity \maxvelo}{%
				\{v_f \ceq * \seq \test{\orange{0}{v_f}{\safevelo{d}}}\}
			} \seq \waitvar \ceq 0 \\
		&F \equiv 
			0 \le v_f \le \safevelo{d} \wedge v_f \le \maxvelo \wedge x_f + \till{\gtime}{\tvar\downarrow\ch{pos}} \safevelo{d} < \val[x_0]{\tvar\downarrow\ch{pos}}\\ 
			&\qquad\qquad \wedge \waitvar = \since{\gtime}{\tvar\downarrow\ch{pos}} \le \periodicity \\
		&H \equiv \tvar_0 = \tvar \cdot \comItem{\ch{pos}, \mespos, \gtime}, \A(\tvar_0) \\
		&D \equiv d = \mespos - x_f, d \le \periodicity
	\end{align*}

	\caption{}
	\label{fig:follower_dist}
\end{figure}

\begin{figure}
	\begin{subfigure}{\textwidth}
		\begin{prooftree}[shape=justified]
			\Axiom{$*$}

			\UnaryInf{$L, 0 {\le} v_0 {\le} \maxvelo, \tvar_0 = \tvar \cdot \comItem{\ch{vel}, v_0, \gtime}, t\ge0 \vdash L(\tvar_0, x_l + t\cdot v_0)$}

			\RuleNameRight{forallR, assign}{}
			\UnaryInf{$L, 0 {\le} v_0 {\le} \maxvelo, \tvar_0 = \tvar \cdot \comItem{\ch{vel}, v_0, \gtime} \vdash \fa{t{\ge}0} [ x_l \ceq x_l + t\cdot v_0 ] L(\tvar_0, x_l)$}

			\SideAx{$\triangleright$ by \RuleName{fol}}

			\RuleNameRight{solution}{}
			\UnaryInf{$L, 0 {\le} v_0 {\le} \maxvelo, \tvar_0 = \tvar \cdot \comItem{\ch{vel}, v_0, \gtime} \vdash [ \Plant_l(v_0) ] L(\tvar_0)$}

			\RuleNameRight{acNoCom, andR}{}
			\UnaryInf{$L, 0 {\le} v_0 {\le} \maxvelo, \tvar_0 = \tvar \cdot \comItem{\ch{vel}, v_0, \gtime} \vdash [ \Plant_l(v_0) ] \acpair{\true, \Commit(\tvar_0)} L(\tvar_0)$}
			
			\SideAx{$\triangleright$ \rref{fig:leader_velo_skip}}

			\RuleNameRight{acSendRight}{}
			\UnaryInf{$L, 0 {\le} v_0 {\le} \maxvelo \vdash [ \send{\ch{vel}}{}{v_0} ] \acpair{\true, \Commit(\tvar)} [ \Plant_l(v_0) ] \acpair{\true, \Commit(\tvar)} L(\tvar)$}

			\RuleNameRight{acChoice}{}
			\UnaryInf{$L, 0 {\le} v_0 {\le} \maxvelo \vdash [ \send{\ch{vel}}{}{v_0} \cup \skipProg ] \acpair{\true, \Commit} [ \Plant_l(v_0) ] \acpair{\true, \Commit} L$}

			\RuleNameRight{test, implR}{}
			\UnaryInf{$L \vdash [ \test{0 {\le} v_0 {\le} \maxvelo} ] [ \send{\ch{vel}}{}{v_0} \cup \skipProg ] \acpair{\true, \Commit} [ \Plant_l(v_0) ] \acpair{\true, \Commit} L$}
			
			\SideAx{$\triangleright$ by \RuleName{fol}}

			\RuleNameRight{composition, nondetAssign, forallR}{}
			\UnaryInf{$L \vdash [ v_l \ceq * \seq \test{0 {\le} v_l {\le} \maxvelo} ] [ \send{\ch{vel}}{}{v_l} \cup \skipProg ] \acpair{\true, \Commit} [ \Plant_l(v_l) ] \acpair{\true, \Commit} L$}

			\RuleNameRight{acNoCom, andR}{}
			\UnaryInf{$L \vdash [ v_l \ceq * \seq \test{0 {\le} v_l {\le} \maxvelo} ] \acpair{\true, \Commit} [ \send{\ch{vel}}{}{v_l} \cup \skipProg ] \acpair{\true, \Commit} [ \Plant_l ] \acpair{\true, \Commit} L$}

			\SideAx{$\triangleright$ \rref{fig:leader_pos}}

			\RuleNameRight{acComposition}{}
			\UnaryInf{$L \vdash [ \ctrlNotify ] \acpair{\true, \Commit} [ \Plant_l ] \acpair{\true, \Commit} L$}

			\RuleNameRight{andR}{}
			\UnaryInf{$L \vdash [ \ctrlNotify ] \acpair{\true, \Commit} [ \Plant_l ] \acpair{\true, \Commit} L \wedge [ \ctrlUpdate ] \acpair{\true, \Commit} [ \Plant_l ] \acpair{\true, \Commit} L$}

			\RuleNameRight{acComposition, acChoice}{}
			\UnaryInf{$L \vdash [ (\ctrlNotify \cup \ctrlUpdate) \seq \Plant_l ] \acpair{\true, \Commit} L$}
	
			\RuleNameRight{acLoop}{}
			\UnaryInf{$\Gamma \vdash [ \progtt{leader}^* ] \acpair{\true, \Commit} \val[x_0]{\tvar\downarrow\ch{pos}} \le x_l$}
		\end{prooftree}
		\caption{
			The proof uses that $[ \alpha ] \ac [ \beta ] \ac \psi \leftrightarrow [ \alpha ] \ac [ \beta ] \psi$ derives if $\beta$ is sufficiently simple
		}
		\label{fig:leader_velo_send}
	\end{subfigure}
	
	\begin{subfigure}{\textwidth}
		\begin{prooftree}[shape=justified]
			\Axiom{$*$}

			\UnaryInf{$L, 0 {\le} v_0 {\le} \maxvelo, t\ge0 \vdash L(x_l + t\cdot v_0)$}

			\RuleNameRight{forallR, assign}{}
			\UnaryInf{$L, 0 {\le} v_0 {\le} \maxvelo \vdash \fa{t{\ge}0} [ x_l \ceq x_l + t\cdot v_0 ] L(x_l)$}

			\SideAx{$\triangleright$ by \RuleName{fol}}

			\RuleNameRight{solution}{}
			\UnaryInf{$L, 0 {\le} v_0 {\le} \maxvelo \vdash [ \Plant_l(v_0) ] L$}

			\RuleNameRight{acNoCom, andR}{}
			\UnaryInf{$L, 0 {\le} v_0 {\le} \maxvelo, \vdash [ \Plant_l(v_0) ] \acpair{\true, \Commit} L$}

			\SideAx{$\triangleright$ by \RuleName{fol}}

			\UnaryInf{$L, 0 {\le} v_0 {\le} \maxvelo, \vdash [ \skipProg ] [ \Plant_l(v_0) ] \acpair{\true, \Commit} L$}
			
			\RuleNameRight{acNoCom, andR}{}
			\UnaryInf{$L, 0 {\le} v_0 {\le} \maxvelo \vdash [ \skipProg ] \acpair{\true, \Commit(\tvar)} [ \Plant_l(v_0) ] \acpair{\true, \Commit(\tvar)} L$}
		\end{prooftree}
		\caption{}
		\label{fig:leader_velo_skip}
	\end{subfigure}

	\begin{subfigure}{\textwidth}
		\begin{prooftree}[shape=justified]
			\Axiom{$*$}

			\UnaryInf{$L, \tvar_0 = \tvar \cdot \comItem{\ch{pos}, x_l, \gtime}, t\ge0 \vdash L(\tvar_0, x_l + t\cdot v_l)$}

			\RuleNameRight{forallR, assign}{}
			\UnaryInf{$L, \tvar_0 = \tvar \cdot \comItem{\ch{pos}, x_l, \gtime} \vdash \fa{t{\ge}0} [ x_l \ceq x_l + t\cdot v_l ] L(\tvar_0, x_l)$}

			\SideAx{$\triangleright$ by \RuleName{fol}}

			\RuleNameRight{solution}{}
			\UnaryInf{$L, \tvar_0 = \tvar \cdot \comItem{\ch{pos}, x_l, \gtime} \vdash [ \Plant_l ] L(\tvar_0)$}

			\RuleNameRight{acNoCom, andR}{}
			\UnaryInf{$L, \tvar_0 = \tvar \cdot \comItem{\ch{pos}, x_l, \gtime} \vdash [ \Plant_l ] \acpair{\true, \Commit(\tvar_0)} L(\tvar_0)$}

			\RuleNameRight{acSendRight}{}
			\UnaryInf{$L \vdash [ \send{\ch{pos}}{}{x_l} ] \acpair{\true, \Commit(\tvar)} [ \Plant_l ] \acpair{\true, \Commit(\tvar)} L(\tvar)$}
		\end{prooftree}
		\caption{}
		\label{fig:leader_pos}
	\end{subfigure}
	\vspace{-.2cm}
	\begin{align*}
		\Gamma & \equiv \tvar_0 = \tvar, x_0 = x_l, \varphi \\
		L & \equiv 
			\val[x_0]{\tvar\downarrow\ch{pos}} \le x_l
			\wedge \Commit
	\end{align*}
	\caption{Derivation of the subproof about the $\progtt{leader}$ in \rref{ex:derivation}}
	\label{fig:leader}
\end{figure}

\end{document}